\documentclass[a4paper,UKenglish,cleveref, autoref, thm-restate]{lipics-v2021}


\usepackage{graphicx}
\usepackage{microtype}
\usepackage{url}
\usepackage{algorithm}
\usepackage{algpseudocode}
\usepackage{comment}

\usepackage[titletoc]{appendix}


\newcommand{\newsentence}[1]{
#1
}
\newcommand{\oldsentence}[1]{
\if0
#1
\fi
}

\title{An Optimal-Time RLBWT Construction in BWT-runs Bounded Space} 

\titlerunning{An Optimal-Time RLBWT Construction in BWT-runs Bounded Space} 

\author{Takaaki Nishimoto}{RIKEN Center for Advanced Intelligence Project, Japan}{takaaki.nishimoto@riken.jp}{}{}
\author{Shunsuke Kanda}{RIKEN Center for Advanced Intelligence Project, Japan}{shnsk.knd@gmail.com}{}{}
\author{Yasuo Tabei}{RIKEN Center for Advanced Intelligence Project, Japan}{yasuo.tabei@riken.jp}{}{}
\authorrunning{T. Nishimoto, S. Kanda, and Y. Tabei}
\Copyright{Takaaki Nishimoto, Shunsuke Kanda, and Yasuo Tabei}
\ccsdesc[100]{Theory of computation~Data compression}

\keywords{lossless data compression, Burrows--Wheeler transform, highly repetitive text collections}

\category{} 





\nolinenumbers 


\EventEditors{John Q. Open and Joan R. Access}
\EventNoEds{2}
\EventLongTitle{42nd Conference on Very Important Topics (CVIT 2016)}
\EventShortTitle{CVIT 2016}
\EventAcronym{CVIT}
\EventYear{2016}
\EventDate{December 24--27, 2016}
\EventLocation{Little Whinging, United Kingdom}
\EventLogo{}
\SeriesVolume{42}
\ArticleNo{23}

\begin{document}

\maketitle

\begin{abstract}
The compression of highly repetitive strings (i.e., strings with many repetitions) has been a central research topic in string processing, and quite a few compression methods for these strings have been proposed thus far. 
Among them, an efficient compression format gathering increasing attention is the run-length Burrows--Wheeler transform (RLBWT), 
which is a run-length encoded BWT as a reversible permutation of an input string on the lexicographical order of suffixes. 
State-of-the-art construction algorithms of RLBWT have a serious issue with respect to (i) non-optimal computation time or (ii) a working space that is linearly proportional to the length of an input string.
In this paper, we present \emph{r-comp}, the first optimal-time construction algorithm of RLBWT in BWT-runs bounded space. 
That is, the computational complexity of r-comp is $O(n + r \log{r})$ time and $O(r\log{n})$ bits of working space for the length $n$ of an input string and the number $r$ of equal-letter runs in BWT. 
The computation time is optimal (i.e., $O(n)$) for strings with the property $r=O(n/\log{n})$, which holds for most highly repetitive strings. 
Experiments using a real-world dataset of highly repetitive strings show the effectiveness of r-comp with respect to computation time and space.
\end{abstract}

\newcommand{\floor}[1]{\left \lfloor #1 \right \rfloor}
\newcommand{\ceil}[1]{\left \lceil #1 \right \rceil}
\newcommand{\argmax}{\mathop{\rm arg~max}\limits}
\newcommand{\argmin}{\mathop{\rm arg~min}\limits}
\newcommand{\polylog}{\mathop{\rm polylog}\limits}

\newcommand{\PFP}{|\mathsf{PFP}|}

\newcommand{\LF}{\mathsf{LF}}
\newcommand{\occ}{\mathsf{occ}}
\newcommand{\rank}{\mathsf{rank}}
\newcommand{\inspos}{\mathsf{ins}}
\newcommand{\reppos}{\mathsf{rep}}


\newcommand{\graphds}{\mathsf{Grp}}
\newcommand{\update}{\mathsf{update}}
\newcommand{\fastUpdate}{\mathsf{fastUpdate}}
\newcommand{\balance}{\mathsf{balance}}
\newcommand{\splitop}{\mathsf{split}}

\newcommand{\move}{\mathsf{mov}}
\newcommand{\shfn}{\mathsf{sh}}
\newcommand{\eqfn}{\mathsf{eq}}

\newcommand{\deltagrp}{\Delta^{\mathsf{grp}}}
\newcommand{\deltabwt}{\Delta^{\mathsf{bwt}}}

\newcommand{\bmap}{\mathsf{bmap}}

\section{Introduction}\label{intro}
\emph{Highly repetitive strings} (i.e., strings including many repetitions) have become common in research and industry. 
For instance, the 1000 Genomes Project~\cite{1000Genomes} was established for the purpose of building a detailed catalogue of human genetic variation, and 
it has sequenced a large number of human genomes. 
Nowadays, approximately 60 billion pages are said to exist on the Internet, and large sections of those pages (e.g., version-controlled documents) are highly repetitive.
There is therefore a growing demand to develop scalable data compression for efficiently storing, processing, and analyzing a gigantic number of 
highly repetitive strings.

To fulfill this demand, quite a few data compression methods for highly repetitive strings have been developed. 
Examples are LZ77~\cite{LZ77}, grammar compression~\cite{DBLP:conf/dcc/LarssonM99,DBLP:journals/algorithms/FuruyaTNIBK20,DBLP:journals/tcs/Rytter03,DBLP:journals/tcs/Jez16}, block trees~\cite{Belazzougui20}, and many others~\cite{DBLP:conf/dcc/NishimotoT19,DBLP:journals/tit/NavarroOP21,DBLP:conf/esa/DinklageE0KP19}.
Among them, an efficient compression format gathering increased attention is the \emph{run-length Burrows--Wheeler transform} (RLBWT), which is a run-length encoded BWT~\cite{burrows1994block} as a reversible permutation of an input string on the lexicographical order of suffixes. 
Recently, researchers have focused on developing string processing methods such as locate query~\cite{GNPjacm19,DBLP:journals/tcs/BannaiGI20,DBLP:conf/cpm/BelazzouguiCGPR15,Nishimoto21-2}, 
document listing~\cite{DBLP:conf/spire/CobasM020}, and substring enumeration~\cite{Nishimoto21-1} on RLBWT. 
Although several algorithms for constructing the RLBWT from an input string have been proposed thus far, there is no prior work that achieves the computational  complexity of optimal time (i.e., time linearly proportional to the length of the input string) and BWT-runs bounded space (i.e., a working space linearly proportional to the number of equal-letter runs in the BWT and logarithmically proportional to the length of the input string).

\textbf{Contribution.}
We present \emph{r-comp}, the first construction algorithm of RLBWT that achieves optimal time and BWT-runs bounded space.  
R-comp directly constructs the RLBWT of an input string.
It reads one character of an input string at a time from the reversed string and gradually builds the RLBWT corresponding to the suffixes read so far. 
The state-of-the-art online construction methods~\cite{DBLP:journals/algorithmica/PolicritiP18,DBLP:journals/jda/OhnoSTIS18} use inefficient data structures such as dynamic wavelet trees and B-trees for inserting each character into the current RLBWT at an insertion position, which is the most time-consuming part in an RLBWT construction. 
We present a new \emph{divided BWT (DBWT)} representation of BWT and a new bipartite graph representation on DBWT called \emph{LF-interval graph} to speed up the construction of RLBWT. 
The DBWT and LF-interval graph are efficiently built while reading each character one by one, and they enable us to quickly compute 
an appropriate position for inserting each character into the current RLBWT of the string. 
Another remarkable property of r-comp is the ability to extend the RLBWT for a newly added character without rebuilding the data structures used in r-comp from the beginning.

As a result, the computational complexity of r-comp is 
$O(n + r \log r)$ time and $O(r\log{n})$ bits of working space for the length $n$ of an input string and the number $r$ of equal-letter runs in BWT. 
In particular, the computational complexity is optimal (i.e., $O(n)$) for strings with the property $r = O(n / \log n)$, which holds 
for most highly repetitive strings. 
We experimentally tested the ability of r-comp to compress various highly repetitive strings, and we show that 
r-comp performs better than other methods with respect to computation time and space.

\section{Related work}\label{sec:related_works}
\begin{table}[t]
    \footnotesize
    \vspace{-0.5cm}
    \caption{
    Summary of state-of-the-art RLBWT construction algorithms.
    The update time in the rightmost column is the time needed to construct a new RLBWT from the current RLBWT for a character newly added to the string. 
    The update time of r-comp is amortized.
    $T$ is an input string of alphabet size $\sigma$ and length $n$; 
    $\PFP$ is the size of the dictionary and factorization created by the prefix-free parsing of $T$~\cite{DBLP:journals/almob/BoucherGKLMM19}.
    }
    \vspace{-5mm}    
    \label{table:result} 
    \center{
    \scalebox{0.85}{
    \begin{tabular}{r||c|c|c|c}
Method & Type & Running time & Working space~(bits) & Update time \\ \hline
D. Belazzougui+~\cite{DBLP:journals/talg/BelazzouguiCKM20} & indirect & $O(n)$ & $O(n \log \sigma)$ & Unsupported \\ \hline 
J. Munro+~\cite{DBLP:conf/soda/MunroNN17} & indirect & $O(n)$ & $O(n \log \sigma)$ & Unsupported \\ \hline 
D.Kempa~\cite{DBLP:conf/soda/Kempa19} & indirect & $O(n / \log_{\sigma} n + r \log^{7} n)$ & $O(n \log \sigma + r \log^{6} n)$ & Unsupported \\ \hline 
D.Kempa+\cite{DBLP:conf/stoc/KempaK19} & indirect & $O(n \log \sigma / \sqrt{\log n})$ & $O(n \log \sigma)$ & Unsupported \\ \hline 
Big-BWT~\cite{DBLP:journals/almob/BoucherGKLMM19} & indirect & $O(n)$ & $O(\PFP \log n)$ & Unsupported \\ \hline 
KK method~\cite{DBLP:conf/focs/KempaK20,DBLP:journals/dam/NishimotoIIBT20} & direct & $O(n (\log \log n)^2 + r \log^{8} n)$ & $O(r \polylog n)$ & Unsupported \\ \hline 
PP method~\cite{DBLP:journals/algorithmica/PolicritiP18} & direct & $O(n \log r)$ & $O(r \log n)$ & $O(\log r)$ \\ \hline 
Faster-PP method~\cite{DBLP:journals/jda/OhnoSTIS18} & direct & $O(n \log r)$ & $O(r \log n)$ & $O(\log r)$ \\ \hline \hline 
r-comp (this study) &direct & $O(n + r \log r)$ & $O(r \log n)$ & $O(1 + (r \log r) / n )$ 
    \end{tabular} 
    }
    }
\end{table}

There are two types of methods for indirectly or directly constructing the RLBWT of a string (see Table~\ref{table:result} for a summary of state-of-the-art construction algorithms of RLBWT).
In the indirect constructions of RLBWT, the BWT of an input string is first built and then the BWT is encoded into the RLBWT by run-length encoding. 
Several efficient algorithms for constructing the BWT of a given string have been proposed ~\cite{DBLP:journals/talg/BelazzouguiCKM20,DBLP:conf/stoc/KempaK19,DBLP:journals/siamcomp/NavarroN14,DBLP:journals/jda/CrochemoreGKL15,DBLP:conf/stoc/KempaK19,DBLP:conf/soda/MunroNN17}.  
Let $T$ be a string of length $n$ with an alphabet of size $\sigma$, and let $r$ be the number of equal-letter runs in its BWT.
Kempa~\cite{DBLP:conf/soda/Kempa19} proposed a RAM-optimal time construction of the BWT of string $T$ with compression ratio $n / r = \Omega(\polylog n)$. 
The algorithm runs in $O(n / \log_{\sigma} n)$ time with $O(n \log \sigma)$ bits of working space. 
Kempa and Kociumaka also proposed a BWT construction in $O(n \log \sigma)$ bits of working space~\cite{DBLP:conf/stoc/KempaK19}. 
This algorithm runs in $O(n \log \sigma / \sqrt{\log n})$ time, which is bounded by $o(n)$ time for a string with $\log \sigma = o(\sqrt{\log n})$. 
These algorithms are not space efficient for highly repetitive strings in that their working space is linearly proportional to the length of the input string.

Big-BWT~\cite{DBLP:journals/almob/BoucherGKLMM19} is a practical algorithm for constructing the BWT of a huge string using \emph{prefix-free parsing}, which constructs a dictionary of strings and a factorization from string $T$. 
Although Big-BWT runs in optimal time (i.e., $O(n)$) with $O(\PFP \log n)$ bits of working space for the sum $\PFP$ of (i) the lengths of all the strings in the dictionary and (ii) the number of strings in the factorization, Big-BWT is not space efficient for highly repetitive strings in the worst case, 
because $\PFP$ can be $\sqrt{n}$ times larger than $r$, resulting in $\Omega(r\sqrt{n}\log{n})$ bits of working space~(see Appendix~\ref{app:bigbwt} for the proof). 
Even worse, several data structures used in these indirect constructions cannot be updated. Thus, one needs to rebuild the data structures from scratch for a newly added character, which reduces the usability of indirect constructions of RLBWT. 

In the direct constructions of RLBWT, 
Policriti and Prezza~\cite{DBLP:journals/algorithmica/PolicritiP18} proposed an algorithm for the construction of RLBWT, which we call \emph{PP method}. 
The PP method reads an input string in reverse by one character, and it 
gradually builds the RLBWT corresponding to the suffix that was just read, where 
an inefficient dynamic wavelet tree is used for inserting a character into the RLBWT at an appropriate position, limiting the scalability of the PP method in practice. 
Ohno et al.~\cite{DBLP:journals/jda/OhnoSTIS18} proposed a faster method, which we call \emph{Faster-PP method}, by replacing the dynamic wavelet tree used in the PP method by a B-tree. 
Whereas both the PP method and Faster-PP method run with the same time and space complexities---$O(n \log r)$ time and $O(r \log n)$ bits of working space---the time complexity is not the optimal time for most highly repetitive strings.

Kempa and Kociumaka~\cite{DBLP:conf/focs/KempaK20} proposed a conversion algorithm, which is referred to as \emph{KK method}, from the LZ77 parsing~\cite{LZ77} of $T$ to the RLBWT in $O(z \log^{7} n)$ time with $O(z \polylog n)$ bits of space, where $z$ is the number of phrases in the parsing. 
Theoretically, we can compute the RLBWT of an input string by combining the KK method with an algorithm for computing the LZ77 parsing~(e.g., \cite{DBLP:journals/dam/NishimotoIIBT20}), 
and the working space of their conversion is bounded by $O(r \polylog n)$ bits because $z = O(r \log n)$~\cite{DBLP:journals/tit/NavarroOP21}. 
Kempa and Langmead~\cite{DBLP:conf/esa/KempaK21} proposed a practical algorithm for constructing a compressed grammar from an input string in $\Omega(n)$ time using an approximate LZ77 parsing. 
Because these methods use several static data structures that cannot be updated,  
the data structures must be rebuilt from scratch when a new character is added.

Although there are several algorithms for indirectly or directly constructing the RLBWT, no previous work has been able to achieve optimal time (i.e., $O(n)$ time) with BWT-runs bounded space (i.e., $O(r\log{n})$ bits). 
We present \emph{r-comp}, the first direct construction of RLBWT that achieves optimal time with BWT-runs bounded space for most highly repetitive strings.
Details of r-comp are presented in the following sections.

\newsentence\bgroup
This paper is organized as follows. 
Section~\ref{sec:preliminary} introduces basic notions used in this paper, and 
a DBWT representation of BWT is presented in Section~\ref{sec:dbwt}. 
Section~\ref{sec:lfgraph} presents an LF-interval graph representation of DBWT and a fast update operation on LF-interval graphs. 
The r-comp algorithm is presented in Section~\ref{sec:rcomp_algorithm}. 
Section~\ref{sec:exp} presents the experimental results using the r-comp algorithm on benchmark and real-world datasets of highly repetitive strings. 
\egroup
\section{Preliminaries}\label{sec:preliminary}
\begin{figure}[t]
 \begin{center}
		\includegraphics[scale=0.7]{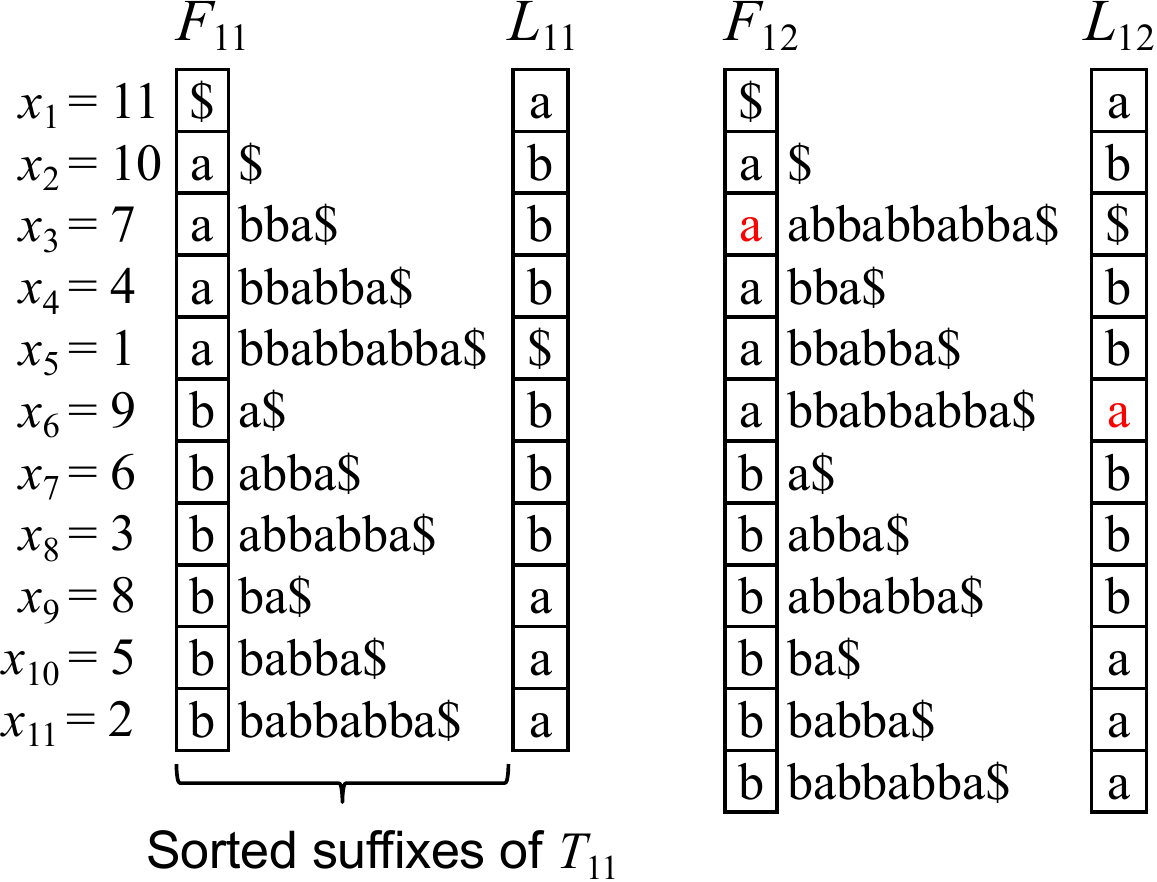}

	  \caption{
	  (Left) Sorted suffixes of $T_{11} = abb abba bba\$$, $F_{11}$, and $L_{11}$. 
	  (Right) Sorted suffixes of $T_{12} = aabb abba bba\$$, $F_{12}$, and $L_{12}$.
	  }
 \label{fig:bwt}
 \end{center}
\end{figure}

\subparagraph{Basic notation.}
An \emph{interval} $[b, e]$ for two integers $b$ and $e$~($b \leq e$) represents the set $\{b, b+1, \ldots, e \}$. 
Let $T$ be a string of length $n$ over an alphabet $\Sigma = \{ 1, 2, \ldots, n^{O(1)} \}$ of size $\sigma$, 
and $|T|$ be the length of $T$~(i.e., $|T| = n$). 
Let $T[i]$ be the $i$-th character of $T$~(i.e., $T = T[1], T[2], \ldots, T[n]$) and 
$T[i..j]$ be the substring of $T$ that begins at position $i$
and ends at position $j$. 
Let $T_{\delta}$ be the suffix of $T$ of length $\delta$ $(1 \leq \delta \leq n)$, i.e., $T_{\delta} = T[(n-\delta+1)..n]$. 
A \emph{rank query} $\rank(T, c, i)$ on a string $T$ returns the number of occurrences of character $c$ in $T[1..i]$, i.e., $\rank(T, c, i) = |\{ j \mid T[j] = c, 1 \leq j \leq i \}|$. 

For a string $P$, $P[i] < P[j]$ means that the $i$-th character of $P$ is smaller than the $j$-th character of $P$. Moreover, $T \prec P$ means that $T$ is lexicographically smaller than $P$. Formally, $T \prec P$ if and only if either of the following two conditions holds: 
(i) there exists an integer $i$ such that $T[1..i-1] = P[1..i-1]$ and $T[i] < P[i]$; (ii) $T$ is a prefix of $P$~(i.e., $T = P[1..|T|]$) and $|T| < |P|$. 
Here, $\occ_{<}(T, c)$ denotes the number of characters smaller than character $c$ in string $T$~(i.e., $\occ_{<}(T, c) = |\{ j \mid j \in \{ 1, 2, \ldots, n \} \mbox{ s.t. } T[j] < c \}|$). 
Special character $\$$ is the smallest character in $\Sigma$. 
Throughout this paper, we assume that special character $\$$ only appears at the end of $T$~(i.e., $T[n] = \$$ and $T[i] \neq \$$ for all $\{ 1, 2, \ldots, n - 1 \}$).

A \emph{run} is defined as the maximal repetition of the same character. 
Formally, a substring $T[i..j]$ of $T$ is a \emph{run} of the same character $c$ if it satisfies the following three conditions: 
(i) $T[i..j]$ is a repetition of the same character $c$ (i.e., $T[i]=T[i+1]= \cdots = T[j]=c$); (ii) $i = 1$ or $T[i-1] \neq c$; (iii) $j = n$ or $T[j+1] \neq c$.

We use base-2 logarithm throughout this paper. 
Our computation model is a unit-cost word RAM with a machine word size of $\Theta(\log n)$ bits. 
We evaluate the space complexity in terms of the number of machine words. 
A bitwise evaluation of space complexity can be obtained with a $\log n$ multiplicative factor. 

\subparagraph{BWT, LF function, and RLBWT.}
The BWT~\cite{burrows1994block} of a suffix $T_{\delta}$ is a permuted string $L_{\delta}$ of $T_{\delta}$,
and it is constructed as follows: all the suffixes of $T_{\delta}$ are sorted in the lexicographical order and the character preceding each suffix is taken. 
Formally, let $x_{1}, x_{2}, \ldots, x_{\delta}$ be 
the starting positions of the sorted suffixes of $T_{\delta}$~(i.e., $x_{1}, x_{2}, \ldots, x_{\delta}$ are a permutation of sequence $1, 2, \ldots, \delta$ such that $T_{\delta}[x_{1}..\delta] \prec T_{\delta}[x_{2}..\delta] \prec \cdots \prec T_{\delta}[x_{n}..\delta]$).
Then, $L_{\delta} = T_{\delta}[x_{1}-1], T_{\delta}[x_{2}-1], \ldots, T_{\delta}[x_{\delta}-1]$), where 
$T_{\delta}[0]$ is defined as the last character of $T_{\delta}$~(i.e., $T_{\delta}[0] = T_{\delta}[\delta] = \$$). 
Similarly, the permuted string $F_{\delta}$ of suffix $T_{\delta}$ consists of the first characters of the sorted suffixes of $T_{\delta}$, i.e., $F_{\delta} = T_{\delta}[x_{1}], T_{\delta}[x_{2}], \ldots, T_{\delta}[x_{\delta}]$. 

\newsentence{
Figure~\ref{fig:bwt} illustrates the sorted suffixes of $T_{11}$ and $T_{12}$ for $T = aabb abba bba\$$. 
Here, $x_{1}, x_{2}, \ldots, x_{11}$ are the starting positions of the sorted suffixes of $T_{11}$. 
Moreover, $L_{11} = abbb \$ bbb aaa$ and $F_{11} = \$ aaaa bbbbbb$. 
The BWT of $T$ is $L_{12} = a b \$ bb a bbb aaa$.
}

There is a one-to-one correspondence between $L_{\delta}$ and $F_{\delta}$ because the two strings are permutations of $T_{\delta}$. 
Formally, for two integers $i, j \in \{ 1, 2, \ldots, \delta \}$, $L_{\delta}[i]$ corresponds to $F_{\delta}[j]$ if and only if 
either of the following two conditions holds: 
(i) $x_{i}-1 = x_{j}$ or (ii) $x_{i} = 1$ and $x_{j} = \delta$. 
\emph{LF function} $\LF_{\delta}$ is a bijective function from $L_{\delta}$ to $F_{\delta}$ such that 
for two integers $i, j \in \{ 1, 2, \ldots, \delta \}$~\cite{DBLP:conf/focs/FerraginaM00}. 
Function $\LF_{\delta}(i) = j$ if and only if $L_{\delta}[i]$ corresponds to $F_{\delta}[j]$. 
\emph{LF formula}~\cite{DBLP:conf/focs/FerraginaM00} is a well-known property of LF function, and it enables us to compute 
the corresponding position in $F_\delta$ from a position in $L_\delta$. 
Namely, $\LF_{\delta}(i)$ is equal to 
the summation of (i) the number of characters in $L_{\delta}$ smaller than $L_{\delta}[i]$ and (ii) the number of $L_{\delta}[i]$ in the prefix $L_{\delta}[1..i]$, 
i.e., $\LF_{\delta}(i) = \occ_{<}(L_{\delta}, L_{\delta}[i]) + \rank(L_{\delta}, i, L_{\delta}[i])$. 

\newsentence{
In Figure~\ref{fig:bwt}, the red character $a$ in $L_{12}$ corresponds to that in $F_{12}$, and hence $\LF_{12}(6) = 3$. 
In addition, $\occ_{<}(L_{12}, b) = 6$ and $\rank(L_{12}, 7, b) = 4$, and hence $\LF_{12}(7) = 6 + 4 = 10$ by the LF formula. 
}

BWT can be separated into all the runs of the same character. We call each run BWT-run. 
For BWT $L_{\delta}$, $r$ BWT-runs $P_{1}, P_{2}, \ldots, P_{r}$ satisfy (i) $L_{\delta}= P_{1}, P_{2}, \ldots, P_{r}$ and 
(ii) each $P_{i}$ $(i=1, 2, \ldots, r)$ is a run of the same character in $L_{\delta}$. 
The RLBWT of a suffix $T_{\delta}$ is defined as a sequence of $r$ pairs $(P_{1}[1], |P_{1}|)$, $(P_{2}[1], |P_{2}|)$, $\ldots$, $(P_{r}[1], |P_{r}|)$. 
The RLBWT can be stored in $r(\log n + \log \sigma)$ bits, 
and we can recover $T_{\delta}$ from the RLBWT using LF function~(e.g., \cite{Nishimoto21-2}). 
Throughout this paper, $r$ denotes the number of BWT-runs in the BWT of $T$. 
\oldsentence{
Examples of BWT, LF function, and RLBWT can be found in Appendix~\ref{app:preliminary:examples}. 
}

\newsentence{
In Figure~\ref{fig:bwt}, 
the BWT-runs in 
the BWT $L_{11}$ of $T_{11}$ are $a, bbb, \$, bbb$, and $aaa$. 
The RLBWT of $T_{11}$ is $(a, 1)$, $(b, 3)$, $(\$, 1)$, $(b, 3)$, and $(a, 3)$. 
}

\section{DBWT}\label{sec:dbwt}
\newsentence{
\begin{figure}[t]
 \begin{center}
		\includegraphics[scale=0.8]{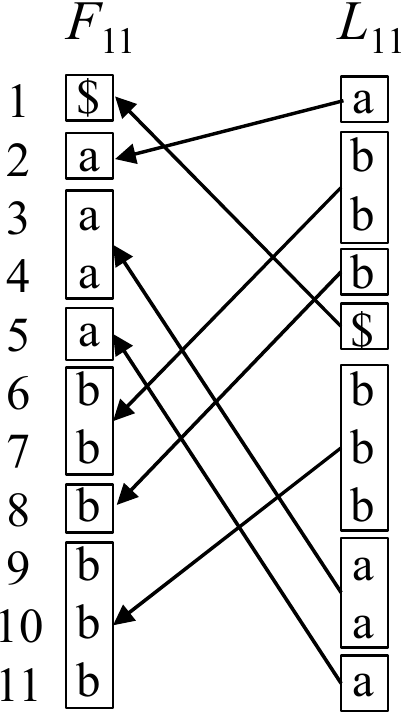}
	  \caption{
	  DBWT-repetitions and their corresponding F-intervals on $F_{11}$ for a DBWT $D_{11} = a, bb, b, \$, bbb, aa, a$ of BWT $L_{11}$ in Figure~\ref{fig:bwt}. 
	  Each rectangle on $L_{11}$ represents a DBWT-repetition, 
	  and each rectangle on $F_{11}$ represents an F-interval on $F_{11}$.
	  Each directed arrow indicates the F-interval corresponding to the DBWT-repetition on $L_{11}$. 	  
	  }
 \label{fig:dbwt}
 \end{center}
\end{figure}
}

The divided BWT (DBWT) is a general concept in the RLBWT and is the foundation of the LF-interval graph. 
Formally, the DBWT $D_{\delta}$ of BWT $L_{\delta}$ is defined as a sequence 
$L_{\delta}[p_{1}..(p_{2}-1)], L_{\delta}[p_{2}..(p_{3}-1)], \ldots, L_{\delta}[p_{k}..(p_{k+1}-1)]$ 
for $p_{1} = 1 < p_{2} < \cdots < p_{k} < p_{k+1} = n+1$, where $L_{\delta}[p_{i}..(p_{i+1}-1)]$ for each $i=1,2, \ldots ,k$ is a repetition of the same character.
We call each repetition of the same character in the DBWT \emph{DBWT-repetition}. 
A DBWT-repetition is not necessarily a run. 
DBWT $D_{\delta}$ is equal to the RLBWT of $T_{\delta}$ 
if and only if $L_{\delta}[p_{i}..(p_{i}-1)]$ for each $i \in \{1,2, \ldots ,k\}$ is a run. 

\newsentence{
In Figure~\ref{fig:dbwt}, sequence $D_{11} = a, bb, b, \$, bbb, aa, a$ of equal-letter repetitions is a DBWT for BWT $L_{11}$ of string $T_{11}$. 
The DBWT-repetitions in $D_{11}$ are the strings enclosed by the rectangles on $L_{11}$. 
}

The DBWT of a BWT is not unique because a BWT can be divided by various criteria. 
We present a criterion for DBWT in the following in order to efficiently build RLBWT. 
The LF function maps each DBWT-repetition $L_{\delta}[p_{i}..(p_{i+1}-1)]$ into the consecutive characters on interval $[\LF_{\delta}(p_{i}), \LF_{\delta}(p_{i+1} - 1)]$, which is called an \emph{F-interval} on $F_\delta$.
The LF formula enables us to compute $\LF_{\delta}(j)$ for each position $j \in [p_{i}, (p_{i+1}-1)]$ on the $i$-th DBWT-repetition in $O(1)$ time 
using the starting position $p_{i}$ of the DBWT-repetition and the F-interval $[\LF_{\delta}(p_{i}), \LF_{\delta}(p_{i+1} - 1)]$ corresponding to the DBWT-repetition as follows: $\LF_{\delta}(j) = \LF_{\delta}(p_{i}) + j - p_{i}$. 

\newsentence{
In Figure~\ref{fig:dbwt}, each F-interval on $F_{11}$ corresponding to a DBWT-repetition on $L_{11}$ is enclosed by a rectangle. 
The $F$-intervals on $F_{11}$ are $[1, 1]$, $[2, 2]$, $[3, 4]$, $[5, 5]$, $[6, 7]$, $[8, 8]$, and $[9, 11]$. The F-interval corresponding to the second DBWT-repetition $bb$ is $[6, 7]$. 
}

Let $\alpha$ be a user-defined parameter no less than $2$ (i.e., $\alpha \geq 2$).
DBWT-repetition $L_{\delta}[p_{i}..(p_{i+1}-1)]$ is said to \emph{cover} the starting position $\LF_{\delta}(p_{j})$ of an F-interval $[\LF_{\delta}(p_{j}), \LF_{\delta}(p_{j+1})]$ on $F_{\delta}$  
if interval $[p_{i}, (p_{i+1}-1)]$ on $F_{\delta}$ contains the position $\LF_{\delta}(p_{j})$~(i.e., $\LF_{\delta}(p_{j}) \in [p_{i}, (p_{i+1}-1)]$). 
The DBWT-repetition is said to be $\alpha$-\emph{heavy} if it covers at least $\alpha$ starting positions of the F-intervals on $F_{\delta}$ for parameter $\alpha \geq 2$.
Similarly, F-interval $[\LF_{\delta}(p_{i}), \LF_{\delta}(p_{i+1} - 1)]$ on $F_\delta$ is said to 
cover the starting position $p_{j}$ of a DBWT-repetition $L_{\delta}[p_{j}..(p_{j+1}-1)]$ 
if interval $[\LF_{\delta}(p_{i}), \LF_{\delta}(p_{i+1} - 1)]$ on $L_{\delta}$ contains the position $p_{j}$~(i.e., $p_{j} \in [\LF_{\delta}(p_{i}), \LF_{\delta}(p_{i+1} - 1)]$). 
An F-interval is said to be $\alpha$-heavy if it covers at least $\alpha$ starting positions of DBWT-repetitions for parameter $\alpha \geq 2$.
A DBWT is said to be $\alpha$-\emph{balanced} if the DBWT includes neither $\alpha$-heavy DBWT-repetitions nor F-intervals, and $D^{\alpha}_{\delta}$ denotes an $\alpha$-balanced DBWT of BWT $L_{\delta}$. 
\oldsentence{
Examples of DBWT, F-intervals, and $\alpha$-balanced DBWT can be found in Appendix~\ref{app:graph:examples}. 
}

\newsentence{
In Figure~\ref{fig:dbwt} with $\alpha = 3$, 
the fifth DBWT-repetition $bbb$ of $D_{11}$ covers two starting positions of F-intervals $[6, 7]$ and $[8, 8]$ on $F_{11}$, and  
the DBWT-repetition is not $3$-heavy. 
The F-interval $[9, 11]$ of the fifth DBWT-repetition covers the starting positions of two DBWT-repetitions $aa$ and $a$. 
Moreover, the F-interval of the fifth DBWT-repetition is not $3$-heavy. 
Thus, $D_{11}$ is $3$-balanced because $D_{11}$ includes neither $3$-heavy DBWT-repetitions nor F-intervals. 
}

In the next section, the $\alpha$-balanced DBWT is used to derive the time needed to update an LF-interval graph.

\section{LF-interval graph}\label{sec:lfgraph}
\newsentence{
\begin{figure}[t]
 \begin{center}
		\includegraphics[scale=0.7]{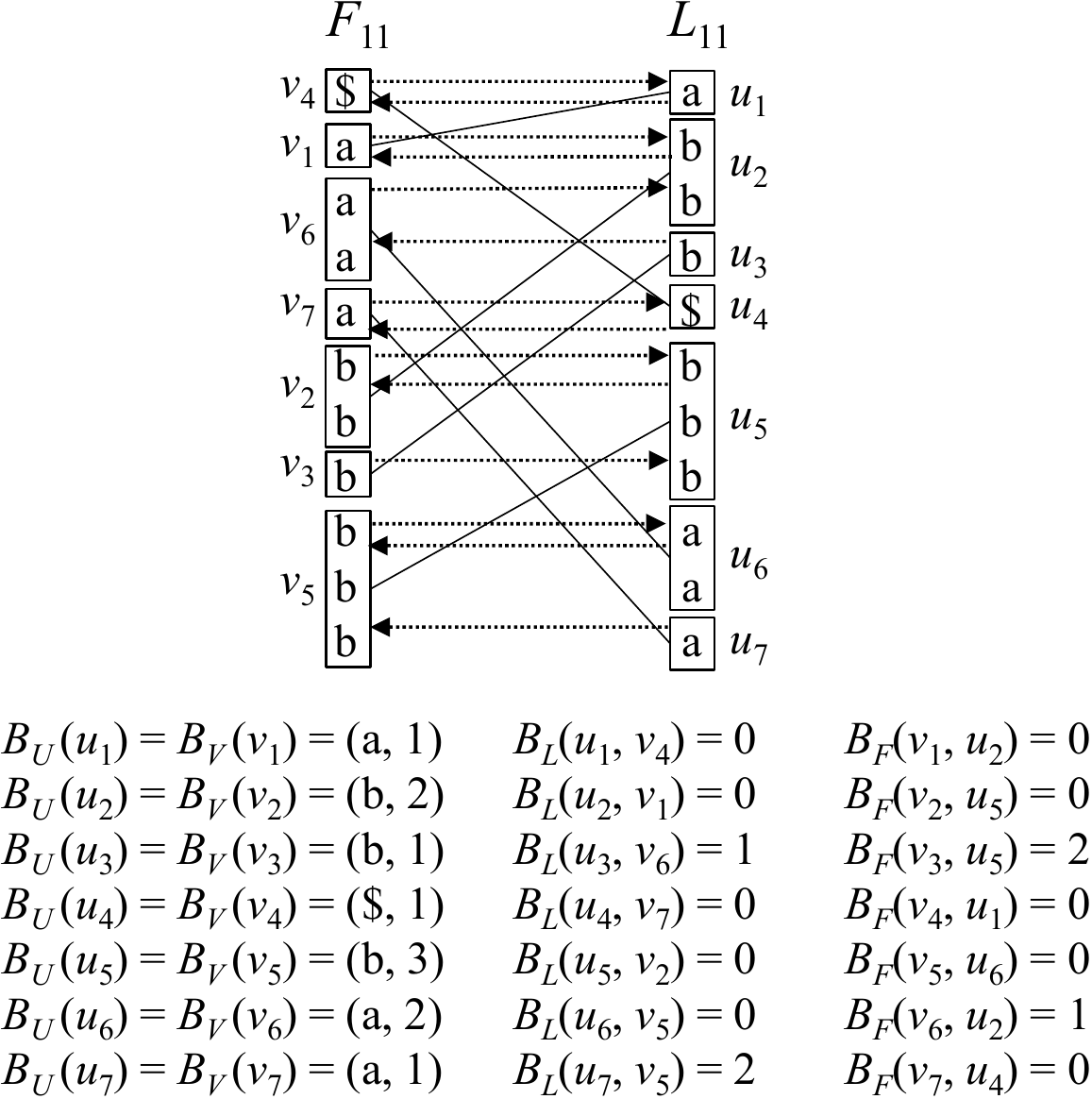}
	  \caption{
	  LF-interval graph $\graphds(D_{11})$ for DBWT $D_{11}$ in Figure~\ref{fig:dbwt}. 
	  }
 \label{fig:graph}
 \end{center}
\end{figure}
}

An LF-interval graph is a bipartite graph that represents both (i) the correspondence  between each pair of elements in $L_{\delta}$ 
and $F_{\delta}$ according to the LF function and (ii) a covering relationship between DBWT-repetitions and F-intervals on a DBWT. 
The LF-interval graph $\graphds(D_{\delta})$ for DBWT $D_{\delta}$ of $k$ DBWT-repetitions $L_{\delta}[p_{1}..(p_{2}-1)], L_{\delta}[p_{2}..(p_{3}-1)], \ldots, L_{\delta}[p_{k}..(p_{k+1}-1)]$ is defined as 4-tuple $(U \cup V$, $E_{LF} \cup E_{L} \cup {E}_{F}$, $B_{U} \cup B_{V}$,  $B_{L} \cup B_{F})$, as detailed in the following.

Set $U = \{ u_{1}, u_{2}, \ldots, u_{k} \}$ is a set of nodes, and $u_{i}$ for each $i\in \{1,2, \ldots, k\}$ represents the $i$-th DBWT-repetition $L_{\delta}[p_{i}..(p_{i+1}-1)]$ on DBWT $D_{\delta}$. 
Moreover, set $V = \{ v_{1}, v_{2}, \ldots, v_{k} \}$ is a set of nodes, and $v_i$ for each $i\in \{1,2, \ldots, k\}$ represents the F-interval $[\LF_{\delta}(p_{i}), \LF_{\delta}(p_{i+1} - 1)]$ mapped from the $i$-th DBWT-repetition represented as $u_i$ on DBWT $D_{\delta}$ by the LF function. 

The set $E_{LF}$ of undirected edges in LF-interval graph $\graphds(D_{\delta})$ represents 
the correspondence between DBWT-repetitions on DBWT $D_{\delta}$ and F-intervals on $F_{\delta}$ according to the LF function. 
Formally, $E_{LF} \subseteq (U \times V)$ is a set of undirected edges between $U$ and $V$, and 
$(u_i,v_j) \in E_{LF}$ holds if and only if the $i$-th DBWT-repetition $L_\delta[p_i..(p_{i+1}-1)]$ represented as $u_i$ is 
mapped to the $j$-th F-interval $[ \LF_\delta(p_j), \LF_\delta(p_{j+1}-1)]$ represented as $v_j$. 
Namely, $E_{LF} = \{ (u_{1}, v_{1}), (u_{2}, v_{2}), \ldots, (u_{k}, v_{k}) \}$. 

Two sets $E_{L}$ and $E_{F}$ of directed edges represent the covering relationship between DBWT-repetitions 
and F-intervals on DBWT $D_{\delta}$. 
Set $E_{L} \subseteq (U \times V)$ is a set of directed edges from $U$ to $V$, and 
$(u_i, v_j) \in E_{L}$ holds if and only if 
F-interval $[\LF_\delta(p_j), \LF_\delta(p_{j+1}-1)]$, represented as $v_j$, covers the starting position $p_i$ of DBWT-repetition $L_\delta[p_i..(p_{i+1}-1)]$, represented as $u_i$. 
Formally, $E_{L} = \{ (u_{i}, v_{j}) \mid 1 \leq i, j \leq k  \mbox{ s.t. } p_{i} \in [\LF_{\delta}(p_{j}), \LF_{\delta}(p_{j+1} - 1)] \}$.
Similarly, $E_{F} \subseteq (V \times U)$ is a set of directed edges from $V$ to $U$, and 
$(v_j, u_i) \in E_{F}$ holds if and only if 
DBWT-repetition $L_\delta[p_i..(p_{i+1}-1)]$, represented as $u_i$, covers the starting position $\LF_\delta(p_j)$ of F-interval $[\LF_\delta(p_j),\LF_\delta(p_{j+1}-1)]$,  represented as $v_j$. 
Formally, $E_{F} = \{ (v_{j}, u_{i}) \mid 1 \leq i, j \leq k  \mbox{ s.t. } \LF_{\delta}(p_{j}) \in [p_{i}, (p_{i+1} - 1)] \}$.

Function $B_{U} : U \rightarrow (\Sigma, \mathcal{N})$ is a label function for the set $U$ of nodes, and 
it maps each node $u_i \in U$ to a pair consisting of the character in $\Sigma$ and the length in $\mathcal{N}$ for the $i$-th DBWT-repetition represented by $u_i$. 
Namely, $B_{U}(u_{i}) = (L_{\delta}[p_{i}], p_{i+1} - p_{i})$. 
Similarly, 
$B_{V} : V \rightarrow (\Sigma, \mathcal{N})$ is a label function for the set $V$ of nodes, and 
it maps each node $v_i \in V$ to a pair consisting of the character in $\Sigma$ and the length in $\mathcal{N}$ for 
the repetition $F_{\delta}[\LF_{\delta}(p_{i})..\LF_{\delta}(p_{i+1} - 1)]$ of the same character on the F-interval represented by $v_i$. 
Namely, $B_{V}(v_{i}) = (F_{\delta}[\LF_{\delta}(p_{i})], \LF_{\delta}(p_{i+1} - 1) - \LF_{\delta}(p_{i}) + 1)$. 
For all $i \in \{ 1,2, \ldots, k \}$, $B_{U}(u_{i}) = B_{V}(v_{i})$ holds by the LF formula. 

Function $B_{L}: E_{L} \rightarrow \mathcal{N}$ is a label function for the set $E_{L}$ of directed edges, and it maps each edge $(u_i, v_j) \in E_{L}$ to an integer value representing the difference between the starting position $p_i$ of DBWT-repetition $L_{\delta}[p_{i}..(p_{i+1}-1)]$, represented as $u_i$, and 
the starting position $\LF_{\delta}(p_j)$ of F-interval $[\LF_{\delta}(p_{j}), \LF_{\delta}(p_{j+1}-1)]$, represented as $v_j$. 
Namely, $B_{L}(u_{i}, v_{j}) = p_{i} - \LF_{\delta}(p_{j})$.

Similarly, $B_{F}: E_{F} \rightarrow \mathcal{N}$ is a label function for the set $E_{F}$ of directed edges, and it maps each edge $(v_j, u_i) \in E_{F}$ to 
an integer value representing the difference between the starting position $\LF_\delta(p_j)$ of F-interval $[\LF_{\delta}(p_{j}), \LF_\delta(p_{j+1}-1)]$, represented as $v_j$, and the starting position $p_i$ of DBWT-repetition $L_{\delta}[p_{i}..(p_{i+1}-1)]$, 
represented as $u_i$. 
Namely, $B_{F}(v_{j}, u_{i}) = \LF_{\delta}(p_{j}) - p_{i}$. 

\oldsentence{
Examples of LF-interval graph can be found in Appendix~\ref{app:graph:examples}. 
}
\newsentence{
Figure~\ref{fig:graph} illustrates LF-interval graph $\graphds(D_{11})$ for DBWT $D_{11}$ in Figure~\ref{fig:dbwt}. 
For $U = \{ u_{1}, u_{2}, \ldots, u_{7} \}$ and $V = \{ v_{1}, v_{2}, \ldots, v_{7} \}$, 
each node $u_i \in U$ (respectively, $v_j \in V$) is enclosed by a rectangle on 
$L_{11}$ (respectively, $F_{11}$). 
We have $E_{LF} = \{ (u_{1}, v_{1})$, $(u_{2}, v_{2})$, $(u_{3}, v_{3})$, $(u_{4}, v_{4})$, $(u_{5}, v_{5})$, $(u_{6}, v_{6})$, $(u_{7}, v_{7})  \}$. 
We depict each undirected edge in set $E_{LF}$ by solid lines. 
Moreover, $E_{L} = \{ (u_{1}, v_{4})$, $(u_{2}, v_{1})$, $(u_{3}, v_{6})$, $(u_{4}, v_{7})$, $(u_{5}, v_{2})$, $(u_{6}, v_{5})$, $(u_{7}, v_{5})  \}$, 
and $E_{F} = \{ (v_{1}, u_{2})$, $(v_{2}, u_{5})$, $(v_{3}, u_{5})$, $(v_{4}, u_{1})$, $(v_{5}, u_{6})$, $(v_{6}, u_{2})$, $(v_{7}, u_{4}) \}$. 
Each directed edge in the two sets $E_{L}$ and $E_{F}$ is depicted by a dotted arrow. 
The four label functions $B_{U}$, $B_{V}$, $B_{L}$, and $B_{F}$ are listed under the LF-interval graph.
}

\subsection{Dynamic data structures for the LF-interval graph}\label{sec:dynamic}
Several dynamic data structures are used for efficiently updating LF-interval graph $\graphds(D_{\delta})$. 
Two doubly linked lists are used for supporting the insertions and deletions of nodes in $U$ and $V$.
The nodes in $U$ should be totally ordered with respect to the starting position of the DBWT-repetition, which is represented as a node in $U$. Namely, $u_1 < u_2 < \cdots < u_k$. 
The nodes in set $U$ are stored in a doubly linked list, where each node $u_i \in U$ has previous and next pointers 
connecting to the previous and next nodes, respectively, in the total order of nodes in $U$. 
Similarly, the nodes in $V$ should be totally ordered with respect to the starting position of the F-interval, which is represented as a node in $V$. 
Namely,  $v_{\pi_1} < v_{\pi_2} < \ldots < v_{\pi_k}$ holds for permutation $\pi_1, \pi_2, \ldots, \pi_k$ of sequence $1, 2, \ldots, k$ such that $\LF_{\delta}(p_{\pi_1}) < \LF_{\delta}(p_{\pi_2}) < \ldots < \LF_{\delta}(p_{\pi_k})$.
The nodes in $V$ are stored in another doubly linked list, where each node $v_i \in V$  has previous and next pointers connecting to 
the previous and next nodes in the increasing order of nodes in $V$, respectively. 
The space of two doubly linked lists storing nodes in $U$ and $V$ is $O(k \log n)$ bits of space. 

All the nodes corresponding to $\alpha$-heavy DBWT-repetitions in $U$ are stored in an array data structure in any order. 
Similarly, all the nodes corresponding to $\alpha$-heavy F-intervals in $V$ are stored in another array data structure in any order. 
The two arrays take $O(k \log n)$ bits of space. 
Each array stores nothing if $D_{\delta}$ is $\alpha$-balanced. 

An \emph{order maintenance data structure}~\cite{DBLP:conf/stoc/DietzS87} is used for comparing two nodes in $U$ with the total order of $U$, and the data structure supports the following three operations: 
(i) the order operation determines whether or not node $u_i \in U$ precedes node $u_j \in U$ in the total order of $U$; 
(ii) the insertion operation inserts node $u_i \in U$ right after node $u_j \in U$ in the total order of $U$; 
(iii) the deletion operation deletes node $u_i \in U$ from $U$.
The data structure supports these three operations in $O(1)$ time with $O(k\log n)$ bits of space, and 
it is used with a B-tree that stores the nodes in $V$, as explained below. 

A B-tree (a type of self-balancing search tree) is built on the set $V$ of nodes using the combination of the order
maintenance data structure, where 
each node $v_i$ in $V$ is totally ordered with respect to (i) the total order of the node $u_i$ in $U$ that is connected to $v_i$ by an edge in $E_{LF}$~(i.e., $(u_{i}, v_{i}) \in E_{LF}$) and (ii) the first character $L_{\delta}[p_{i}]$ 
of the DBWT-repetition is represented as $u_i$. 
For a node $v_{i} \in V$, 
the B-tree stores a pair $(u_i,L_{\delta}[p_{i}])$ as the key of the node $v_{i}$. 
Nodes in $V$ are totally ordered using the key, and $v_i \in V$ precedes $v_j \in V$ if and only if either of the following conditions holds: (i) $L_{\delta}[p_{i}] < L_{\delta}[p_{j}]$ or 
(ii) $L_{\delta}[p_{i}] = L_{\delta}[p_{j}]$ and $u_{i}$ precedes $u_{j}$ in the total order of $U$~(i.e., $i < j$).  
Condition (ii) is efficiently computed in $O(1)$ time by the order maintenance data structure of $U$. 
According to the following lemma, the order of keys in the B-tree is the same as that of the nodes stored in the doubly linked list of $V$~(i.e., the order of the nodes in the B-tree is $v_{\pi_1} < v_{\pi_2} < \ldots < v_{\pi_k}$). 

\begin{lemma}\label{lem:B_tree_order}
For two distinct nodes $v_{i}, v_{j} \in V$, 
the key of $v_{i}$ precedes that of $v_{j}$ in the B-tree of $V$ if and only if 
$v_{i}$ precedes $v_{j}$ in the doubly linked list of $V$~(i.e., $\LF_{\delta}(p_{i}) < \LF_{\delta}(p_{j})$).  
\end{lemma}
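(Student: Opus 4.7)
The plan is to reduce the lemma to a direct application of the LF formula, since the key of $v_i$ in the B-tree is defined as $(u_i, L_\delta[p_i])$ ordered first by character and then by the total order on $U$ (which coincides with the order of the starting positions $p_i$), while the position in the doubly linked list is exactly the order of $\LF_\delta(p_i)$. So the lemma amounts to showing that the lexicographic order on the pairs $(L_\delta[p_i], p_i)$ agrees with the numerical order on $\LF_\delta(p_i)$.

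First I would unpack the forward direction into two cases based on the B-tree key ordering. Case~(a): $L_\delta[p_i] < L_\delta[p_j]$. Here I would use the LF formula $\LF_\delta(p) = \occ_<(L_\delta, L_\delta[p]) + \rank(L_\delta, p, L_\delta[p])$ and bound
\[
\LF_\delta(p_i) \;\leq\; \occ_<(L_\delta, L_\delta[p_i]) + \rank(L_\delta, \delta, L_\delta[p_i]) \;\leq\; \occ_<(L_\delta, L_\delta[p_j]) \;<\; \LF_\delta(p_j),
\]
where the second inequality counts all occurrences of $L_\delta[p_i]$ as characters smaller than $L_\delta[p_j]$, and the last inequality holds because $\rank(L_\delta, p_j, L_\delta[p_j]) \geq 1$. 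Case~(b): $L_\delta[p_i] = L_\delta[p_j]$ and $u_i$ precedes $u_j$ in $U$, i.e.\ $p_i < p_j$. Then $\occ_<(L_\delta, \cdot)$ cancels, and since $L_\delta[p_j]$ itself is counted at position $p_j$, we get $\rank(L_\delta, p_i, L_\delta[p_i]) < \rank(L_\delta, p_j, L_\delta[p_j])$, hence $\LF_\delta(p_i) < \LF_\delta(p_j)$.

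The converse direction is then essentially free: the B-tree key order is a total order on the (distinct) nodes of $V$, as is the order by $\LF_\delta(p_\cdot)$ values (because $\LF_\delta$ is a bijection and $p_1, \ldots, p_k$ are distinct). Since the forward direction shows the first order refines the second, the two total orders must coincide.

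The main obstacle is getting the Case~(a) bound right without double counting or off-by-one errors; in particular one has to be careful to observe that every occurrence of the character $L_\delta[p_i]$ anywhere in $L_\delta$ is counted by $\occ_<(L_\delta, L_\delta[p_j])$ whenever $L_\delta[p_i] < L_\delta[p_j]$, so that even the maximum possible rank of $L_\delta[p_i]$ does not exceed $\occ_<(L_\delta, L_\delta[p_j])$. Once this observation is in place, everything else is immediate from the LF formula and the definition of the key.
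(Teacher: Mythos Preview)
Your proposal is correct and takes essentially the same approach as the paper: both split on whether the characters differ or agree, and both reduce the second case to the monotonicity of rank in the LF formula. The only cosmetic differences are that the paper phrases it as a proof by contradiction and, in your Case~(a), appeals directly to the fact that $F_\delta$ is sorted (so $F_\delta[\LF_\delta(p_i)] = L_\delta[p_i] < L_\delta[p_j] = F_\delta[\LF_\delta(p_j)]$ forces $\LF_\delta(p_i) < \LF_\delta(p_j)$) rather than unwinding the LF formula as you do; your explicit bound is a fine substitute.
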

\begin{proof}
\oldsentence{See Appendix~\ref{app:B_tree_order}.}

\newsentence\bgroup
We use proof by contradiction to prove the lemma.
The keys of $v_{i}$ and $v_{j}$ are $(u_{i}, L_{\delta}[p_{i}])$ and $(u_{j}, L_{\delta}[p_{j}])$, respectively. 
Assume that Lemma~\ref{lem:B_tree_order} does not hold. 
Either of the following two statements then holds for $\LF_{\delta}(p_{i}) \geq \LF_{\delta}(p_{j})$: 
(i) $L_{\delta}[p_{i}] < L_{\delta}[p_{j}]$ or 
(ii) $L_{\delta}[p_{i}] = L_{\delta}[p_{j}]$ and $i < j$. 
However, $\LF_{\delta}(p_{i}) < \LF_{\delta}(p_{j})$ holds if $L_{\delta}[p_{i}] < L_{\delta}[p_{j}]$ 
because the characters in $F_{\delta}$ are sorted in lexicographical order. 
Hence, the first statement does not hold. 

Next, we have $p_{i} < p_{j}$ by $i < j$. 
$\LF_{\delta}(p_{i}) < \LF_{\delta}(p_{j})$ holds if $L_{\delta}[p_{i}] = L_{\delta}[p_{j}]$ and $p_{i} < p_{j}$ 
because $\LF_{\delta}(x) < \LF_{\delta}(y)$ holds by the LF formula for two integers $1 \leq x < y \leq \delta$ and $L_{\delta}[x] = L_{\delta}[y]$. 
Hence, the second statement does not hold. 
Because the assumption does not hold, Lemma~\ref{lem:B_tree_order} holds. 
\egroup

\end{proof}

The B-tree with the order maintenance data structure supports the three operations of search, insertion, and deletion for any node in $V$ in $O(\log k)$ time with $O(k \log n)$ bits of space. 

\subsection{Extension of BWT~(\texorpdfstring{\cite{DBLP:journals/algorithmica/PolicritiP18,DBLP:journals/jda/OhnoSTIS18})}{}}\label{lab:ExBWT}
The BWT of a suffix can be extended from the BWT of a shorter suffix \cite{DBLP:journals/algorithmica/PolicritiP18,DBLP:journals/jda/OhnoSTIS18}. 
In this section, we review the extension of BWT, which is used for updating the LF-interval graph. 
The BWT $L_{\delta+1}$ of a suffix $T_{\delta+1}$ of length $\delta+1$ can be computed from the BWT $L_{\delta}$ of the suffix $T_{\delta}$ of length $\delta$ using the following two steps: 
(i) special character $\$$ in $L_{\delta}$ is replaced with the first character $c$ of $T_{\delta+1}$~(i.e., $c = T[n-\delta]$);
(ii) special character $\$$ is inserted into $L_{\delta}$ at a position $\inspos$. 
Here, $\inspos$ is computed by the LF formula as follows: for the position $\reppos$ of special character $\$$ in $L_{\delta}$~(i.e., $L_{\delta}[\reppos] = \$$), $\inspos = \occ_{<}(L_{\delta}, c) + \rank(L_{\delta}, \reppos, c) + 1$. 
\oldsentence{
An example of the extension of BWT can be found in Appendix~\ref{app:graph:examples}. 
}

\newsentence{
In Figure~\ref{fig:bwt}, BWT $L_{12}$ of suffix $T_{12}$ can be extended from $L_{11}$ of suffix $T_{11}$. 
The first character $c$ of $T_{12}$ is $a$, and special character $\$$ is replaced with $a$ at $\reppos = 5$ on $L_{11}$. 
The insertion position $\inspos$ for $L_{11}$ is $3$ 
because $\occ_{<}(L_{11}, a) + \rank(L_{11}, \reppos, a) + 1 = 3$. 
}

\subsection{Foundation of updates of the LF-interval graph} \label{sec:update_graph}
\newsentence\bgroup
\begin{figure}[t]
 \begin{center}
		\includegraphics[scale=0.4]{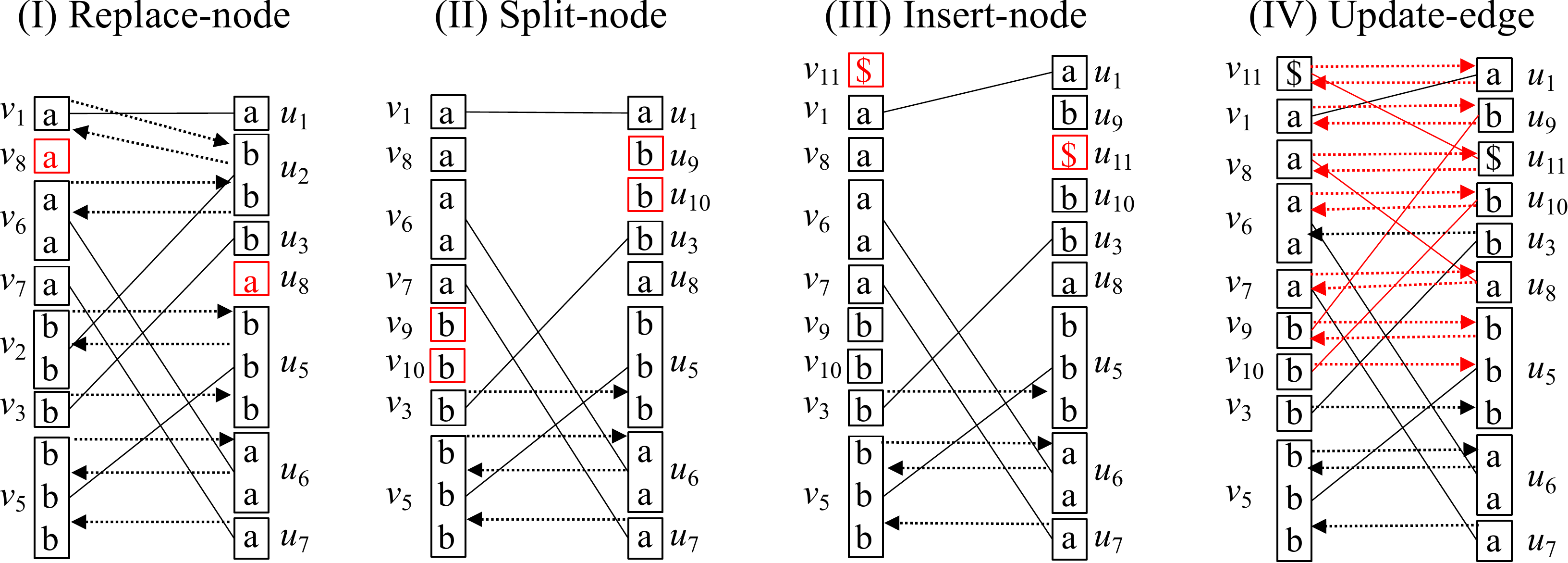}
	  \caption{
	  Each step in the update operation for the LF-interval graph in Figure~\ref{fig:graph}. 
	  New nodes and edges created in each step are colored in red. 
	  }
 \label{fig:insert}
 \end{center}
\end{figure}
\egroup

Given the first character $c$ in suffix $T_{\delta+1}$ of length $\delta+1$, 
an update operation of LF-interval graph $\graphds(D^{\alpha}_{\delta})$ for an $\alpha$-balanced DBWT $D^{\alpha}_{\delta} = L_{\delta}[p_{1}..(p_{2}-1)], L_{\delta}[p_{2}..(p_{3}-1)], \ldots, L_{\delta}[p_{k}..(p_{k+1}-1)]$ of $T_{\delta}$ updates $\graphds(D^{\alpha}_{\delta})$ to $\graphds(D^{2\alpha+1}_{\delta+1})$ for a $(2\alpha+1)$-balanced DBWT $D^{2\alpha+1}_{\delta+1}$ of $T_{\delta+1}$. 
The update operation updates the given LF-interval graph according to the extension of BWT. 
This operation consists of four main steps: (I) \emph{replace node}, (II) \emph{split node}, (III) \emph{insert node}, and (IV) \emph{update edge}. 
Note that the update operation presented in this section is a foundation for the ones presented in the following two subsections, where several modifications are made to the foundation for faster operation. 
\oldsentence{
Examples for the four main steps can be found in Appendix~\ref{app:basic_update:examples}. 
}

\subparagraph{(I) Replace node.} 
This step replaces the node $u_{i} \in U$ labeled $(\$, 1)$ with a new one $u_{i^\prime}$ labeled $(c, 1)$, and 
it updates $V$ according to the replacement of the node in $U$. 
Node $u_{i}$ can be found in $O(1)$ time by keeping track of it on $U$.
The doubly linked list of $U$ is updated according to the replacement. 
The node $v_{i} \in V$ connected to $u_{i}$ by edge $(u_{i},v_{i}) \in E_{LF}$ is removed from $V$, 
and a new node $v_{i^\prime}$ labeled $(c, 1)$ is inserted into $V$ at the position next to 
the most backward node $v_{g}$ of the nodes whose keys are smaller than key $(u_{i}, c)$. 
Node $v_{g}$ can be found in $O(\log k)$ time using the B-tree of $V$. 
This step takes $O(\log k)$ time in total. 

\newsentence\bgroup
Figure~\ref{fig:insert}-(I) shows an example of the replace-node step for the LF-interval graph $\graphds(D_{11})$ in Figure~\ref{fig:graph}. 
Node $u_4 \in U$ labeled $(\$, 1)$ on $\graphds(D_{11})$ is replaced with node $u_8$ labeled $(a, 1)$. 
Node $v_4 \in V$, which is connected to $u_4$ by edge $(v_4, u_4) \in E_{LF}$, is removed from $V$, and edge $(v_4, u_4)$ is removed from $E_{LF}$. 
A new node $v_8$ with label $(a, 1)$ is inserted into $V$. This node is inserted into the doubly linked list of $V$ 
at the position next to $v_{1}$~(i.e., $v_{g} = v_{1}$). 
\egroup

\subparagraph{(II) Split node.}
The insertion-node step (as the next step) inserts a new node representing special character $\$$ into $U$. 
However, before the insertion-node step, the split-node step splits a node $u_j \in U$ into two new nodes at an appropriate position on the doubly linked list of $U$. 
This step is executed for inserting the new node representing special character $\$$ into $U$ at a appropriate position in the insert-node step. 
Following the extension of BWT in Section~\ref{lab:ExBWT},
node $u_j \in U$ has label $(L_{\delta}[p_j], p_{j+1}-p_{j})$ for the two starting positions $p_j$ and $p_{j+1}$ satisfying $p_{j} < \inspos < p_{j+1}$ for the insertion position $\inspos$ of special character $\$$. 
Such a node $u_{j}$ exists if and only if 
(i) the BWT $L_{\delta+1}$ of $T_{\delta+1}$ does not have special character $\$$ as the last character~(i.e., $\inspos \neq \delta + 1$) 
and (ii) $p_{i} \neq \inspos$ for all $i \in \{ 1, 2, \ldots, k \}$.
This is because $p_{1} < p_{2} < \ldots < p_{k+1}$ and $p_{k+1} = \delta + 1$ hold. 

If node $u_{j}$ does not exist in $U$, 
this step does not split nodes. 
Otherwise, 
$u_{j}$ is replaced with two new nodes $u_{j^{\prime}}$ and $u_{j^{\prime}+1}$ in the doubly linked list of $U$, 
where $u_{j^{\prime}}$ is previous to $u_{j^{\prime}+1}$. 
The new nodes $u_{j^\prime}$ and $u_{j^\prime+1}$ are labeled as $(L_{\delta}[p_{j}], \inspos - p_{j})$ and 
$(L_{\delta}[p_{j}], p_{j+1} - \inspos)$ using insertion position $\inspos$, respectively. 

Although we do not know position $\inspos$ in the split-node step, we can find node $u_{j}$. 
This is because 
(i) set $V$ contains node $v_{\mathsf{gnext}}$ representing the F-interval starting at position $\inspos$ unless $\inspos = \delta + 1$, 
and (ii) $v_{\mathsf{gnext}}$ is next to $v_{g}$ in the doubly linked list of $V$ for the node $v_{g}$ searched for in the replace-node step. 
The following lemma ensures that 
we can find $u_{j}$ and compute the labels of the new nodes in $O(1)$ time. 


\begin{lemma}\label{lem:split_node_formula}
The following two statements hold after executing the replace-node step: 
(i) we can check whether $u_{j}$ exists or not in $O(1)$ time; 
(ii) we can find $u_{j}$ and compute the labels of two nodes $u_{j^\prime}$ and $u_{j^\prime+1}$ in $O(1)$ time. 
\end{lemma}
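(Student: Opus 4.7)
The plan is to leverage the characterization stated just before the lemma: after the replace-node step, the node $v_{\mathsf{gnext}}$ that sits immediately after $v_{g}$ in the doubly linked list of $V$ represents the length-one F-interval starting at position $\inspos$ (unless $\inspos = \delta+1$). Given this, I would access $v_{\mathsf{gnext}}$ in $O(1)$ time through the next-pointer of $v_{g}$, which is retained from the replace-node step, and everything else follows from reading the labels of one node and its incident $E_{F}$-edge.

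For part (i), I would branch in $O(1)$ time. If $v_{g}$ has no successor, then $\inspos = \delta+1$ coincides with the right boundary $p_{k+1} = \delta+1$, so no $u_{j}$ with $p_{j} < \inspos < p_{j+1}$ exists. Otherwise, let $u_{?}$ be the endpoint of the outgoing $E_{F}$-edge of $v_{\mathsf{gnext}}$; by the definition of $E_{F}$, the $L$-range $[p_{?}, p_{?+1}-1]$ of $u_{?}$ contains $\inspos$, and the edge label satisfies $B_{F}(v_{\mathsf{gnext}}, u_{?}) = \inspos - p_{?}$. Hence $u_{j}$ exists iff this label is strictly positive; when it is zero, $\inspos = p_{?}$ lies on a DBWT-repetition boundary and no split is needed.

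For part (ii), when $u_{j} = u_{?}$, every quantity needed to build the new labels is already cached: the character is the first coordinate of $B_{U}(u_{j})$, the total length $p_{j+1}-p_{j}$ is its second coordinate, and the split offset $\inspos - p_{j}$ equals $B_{F}(v_{\mathsf{gnext}}, u_{j})$. A single subtraction then yields $(L_{\delta}[p_{j}], \inspos - p_{j})$ and $(L_{\delta}[p_{j}], (p_{j+1}-p_{j}) - (\inspos - p_{j}))$ for $u_{j^{\prime}}$ and $u_{j^{\prime}+1}$, so both (i) and (ii) run in $O(1)$ worst-case time.

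The main obstacle will be rigorously justifying the premise on which everything rests: that $v_{\mathsf{gnext}}$ really is the F-interval starting at $\inspos$. I would verify this by applying the LF formula for $L_{\delta+1}$ at the $L$-position of the singleton node $u_{i^{\prime}}$ installed in the replace-node step. Using $\occ_{<}(L_{\delta+1}, c) = \occ_{<}(L_{\delta}, c)$ together with the one extra occurrence of $c$ contributed by the replacement of $\$$ at $\reppos$, the LF value at that position collapses to $\occ_{<}(L_{\delta}, c) + \rank(L_{\delta}, \reppos, c) + 1 = \inspos$. Lemma~\ref{lem:B_tree_order} then forces the B-tree key order of $V$ to agree with the starting-position order of F-intervals, so inserting $v_{i^{\prime}}$ as the successor of $v_{g}$ in the key order coincides with inserting it at F-interval position $\inspos$ in the linked list, giving $v_{\mathsf{gnext}} = v_{i^{\prime}}$ and legitimising the $O(1)$-time lookup used throughout.
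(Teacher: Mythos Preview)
Your core mechanism---follow the $E_F$-edge out of $v_{\mathsf{gnext}}$, read its label $B_F(v_{\mathsf{gnext}},u_?)=\inspos-p_?$, use positivity of that label to decide whether $u_j$ exists, and use the label itself as the split offset---is exactly the paper's argument (it is packaged there as Lemmas~\ref{lem:v_g+1_pos}, \ref{lem:connect}, and \ref{lem:edge_label}).

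There is, however, a genuine confusion about what $v_{\mathsf{gnext}}$ is. You describe it as the node ``immediately after $v_g$'' \emph{after} the replace-node step, and in the final paragraph you even conclude $v_{\mathsf{gnext}}=v_{i'}$. But once the replace-node step has run, the immediate successor of $v_g$ is the freshly inserted $v_{i'}$, which has no $E_F$-edge yet (directed edges are only created in the later update-edge step); so your step ``let $u_?$ be the endpoint of the outgoing $E_F$-edge of $v_{\mathsf{gnext}}$'' would have nothing to follow. For the same reason your case split ``if $v_g$ has no successor'' can never fire under your reading, and the phrase ``length-one F-interval'' is wrong for the node you actually need. The paper's $v_{\mathsf{gnext}}$ is the successor of $v_g$ \emph{before} $v_{i'}$ is inserted (equivalently, the successor of $v_{i'}$ afterwards); it is an old node of $V$ whose F-interval on $F_\delta$ starts at $\inspos$ and may have any length. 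With that corrected definition---access it via the next-pointer of $v_{i'}$, or cache $v_g$'s old next-pointer before inserting $v_{i'}$---your parts (i) and (ii) go through verbatim, and your LF-formula computation in the last paragraph becomes precisely the paper's justification: $v_{i'}$ lands at F-position $\inspos$ on $F_{\delta+1}$, so $v_g$'s F-interval on $F_\delta$ ends at $\inspos-1$, hence the old successor $v_{\mathsf{gnext}}$ starts at $\inspos$.
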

\oldsentence{
\begin{proof}
See Appendix~\ref{app:split_node_formula}. 
\end{proof}
}

\newsentence\bgroup
We prove Lemma~\ref{lem:split_node_formula}. 
The following lemmas can be used for finding node $u_j$ in the doubly linked list of $U$ in $O(1)$ time. 

\begin{lemma}\label{lem:v_g+1_pos}
For (i) node $v_{g} \in V$ searched for in the replace-node step 
and (ii) node $v_{\mathsf{gnext}} \in V$ next to $v_{g}$ in the doubly linked list before executing the replace-node step, 
$v_{\mathsf{gnext}}$ represents the F-interval starting at position $\inspos$ on $F_{\delta}$ 
if $v_{\mathsf{gnext}}$ exists; 
otherwise, $\inspos = \delta + 1$ and $u_{j} \not \in V$. 
\end{lemma}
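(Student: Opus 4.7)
The plan is to identify $v_{\mathsf{gnext}}$ as the B-tree successor of the hypothetical key $(u_{i'}, c)$ that belongs to the newly inserted node $v_{i'}$, and then translate this back to F-interval starting positions on $F_\delta$ using Lemma~\ref{lem:B_tree_order}. By construction, $v_g$ is the largest key strictly less than $(u_{i'}, c)$ among nodes of $V$ after $v_i$ has been removed. Because $v_i$ represents the singleton $\$$-interval at position $1$ of $F_\delta$, it sits at the head of the doubly linked list of $V$, and removing it does not alter the next pointer of any other node; hence $v_{\mathsf{gnext}}$ is unchanged whether we look at the list before or after the removal, and in both cases it coincides with the B-tree successor of $(u_{i'}, c)$. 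By Lemma~\ref{lem:B_tree_order}, $v_{\mathsf{gnext}}$ then represents the F-interval whose starting position on $F_\delta$ is the smallest one exceeding $v_g$'s, so the goal reduces to showing that this position equals $\inspos$.

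Next I would write $v_{\mathsf{gnext}}$'s key as $(u_h, c^{\star})$ and split into two cases. When $c^{\star} = c$, the node $u_h$ is the first DBWT-repetition with character $c$ following $u_{i'}$ in the linked list of $U$, so no $c$'s occur in $L_\delta[\reppos+1..p_h-1]$; combined with $L_\delta[\reppos] = \$$, this forces $\rank(L_\delta, p_h, c) = \rank(L_\delta, \reppos, c) + 1$, and the LF formula yields $\LF_\delta(p_h) = \occ_<(L_\delta, c) + \rank(L_\delta, \reppos, c) + 1 = \inspos$. When $c^{\star} > c$, no DBWT-repetition with character $c$ comes after $u_{i'}$, so $\rank(L_\delta, \reppos, c)$ equals the total count $n_c$ of $c$'s in $L_\delta$; furthermore $c^{\star}$ is the smallest character greater than $c$ appearing in $L_\delta$ and $u_h$ is the first $c^{\star}$-repetition, which gives $\occ_<(L_\delta, c^{\star}) = \occ_<(L_\delta, c) + n_c$ and $\rank(L_\delta, p_h, c^{\star}) = 1$. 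The LF formula once more yields $\LF_\delta(p_h) = \occ_<(L_\delta, c) + n_c + 1 = \inspos$.

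For the non-existence branch, if $v_{\mathsf{gnext}}$ does not exist then no node of $V$ has key exceeding $v_g$'s, so $L_\delta$ has no character larger than $c$ and every occurrence of $c$ in $L_\delta$ precedes $\reppos$. Then $\rank(L_\delta, \reppos, c) = n_c$ and $\occ_<(L_\delta, c) + n_c = \delta$, giving $\inspos = \delta + 1$; the characterization of when the split target $u_j$ exists, recalled just before the lemma, then forces $u_j \notin U$.

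The main obstacle I anticipate is verifying that the first occurrence of $c$ in $L_\delta$ past position $\reppos$ coincides with the starting position $p_h$ of a DBWT-repetition, so that $\inspos$ really is an F-interval starting position rather than a mid-interval one. This rests on the structural fact that the $\$$-repetition is the singleton $L_\delta[\reppos..\reppos]$, which prevents any $c$-repetition of the DBWT from straddling $\reppos$; once this observation is in place, the case analysis reduces to direct arithmetic with the LF formula.
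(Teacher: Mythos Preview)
Your argument is correct, and it takes a genuinely different route from the paper's. The paper argues at the level of the string $F_{\delta+1}$: since $F_{\delta+1}$ is obtained from $F_{\delta}$ by inserting $c$ at position $\inspos$, and since the replace-node step places $v_{i'}$ (which represents this new $c$) immediately after $v_g$ in the list, the F-interval of $v_g$ must end at $\inspos-1$ on $F_{\delta}$, whence the next node $v_{\mathsf{gnext}}$ starts at $\inspos$. This is short but leans implicitly on the correctness of where $v_{i'}$ lands, i.e.\ on an unstated extension of Lemma~\ref{lem:B_tree_order} to the hypothetical key $(u_{i'},c)$.

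You instead compute $\LF_{\delta}(p_h)$ for the successor key $(u_h,c^{\star})$ directly from the LF formula, via a clean dichotomy on whether $c^{\star}=c$ or $c^{\star}>c$. This is more work but fully self-contained: the equality $\LF_{\delta}(p_h)=\inspos$ drops out of explicit rank/occ arithmetic, and the structural observation you flag---that the $\$$-repetition is a singleton, so no $c$-repetition can straddle $\reppos$---is exactly what is needed to make the rank counts line up. Your handling of the non-existence branch and of the removal of $v_i$ (head of the list, hence no other next-pointer changes) is also sound. In short, the paper gives the semantic one-liner; you supply the arithmetic that makes it rigorous.
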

\begin{proof}
By the extension of BWT, 
$F_{\delta+1}$ can be computed from $F_{\delta}$ by inserting the input character $c$ into $F_{\delta}$ at position $\inspos$. 
The replace-node step inserted node $v_{i^{\prime}}$ into the doubly linked list of $V$ at the position next to $v_{g} \in V$. 
Because the node $v_{i^\prime}$ represents the input character $c$, 
the F-interval of $v_{g}$ ends at position $(\inspos - 1)$ on $F_{\delta}$. 
Node $v_{\mathsf{gnext}}$ was next to $v_{g}$ in the list before the replace-node step was executed.
Thus, the F-interval of $v_{\mathsf{gnext}}$ starts at position $\inspos$ on $F_{\delta}$.
If set $V$ does not contain $v_{\mathsf{gnext}}$, 
then $\inspos$ must be $\delta + 1$ and $u_{j} \not \in V$. 
\end{proof}

\begin{lemma}\label{lem:connect}
Assume that $v_{\mathsf{gnext}} \in V$. 
For node $u_{x} \in U$ connected to $v_{\mathsf{gnext}}$ by the directed edge $(v_{\mathsf{gnext}}, u_x) \in E_{F}$, 
if the label $B_{F}(v_{\mathsf{gnext}}, u_x)$ of the directed edge is larger than $0$, then 
$u_{x} = u_{j}$. 
Otherwise~(i.e., $B_{F}(v_{\mathsf{gnext}}, u_{x}) = 0$), $u_{j}$ is not contained in $U$. 
\end{lemma}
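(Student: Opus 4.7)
The plan is to translate the hypotheses about $v_{\mathsf{gnext}}$ and the edge label $B_{F}(v_{\mathsf{gnext}}, u_{x})$ into explicit arithmetic on the starting positions $p_{1} < p_{2} < \cdots < p_{k+1}$ of the DBWT-repetitions and the insertion position $\inspos$, and then compare the result against the defining inequality $p_{j} < \inspos < p_{j+1}$ for the node $u_{j}$ sought in the split-node step.

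First I would invoke Lemma~\ref{lem:v_g+1_pos}. Since we are given $v_{\mathsf{gnext}} \in V$, that lemma tells us $v_{\mathsf{gnext}}$ represents an F-interval on $F_{\delta}$ that starts at position $\inspos$; calling this node $v_{m}$, we have $\LF_{\delta}(p_{m}) = \inspos$. Next I would unpack the definitions of $E_{F}$ and $B_{F}$ from Section~\ref{sec:lfgraph} applied to the edge $(v_{\mathsf{gnext}}, u_{x}) = (v_{m}, u_{x}) \in E_{F}$. The membership condition for $E_{F}$ yields $\inspos = \LF_{\delta}(p_{m}) \in [p_{x}, p_{x+1} - 1]$, and the label definition yields $B_{F}(v_{\mathsf{gnext}}, u_{x}) = \LF_{\delta}(p_{m}) - p_{x} = \inspos - p_{x}$.

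I would then split into two cases according to the value of the label. If $B_{F}(v_{\mathsf{gnext}}, u_{x}) > 0$, then $p_{x} < \inspos$, which combined with $\inspos \leq p_{x+1} - 1 < p_{x+1}$ gives $p_{x} < \inspos < p_{x+1}$. This is precisely the defining property of $u_{j}$, and because $p_{1} < p_{2} < \cdots < p_{k+1}$ are strictly increasing, at most one index can satisfy it; hence $u_{x} = u_{j}$. If instead $B_{F}(v_{\mathsf{gnext}}, u_{x}) = 0$, then $\inspos = p_{x}$, so $\inspos$ coincides with the starting position of an existing DBWT-repetition. Strict monotonicity of the $p_{i}$'s then rules out any index $j$ with $p_{j} < \inspos < p_{j+1}$, so $u_{j} \notin U$.

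The main obstacle is nothing more than making sure the translation between the graph-theoretic labels and the arithmetic statements about $\inspos$ is tight, in particular that the boundary case $\inspos = p_{x}$ really does force $u_{j}$ out of $U$. Once Lemma~\ref{lem:v_g+1_pos} has pinned down that $v_{\mathsf{gnext}}$ starts exactly at position $\inspos$, everything else is a direct application of the definitions of $E_{F}$ and $B_{F}$ together with the strict monotonicity of the starting positions.
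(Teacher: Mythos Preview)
Your proposal is correct and follows essentially the same approach as the paper's proof: both invoke Lemma~\ref{lem:v_g+1_pos} to identify $\LF_{\delta}(p_{\mathsf{gnext}}) = \inspos$, then unpack the definitions of $E_{F}$ and $B_{F}$ to obtain $p_{x} \leq \inspos < p_{x+1}$ and $B_{F}(v_{\mathsf{gnext}}, u_{x}) = \inspos - p_{x}$, and case split on the label value. Your version is slightly more explicit about the uniqueness of the index $j$ via the strict monotonicity of the $p_{i}$'s, which the paper leaves implicit.
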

\begin{proof}
Node $u_{x}$ is labeled $(L_{\delta}[p_x], p_{x+1}-p_{x})$, and 
the starting position $\LF_{\delta}(p_{\mathsf{gnext}})$ of the F-interval, represented as $v_{\mathsf{gnext}} \in V$, is covered by the DBWT-repetition, represented as $u_x$, resulting in $p_{x} \leq \LF_{\delta}(p_{\mathsf{gnext}}) < p_{x+1}$. 
For $B_{F}(v_{\mathsf{gnext}}, u_x) = 0$, 
we have $\LF_{\delta}(p_{\mathsf{gnext}}) = p_{x}$ 
because $B_{F}(v_{\mathsf{gnext}}, u_x) = \LF_{\delta}(p_{\mathsf{gnext}}) - p_{x}$. 
Because $\LF_{\delta}(p_{\mathsf{gnext}}) = \inspos$ by Lemma~\ref{lem:v_g+1_pos}, 
we obtain $p_{x} = \inspos < p_{x+1}$. 
This fact indicates that set $U$ does not contain $u_{j}$. 
Similarly, for $B_{F}(v_{\mathsf{gnext}}, u_x) > 0$, 
we obtain $p_{x} < \inspos < p_{x+1}$, which indicates that $u_{x} = u_{j}$. 
\end{proof}

Node $u_{j}$ is replaced with two new nodes $u_{j^{\prime}}$ and $u_{j^{\prime}+1}$ in the doubly linked list of $U$, 
where $u_{j^{\prime}}$ is previous to $u_{j^{\prime}+1}$. 
The next lemma guarantees that those two nodes can be labeled using edge label $B_F(v_{\mathsf{gnext}}, v_j)$.

\begin{lemma}\label{lem:edge_label}
Two split nodes $u_{j^\prime}$ and $u_{j^\prime+1}$ can be labeled as $(L_{\delta}[p_{j}], B_{F}(v_{\mathsf{gnext}}, u_{j}))$ and 
$(L_{\delta}[p_{j}], (p_{j+1} - p_{j}) - B_{F}(v_{\mathsf{gnext}}, u_{j}))$, respectively. 
\end{lemma}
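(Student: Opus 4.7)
The plan is to prove Lemma~\ref{lem:edge_label} by direct substitution: compare the labels that the split-node step wants to assign (namely $(L_{\delta}[p_{j}], \inspos - p_{j})$ and $(L_{\delta}[p_{j}], p_{j+1} - \inspos)$ as prescribed by the BWT-extension in Section~\ref{lab:ExBWT}) with the two expressions claimed in the lemma. The only quantity that needs rewriting is $\inspos$, and Lemma~\ref{lem:v_g+1_pos} already pins this down: $v_{\mathsf{gnext}}$ represents the F-interval starting at $\inspos$ on $F_\delta$, i.e.\ $\LF_{\delta}(p_{\mathsf{gnext}}) = \inspos$.

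First I would recall the label of $u_{j}$, which is $(L_{\delta}[p_{j}], p_{j+1}-p_{j})$, together with the assumption $p_{j} < \inspos < p_{j+1}$ inherited from the case analysis in the split-node step (and re-verified through Lemma~\ref{lem:connect}). Next I would invoke the definition of $B_{F}$ on the directed edge $(v_{\mathsf{gnext}}, u_{j}) \in E_{F}$: by construction, $B_{F}(v_{\mathsf{gnext}}, u_{j}) = \LF_{\delta}(p_{\mathsf{gnext}}) - p_{j}$. Combined with Lemma~\ref{lem:v_g+1_pos}, this gives the key identity $B_{F}(v_{\mathsf{gnext}}, u_{j}) = \inspos - p_{j}$.

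From this identity the two claimed labels follow immediately. For $u_{j^{\prime}}$, its length component $\inspos - p_{j}$ is exactly $B_{F}(v_{\mathsf{gnext}}, u_{j})$, so $u_{j^{\prime}}$ may be labeled $(L_{\delta}[p_{j}], B_{F}(v_{\mathsf{gnext}}, u_{j}))$. For $u_{j^{\prime}+1}$, its length component is $p_{j+1} - \inspos = (p_{j+1}-p_{j}) - (\inspos - p_{j}) = (p_{j+1}-p_{j}) - B_{F}(v_{\mathsf{gnext}}, u_{j})$, giving the second claimed label. Both the character component $L_{\delta}[p_{j}]$ and the old length $(p_{j+1}-p_{j})$ are directly available from the label $B_{U}(u_{j})$ of $u_{j}$, so no further bookkeeping is required.

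There is essentially no obstacle here; the content of the lemma is just that the split-node step has access to the arithmetic it needs without recomputing $\inspos$ from scratch. The only subtle point worth stressing in the write-up is that this lemma applies exactly in the branch where $u_{j}$ exists, so the hypothesis $v_{\mathsf{gnext}} \in V$ with $B_{F}(v_{\mathsf{gnext}}, u_{j}) > 0$ from Lemma~\ref{lem:connect} is in force, ensuring $0 < B_{F}(v_{\mathsf{gnext}}, u_{j}) < p_{j+1} - p_{j}$ so that both split nodes have positive length.
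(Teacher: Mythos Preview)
Your proposal is correct and follows essentially the same approach as the paper: both arguments reduce to the identity $B_{F}(v_{\mathsf{gnext}}, u_{j}) = \LF_{\delta}(p_{\mathsf{gnext}}) - p_{j} = \inspos - p_{j}$ (via Lemma~\ref{lem:v_g+1_pos}) and then substitute into the prescribed labels $(L_{\delta}[p_{j}], \inspos - p_{j})$ and $(L_{\delta}[p_{j}], p_{j+1} - \inspos)$. Your additional remark about positivity of the split lengths via Lemma~\ref{lem:connect} is a nice sanity check but not part of the paper's proof.
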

\begin{proof}
The new nodes $u_{j^\prime}$ and $u_{j^\prime+1}$ are labeled as $(L_{\delta}[p_{j}], \inspos - p_{j})$ and 
$(L_{\delta}[p_{j}], p_{j+1} - \inspos)$ using insertion position $\inspos$, respectively. 
We obtain $\inspos - p_{j} = B_{F}(v_{\mathsf{gnext}}, u_{j})$ 
because $B_{F}(v_{\mathsf{gnext}}, u_{j}) = \LF_{\delta}(p_{\mathsf{gnext}}) - p_{j}$ and $\LF_{\delta}(p_{\mathsf{gnext}}) = \inspos$. 
Similarly, $p_{j+1} - \inspos = (p_{j+1} - p_{j}) - B_{F}(v_{\mathsf{gnext}}, u_{j})$. 
\end{proof}
We can compute two labels $B_{U}(u_{j})(p_{j+1} - p_{j})$ and $B_{F}(v_{\mathsf{gnext}}, u_{j})$ in $O(1)$ time using node $v_{\mathsf{gnext}}$. 
Because we can compute $v_{\mathsf{gnext}}$ in $O(1)$ time using the result of the replace-node step, 
we obtain Lemma~\ref{lem:split_node_formula}. 

\egroup

Next, set $V$ is updated according to the replacement of nodes in $U$, i.e., 
for undirected edge $(u_{j}, v_{j}) \in E_{LF}$, 
node $v_{j}$ is replaced with two new nodes $v_{j^\prime}$ and $v_{j^\prime+1}$ in the doubly linked list of $V$, 
where $v_{j^\prime}$ is previous to $v_{j^\prime+1}$. 
Nodes $v_{j^\prime}$ and $v_{j^\prime+1}$ have the same labels of $u_{j^\prime}$ and $u_{j^\prime+1}$, respectively. 
Therefore, this step takes $O(1)$ time. 

\newsentence\bgroup
Figure~\ref{fig:insert}-(II) illustrates an example of the split-node step. 
In this example, $\inspos = 3$, $v_{\mathsf{gnext}} = v_{6}$, $v_{j} = v_{2}$, $v_{j^{\prime}} = v_{9}$, and $v_{j^{\prime}+1} = v_{10}$ hold. 
In Figure~\ref{fig:graph}, 
the directed edge starting at node $v_{6}$ is labeled as integer $1$ by function $B_{F}(v_{6}, u_{2})$, 
and the directed edge points to node $u_{2}$ with label $(b, 2)$. 
Hence, node $u_{2}$ is replaced with two nodes $u_{9}$ and $u_{10}$. 
Two nodes $u_{9}$ and $u_{10}$ are labeled with pairs $(b, 1)$ and $(b, 1)$, respectively. 
Node $v_{2}$ is connected to $u_{2}$ by edge $(v_2, u_2) \in E_{LF}$, 
and $v_{2}$ is replaced with two nodes $v_{9}$ and $v_{10}$. 
Here, $v_{9}$ and $v_{10}$ are labeled with pairs $(b, 1)$ and $(b, 1)$, respectively. 
\egroup

\subparagraph{(III) Insert node.} 
This step inserts a new node $u_{x^{\prime}}$ labeled $(\$, 1)$ into $U$, 
and it updates $V$ according to the insertion of $U$. 
Analogous to the extension of BWT described in Section~\ref{lab:ExBWT}, 
the position for inserting the new node in the doubly linked list of $U$ is determined according to the following three cases: 
(i) Node $u_{j} \in U$ was found and it was split into two nodes $u_{j^{\prime}}$ and $u_{j^{\prime}+1}$ in the split-node step. 
In this case, node $u_{x^{\prime}}$ is inserted at the position next to $u_{j^{\prime}} \in U$ on the doubly linked list of $U$.
(ii) Node $u_{j}$ was not found, 
and new node $v_{i^{\prime}}$ is inserted at the position next to the last element on the doubly linked list of $V$ in the replace-node step. 
In this case, $u_{x^{\prime}}$ is inserted at the position next to the last element on the doubly linked list of $U$. 
(iii) Node $u_{j}$ was not found, 
and $v_{i^{\prime}}$ is inserted at the position previous to a node $v_{\mathsf{gnext}} \in V$ on the doubly linked list of $V$. 
In this case, $v_{\mathsf{gnext}}$ is connected to a node $u_{x} \in U$ by a directed edge in $E_{F}$, 
and $u_{x^{\prime}}$ is inserted into the doubly linked list of $U$ at the position previous to $u_{x}$. 

Next, this step creates a new node $v_{x^{\prime}}$ labeled $(\$, 1)$, 
and it is inserted into the doubly linked list of $V$. 
The new node is inserted at the top of the list, 
because the new label includes special character $\$$. 
This step takes $O(1)$ time. 

\newsentence\bgroup
Figure~\ref{fig:insert}-(III) illustrates an example of the insert-node step. 
Because the split-node step replaced node $u_{2}$ with two nodes $u_{9}$ and $u_{10}$, 
the insert-node step inserts node $u_{11}$ labeled $(\$, 1)$ at the position next to  
$u_{9}$ on the doubly linked list of $U$. 
On the other hand, the step inserts node $v_{11}$ labeled $(\$, 1)$ in the doubly linked list of $V$ at the position previous to node $v_{1}$.
\egroup

\subparagraph{(IV) Update edge.} 
\oldsentence{
This step updates the set $E_{LF}$ of undirected edges and two sets $E_{L}$ and $E_{F}$ of directed edges according to the two sets $U$ and $V$, which were updated in the previous steps. 
This step consists of two phases: 
(i) for the nodes of $u_i$, $v_i$, $u_{j}$, and $v_{j}$ removed in the replace-node and split-node steps, the edges 
connected to these nodes are removed from $E_{LF}$, $E_L$, and $E_F$; 
(ii) new edges connecting new nodes~(i.e., $u_{j^{\prime}}$, $u_{j^{\prime}+1}$, $v_{i^{\prime}}$, and $v_{j^{\prime}+1}$) 
are added to $E_{LF}$, $E_L$, and $E_F$. 
The labels of new directed edges are computed at the second phase. 
The number of removed edges and new edges can be bounded by $O(\alpha)$ 
because (i) every node in the LF-interval graph for an $O(\alpha)$-balanced DBWT is connected to $O(\alpha)$ edges, 
(ii) the DBWT represented by the given LF-interval graph $\graphds(D^{\alpha}_{\delta})$ is $\alpha$-balanced, 
and (iii) the LF-interval graph $\graphds(D^{2\alpha+1}_{\delta+1})$ outputted by this update operation represents a $(2\alpha+1)$-balanced DBWT. 
Because the number of updated edges is small, 
this step can be performed in $O(\alpha)$ time. 
See Appendix~\ref{app:time_update_edge_step} for the details of the update-edge step. 
Formally, we obtain the following lemma. 
\begin{lemma}\label{lem:time_update_edge_step}
The update-edge step takes $O(\alpha)$ time.
\end{lemma}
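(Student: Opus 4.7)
The plan is to bound separately (a) the total number of edges removed or inserted by this step and (b) the cost of locating each such edge. For part (a) I would first establish a uniform degree bound: in any $\beta$-balanced LF-interval graph each node has degree $O(\beta)$. Concretely, $u_i \in U$ is incident to one $E_{LF}$-edge (to $v_i$), exactly one outgoing $E_L$-edge (because the F-intervals partition $[1,\delta]$, so $p_i$ lies in precisely one of them), and fewer than $\beta$ incoming $E_F$-edges (otherwise $u_i$'s DBWT-repetition would be $\beta$-heavy). The case of $v_j \in V$ is symmetric.

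With this bound at $\beta = \alpha$, the removal phase is immediate: the detached nodes $u_i, v_i, u_j, v_j$ all come from the input $\alpha$-balanced graph, so each carries $O(\alpha)$ incident edges and, given that each node stores pointers to its incident edges in the data structures of Section~\ref{sec:dynamic}, they can all be unlinked in $O(\alpha)$ total time. Applying the bound at $\beta = 2\alpha+1$ handles the insertion phase: each of the new nodes $u_{i'}, v_{i'}, u_{j'}, u_{j'+1}, v_{j'}, v_{j'+1}, u_{x'}, v_{x'}$ receives at most $O(\alpha)$ new incident edges in the $(2\alpha+1)$-balanced output. The label of every new directed edge is then a difference of two starting positions, computable in $O(1)$ from the labels already stored at its endpoints together with the LF formula, so label computation contributes only $O(\alpha)$ to the total.

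The remaining question is how to \emph{find} the new edges quickly, and this is the part I expect to be the main obstacle. I would argue that the DBWT-partition and the F-partition of $D^{2\alpha+1}_{\delta+1}$ differ from those of $D^{\alpha}_{\delta}$ only in a neighbourhood of the split point $\inspos$ and of $v_{\mathsf{gnext}}$: every DBWT-repetition whose starting position is newly covered by an F-interval, or vice versa, must lie in a region adjacent in the doubly linked lists of $U$ and $V$ to a node that was either modified or removed in the earlier three steps. Hence every new edge can be discovered by walking $O(\alpha)$ steps along those lists starting from the handful of anchor nodes ($u_{j'}$, $u_{j'+1}$, $u_{x'}$, $v_{j'}$, $v_{j'+1}$, $v_{x'}$, and their immediate neighbours) already in hand from the previous steps.

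The technical core of the obstacle is therefore the locality claim: no new edge can connect two nodes lying far apart in the linked lists, so a bounded-length traversal already touches every node that could become a new endpoint. This is where one must carefully use the output-balanced degree bound to cap the traversal length at $O(\alpha)$, ruling out the possibility that an unboundedly long walk is needed to exhaust the new adjacencies. Once this locality is verified, the degree bounds for the input and output graphs combine to give the claimed $O(\alpha)$ total running time for the update-edge step.
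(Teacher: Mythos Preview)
Your counting argument (the degree bound on a $\beta$-balanced graph, hence $O(\alpha)$ removed edges and $O(\alpha)$ inserted edges) is correct and is exactly how the paper bounds the number of edges (Lemma~\ref{lem:addedge}). Your claim that each label is a difference of starting positions recoverable from stored data is also right in spirit.

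The gap is in the locality claim. Consider the new $E_F$-edge with tail $v_{j'+1}$. Its head is the node $u_h$ whose DBWT-repetition contains position $\LF_\delta(p_j)+(\inspos-p_j)$ on $L_\delta$. This position lies somewhere inside the old F-interval of $v_j$, and has no relation whatsoever to the positions $p_j,\inspos,\reppos$ of your anchors $u_{j'},u_{j'+1},u_{x'},u_{i'}$ in the $U$ list: $u_h$ can sit arbitrarily far from all of them. So a bounded walk from those anchors will not find it.

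What the paper actually does (Lemma~\ref{lem:computingedge}, Appendix~\ref{app:computingedge}) is not a linked-list walk from the new nodes but a \emph{reuse of the old edge pointers} attached to the removed nodes $u_i,v_i,u_j,v_j$. For every new tail except $v_{j'+1}$, the new head is obtained in $O(1)$ time by following the single old $E_L$- or $E_F$-pointer that the corresponding removed node carried, and then applying a small case analysis on whether that pointer landed on a node that was itself split or replaced (Lemmas~\ref{lem:v_e_computation_lemma_U} and \ref{lem:v_e_computation_lemma_V}). For the one exceptional tail $v_{j'+1}$, the head $u_h$ is shown to lie either at $u_s$ (the old head of $(v_j,u_s)\in E_F$) or among the at most $\alpha$ tails of the old $E_L$-edges incident to $v_j$; scanning those candidates costs $O(\alpha)$ once (Lemma~\ref{lem:comp_vj}). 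This is how the $O(\alpha)$ bound is obtained.

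If you want to salvage your walk-based picture, you must enlarge your anchor set to include the \emph{old} graph-neighbours of the removed nodes---in particular $u_s$, the head of the deleted edge $(v_j,u_s)\in E_F$. From $u_s$ a walk of length $O(\alpha)$ in the $U$ list does reach $u_h$ (since the intervening DBWT-repetitions are exactly the old $E_L$-tails at $v_j$), but $u_s$ is not a linked-list neighbour of any node you listed; it is only accessible via the removed edge itself.
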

\begin{proof}
See Appendix~\ref{app:time_update_edge_step}. 
\end{proof}
}

\newsentence\bgroup
This step updates the set $E_{LF}$ of undirected edges and two sets $E_{L}$ and $E_{F}$ of directed edges according to the two sets $U$ and $V$, which were updated in the previous steps. 
The update-edge step consists of two phases: 
(i) for the nodes of $u_i$, $v_i$, $u_{j}$, and $v_{j}$ removed in the replace-node and split-node steps, the edges 
connected to these nodes are removed from $E_{LF}$, $E_L$, and $E_F$; 
(ii) new edges connecting new nodes~(i.e., $u_{j^{\prime}}$, $u_{j^{\prime}+1}$, $v_{i^{\prime}}$, and $v_{j^{\prime}+1}$) 
are appropriately added to $E_{LF}$, $E_L$, and $E_F$. 

For the nodes of $u_i$, $v_i$, $u_{j}$, and $v_{j}$ removed in the replace-node and split-node steps, the edges 
connected to these nodes are removed from $E_{LF}$, $E_L$, and $E_F$. 
Because the given LF-interval graph $\graphds(D^{\alpha}_{\delta})$ represents an $\alpha$-balanced DBWT, 
every node in the LF-interval graph for an $\alpha$-balanced DBWT is connected to $O(\alpha)$ edges. 
Hence, the number of removed edges is $O(\alpha)$, 
and the removal of all the edges can be performed in $O(\alpha)$ time. 

New undirected edges connecting new nodes are added to $E_{LF}$. 
If the split-node step created new nodes~(i.e., $u_{j^{\prime}}$, $u_{j^{\prime}+1}$, $v_{i^{\prime}}$, and $v_{j^{\prime}+1}$), four new edges $(u_{i^{\prime}}, v_{i^{\prime}})$, $(u_{j^{\prime}}, v_{j^{\prime}})$, $(u_{i^{\prime}+1}, v_{i^{\prime}+1})$, and $(u_{x^{\prime}}, v_{x^{\prime}})$ are added to $E_{LF}$; 
otherwise, two new edges, $(u_{i^{\prime}}, v_{i^{\prime}})$ and $(u_{x^{\prime}}, v_{x^{\prime}})$, are added to $E_{LF}$. 

New directed edges connecting new nodes are created, and each new directed edge is added to $E_{L}$ or $E_{F}$ appropriately. 
The following lemmas ensure the addition of all new edges finishes in $O(\alpha)$ time. 
\begin{lemma}\label{lem:addedge}
The number of new directed edges is $O(\alpha)$.
\end{lemma}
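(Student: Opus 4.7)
The plan is to bound the number of new directed edges by summing the degrees of the constantly many newly created nodes in $\graphds(D^{2\alpha+1}_{\delta+1})$ and invoking the $(2\alpha+1)$-balancedness of the output DBWT.

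First I would enumerate the new nodes. Across the replace-node, split-node, and insert-node steps, at most eight new nodes are created in total: $u_{i^{\prime}}, v_{i^{\prime}}$ from the replace-node step; $u_{j^{\prime}}, u_{j^{\prime}+1}, v_{j^{\prime}}, v_{j^{\prime}+1}$ from the split-node step when it applies; and $u_{x^{\prime}}, v_{x^{\prime}}$ from the insert-node step. Every new directed edge must be incident to at least one of these new nodes, because any edge both of whose endpoints survived unchanged from $\graphds(D^{\alpha}_{\delta})$ was already present and is therefore not new. It therefore suffices to upper-bound the directed degree of each new node in the updated graph.

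Second, I would bound the degree of each new node. By the definition of $E_{L}$, any $u \in U$ has exactly one outgoing edge in $E_{L}$, namely the edge pointing to the unique F-interval that contains the starting position of $u$'s DBWT-repetition. By the definition of $E_{F}$, the incoming edges at $u$ from $E_{F}$ correspond bijectively to the F-intervals whose starting positions lie inside $u$'s DBWT-repetition; since $D^{2\alpha+1}_{\delta+1}$ is $(2\alpha+1)$-balanced, $u$'s DBWT-repetition is not $(2\alpha+1)$-heavy and hence covers at most $2\alpha$ such starting positions. Thus $u$ is incident to at most $2\alpha+1$ directed edges. A symmetric argument, exchanging the roles of $U$ and $V$ and of $E_{L}$ and $E_{F}$, shows that every new node $v \in V$ is also incident to at most $2\alpha+1$ directed edges in $E_{L} \cup E_{F}$.

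Summing this degree bound over the at most eight new nodes gives at most $8(2\alpha+1) = O(\alpha)$ new directed edges, as claimed. The main subtlety is invoking $(2\alpha+1)$-balancedness of the output: this is precisely the invariant that the update operation is specified to maintain, so it is available at the moment the edges are inserted, and the counting bound then follows immediately from the definitions of $\alpha$-heaviness and $\alpha$-balancedness.
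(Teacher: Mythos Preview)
Your argument is correct, but it takes a different route from the paper's proof. The paper counts new directed edges by their \emph{tails}: since each node in $U$ (resp.\ $V$) has exactly one outgoing edge in $E_L$ (resp.\ $E_F$), a new directed edge must start either at one of the at most eight new nodes, or at an old node whose previous outgoing edge was removed in the first phase of the update-edge step. The removed edges are precisely those incident to the at most four removed nodes $u_i,v_i,u_j,v_j$, and the $\alpha$-balancedness of the \emph{input} DBWT bounds their number by $O(\alpha)$. Thus the paper obtains the bound directly from the input hypothesis and the out-degree-one property.

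Your approach instead bounds the total directed degree of the new nodes in the \emph{output} graph, invoking the $(2\alpha+1)$-balancedness of $D^{2\alpha+1}_{\delta+1}$. This is valid --- the balance of the output is established independently (via Lemmas~\ref{lem:the_number_of_directed_edges_to_u} and~\ref{lem:the_number_of_directed_edges_to_v}, which do not rely on Lemma~\ref{lem:addedge}) --- but it is a forward reference to Theorem~\ref{thm:insert_balance}(ii), which the paper states only afterward. Your proof also rests on the claim that an edge between two surviving old nodes cannot be new; this is true, but it is exactly the content of Lemmas~\ref{lem:new_edge_goal_F}(v) and~\ref{lem:new_edge_goal_U}(v) in the appendix and deserves a citation rather than being asserted as obvious. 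The paper's tail-counting argument avoids both of these dependencies, making it more self-contained at this point in the exposition; your degree-counting argument is equally rigorous once those supporting facts are in hand.
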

\begin{proof}
Each of the new directed edges starts from (i) a new node or (ii) the node connected to the tail of a removed directed edge. 
The number of new nodes is at most eight, and the number of removed edges is $O(\alpha)$. 
Hence, we obtain Lemma~\ref{lem:addedge}.
\end{proof}
\begin{lemma}\label{lem:computingedge}
Given a node, the new directed edge connected to it by the tail and its edge label on the LF-interval graph can be accessed in $O(1)$ time 
if the node is not $v_{j^{\prime}+1}$; 
otherwise, computing the edge and its label takes $O(\alpha)$ time. 
\end{lemma}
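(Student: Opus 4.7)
The plan is to prove the lemma by a case analysis over the at most eight new nodes produced by the replace-node, split-node, and insert-node steps, namely $u_{i^\prime}, u_{j^\prime}, u_{j^\prime+1}, u_{x^\prime}$ in $U$ and $v_{i^\prime}, v_{j^\prime}, v_{j^\prime+1}, v_{x^\prime}$ in $V$. For each node I must locate its unique outgoing directed edge (in $E_{L}$ if the node lies in $U$, in $E_{F}$ if it lies in $V$) together with its integer label.

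The central observation is that most new nodes share a starting position of their DBWT-repetition or F-interval with a node that was present in $\graphds(D^{\alpha}_{\delta})$ and was removed during the update, so the required outgoing edge can be inherited from the cached pre-removal edges after a constant-time positional correction using the LF formula. Specifically, $u_{i^\prime}$ and $v_{i^\prime}$ start at the same positions as the removed $u_{i}$ and $v_{i}$, and $u_{j^\prime}$ and $v_{j^\prime}$ start at the same positions as $u_{j}$ and $v_{j}$; in each of these four cases the corresponding old directed edge in $E_{L}$ or $E_{F}$ yields the new one in $O(1)$ time. For $u_{j^\prime+1}$, Lemma~\ref{lem:v_g+1_pos} shows that its starting position equals $\LF_{\delta}(p_{\mathsf{gnext}})$, so its outgoing edge is exactly $(u_{j^\prime+1}, v_{\mathsf{gnext}})$ with label $0$, which is accessible through the pointer to $v_{\mathsf{gnext}}$ that was already computed in the split-node step. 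Finally, the outgoing edges of the new $\$$ nodes $u_{x^\prime}$ and $v_{x^\prime}$ can be read off from the pointers produced by the insert-node step: $v_{x^\prime}$ has been placed at the top of the doubly linked list of $V$ and is connected to $u_{x^\prime}$, while the F-interval on $F_{\delta+1}$ covering $\inspos$ is determined by the neighbor of $u_{x^\prime}$ in the doubly linked list of $U$ whose edge in $E_{L}$ we can consult in $O(1)$ time.

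The main obstacle is $v_{j^\prime+1}$. Its starting position on $F_{\delta+1}$ lies strictly inside the F-interval represented by $v_{j}$ before the split (at offset $B_{F}(v_{\mathsf{gnext}}, u_{j})$ by Lemma~\ref{lem:edge_label}), and no previously stored directed edge ends at exactly this position, so the shortcut used in the other cases is unavailable. My approach is to enumerate the directed edges in $E_{L}$ that were incident to $v_{j}$ before removal; these enumerate precisely the starting positions of the DBWT-repetitions covered by the old F-interval of $v_{j}$, among which the unique DBWT-repetition covering the new starting position $\LF_{\delta}(p_{j^\prime+1})$ is the one whose starting position is the largest one not exceeding it. Picking this repetition and applying the LF formula yields both the new directed edge and its label. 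Since $\graphds(D^{\alpha}_{\delta})$ is $\alpha$-balanced, $v_{j}$ has $O(\alpha)$ incident edges in $E_{L}$, so this scan runs in $O(\alpha)$ time, giving the claimed bound and completing the proof.
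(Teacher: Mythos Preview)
Your overall plan---case analysis over the new nodes, inheriting outgoing edges from the edges cached before removal---matches the paper's strategy, but several of the cases are handled incorrectly.

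First, the claim that ``$u_{i'}$ and $v_{i'}$ start at the same positions as the removed $u_{i}$ and $v_{i}$'' is wrong for $v_{i'}$. The F-interval of $v_{i}$ was $[1,1]$ (it represented~$\$$), whereas the F-interval of $v_{i'}$ is $[\inspos,\inspos]$; it is $v_{x'}$ whose F-interval is $[1,1]$. Consequently your description of the outgoing $E_{F}$-edges of $v_{i'}$ and $v_{x'}$ is swapped: the edge from $v_{i'}$ is the trivial $(v_{i'},u_{x'})$ with label~$0$, while the edge from $v_{x'}$ is inherited from the old edge $(v_{i},u_{1})\in E_{F}$, not from any $E_{LF}$-connection to~$u_{x'}$.

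Second, your treatment of $v_{j'+1}$ has a gap. Scanning the $E_{L}$-edges incident to $v_{j}$ lists only the DBWT-repetitions whose \emph{starting} positions lie inside the old F-interval of $v_{j}$. If the DBWT-repetition covering $\LF_{\delta}(p_{j})+(\inspos-p_{j})$ starts strictly before $\LF_{\delta}(p_{j})$, it is not in that list; it is the node $u_{s}$ reached from $v_{j}$ via $E_{F}$. The paper's Lemma~\ref{lem:comp_vj} checks $u_{s}$ first and only falls back to the $E_{L}$-scan when $u_{s}$ is too short.

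Third, you only treat the eight new nodes, but Lemma~\ref{lem:addedge} says new directed edges also start from \emph{old} nodes whose previous outgoing edge pointed to a removed node ($u_{i},u_{j},v_{i},v_{j}$). These are case~(v) in the paper's proof and must be handled as well.

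Finally, the ``constant-time positional correction using the LF formula'' hides real work: the inherited target may itself be a removed node (e.g.\ $v_{j}$), and one must redirect to $v_{j'}$ or $v_{j'+1}$ depending on an offset comparison. The paper formalises this via the shift function $\shfn$ and Lemmas~\ref{lem:v_e_computation_lemma_U}--\ref{lem:v_e_computation_lemma_V}, each with four subcases; your sketch does not account for these redirections.
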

\begin{proof}
See Appendix~\ref{app:computingedge}.
\end{proof}
The two phases of the update-edge step take $O(\alpha)$ time in total. 
\egroup

\newsentence\bgroup
In Figure~\ref{fig:insert}-(IV), 
four edges $(u_{8}, v_{8})$, $(u_{9}, v_{9})$, $(u_{10}, v_{10})$, and $(u_{11}, v_{11})$ connecting new nodes are added to $E_{LF}$.
Six directed edges $(u_{1}, v_{11})$, $(u_{9}, v_{1})$, $(u_{11}, v_{8})$, $(u_{10}, v_{6})$, $(u_{8}, v_{7})$, and $(u_{5}, v_{9})$ are added to $E_{L}$. 
Similarly, seven directed edges $(v_{11}, u_{1})$, $(v_{1}, u_{9})$, $(v_{8}, u_{11})$, $(v_{6}, u_{10})$, $(v_{7}, u_{8})$, $(v_{9}, u_{5})$, and $(v_{10}, u_{5})$ are 
added to $E_{F}$.
\egroup

\subparagraph{Update of the data structures.}
Similar to the update-edge step, 
the four data structures in the LF-interval graph~(i.e., the order maintenance data structure, the B-tree of set $V$, and 
two arrays that store nodes representing $\alpha$-heavy DBWT-repetitions and F-intervals) 
are updated according to the removed nodes and new nodes. 

\oldsentence{
See Appendix~\ref{app:time_update_data_structures} for the details of the algorithm updating the four data structures. 
The following lemma concerning the update time of the four data structures. 
\begin{lemma}\label{lem:time_update_data_structures}
Updating the four data structures takes $O(\alpha + \log k)$ time.
\end{lemma}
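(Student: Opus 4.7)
The plan is to bound the update cost for each of the four data structures separately and then sum. The replace-node, split-node, and insert-node steps alter $O(1)$ nodes in $U$ and $V$ (at most four new nodes are created in each set, and at most two are removed), while by Lemma~\ref{lem:addedge} and the analysis of the update-edge step only $O(\alpha)$ directed edges are added to or removed from $E_L \cup E_F$.

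First I would update the order maintenance data structure of $U$. Each insertion and deletion runs in $O(1)$ time~\cite{DBLP:conf/stoc/DietzS87}, and only a constant number of such operations are issued, so the total cost here is $O(1)$. Next I would update the B-tree of $V$: every insertion, deletion, and search takes $O(\log k)$ time, and since only $O(1)$ keys are inserted or deleted across the node-update steps, this contributes $O(\log k)$. Note that keys in the B-tree involve the order of $U$, which can be tested in $O(1)$ time thanks to the order maintenance structure already updated, so no extra logarithmic factor appears.

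Finally I would maintain the two arrays storing $\alpha$-heavy DBWT-repetitions and $\alpha$-heavy F-intervals. My approach is to augment each node $u_i \in U$ (resp. $v_i \in V$) with (i) a \emph{coverage count} equal to the in-degree from $E_F$ (resp. $E_L$), and (ii) an index into its slot in the heavy-node array if it is currently stored there. Whenever the update-edge step adds or removes an edge of $E_L \cup E_F$, we adjust the endpoint's coverage count by $\pm 1$; whenever this count crosses the threshold $\alpha$, we insert or delete the node into the corresponding array in $O(1)$ time using swap-with-last via the stored index. Since only $O(\alpha)$ edges are touched and the four additional nodes inserted as part of the node-update steps can have their coverage counts initialized from the edges enumerated in the update-edge step, the total cost of maintaining the two arrays is $O(\alpha)$.

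Summing $O(1) + O(\log k) + O(\alpha) = O(\alpha + \log k)$ yields the stated bound. The main obstacle will be making the heavy-node arrays updatable in amortized $O(1)$ per edge change; this is handled by maintaining coverage counts incrementally together with back-pointers from nodes to their array slots, so that membership changes never require rescanning edges nor recomputing degrees from scratch.
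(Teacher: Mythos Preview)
Your argument is correct, and your treatment of the order maintenance structure and the B-tree of $V$ matches the paper exactly. The difference lies in how you maintain the two arrays for $\alpha$-heavy DBWT-repetitions and F-intervals. You propose augmenting every node with an explicit in-degree counter and a back-pointer into the heavy array, so that each of the $O(\alpha)$ edge insertions/deletions performed in the update-edge step triggers an $O(1)$ counter update and, when the threshold is crossed, an $O(1)$ array update via swap-with-last. The paper instead proves a structural fact (Lemma~\ref{lem:update_edge_count}): after the update, the only node in $U$ that can possibly be $\alpha$-heavy is the node $u_h$ at the head of the directed edge from the new node $v_{j'+1}$ (and symmetrically only $v_{h'}$ in $V$). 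It then checks heaviness of these two candidates directly in $O(\alpha)$ time (Lemma~\ref{lem:verify_heavy}). Your approach is more generic and avoids the structural lemma entirely, at the cost of adding $O(1)$ words of auxiliary state per node; the paper's approach keeps the node layout untouched but needs the combinatorial insight that at most one candidate per side can become heavy. Both arrive at the same $O(\alpha)$ bound for this component.
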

\begin{proof}
See Appendix~\ref{app:time_update_data_structures}. 
\end{proof}
}

\newsentence\bgroup
\subparagraph{Update of the order maintenance data structure.}
For the order maintenance data structure, 
at most two deleted nodes $u_{i}$ and $u_{j}$ from $U$ are also removed from the order maintenance data structure in $O(1)$ time. 
In addition, at most four new nodes $u_{i^{\prime}}$, $u_{j^{\prime}}$, $u_{j^{\prime}+1}$, and $u_{x^{\prime}}$ in $U$ are inserted into the same order maintenance data structure in $O(1)$ time.  

\subparagraph{Update of the B-tree of $V$.}
The two nodes $v_{i}$ and $v_{j}$ removed from $V$ are also removed from the B-tree of $V$. 
In addition, four new nodes $v_{i^{\prime}}$, $v_{j^{\prime}}$, $v_{j^{\prime}+1}$, and $v_{x^{\prime}}$ in $V$ are inserted into the B-tree of $V$ with their keys 
$(u_{i^{\prime}}, c)$, $(u_{j^{\prime}}$, $L_{\delta}[p_{j}])$, $(u_{j^{\prime}+1}$, $L_{\delta}[p_{j}])$, and $(u_{x^{\prime}}, \$)$, respectively. 
Updating the B-tree of $V$ takes $O(\log k)$ time. 

\subparagraph{Update of two arrays for $\alpha$-heavy DBWT-repetitions and F-intervals.}
New nodes representing $\alpha$-heavy DBWT-repetitions or F-intervals are added to the two arrays. 
The following two lemmas ensure that the number of such nodes is at most one and it can be found in $O(\alpha)$ time. 
\begin{lemma}\label{lem:update_edge_count}
Let $u_{h} \in U$~(respectively, $v_{h^{\prime}} \in V$) be a node connected to the head of the directed edge starting at the new node $v_{j^{\prime} + 1} \in V$~(respectively, $u_{j^{\prime}+1} \in U$) created in the split-node step. 
(i) The only DBWT-repetition represented as $u_h$ can be $\alpha$-heavy (i.e., 
the DBWT-repetitions represented as all the other nodes except for $u_h$ in $U$ are not $\alpha$-heavy), 
and (ii) the only F-interval represented as $v_{h^{\prime}}$ can be $\alpha$-heavy. 
\end{lemma}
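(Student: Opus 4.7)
The plan is to prove part (i); part (ii) follows by a symmetric argument that exchanges the roles of $U$ and $V$ (and of $E_L$ and $E_F$). Fix a node $u_m$ in $\graphds(D^{2\alpha+1}_{\delta+1})$ with $u_m \neq u_h$; the goal is to show that the DBWT-repetition represented by $u_m$ covers at most $\alpha - 1$ F-interval starting positions on $F_{\delta+1}$, so that it is not $\alpha$-heavy.

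The strategy is to compare the multiset of F-interval starting positions before and after the update. The only changes are: $v_i$ (at position $1$ on $F_\delta$) is removed and $v_{x^{\prime}}$ (at position $1$ on $F_{\delta+1}$) is added; if the split-node step fires, $v_j$ is removed while $v_{j^{\prime}}$ (with the same starting position on $F$) and $v_{j^{\prime}+1}$ (at a new interior position on $F_{\delta+1}$) are added; and $v_{i^{\prime}}$ is added. I would first match each removal with an addition sharing the same starting position, namely $v_i$ with $v_{x^{\prime}}$ at position $1$, and $v_j$ with $v_{j^{\prime}}$ at its starting position. These matches preserve the coverage of every DBWT-repetition, so they contribute no net change.

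Next I would account for the two genuinely new starting positions, namely those of $v_{j^{\prime}+1}$ and $v_{i^{\prime}}$. By the definition of $u_h$, the starting position of $v_{j^{\prime}+1}$ on $F_{\delta+1}$ lies inside the interval of $u_h$ on $L_{\delta+1}$ and is covered by $u_h$ alone. For the starting position of $v_{i^{\prime}}$, I would apply the LF formula to $L_{\delta+1}$: combining the identity $\inspos = \occ_{<}(L_{\delta}, c) + \rank(L_{\delta}, \reppos, c) + 1$ from Section~\ref{lab:ExBWT} with the facts that the $\$$ at $\reppos$ has been replaced by $c$ and a new $\$$ has been inserted at $\inspos$, one obtains $\LF_{\delta+1}(\reppos^{\prime}) = \inspos$, where $\reppos^{\prime}$ denotes the position of this $c$ in $L_{\delta+1}$. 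Thus the starting position of $v_{i^{\prime}}$ on $F_{\delta+1}$ equals $\inspos$, which is precisely the single position covered by the new one-character DBWT-repetition $u_{x^{\prime}}$.

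Combining these observations, every node $u_m$ with $u_m \notin \{u_h, u_{x^{\prime}}\}$ covers the same number of F-interval starting positions as the corresponding node in $D^{\alpha}_{\delta}$ did, and the $\alpha$-balanced assumption bounds this count by $\alpha - 1$. The exceptional node $u_{x^{\prime}}$ has a unit-length interval and therefore covers at most one F-interval starting position (namely that of $v_{i^{\prime}}$), which is at most $\alpha - 1$ for $\alpha \geq 2$. The main obstacle I anticipate is justifying the key identity $\LF_{\delta+1}(\reppos^{\prime}) = \inspos$; this will require a careful two-case analysis according to whether $\inspos \leq \reppos$ or $\inspos > \reppos$, together with a rewriting of the $\occ_{<}$ and $\rank$ quantities on $L_{\delta+1}$ that accounts simultaneously for the replacement at $\reppos$ and the insertion of $\$$ at $\inspos$.
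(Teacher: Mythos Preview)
Your high-level approach matches the paper's: both identify $v_{i^{\prime}}$ and $v_{j^{\prime}+1}$ as the only F-interval starts with no counterpart in $D^{\alpha}_{\delta}$, attribute them to $u_{x^{\prime}}$ and $u_h$ respectively, and argue that every other coverage relation is inherited from the $\alpha$-balanced input. The paper formalizes the ``matching'' you describe as an edge-correspondence statement (Lemma~\ref{lem:new_edge_goal_F}) and then bounds the in-degree of each $U$-node by a case analysis (Lemma~\ref{lem:the_number_of_directed_edges_to_u}).

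The one real gap in your writeup is the treatment of the \emph{new} $U$-nodes $u_{i^{\prime}}$, $u_{j^{\prime}}$, $u_{j^{\prime}+1}$. Your concluding sentence asserts that every $u_m \notin \{u_h, u_{x^{\prime}}\}$ ``covers the same number of F-interval starting positions as the corresponding node in $D^{\alpha}_{\delta}$ did,'' but these three nodes have no single corresponding node in $D^{\alpha}_{\delta}$. For $u_{j^{\prime}}$ and $u_{j^{\prime}+1}$ you need the separate observation that each of their intervals on $L_{\delta+1}$ is a sub-interval of $u_j$'s shifted interval, so each covers \emph{at most} (not exactly) the number of starts that $u_j$ covered, which was $\le \alpha-1$; and you must still verify that neither picks up the new start of $v_{j^{\prime}+1}$ unless it happens to equal $u_h$. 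Likewise $u_{i^{\prime}}$ has unit length and so covers at most one start. The paper carries out precisely this case split (Lemma~\ref{lem:the_number_of_directed_edges_to_u}(ii),(iii)); without it, your argument is incomplete for exactly the nodes created by the update.

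A smaller remark: the identity $\LF_{\delta+1}(\reppos^{\prime}) = \inspos$ that you flag as the main obstacle is lighter than you suggest. It is essentially how $\inspos$ is chosen in Section~\ref{lab:ExBWT}: $\inspos$ is the rank of the new suffix $c\,T_{\delta}$, so $F_{\delta+1}[\inspos]$ is exactly the newly written $c$, which is the LF-image of its position $\reppos^{\prime}$ in $L_{\delta+1}$. Your two-case $\occ_{<}/\rank$ rewrite will certainly work, but you can also read this off directly from the construction of $v_{i^{\prime}}$ in the replace-node step.
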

\begin{proof}
See Appendix~\ref{app:update_edge_count}. 
\end{proof}

\begin{lemma}\label{lem:verify_heavy}
Whether or not any node $u \in U$~(respectively, $v \in V$) represents an $\alpha$-heavy DBWT-repetition~(respectively, an $\alpha$-heavy F-interval) 
on an LF-interval graph can be verified in $O(\alpha)$ time. 
\end{lemma}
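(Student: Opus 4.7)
My plan is to verify whether a node $u_{i} \in U$ represents an $\alpha$-heavy DBWT-repetition; the case of a node $v \in V$ follows by a symmetric argument that swaps the roles of $E_{L}$ with $E_{F}$, $U$ with $V$, and DBWT-repetitions with F-intervals.

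The key observation I will exploit is that, because the doubly linked list of $V$ orders F-intervals by their starting position on $F_{\delta}$, the F-intervals whose starting positions lie inside the range $[p_{i}, p_{i+1}-1]$ represented by $u_{i}$ form a contiguous block of that list. Hence, it suffices to locate the left end of this block and sweep forward, counting hits until either the block ends or $\alpha$ hits have been seen. The starting point of the sweep will be the node $v_{x}$ reached by the unique outgoing edge $(u_{i}, v_{x}) \in E_{L}$ of $u_{i}$: by the definition of $E_{L}$, the F-interval of $v_{x}$ contains position $p_{i}$, so no F-interval appearing strictly before $v_{x}$ in the list can have its starting position in $[p_{i}, p_{i+1}-1]$. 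Before beginning the sweep I would read the offset $B_{L}(u_{i}, v_{x}) = p_{i} - \LF_{\delta}(p_{x})$ from the edge label and the length $p_{i+1} - p_{i}$ from the node label $B_{U}(u_{i})$; together they encode the target window, expressed as offsets from $\LF_{\delta}(p_{x})$.

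Then I would walk forward through the doubly linked list of $V$ starting at $v_{x}$: at each step, the starting offset of the current F-interval (relative to $\LF_{\delta}(p_{x})$) is maintained in $O(1)$ time by adding the length stored in $B_{V}$ of the previously visited node. Whenever the current offset lies within the precomputed window, the counter is incremented, and the sweep terminates as soon as either the offset leaves the window (so $u_{i}$ is not $\alpha$-heavy) or the counter reaches $\alpha$ (so $u_{i}$ is $\alpha$-heavy). The main thing to confirm is that the sweep is always bounded by $\alpha + O(1)$ steps, which is immediate from the two termination conditions; since each step performs $O(1)$ work, the total verification time is $O(\alpha)$, yielding the lemma.
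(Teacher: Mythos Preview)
Your argument is correct, but it is considerably more involved than what the paper does. The paper simply observes that a node $u \in U$ represents an $\alpha$-heavy DBWT-repetition if and only if at least $\alpha$ directed edges of $E_{F}$ point to $u$; since the LF-interval graph stores these incoming edges explicitly, one just enumerates them and stops after $\alpha$ have been seen. The symmetric statement with $E_{L}$ handles $v \in V$.

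Your route---following the unique outgoing $E_{L}$-edge of $u_{i}$ to $v_{x}$ and then scanning forward in the doubly linked list of $V$, using $B_{L}(u_{i},v_{x})$, $B_{U}(u_{i})$, and the accumulated $B_{V}$-lengths to recognise the window---also works and gives the same $O(\alpha)$ bound. It is a sound alternative if one does not wish to assume that incoming edges are enumerable in constant time per edge: you reconstruct the incoming $E_{F}$-edges of $u_{i}$ from the sorted order of $V$ rather than reading them directly. The paper takes the simpler path because its edge sets are maintained as explicit adjacency information (this is relied on elsewhere, e.g., when removing all edges incident to deleted nodes in the update-edge step), so direct enumeration is already available.
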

\begin{proof}
Every $\alpha$-heavy DBWT-repetition or F-interval has at least $\alpha$ directed edges. 
Finding at most $\alpha$ directed edges to a given node takes $O(\alpha)$ time on an LF-interval graph. 
Thus, Lemma~\ref{lem:verify_heavy} holds. 
\end{proof}

Hence, updating the four data structures stored in the LF-interval graph takes $O(\alpha + \log k)$ time. 
\egroup

The update operation takes $O(\alpha + \log k)$ time in total. 
The following lemma concerning the theoretical results on this update operation holds. 
\begin{theorem}\label{thm:insert_balance}
The following two statements hold: 
(i) the update operation takes $O(\alpha + \log k)$ time;
(ii) the update operation takes as input the LF-interval graph for an $\alpha$-balanced DBWT $D^{\alpha}_{\delta}$ of BWT $L_{\delta}$, 
and it outputs the LF-interval graph for a $(2\alpha + 1)$-balanced DBWT $D^{2\alpha+1}_{\delta+1}$ 
of BWT $L_{\delta+1}$ with at most two $\alpha$-heavy DBWT-repetitions and at most two $\alpha$-heavy F-intervals. 
\end{theorem}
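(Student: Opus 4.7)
My plan is to verify statements (i) and (ii) separately: (i) is a straightforward summation of the per-step running-time bounds already established in this section, while (ii) has two substantive components---the correctness of the output as an LF-interval graph of $L_{\delta+1}$ and the balance/heaviness guarantees. The sharpest of these, and the main obstacle, is the heavy-node count.

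For (i), I would simply add the costs of the four update steps plus the data-structure refresh. The replace-node step is $O(\log k)$, dominated by the B-tree search for $v_g$. Lemma~\ref{lem:split_node_formula} gives $O(1)$ for the split-node step, and the insert-node step is $O(1)$. Lemmas~\ref{lem:addedge} and~\ref{lem:computingedge} give $O(\alpha)$ for the update-edge step. Finally, the order-maintenance structure and the heavy-node arrays are refreshed in $O(\alpha)$ using Lemmas~\ref{lem:update_edge_count} and~\ref{lem:verify_heavy}, and the B-tree on $V$ in $O(\log k)$. Summing yields $O(\alpha+\log k)$.

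For the correctness component of (ii), I would line up each step against the BWT-extension rule of Section~\ref{lab:ExBWT}. The replace-node step realizes $\$ \mapsto c$ in $L_\delta$ and places the new $c$ at its correct lexicographic rank in $F_{\delta+1}$ via the B-tree on $V$. The split-node and insert-node steps together realize the insertion of $\$$ at position $\inspos$, with Lemmas~\ref{lem:v_g+1_pos}--\ref{lem:edge_label} identifying $v_{\mathsf{gnext}}$ as the F-interval starting at $\inspos$ and furnishing the labels of the split nodes. The update-edge step then reinstalls the edges and labels mandated by the definition of the LF-interval graph. For $(2\alpha+1)$-balance, the update creates at most four new nodes and removes at most two on each side, so the multiset of starting positions changes in at most a constant number of places; since the input is $\alpha$-balanced, every preexisting node's cover-count rises by at most a small additive constant and stays strictly below $2\alpha+1$, and the split nodes inherit a partition of $u_j$'s cover-count (plus the same additive term) and so remain below $2\alpha+1$ as well.

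The main obstacle is the sharper count of at most two $\alpha$-heavy nodes on each side, and here Lemma~\ref{lem:update_edge_count} does the decisive work. On the $U$ side, the only preexisting node whose cover-count can cross $\alpha$ is $u_h$, connected to the head of the new directed edge from $v_{j'+1}$, and symmetrically $v_{h'}$ on the $V$ side. Among the newly created nodes, $u_{i'}$ and $u_{x'}$ are length-one and cover at most one F-interval start, so they cannot be $\alpha$-heavy, and among $u_{j'}, u_{j'+1}$ at most one can become $\alpha$-heavy (as their cover-counts jointly partition that of $u_j$, augmented by a bounded number of new covered starts). Together with $u_h$, this accounts for at most two $\alpha$-heavy DBWT-repetitions, and a symmetric argument on the $V$ side accounts for at most two $\alpha$-heavy F-intervals, completing (ii).
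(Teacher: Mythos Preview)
Your proposal is correct and follows the same approach as the paper: part~(i) is obtained by summing the per-step costs exactly as you do, and part~(ii) combines the correctness of the four steps with the edge-count bounds around Lemma~\ref{lem:update_edge_count}. One small remark: you read Lemma~\ref{lem:update_edge_count} as applying only to preexisting nodes and then argue separately for $u_{j'},u_{j'+1}$ via a partition argument, but the lemma already covers \emph{all} nodes in $U$ and directly yields that at most \emph{one} DBWT-repetition (namely that of $u_h$) and at most one F-interval (that of $v_{h'}$) can be $\alpha$-heavy; the paper derives this from the finer per-node bound of Lemma~\ref{lem:the_number_of_directed_edges_to_u}, which simultaneously gives the $(2\alpha+1)$-balance and, a fortiori, the theorem's ``at most two'' claim.
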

\begin{proof}
See Appendix~\ref{app:insert_balance}. 
\end{proof}
From Theorem~\ref{thm:insert_balance}, the update operation outputs the LF-interval graph 
for a $(2\alpha + 1)$-balanced DBWT $D^{2\alpha+1}_{\delta+1}$ of BWT $L_{\delta+1}$, which is not $\alpha$-balanced.
The output DBWT $D^{2\alpha+1}_{\delta+1}$ is balanced 
into an $\alpha$-balanced DBWT $D^{\alpha}_{\delta+1}$ 
by the balancing operation presented in Section~\ref{sec:balancing}. 
The update of the LF-interval graph presented in this section takes $O(\alpha + \log{k})$ time, 
which results in an $O(\alpha n + n\log{k})$-time construction of the RLBWT from an input string of length $n$. 
The following two sections (Section~\ref{sec:slow_update} and 
Section~\ref{sec:fast_update}) present two modified updates of the LF-interval graph in $O(\alpha + \log{k})$-time and $O(\alpha)$-time, respectively, 
in order to achieve the $O(\alpha n + r\log{r})$-time construction of an RLBWT with $O(r\log{n})$ bits of working space.

\subsection{\texorpdfstring{$O(\alpha + \log{k})$}{O(alpha + log k)}-time update of LF-interval graph }\label{sec:slow_update}
\newsentence\bgroup
\begin{figure}[t]
 \begin{center}
		\includegraphics[scale=0.6]{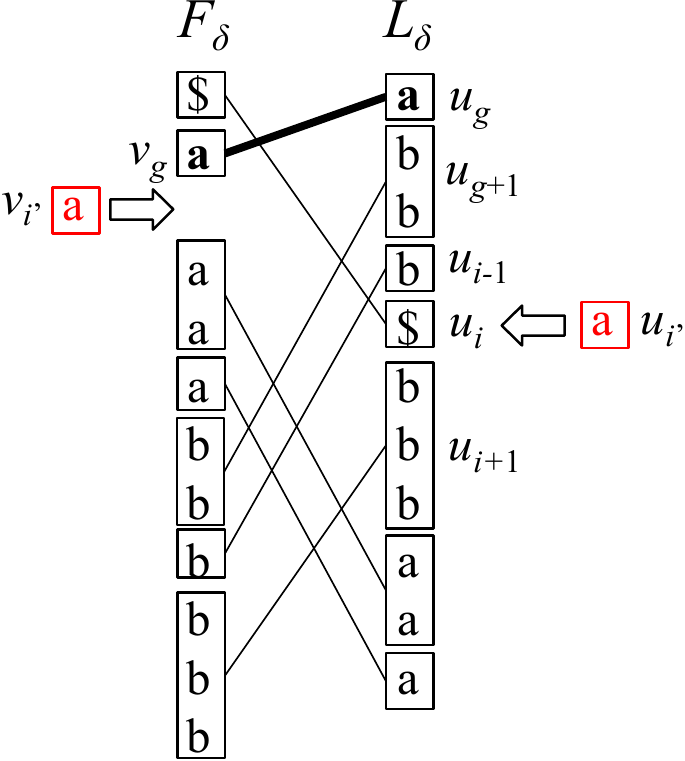}
	  \caption{
	  Replace-node step in update operation $\update(\graphds(D^{\alpha}_{\delta}), c)$.  
	  }
 \label{fig:replace}
 \end{center}
\end{figure}
\egroup

This section presents the update operation $\update(\graphds(D^{\alpha}_{\delta}), c)$ taking 
an LF-interval graph $\graphds(D^{\alpha}_{\delta})$ and the first character $c$ of suffix $T_{\delta+1}$ as input and running in $O(\alpha + \log{k})$ time 
by modifying the foundation of the update operation presented in Section~\ref{sec:update_graph}. 
For the node $u_{i-1}$ previous to the node $u_{i}$ representing special character $\$$ in the doubly linked list of $U$ and the node $u_{i+1}$ next to $u_{i}$, 
update operation $\update(\graphds(D^{\alpha}_{\delta}), c)$ is applied
if neither $u_{i-1}$ nor $u_{i+1}$ has labels including character $c$. 
 
We first present a modified update of the B-tree of $V$ in $O(\log{k})$ time. This update replaces the original update of the B-tree. 
The following lemma holds with respect to nodes searched for using the B-tree of $V$ in the replace-node step of this update operation.
\begin{lemma}\label{lem:mod}
For the node $v_{g} \in V$ searched for using the B-tree of $V$ in the replace-node step of the update operation $\update(\graphds(D^{\alpha}_{\delta}), c)$, 
$v_{g}$ satisfies any one of the following three properties: 
(i) for undirected edge $(u_{g}, v_{g}) \in E_{LF}$ and node $u_{g+1} \in U$ next to $u_{g}$ in the doubly linked list of $U$, 
the two consecutive nodes $u_{g}$ and $u_{g+1}$ have labels including different characters, and the label of $u_{g+1}$ does not include special character $\$$; 
(ii) for the node $u_{g+2} \in U$ next to $u_{g+1}$ in the doubly linked list of $U$, 
the two nodes $u_{g}$ and $u_{g+2}$ have labels including different characters, and the label of $u_{g+1}$ includes special character $\$$; 
(iii) the label of $u_{g}$ includes special character $\$$.
\end{lemma}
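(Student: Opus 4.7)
The plan is to perform a case analysis on the character $L_{\delta}[p_{g}]$ in the key of $v_{g}$. Since the B-tree search in the replace-node step returns the predecessor of the intended insertion key $(u_{i^{\prime}}, c)$ under the total order on keys (character first, then $U$-order), we have $L_{\delta}[p_{g}] \leq c$, and I would split into the three subcases $L_{\delta}[p_{g}] = \$$, $L_{\delta}[p_{g}] = c$, and $\$ < L_{\delta}[p_{g}] < c$. Each subcase will yield exactly one of the properties (i)--(iii) in the lemma.

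The first two subcases are short. When $L_{\delta}[p_{g}] = \$$: because $\$$ occurs only once in $T$ and hence only once in $L_{\delta}$, the only $v$-node carrying $\$$ in its key is $v_{i}$, so $u_{g} = u_{i}$, giving property (iii). When $L_{\delta}[p_{g}] = c$: maximality of $v_{g}$'s key among those strictly below $(u_{i^{\prime}}, c)$ prevents any $v$-node with character $c$ from lying strictly between $v_{g}$ and the insertion key, so no node of $U$ strictly between $u_{g}$ and $u_{i^{\prime}}$ carries character $c$. Combined with the hypothesis $L_{\delta}[p_{i-1}] \neq c$ of the update operation, this forces $u_{g+1}$ to strictly precede $u_{i}$ in the doubly linked list of $U$, so $u_{g+1}$'s character is neither $\$$ nor $c$, giving property (i).

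The main obstacle lies in the third subcase. A similar maximality argument forces $u_{g+1}$ to carry a character different from $L_{\delta}[p_{g}]$, so if $u_{g+1} \neq u_{i}$ property (i) follows at once. The delicate situation is $u_{g+1} = u_{i}$: then $u_{g} = u_{i-1}$ and $u_{g+2} = u_{i+1}$, and I must establish $L_{\delta}[p_{i-1}] \neq L_{\delta}[p_{i+1}]$ in order to conclude property (ii). I would prove this by contradiction: if the two characters coincided, say $L_{\delta}[p_{i-1}] = L_{\delta}[p_{i+1}] = \chi$, then $\chi = L_{\delta}[p_{g}] < c$, so the key $(u_{i+1}, \chi)$ of $v_{i+1}$ would share its character with $v_{g}$'s key $(u_{i-1}, \chi)$ but strictly exceed it in $U$-order (since $u_{i+1}$ follows $u_{i-1}$ in the doubly linked list of $U$), while still lying below $(u_{i^{\prime}}, c)$ because $\chi < c$. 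This contradicts the maximality of $v_{g}$. The hard part is precisely this step, where the total order on B-tree keys interacts most intimately with the operation's hypothesis on the characters flanking $u_{i}$.
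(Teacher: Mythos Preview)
Your proposal is correct and follows essentially the same route as the paper. The paper decomposes into two cases according to whether the character $c$ occurs among $u_{1},\ldots,u_{i-1}$ (which is equivalent to whether $c'=c$ or $c'<c$), while you split by the value of $c'=L_{\delta}[p_{g}]$ into the three subcases $c'=\$$, $c'=c$, and $\$<c'<c$; your subcase~2 is exactly the paper's second case, and your subcases~1 and~3 together reproduce the paper's first case. The underlying mechanism---the maximality of $v_{g}$'s key forces $u_{g+1}$ and $u_{g+2}$ (when relevant) to carry characters different from $c'$---is identical. Your ``hard part'' in subcase~3 (showing $L_{\delta}[p_{i-1}]\neq L_{\delta}[p_{i+1}]$ when $u_{g+1}=u_{i}$) is in fact just the same maximality argument applied to $u_{g+2}$ rather than $u_{g+1}$, which the paper handles in one stroke by asserting that neither $u_{g+1}$ nor $u_{g+2}$ carries $c'$; so your treatment is a touch more explicit but not materially different.
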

\begin{proof}
\oldsentence{
See Appendix~\ref{app:mod}.
}
\newsentence\bgroup
Recall that $v_{g} \in V$ is the most backward node of the nodes whose keys are smaller than 
key $(u_{i}, c)$ in the doubly linked list of $V$ 
for the node $u_{i} \in U$ representing special character $\$$ 
and input character $c$. 
Let $(u_{g}, c^{\prime})$ be the key of the node $v_{g}$.
If the labels of the first $(i-1)$ nodes $u_{1}, u_{2}, \ldots, u_{i-1}$ do not include character $c$, 
then the LF formula ensures that 
$c^{\prime} < c$ and $v_{g}$ is the most backward node of the nodes with keys that include characters smaller than $c$ in the doubly linked list of $V$.  
In this case, by the LF formula, neither $u_{g+1}$ nor $u_{g+2}$ have labels with character $c^{\prime}$, 
and thus $v_{g}$ satisfies any one of the three conditions of Lemma~\ref{lem:mod}.

Otherwise~(i.e., at least one node in the first $(i-1)$ nodes $u_{1}, u_{2}, \ldots, u_{i-1}$ has a label including character $c$), 
$g$ is the largest integer in $\{ 1, 2, \ldots, i-1 \}$ such that $u_{g}$ has a label with character $c$~(i.e., $c^{\prime} = c$). 
$g < i-1$ because 
$u_{i-1}$ does not have a label with character $c$ for the update operation $\update(\graphds(D^{\alpha}_{\delta}), c)$. 
Because $g < i-1$, 
neither $u_{g+1}$ nor $u_{g+2}$ have labels with character $c$, 
and hence $v_{g}$ satisfies any one of the three conditions of Lemma~\ref{lem:mod}.
\egroup
\end{proof}

Thus, the B-tree of $V$ stores only the nodes in $V$ satisfying one of the three conditions of Lemma~\ref{lem:mod}, 
because Lemma~\ref{lem:mod} ensures only such nodes are searched for in the replace-node step of this update operation. 
\oldsentence{
An illustration for the replace-node step can be found in Appendix~\ref{app:slow_update:examples}. 
}

\newsentence\bgroup
Figure~\ref{fig:replace} illustrates the replace-node of the update operation $\update(\graphds(D^{\alpha}_{\delta}), c)$. 
In this case, neither $u_{i-1}$ nor $u_{i+1}$ has labels including character $c$. 
The two nodes $u_{g}$ and $u_{g+1}$ have labels including different characters, 
and the latter node does not have a label including special character $\$$. 
Hence, node $v_{g}$ satisfies the first condition of Lemma~\ref{lem:mod}. 
\egroup

The target nodes inserted into the B-tree of $V$~(respectively, the target nodes deleted from the B-tree of $V$) 
are limited to four~(respectively, three) according to the following lemma. 
\begin{lemma}\label{lem:insert_and_delete_nodes_by_slow_update}
Nodes $u_{i} \in U$ and $v_{i} \in V$ are the nodes removed from $U$ and $V$ by the replace-node steps, respectively. 
Node $u_{i-1} \in U$ is the node previous to node $u_{i}$ in the doubly linked list of $U$, 
and $v_{i-1} \in V$ is the node connected to $u_{i-1}$ by undirected edge $(u_{i-1}, v_{i-1}) \in E_{LF}$. 
Node $v_{j} \in V$ is the node removed from $V$ by the split-node step, 
and $v_{j^\prime}, v_{j^\prime+1} \in V$ are the nodes newly created by the same step. 
Similarly, $v_{i^\prime} \in V$ and $v_{x^\prime} \in V$ are the nodes created by the replace-node and insert-node steps, respectively. 
Then, the following two statements hold for the update operation $\update(\graphds(D^{\alpha}_{\delta}), c)$: 
(i) targets inserted into the B-tree of $V$ can be limited to only four nodes $v_{i-1}, v_{i^\prime}, v_{x^\prime}$, and $v_{j^\prime+1}$; 
(ii) the targets deleted from the B-tree of $V$ can be limited to only three nodes, $v_{i-1}$, $v_{i}$, and $v_{j}$.

\end{lemma}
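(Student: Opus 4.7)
The plan is to analyze, for every node $v_g \in V$, whether its B-tree membership status can change as a result of the update operation $\update(\graphds(D^{\alpha}_{\delta}), c)$, and to show that only the listed nodes can be candidates for insertion or deletion. By Lemma~\ref{lem:mod}, a node's presence in the B-tree is determined solely by the characters of $u_g$, $u_{g+1}$, and $u_{g+2}$ in the doubly linked list of $U$, so any potential status change for $v_g$ must be caused by a change among these three consecutive nodes.

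The first step is to record the local changes to $U$ produced by the update: $u_i$ is replaced in place by $u_{i^{\prime}}$, switching the character at that slot from $\$$ to $c$; $u_j$ is replaced in place by $u_{j^{\prime}}$, which inherits the character $L_{\delta}[p_j]$; and in case (i) of the insert-node step, $u_{x^{\prime}}$ labeled $\$$ and $u_{j^{\prime}+1}$ labeled with $L_{\delta}[p_j]$ are inserted immediately after $u_{j^{\prime}}$. Cases (ii) and (iii) of the insert-node step are analogous and only simpler because no split takes place. The second step is to verify, for every $v_g$ other than $v_{i-1}$, $v_i$, and $v_j$, that the three conditions of Lemma~\ref{lem:mod} evaluate identically on the old and new graphs. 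Two invariants make this possible: the hypothesis that neither $u_{i-1}$ nor $u_{i+1}$ carries $c$, which prevents the flip of $u_i$'s character from affecting $v_{i-2}$, $v_{i+1}$, or any farther node; and the character-preserving split at $u_j$, which keeps the conditions at $v_{j-2}$ and $v_{j-1}$ unchanged even after $u_{x^{\prime}}$ is interposed in their context.

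The third step is to classify the newly created nodes as insertion candidates. $v_{x^{\prime}}$ satisfies condition (iii) because $u_{x^{\prime}}$ carries $\$$. $v_{i^{\prime}}$ satisfies condition (i) because its successor $u_{i+1}$ neither equals $c$ nor equals $\$$ in the new list. $v_{j^{\prime}+1}$ and $v_{i-1}$ must be individually re-checked on the new graph and may or may not actually enter the B-tree. The node $v_{j^{\prime}}$ is ruled out of all three conditions: its successor $u_{x^{\prime}}$ carries $\$$ so condition (i) fails, $u_{j^{\prime}}$ and $u_{j^{\prime}+1}$ share a character so condition (ii) fails, and $u_{j^{\prime}}$ itself does not carry $\$$ so condition (iii) fails. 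Together with the fact that $v_i$ and $v_j$ are removed from $V$ and must therefore be removed from the B-tree, and that $v_{i-1}$ is the unique surviving node whose B-tree status can switch, this yields exactly the stated insertion and deletion candidate sets.

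The main obstacle is the case-by-case verification that no $v_g$ outside the listed set changes status. The most delicate sub-case is $v_{j-1}$ in case (i) of the insert-node step: its $u$-context shifts from $(u_{j-1}, u_j, u_{j+1})$ to $(u_{j-1}, u_{j^{\prime}}, u_{x^{\prime}})$, and although $u_{x^{\prime}}$ is the new carrier of $\$$ at position $g+2$, condition (i) depends only on $u_g$ and $u_{g+1}$ and requires $u_{g+1}$ not to carry $\$$, so it evaluates the same way; condition (ii) needs $u_{g+1}$ to carry $\$$, which fails both before and after since $u_{j^{\prime}}$ shares the character of $u_j$; and condition (iii) depends only on $u_g = u_{j-1}$. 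The same style of reasoning, combined with the hypothesis on the neighbors of $u_i$, dispatches $v_{i-2}$, $v_{i+1}$, $v_{j-2}$, $v_{j+1}$, and all remaining nodes.
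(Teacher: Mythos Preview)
Your proof is correct and follows essentially the same approach as the paper: both establish that $v_{j'}$ fails all three conditions of Lemma~\ref{lem:mod} and that every surviving node outside $\{v_{i-1},v_i,v_j\}$ has unchanged B-tree status after the update. The only stylistic difference is that you carry out an explicit case analysis on specific neighbors ($v_{j-1}$, $v_{i-2}$, etc.), whereas the paper argues once for a generic surviving node $v_x$ by reducing to the dichotomy ``$u_x$ and $u_{x+1}$ share a character'' versus ``$u_x$ is last''; one small imprecision is your claim that $v_{i'}$ always satisfies condition~(i)---in the case~(iii) insert-node sub-case where $u_{x'}$ lands immediately after $u_{i'}$ it satisfies condition~(ii) instead---but this is extra information not needed for the lemma and does not affect your argument.
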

\oldsentence{
\begin{proof}
See Appendix~\ref{app:insert_and_delete_nodes_by_slow_update}.
\end{proof}
}
\newsentence\bgroup
\begin{proof}
\textbf{Proof of Lemma~\ref{lem:insert_and_delete_nodes_by_slow_update}-(i).}
Lemma~\ref{lem:insert_and_delete_nodes_by_slow_update}-(i) holds if the following two statements hold: 
(i) new node $v_{j^{\prime}}$ satisfies none of the three conditions in Lemma~\ref{lem:mod};
(ii) let $v_{x} \in \{ v_{1}, v_{2}, \ldots, v_{k} \}$ be a node such that 
(a) $v_{x} \not \in \{ v_{i}, v_{j}, v_{i-1} \}$ and (b) $v_{x}$ is not contained in the B-tree of $V$~(i.e., $v_{x}$ satisfies none of the three conditions in Lemma~\ref{lem:mod}), 
then $v_{x}$ still satisfies none of the three conditions in Lemma~\ref{lem:mod} after executing the update operation.

We show that the first statement holds. 
The new node $v_{j^{\prime}}$ is connected to $u_{j^{\prime}}$ 
by edge $(u_{j^{\prime}}, v_{j^{\prime}}) \in E_{LF}$.  
Node $u_{j^{\prime}}$ is previous to the new node $u_{x^{\prime}}$ representing special character $\$$, 
and $u_{x^{\prime}}$ is previous to new node $u_{j^{\prime}+1}$. 
The labels of both new nodes $u_{j^{\prime}}$ and $u_{j^{\prime}+1}$ include the same character, 
and hence $v_{j^{\prime}}$ satisfies none of the three conditions in Lemma~\ref{lem:mod}.

Next, we show that the second statement holds.
Node $v_{x}$ is connected to node $u_{x} \in U$ by edge $(u_{x}, v_{x}) \in E_{LF}$, 
and $u_{x+1} \in U$ is the node next to $u_{x}$ in the doubly linked list of $U$ before the update operation has been executed. 
Note that $u_{x+1}$ does not exist in $U$ only if $u_{x}$ is the last node in the doubly linked list of $U$. 
Because $v_{x}$ satisfies none of the three conditions in Lemma~\ref{lem:mod} and $v_{x} \neq v_{i-1}$, 
either of the following two conditions holds: 
(a) the labels of nodes $u_{x}$ and $u_{x+1}$ include the same character or 
(b) $u_{x}$ is the last node in the doubly linked list of $U$. 

We show that $u_{x}$ still satisfies none of the three conditions in Lemma~\ref{lem:mod} for Condition-(a) after executing the update operation. 
Let $u_{\mu} \in U$ be the node next to $u_{x}$ in the doubly linked list of $U$ after the update operation has been executed. 
Similarly, let $u_{\mu+1} \in U$ be the node next to $u_{\mu}$ in the doubly linked list of $U$ after the update operation has been executed. 
If $u_{\mu}$ represents special character $\$$~(i.e., $u_{\mu} = u_{x^{\prime}}$), 
then the labels of nodes $u_{x}$ and $u_{\mu+1}$ include the same character, 
which indicates that $v_{x}$ does not satisfy the three conditions of Lemma~\ref{lem:mod}. 
Otherwise, the labels of nodes $u_{x}$ and $u_{\mu}$ include the same character, 
which indicates that $v_{x}$ does not satisfy the three conditions of Lemma~\ref{lem:mod}. 

Next, we show that $u_{x}$ still satisfies none of the three conditions in Lemma~\ref{lem:mod} for Condition-(b) after executing the update operation. 
After the update operation has been executed, 
$u_{x}$ is still the last node in the doubly linked list of $U$ or 
$u_{\mu}$ represents special character $\$$. 
This fact indicates that $v_{x}$ does not satisfy the three conditions of Lemma~\ref{lem:mod}.

\textbf{Proof of Lemma~\ref{lem:insert_and_delete_nodes_by_slow_update}-(ii).}
Lemma~\ref{lem:insert_and_delete_nodes_by_slow_update}-(ii) holds if the following statement holds: 
(A) 
If a node $v_{x} \in \{ v_{1}, v_{2}, \ldots, v_{k} \} \setminus \{ v_{i}, v_{j}, v_{i-1} \}$ 
is contained in the B-tree of $V$, 
then $v_{x}$ satisfies one of the three conditions of Lemma~\ref{lem:mod} after 
update operation $\update(\graphds(D^{\alpha}_{\delta}), c)$ has been executed.

We show that statement A holds. 
Node $v_{x}$ is connected to node $u_{x} \in U$ by edge $(u_{x}, v_{x}) \in E_{LF}$, 
and $u_{x+1} \in U$ is the node next to $u_{x}$ in the doubly linked list of $U$ before the update operation has been executed. 
Let $c_{x}$ and $c_{x+1}$ be the characters included in the labels of $u_{x}$ and $u_{x+1}$, respectively. 
We have $c_{x} \neq \$$ by $v_{x} \neq v_{i}$. 
Similarly, $c_{x+1} \neq \$$ because $v_{x} \neq v_{i-1}$. 
Moreover, $c_{x} \neq c_{x+1}$ because $v_{x}$ is contained in the B-tree of $V$. 
Let $u_{\mu}$ be the node next to $u_{x}$ in the doubly linked list of $U$ after the update operation has been executed. Similarly, let $u_{\mu+1}$ be the node next to $u_{\mu}$ in the doubly linked list of $U$ after the update operation has been executed. 
If $u_{\mu}$ represents special character $\$$, 
then the label of $u_{\mu+1}$ includes $c_{x+1}$, 
and hence $v_{x}$ satisfies the second condition of Lemma~\ref{lem:mod}. 
Otherwise, the label of $u_{\mu}$ includes $c_{x+1}$, 
and hence $v_{x}$ satisfies the first condition of Lemma~\ref{lem:mod}. 
Therefore, statement A holds. 
\end{proof}
\egroup

Thus, all the nodes inserted into the B-tree of $V$ can be found by searching for only four nodes $v_{i-1}, v_{i^\prime}, v_{x^\prime}$, and $v_{j^\prime+1}$ and 
checking whether or not each one, $v_{i-1}, v_{i^\prime}, v_{x^\prime}$, and $v_{j^\prime+1}$, satisfies one of the three conditions presented in Lemma~\ref{lem:mod}. 
Similarly, all the nodes deleted from the B-tree of $V$ can be found by searching for only three nodes $v_{i-1}$, $v_{i}$, and $v_{j}$ 
and checking whether or not each of these nodes satisfies one of the three conditions. 
This modified update of the B-tree takes $O(\log k)$ time in total.

The algorithm of update operation $\update(\graphds(D^{\alpha}_{\delta}), c)$ 
is the same as that of the original update operation presented in Section~\ref{sec:update_graph} except for the algorithm updating the B-tree of $V$. 
Hence, update operation $\update(\graphds(D^{\alpha}_{\delta}), c)$ takes $O(\alpha + \log k)$ time in total. 
The following lemma concerning the theoretical results on this update operation holds. 
\begin{lemma}\label{lem:slow_update_result}
Assume that the B-tree of $V$ in LF-interval graph $\graphds(D^{\alpha}_{\delta})$ contains only nodes satisfying one of the three conditions of Lemma~\ref{lem:mod}: 
(i) 
update operation $\update(\graphds(D^{\alpha}_{\delta}), c)$ runs in $O(\alpha + \log k)$ time;  
(ii) the update operation outputs the LF-interval graph for 
a $(2\alpha + 1)$-balanced DBWT $D^{2\alpha+1}_{\delta+1}$ 
of BWT $L_{\delta+1}$ with at most two $\alpha$-heavy DBWT-repetitions and at most two $\alpha$-heavy F-intervals; 
(iii) 
the B-tree of $V$ in the outputted LF-interval graph contains only nodes satisfying one of the three conditions of Lemma~\ref{lem:mod}.  
\end{lemma}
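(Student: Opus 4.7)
The plan is to prove the three claims of Lemma~\ref{lem:slow_update_result} in turn, treating $\update(\graphds(D^{\alpha}_{\delta}), c)$ as the foundational update of Section~\ref{sec:update_graph} with only the maintenance of the B-tree of $V$ replaced by the modified scheme justified by Lemma~\ref{lem:mod} and Lemma~\ref{lem:insert_and_delete_nodes_by_slow_update}.

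For claim (i) I would sum the per-step costs. The replace-node step performs a single predecessor query in the B-tree of $V$ in $O(\log k)$ time; the split-node step runs in $O(1)$ time by Lemma~\ref{lem:split_node_formula}; the insert-node step is $O(1)$; the update-edge step is $O(\alpha)$ by Lemma~\ref{lem:addedge} and Lemma~\ref{lem:computingedge}; and maintaining the auxiliary data structures costs $O(\alpha + \log k)$, where the order maintenance structure is updated in $O(1)$, the arrays of $\alpha$-heavy nodes in $O(\alpha)$ using Lemma~\ref{lem:update_edge_count} and Lemma~\ref{lem:verify_heavy}, and the B-tree in $O(\log k)$ total, since Lemma~\ref{lem:insert_and_delete_nodes_by_slow_update} caps the set of relevant insertion candidates at four and the set of relevant deletion candidates at three, each processed by a single B-tree operation.

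Claim (ii) follows directly from Theorem~\ref{thm:insert_balance}. The only difference between $\update$ and the foundational update operation is the maintenance of the B-tree of $V$, which is an auxiliary structure; the LF-interval graph output---the quadruple $(U \cup V,\ E_{LF} \cup E_L \cup E_F,\ B_U \cup B_V,\ B_L \cup B_F)$---is produced identically by the two operations, so the structural conclusion of Theorem~\ref{thm:insert_balance} transfers without change.

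Claim (iii) I expect to be the main obstacle, since it requires showing that the B-tree invariant is preserved. I would split the argument in two directions. For \emph{soundness}, every node surviving or entering the B-tree must satisfy one of the three conditions of Lemma~\ref{lem:mod}: the candidates drawn from $\{v_{i-1}, v_{i^\prime}, v_{x^\prime}, v_{j^\prime+1}\}$ are admitted only after explicit verification, while any untouched node outside $\{v_i, v_j, v_{i-1}\}$ continues to satisfy its condition by statement~(A) inside the proof of Lemma~\ref{lem:insert_and_delete_nodes_by_slow_update}-(ii). For \emph{completeness}, every node satisfying a condition after the update must appear in the B-tree: the four insertion candidates are re-checked by the algorithm, untouched non-members outside $\{v_i, v_j, v_{i-1}\}$ still fail all three conditions by the matching argument in the proof of Lemma~\ref{lem:insert_and_delete_nodes_by_slow_update}-(i), and each deletion candidate in $\{v_{i-1}, v_i, v_j\}$ is re-evaluated before removal. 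Combining the two directions preserves the invariant and completes the proof.
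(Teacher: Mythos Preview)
Your proposal is correct and mirrors the paper's own justification, which is not a formal proof but the discussion preceding the lemma in Section~\ref{sec:slow_update}: the paper argues that $\update$ coincides with the foundational operation of Section~\ref{sec:update_graph} except for the B-tree maintenance, so claim~(ii) is inherited from Theorem~\ref{thm:insert_balance}, claim~(i) follows by summing the same per-step costs you list, and claim~(iii) is handled by Lemma~\ref{lem:insert_and_delete_nodes_by_slow_update} together with the explicit condition check on each of the seven candidate nodes. Your write-up is in fact slightly more explicit than the paper's, in particular your soundness/completeness split for~(iii), but the underlying argument is identical.
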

In the next subsection, the second modified update operation of LF-interval graph achieves $O(\alpha)$ time using the B-tree of $V$ containing only nodes satisfying one of the three conditions of Lemma~\ref{lem:mod}.

\subsection{\texorpdfstring{$O(\alpha)$}{O(alpha)}-time update of LF-interval graph}\label{sec:fast_update}
This section presents fast update operation $\fastUpdate(\graphds(D^{\alpha}_{\delta}), c)$, which takes 
an LF-interval graph $\graphds(D^{\alpha}_{\delta})$ and the first character $c$ of suffix $T_{\delta+1}$ as input and runs in $O(\alpha)$ time. 
This time is achieved by modifying the foundation of the update operation presented in Section~\ref{sec:update_graph}, 
and the B-tree of $V$ needs to contain only nodes of satisfying one of the three conditions of Lemma~\ref{lem:mod} in the input and output LF-interval graphs, similar to the update operation in Section~\ref{sec:slow_update}~(Lemma~\ref{lem:slow_update_result}). 

For node $u_{i-1} \in U$ previous to node $u_{i} \in U$ that represents special character $\$$ in the doubly linked list of $U$ and node $u_{i+1} \in U$ next to $u_{i}$, 
the fast update operation is applied if 
either or both $u_{i-1}$ and $u_{i+1}$ have a label including character $c$; 
the update operation in Section~\ref{sec:slow_update} is applied otherwise. 
The large computational demand of the update operation on LF-interval graphs presented 
in Section~\ref{sec:update_graph} derives from the access and update of the B-tree of $V$ in $O(\log{k})$ time, resulting in an $O(\alpha + \log{k})$ time update of LF-interval graphs.
We present two improvements to the foundation of the update operation: (i) deletion and insertion operations of the B-tree of $V$ in $O(1)$ time and (ii) the replace-node step in $O(1)$ time without using the B-tree of $V$. 
The details of the fast operation are presented in Appendix~\ref{app:details_of_fast_update_operation}.

\subparagraph{Deletion and insertion operations of B-tree in constant time.}
\newsentence\bgroup
\begin{figure}[t]
 \begin{center}
		\includegraphics[scale=0.6]{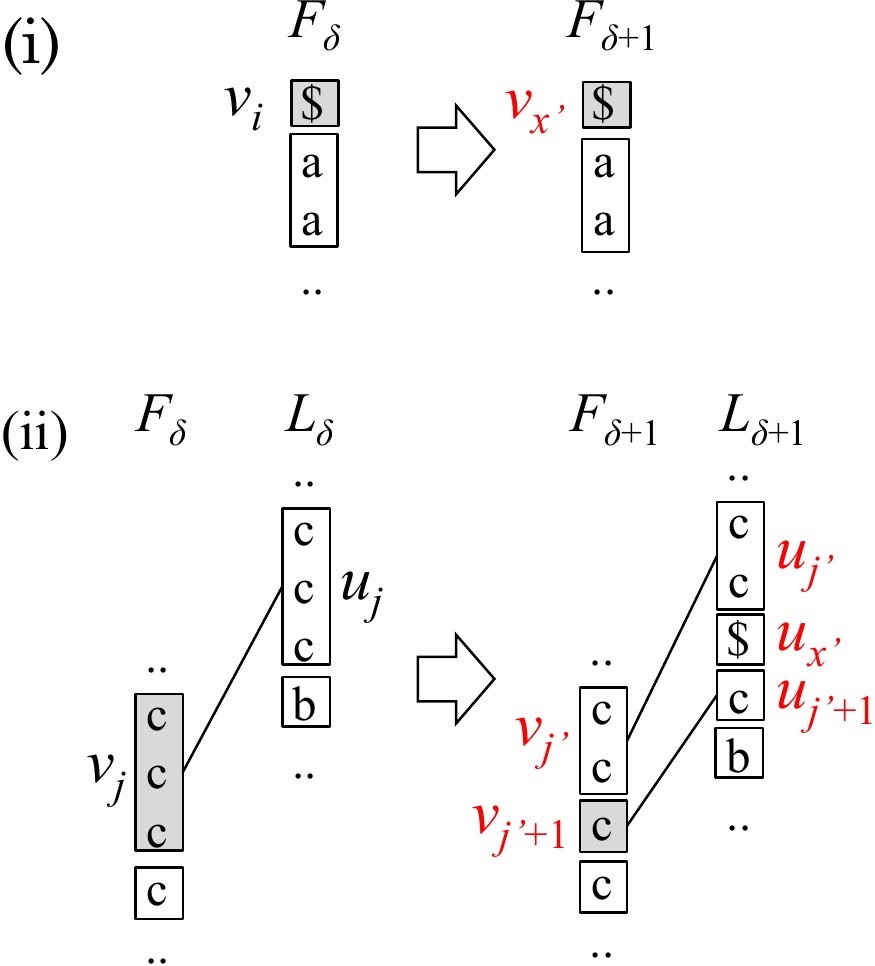}
	  \caption{
      (i) Replacement of $v_{i}$ with $v_{x^{\prime}}$ in the doubly linked list of $V$.
      (ii) Replacement of $v_{j}$ with $v_{j^{\prime}}$ and $v_{j^{\prime}+1}$ in the doubly linked list of $V$ and the replacement of $u_j$ 
      with  $u_{j}$, $u_{j^{\prime}+1}$ and $u_{x^{\prime}}$ in the doubly linked list of $U$.
      In both figures, gray nodes satisfy one of the three conditions of Lemma~\ref{lem:mod}.
	  }
 \label{fig:btree_replace}
 \end{center}
\end{figure}

\begin{figure}[t]
 \begin{center}
 \includegraphics[scale=0.5]{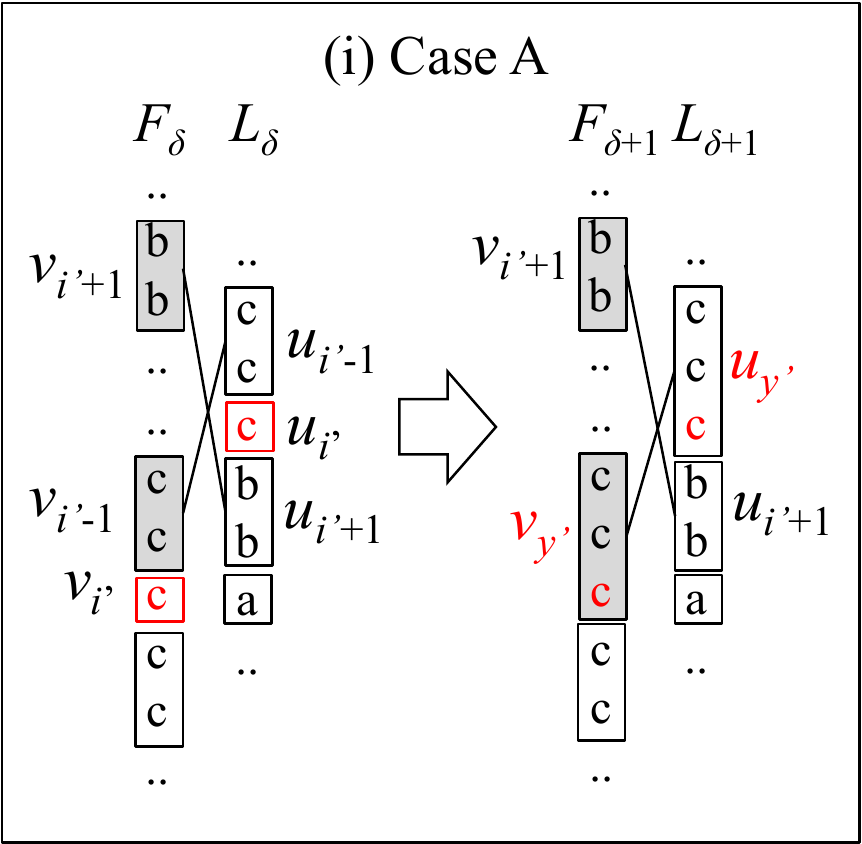}
 \includegraphics[scale=0.5]{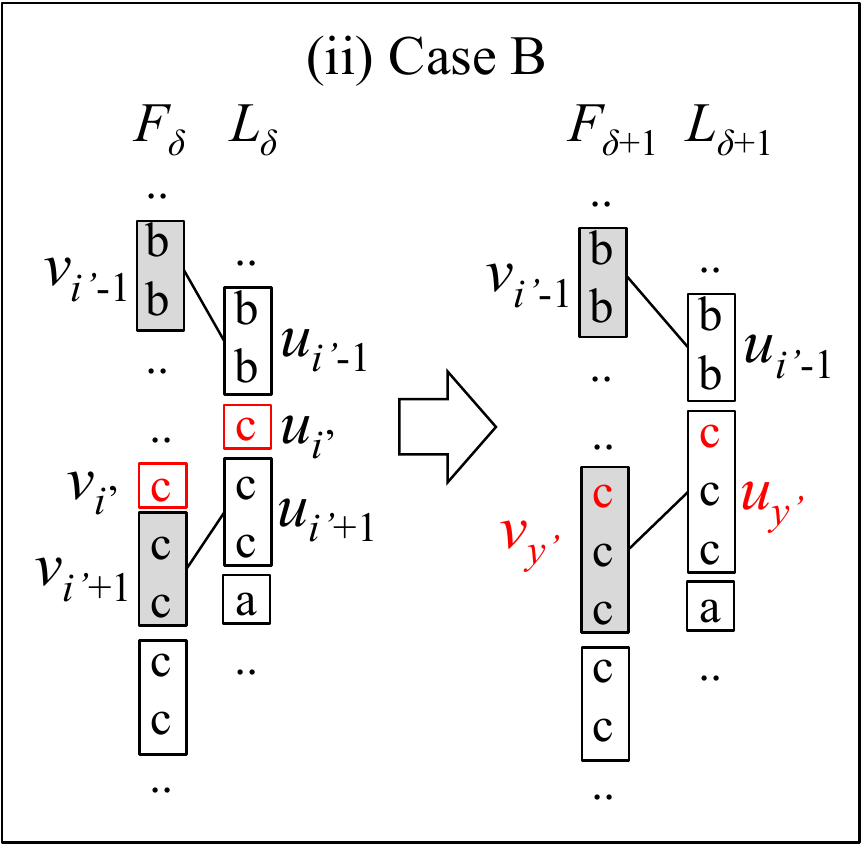}
 \includegraphics[scale=0.5]{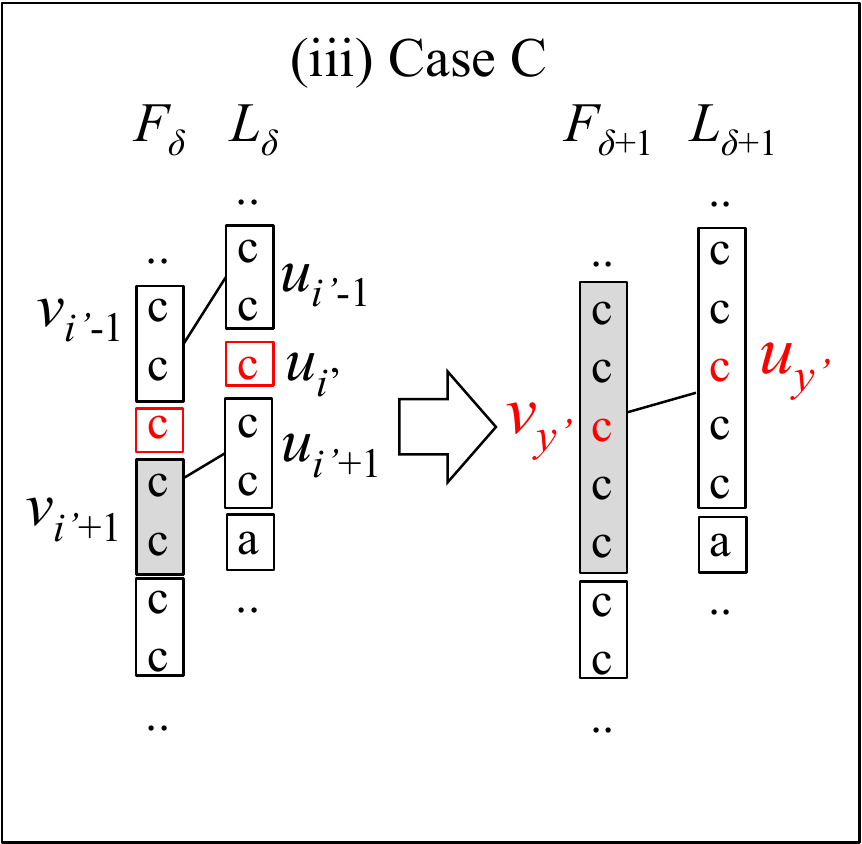}

 \end{center}
	  \caption{
	  Three cases A, B, and C for preprocessing deletions/insertions in the B-tree. 
	  Gray nodes are stored in the B-tree.
	  }
 \label{fig:btree_replace_with_merging}
\end{figure}
\egroup

Generally, inserting/deleting a key into/from the B-tree of $V$ takes $O(\log{k})$ time. 
We present $O(1)$-time deletion and insertion operations of a specific node in the B-tree of $V$ without the need for heavyweight operations to maintain the balance of the B-tree. 

Recall that (i) $u_{i^\prime} \in U$ and $v_{i^\prime} \in V$ are the nodes created by the replace-node step, 
(ii) $v_{j} \in V$ is the node removed from $V$ by the split-node step, 
and (iii) $u_{x^{\prime}} \in U$ and $v_{x^{\prime}} \in V$ are the nodes created by the insert-node step.  
Let $u_{i^\prime-1} \in U$~(respectively, $u_{i^\prime+1} \in U$) be the node previous to node $u_{i^\prime}$~(respectively, the node next to node $u_{i^\prime}$) 
in the doubly linked list of $U$ after the insert-node step has been executed. 
Then, there exist two nodes $v_{i^\prime-1}$ and $v_{i^\prime+1} \in V$ 
such that $(u_{i^\prime-1}, v_{i^\prime-1}), (u_{i^\prime+1}, v_{i^\prime+1}) \in E_{LF}$. 

In the fast update operation, 
the target nodes deleted from the B-tree are limited to at most four nodes $v_{i}, v_{j}, v_{i^{\prime}-1}$, and $v_{i^{\prime}+1}$, 
which is similar to Lemma~\ref{lem:insert_and_delete_nodes_by_slow_update}-(ii). 
In the doubly linked list of $V$, 
node $v_{i}$ is replaced with the new node $v_{x^{\prime}}$ created in the insert-node step of 
the update operation~(see Figure~\ref{fig:btree_replace}-(i)). 
Because the two nodes $v_{i}$ and $v_{x^{\prime}}$ represent special character $\$$, 
both satisfy the third condition of Lemma~\ref{lem:mod}. 
Thus, both nodes $v_{i}$ and $v_{x^{\prime}}$ are placed on the root of the B-tree. 
Hence, $v_{i}$ can be deleted and $v_{x^{\prime}}$ can be inserted into the B-tree in $O(1)$ time without balancing the B-tree. 

Next, in the doubly linked list of $V$, node $v_{j}$ is replaced with the two nodes $v_{j^{\prime}}$ and $v_{j^{\prime}+1} \in V$ 
that were created in the split-node step of the update operation~(see Figure~\ref{fig:btree_replace}-(ii)). 
Node $v_{j^\prime}$ satisfies none of the three conditions of Lemma~\ref{lem:mod}. 
Thus, $v_{j}$ is deleted from the B-tree of $V$, and $v_{j^{\prime}+1}$ (that is, not $v_{j^\prime}$) is only inserted into the B-tree of $V$ if $v_{j}$ is contained in the B-tree of $V$ because of the next lemma. 
\begin{lemma}\label{lem:b-tree_update_lemma1}
The following two statements hold: 
(i) $v_{j^{\prime}}$ satisfies none of the three conditions of Lemma~\ref{lem:mod}; 
(ii) for the node $u_{j} (\neq u_{i-1})$ connected to $v_{j}$ by undirected edge $(u_{j}, v_{j}) \in E_{LF}$, $v_{j^{\prime}+1}$ satisfies any one of the three conditions of Lemma~\ref{lem:mod} if and only if $v_{j}$ satisfies any one of the three conditions. 
\end{lemma}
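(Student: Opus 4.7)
My plan is to verify each of the three conditions of Lemma~\ref{lem:mod} directly for $v_{j^{\prime}}$ and $v_{j^{\prime}+1}$, after reading off the local neighborhood of the relevant node in the doubly linked list of $U$ once the update has been carried out. The bookkeeping I will need is: the split-node step replaces $u_{j}$ with two new nodes $u_{j^{\prime}}$ and $u_{j^{\prime}+1}$, both of which carry the character $L_{\delta}[p_{j}] \neq \$$, and the insert-node step wedges $u_{x^{\prime}}$ (the new node carrying $\$$) between them. Hence the window around $u_{j^{\prime}}$ reads $u_{j^{\prime}}, u_{x^{\prime}}, u_{j^{\prime}+1}$, while the window around $u_{j^{\prime}+1}$ reads $u_{j^{\prime}+1}, u_{\mu}, \ldots$, where I write $u_{\mu}$ for the successor of $u_{j^{\prime}+1}$.

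For part (i), I apply Lemma~\ref{lem:mod} with $g = j^{\prime}$ to the three-node window $u_{j^{\prime}}, u_{x^{\prime}}, u_{j^{\prime}+1}$: condition (iii) fails because $u_{j^{\prime}}$ does not carry $\$$; condition (i) fails because its successor $u_{x^{\prime}}$ does carry $\$$; and condition (ii) fails because although the successor does carry $\$$, the two-step successor $u_{j^{\prime}+1}$ shares the character of $u_{j^{\prime}}$. This is essentially the argument already used in the proof of Lemma~\ref{lem:insert_and_delete_nodes_by_slow_update}-(i).

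For part (ii), the central claim I will argue is that the node $u_{\mu}$ following $u_{j^{\prime}+1}$ after the update is exactly the node that followed $u_{j}$ before the update, and that its character is unchanged by the update. This holds because the splitting and insertion are confined to the slot formerly occupied by $u_{j}$, while the replace-node step touches only $u_{i}$, which under the hypothesis $u_{j} \neq u_{i-1}$ is not the successor of $u_{j}$, and hence $u_{\mu} \neq u_{i}$. With this in hand, I verify the three conditions of Lemma~\ref{lem:mod} with $g = j^{\prime}+1$ against $g = j$: condition (iii) fails for both because neither $u_{j}$ nor $u_{j^{\prime}+1}$ carries $\$$; condition (ii) fails for both because $u_{\mu}$ is not $\$$; and condition (i) reduces in each case to whether the common character of $u_{j}$ and $u_{j^{\prime}+1}$ differs from that of $u_{\mu}$, which has the same truth value in the two situations. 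The equivalence then follows.

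The delicate point I expect to be the main obstacle is the sub-case $u_{\mu} = u_{i-1}$, in which the replace-node step changes the two-step successor of $u_{j^{\prime}+1}$ from $u_{i}$ to $u_{i^{\prime}}$ (carrying the input character $c$). Because the two-step successor figures only in the statement of condition (ii), and that condition is already ruled out by $u_{\mu} \neq u_{i}$, this swap does not interact with the argument, so no separate case analysis is required. The boundary case in which $u_{j}$ is the last node of the list is dispatched analogously: $u_{\mu}$ then fails to exist, all three conditions are vacuously false for both $v_{j}$ and $v_{j^{\prime}+1}$, and the equivalence holds trivially.
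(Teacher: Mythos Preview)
Your proposal is correct and follows essentially the same argument as the paper's proof: for (i) you use the window $u_{j'}, u_{x'}, u_{j'+1}$ exactly as the paper does, and for (ii) you show that the successor of $u_{j'+1}$ coincides with the old successor of $u_{j}$ (with the same non-$\$$ character) under the hypothesis $u_{j}\neq u_{i-1}$, then check each condition of Lemma~\ref{lem:mod} in parallel for $v_{j}$ and $v_{j'+1}$, just as the paper does. Your explicit treatment of the sub-case $u_{\mu}=u_{i-1}$ and of the boundary case where $u_{j}$ is last are minor refinements the paper leaves implicit, but the approach is the same.
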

\begin{proof}
\oldsentence{
See Appendix~\ref{app:b-tree_update_lemma1}. 
}

\newsentence\bgroup
(i) Nodes $v_{j^{\prime}}$, $v_{x^{\prime}}$, and $v_{j^{\prime}+1}$ are connected to 
three nodes $u_{j^{\prime}}, u_{x^{\prime}}$, and $u_{j^{\prime}+1} \in U$, respectively,
by edges in $E_{LF}$. 
The three nodes $u_{j^{\prime}}, u_{x^{\prime}}$, and $u_{j^{\prime}+1}$ are consecutive in the doubly linked list of $U$~(i.e., $u_{j^{\prime}}$ is previous to $u_{x^{\prime}}$, and $u_{x^{\prime}}$ is previous to $u_{j^{\prime}+1}$). 
The label of $u_{x^{\prime}}$ includes special character $\$$, 
and the labels of $v_{j^{\prime}}$ and $v_{j^{\prime}+1}$ include the same character $c^{\prime}$. 
Hence, $v_{j^{\prime}}$ satisfies none of the three conditions of Lemma~\ref{lem:mod}. 

(ii) 
Let $c_{j}$~(respectively, $c_{j^{\prime}+1}$) be the character included in the label of $u_{j}$~(respectively, $u_{j^{\prime}+1}$). 
Similarly, let $c_{j+1}$~(respectively, $c_{j^{\prime}+2}$) be the character included in the label of the node next to $u_{j}$~(respectively, 
the node next to $u_{j^{\prime}+1}$) in the doubly linked list of $U$. 
Character $c_{j+1} \neq \$$ by the assumption $u_{j} \neq u_{i-1}$. 
Moreover, $c_{j+1} = c_{j^{\prime}+2}$ because the node next to $u_{j}$ is equal to the node next to $u_{j^{\prime}+1}$ unless $u_{j} = u_{i-1}$. 
Node $v_{j}$ satisfies the first condition of Lemma~\ref{lem:mod} if and only if $v_{j^{\prime}+1}$ satisfies the first condition 
by $c_{j} = c_{j^{\prime}+1}$, $c_{j+1} = c_{j^{\prime}+2}$, and $c_{j}, c_{j+1}, c_{j^{\prime}+1}, c_{j^{\prime} + 2} \neq \$$. 
Nodes $v_{j}$ and $v_{j^{\prime}+1}$ do not satisfy the second condition of Lemma~\ref{lem:mod} by $c_{j+1}, c_{j^{\prime}+2} \neq \$$. 
Nodes $v_{j}$ and $v_{j^{\prime}+1}$ do not satisfy the third condition of Lemma~\ref{lem:mod} by $c_{j}, c_{j^{\prime}+1} \neq \$$. 
Hence, Lemma~\ref{lem:b-tree_update_lemma1}-(ii) holds. 
\egroup
\end{proof}

Because $v_{j}$ is replaced with $v_{j^{\prime} + 1}$ in the doubly linked list of $V$, 
the element representing $v_{j}$ can be replaced with the element representing $v_{j^{\prime}+1}$ in the B-tree from Lemma~\ref{lem:B_tree_order}. Hence, $v_{j}$ is deleted from the B-tree of $V$ and $v_{j^{\prime}+1}$ is inserted into the B-tree of $V$ in $O(1)$ time without balancing the B-tree. 
\oldsentence{
Illustrations of the replacement of two nodes $v_{i}$ and $v_{j}$ in the doubly linked list of $V$ can be found in Appendix~\ref{app:fast_update:examples}. 
}


The above procedure inserts $v_{j^{\prime}+1}$ into the B-tree of $V$ even if the node satisfies none of the three conditions of Lemma~\ref{lem:mod}.
This is because Lemma~\ref{lem:b-tree_update_lemma1}-(ii) assumes that $u_{j} \neq u_{i-1}$ holds, 
but $u_{j} \neq u_{i-1}$ is not always true.
If the assumption holds, $v_{j^{\prime}+1}$ is appropriately inserted into the B-tree of $V$. 
Otherwise (i.e., $v_{j^{\prime}+1} = v_{i^{\prime}-1}$ holds),  node $v_{j^{\prime}+1}$ is appropriately deleted from the B-tree of $V$ in $O(1)$ time, 
which is explained next.

Two nodes $v_{i^{\prime}-1}$ and $v_{i^{\prime} + 1}$ are appropriately deleted from the B-tree of $V$, 
and new nodes are inserted into the B-tree of $V$ in the update operation of the LF-interval graph. 
For updating the B-tree of $V$ in $O(1)$ time, 
we merge at most three nodes $v_{i^{\prime}}$, $v_{i^{\prime}-1}$, and $v_{i^{\prime}+1}$ into a new node in a prepossessing step. 
The merging of nodes and update of the B-tree is performed according to the following three cases.

\subparagraph{Case A: $v_{i^{\prime}-1}$ has a label including character $c$ and $v_{i^{\prime}+1}$ does not have a label including character $c$~(Figure~\ref{fig:btree_replace_with_merging}-(i)).} 

First, the two consecutive nodes $u_{i^{\prime}-1}$ with label $(c, \ell)$ and $u_{i^{\prime}}$ with label $(c, 1)$ are merged into a new one $u_{y^{\prime}}$ with label $(c, \ell + 1)$ in the doubly linked list of $U$. 
Then, set $V$ is updated according to the merge of the two nodes in $U$, i.e., the two consecutive nodes $v_{i^{\prime}-1}$ and $v_{i^{\prime}}$ are merged into a new one $v_{y^{\prime}}$ with label $(c, \ell + 1)$ in the doubly linked list of $V$. 

Node $v_{i^{\prime}+1}$ is kept in the B-tree of $V$~(if the node is contained in the B-tree). 
Node $v_{i^{\prime}-1}$ is deleted from the B-tree of $V$ and new node $v_{y^{\prime}}$ is inserted 
only if $v_{i^{\prime}-1}$ is contained in the B-tree of $V$.
Because $v_{i^{\prime}-1}$ is replaced by $v_{y^{\prime}}$ in the doubly linked list of $V$, 
$v_{i^{\prime}-1}$ can be deleted from the B-tree of $V$ and $v_{y^{\prime}}$ can be inserted in $O(1)$ time from Lemma~\ref{lem:B_tree_order}. 

\subparagraph{Case B: $v_{i^{\prime}-1}$ does not have a label including character $c$, 
and $v_{i^{\prime}+1}$ has a label including character $c$~(Figure~\ref{fig:btree_replace_with_merging}-(ii)).}
First, the two consecutive nodes $u_{i^{\prime}}$ and $u_{i^{\prime}+1}$ with label $(c, \ell^\prime)$ are merged into a new node $u_{y^{\prime}}$ with label $(c, \ell^{\prime} + 1)$ in the doubly linked list of $U$. 
Then, set $V$ is updated according to the merge of the two nodes in $U$.

In this case, 
$v_{i^{\prime}-1}$ is contained in the B-tree of $V$, 
and the node is kept. 
Node $v_{i^{\prime}+1}$ is deleted from the B-tree of $V$ and new node $v_{y^{\prime}}$ is inserted 
only if $v_{i^{\prime}+1}$ is contained in the B-tree of $V$.
Because $v_{i^{\prime}+1}$ is replaced with $v_{y^{\prime}}$ in the doubly linked list of $V$, 
$v_{i^{\prime}+1}$ is deleted from the B-tree of $V$ and $v_{y^{\prime}}$ is inserted in $O(1)$ time from Lemma~\ref{lem:B_tree_order}.

\subparagraph{Case C: both $v_{i^{\prime}-1}$ and $v_{i^{\prime}+1}$ have labels including the same character $c$~(Figure~\ref{fig:btree_replace_with_merging}-(iii)).} 
First, the three consecutive nodes $u_{i^{\prime}-1}$, $u_{i^{\prime}}$ and $u_{i^{\prime}+1}$ are merged into  
a new node $u_{y^{\prime}}$ with label $(c, \ell + \ell^{\prime} + 1)$ including the same character $c$ in the doubly linked list of $U$. 
Then, set $V$ is updated according to the merge of the two nodes in $U$. 

In this case, $v_{i^{\prime}-1}$ is not contained in the B-tree of $V$. 
Node $v_{i^{\prime}+1}$ is deleted from the B-tree and the new node $v_{y^{\prime}}$ is inserted if $v_{i^{\prime}+1}$ is contained in the tree; otherwise, $v_{y^{\prime}}$ is not inserted into the tree.
Because $v_{i^{\prime}+1}$ is replaced by $v_{y^{\prime}}$ in the doubly linked list of $V$, 
$v_{i^{\prime}+1}$ is deleted and $v_{y^{\prime}}$ is inserted into the B-tree of $V$ in $O(1)$ time from Lemma~\ref{lem:B_tree_order}. 


One of the three cases always holds~(see Appendix~\ref{app:details_of_fast_update_operation}).
The preprocessing of the B-tree of $V$~(i.e., the merging of nodes) does not affect Lemma~\ref{lem:b-tree_update_lemma1}. 
Hence, updating the B-tree takes $O(1)$ time.


\subparagraph{Replace-node step in constant time.}
The replace-node step is performed in $O(1)$ time by finding the position for inserting a new node $v_{i^\prime}$ 
with label $(c, 1)$ into the doubly linked list of $V$ without accessing the B-tree of set $V$. 
This is made possible 
if either or both $u_{i-1}$ and $u_{i+1}$ have labels including the same character $c$.  
The following lemma holds. 
\begin{lemma}\label{lem:fast_insertion}
If either or both $u_{i-1}$ and $u_{i+1}$ have labels including the same character $c$,
the position for inserting the new node $v_{i^{\prime}}$ into the doubly linked list of $V$ can be found in $O(1)$ time. 
\end{lemma}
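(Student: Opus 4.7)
The plan is to avoid consulting the B-tree of $V$ entirely and instead read off the insertion slot via a single $E_{LF}$ edge emanating from one of the two neighbors of $u_i$ in the doubly linked list of $U$. I would split into two cases according to whether $u_{i-1}$ or $u_{i+1}$ carries a label containing the character $c$; if both cases apply, either analysis suffices.

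In the first case, I would follow the previous pointer from $u_i$ to reach $u_{i-1}$ and then traverse the unique edge of $E_{LF}$ incident to $u_{i-1}$ to obtain $v_{i-1}$, each step in $O(1)$ time. Because $u_{i-1}$ represents a DBWT-repetition of character $c$ at positions $[p_{i-1}, p_i - 1]$ of the (replaced) $L$, the LF formula applied to the modified $L$ yields $\LF_{\delta}(p_i) = \LF_{\delta}(p_{i-1}) + (p_i - p_{i-1})$, which is precisely one past the right endpoint of the F-interval represented by $v_{i-1}$. Hence the new node $v_{i^\prime}$ must be inserted immediately after $v_{i-1}$ in the doubly linked list of $V$, and that slot is reachable via $v_{i-1}$'s next pointer in $O(1)$ time. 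The second case is symmetric: I would follow $u_i$'s next pointer to $u_{i+1}$ and then the $E_{LF}$ edge to $v_{i+1}$; the LF formula then gives $\LF_{\delta}(p_i) = \LF_{\delta}(p_{i+1}) - 1$, so $v_{i^\prime}$ must be the immediate predecessor of $v_{i+1}$ and that slot is reachable via $v_{i+1}$'s previous pointer.

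The main technical obstacle I anticipate is rigorously justifying that no other F-interval in $V$ can separate $v_{i-1}$ from the slot just identified (and symmetrically in Case~2). This will follow from two facts: since $F_\delta$ is lexicographically sorted, the F-intervals whose character equals $c$ form a contiguous block within the doubly linked list of $V$; and within that block, Lemma~\ref{lem:B_tree_order} orders the F-intervals by the starting position of their originating DBWT-repetition in $L$. A putative intervening F-interval would therefore have to originate from a DBWT-repetition whose starting position lies strictly between $p_i - 1$ and $p_i$, which is impossible. Once this is in place, the constant-time bound is immediate from the bounded number of pointer traversals used to reach the insertion slot.
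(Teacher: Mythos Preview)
Your proposal is correct and follows essentially the same approach as the paper: in each case you traverse the $E_{LF}$ edge from the relevant neighbor $u_{i-1}$ (resp.\ $u_{i+1}$) to $v_{i-1}$ (resp.\ $v_{i+1}$) and place $v_{i'}$ immediately after (resp.\ before) it, appealing to the LF formula. The paper's own proof is a terse two-sentence version of exactly this argument; your additional paragraph ruling out an intervening F-interval makes explicit what the paper leaves implicit in the phrase ``by the LF formula.''
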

\oldsentence{
\begin{proof}
See Appendix~\ref{app:fast_insertion}. 
\end{proof}
}
\newsentence\bgroup
\begin{proof}
For the node $v_{i-1} \in V$ connected to $u_{i-1}$ by edge $(u_{i-1}, v_{i-1}) \in E_{LF}$ and 
node $v_{i+1} \in V$ connected to $u_{i+1}$ by edge $(u_{i+1}, v_{i+1}) \in E_{LF}$, 
$v_{i^{\prime}}$ is inserted at the position next to $v_{i-1}$ by the LF formula 
if the label of $u_{i-1}$ includes character $c$; 
otherwise, 
the label of $u_{i+1}$ includes character $c$, 
and $v_{i^{\prime}}$ is inserted into the doubly linked list of $V$ 
at the position previous to $v_{i+1}$ by the LF formula. 
The two nodes $v_{i-1}$ and $v_{i+1}$ can be found in $O(1)$ time using the two nodes $u_{i-1}, u_{i+1} \in U$. 
\end{proof}
\egroup

The following lemma concerning the conclusion of the fast update operation holds. 
\begin{lemma}\label{lem:fast_update_result}
Assume that the B-tree of $V$ in LF-interval graph $\graphds(D^{\alpha}_{\delta})$ contains only the nodes that satisfy one of the three conditions of Lemma~\ref{lem:mod}: 
(i) fast update operation $\fastUpdate(\graphds(D^{\alpha}_{\delta}), c)$ runs in $O(\alpha)$ time; 
(ii) the fast update operation outputs the LF-interval graph for a $(2\alpha + 1)$-balanced DBWT $D^{2\alpha+1}_{\delta+1}$ 
of BWT $L_{\delta+1}$ with at most two $\alpha$-heavy DBWT-repetitions and at most two $\alpha$-heavy F-intervals; 
(iii) the B-tree of $V$ in the outputted LF-interval graph contains only nodes satisfying one of the three conditions of Lemma~\ref{lem:mod}.
\end{lemma}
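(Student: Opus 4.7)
The plan is to reduce each of the three claims to results already established earlier, since $\fastUpdate$ is structurally the same algorithm as the foundational update of Section~\ref{sec:update_graph}, modified only at two bottlenecks: the replace-node step and the B-tree maintenance. I would first verify that the precondition of the fast operation (either $u_{i-1}$ or $u_{i+1}$ carries character $c$) enables the $O(1)$ replace-node step of Lemma~\ref{lem:fast_insertion}, so that the new node $v_{i^{\prime}}$ is placed in the doubly linked list of $V$ without touching the B-tree at all. Then, using the Case A/B/C preprocessing described in the preceding subsection, I would show that the merging of $u_{i^{\prime}-1}, u_{i^{\prime}}, u_{i^{\prime}+1}$ into a single node $u_{y^{\prime}}$ (and the corresponding update on the $V$-side) reduces every B-tree mutation to an in-place replacement of a node by a node that occupies the same position in the total order, so by Lemma~\ref{lem:B_tree_order} each mutation costs $O(1)$ without rebalancing. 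Combined with Lemma~\ref{lem:b-tree_update_lemma1} for the handling of $v_{j^{\prime}+1}$, this eliminates both $\log k$ factors from the foundational cost analysis, and every remaining piece (split-node, insert-node, update-edge, order maintenance, heavy-node arrays) is already $O(\alpha)$ from Theorem~\ref{thm:insert_balance}. Summing yields claim (i).

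For claim (ii), I would observe that the merging preprocessing does not alter the final multiset of DBWT-repetitions or F-intervals produced on $D^{\alpha}_{\delta+1}$: any pair of adjacent equal-character nodes that the foundational update would leave in $U$ or $V$ is exactly what Cases A, B, C collapse in advance, so the resulting LF-interval graph is identical, as a graph with labels, to the one that the foundational update would emit. Therefore the $(2\alpha+1)$-balance property and the bound of at most two $\alpha$-heavy DBWT-repetitions and two $\alpha$-heavy F-intervals on the output are inherited verbatim from Theorem~\ref{thm:insert_balance}.

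For claim (iii), the B-tree invariant, I would argue along the lines of Lemma~\ref{lem:slow_update_result}-(iii) combined with Lemma~\ref{lem:insert_and_delete_nodes_by_slow_update}. The only nodes whose membership status in the B-tree can change are the at most four deletion candidates $v_{i}, v_{j}, v_{i^{\prime}-1}, v_{i^{\prime}+1}$ and the insertion candidates $v_{i^{\prime}}, v_{x^{\prime}}, v_{j^{\prime}+1}, v_{y^{\prime}}$; all other nodes retain their neighbors' label characters and so, by Lemma~\ref{lem:mod}, retain their status. For each of these finitely many candidates I would verify directly from the label pattern left in the doubly linked lists of $U$ and $V$ after the merging that it satisfies one of the three conditions of Lemma~\ref{lem:mod} precisely when the algorithm has chosen to keep or insert it into the B-tree, using Lemma~\ref{lem:b-tree_update_lemma1}-(ii) to transfer membership from $v_j$ to $v_{j^{\prime}+1}$.

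The main obstacle will be the corner case flagged in the text in which the newly created $v_{j^{\prime}+1}$ coincides with $v_{i^{\prime}-1}$, since Lemma~\ref{lem:b-tree_update_lemma1}-(ii) is stated under the hypothesis $u_{j}\neq u_{i-1}$. Here the in-place replacement no longer works, and $v_{j^{\prime}+1}$ must be deleted from rather than inserted into the B-tree after the merge; I would handle this by a separate case distinction showing that in this configuration the preprocessing of Case A, B, or C already removes $v_{j^{\prime}+1}$ in $O(1)$ time and that the remaining surviving node $v_{y^{\prime}}$ has exactly the label pattern required by Lemma~\ref{lem:mod} to determine its B-tree membership. Once this corner case is dispatched, the three claims combine to give the lemma.
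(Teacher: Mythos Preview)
Your argument for part (ii) contains a genuine gap. You claim that the merging preprocessing ``does not alter the final multiset of DBWT-repetitions or F-intervals'' and that ``the resulting LF-interval graph is identical, as a graph with labels, to the one that the foundational update would emit.'' This is false. The foundational update of Section~\ref{sec:update_graph} leaves $u_{i^{\prime}}$ (and, if present, $u_{i^{\prime}-1}$, $u_{i^{\prime}+1}$) as separate nodes even when they carry the same character $c$; it never merges them. The fast update, by contrast, explicitly collapses two or three of these nodes into a single new node $u_{y^{\prime}}$, producing a strictly coarser DBWT. The output graphs are \emph{not} identical, and the merged node $u_{y^{\prime}}$ can receive strictly more incoming $E_{F}$-edges than any node in the foundational output. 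Consequently the $(2\alpha+1)$-balance and the ``at most two $\alpha$-heavy'' bounds are not inherited verbatim from Theorem~\ref{thm:insert_balance}.

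The paper handles this by an explicit case analysis on A, B, C. Using the per-node bounds of Lemma~\ref{lem:the_number_of_directed_edges_to_u} on the intermediate graph $\graphds(\hat{D}^{2\alpha+1}_{\delta+1})$ (i.e., $m_{U}(u_{h})\leq\alpha$ and $m_{U}(u)\leq\alpha-1$ for all other $u$), it sums the in-degrees of the nodes being merged: in Cases A and B one obtains $m_{U}(u_{i^{\prime}-1})+m_{U}(u_{i^{\prime}})\leq 2\alpha-1$, and in Case C one uses additionally that $m_{U}(u_{i^{\prime}})\leq 1$ (since the DBWT-repetition of $u_{i^{\prime}}$ has length $1$) to get $m_{U}(u_{i^{\prime}-1})+m_{U}(u_{i^{\prime}})+m_{U}(u_{i^{\prime}+1})\leq 2\alpha$. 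The symmetric argument handles $v_{y^{\prime}}$. This is the substantive work you are missing; without it you cannot conclude that $D^{2\alpha+1}_{\delta+1}$ is $(2\alpha+1)$-balanced after merging.

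Your treatments of parts (i) and (iii) are broadly aligned with the paper, though note that in (iii) the insertion candidates after merging are $v_{x^{\prime}}, v_{j^{\prime}+1}, v_{y^{\prime}}$ rather than including $v_{i^{\prime}}$, since $v_{i^{\prime}}$ no longer exists as a separate node.
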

\begin{proof}
See Appendix~\ref{app:fast_update_result}. 
\end{proof}

Both update operation $\update(\graphds(D^{\alpha}_{\delta}), c)$ and 
fast update operation $\fastUpdate(\graphds(D^{\alpha}_{\delta}), c)$ output the LF-interval graph 
for a $(2\alpha + 1)$-balanced DBWT $D^{2\alpha+1}_{\delta+1}$ of BWT $L_{\delta+1}$ with at most two $\alpha$-heavy DBWT-repetitions and at most two $\alpha$-heavy F-intervals.
The outputs are balanced by the balancing operation presented in the next subsection 
such that the LF-interval graph represents an $\alpha$-balanced DBWT $D^{\alpha}_{\delta+1}$ of BWT $L_{\delta+1}$.

\subsection{Balancing operation of the LF-interval graph}\label{sec:balancing}
Balancing operation $\balance(\graphds(D_{\delta}))$ takes the LF-interval graph $\graphds(D_{\delta})$ for a DBWT $D_{\delta}$ of BWT $L_{\delta}$ as input, 
and it outputs the LF-interval graph $\graphds(D^\alpha_{\delta})$ for an $\alpha$-balanced DBWT $D^\alpha_{\delta}$ of the same BWT. 
The basic idea behind the balancing operation is to iteratively remove each of nodes representing 
$\alpha$-heavy DBWT repetitions and $\alpha$-heavy F-intervals from the given LF-interval graph 
by splitting the chosen node into two nodes. 
The balancing operation repeats this process until it obtains the LF-interval graph for an $\alpha$-balanced DBWT. 

\oldsentence{
We explain the algorithm of the balancing operation. 
At each iteration, the balancing operation processes the LF-interval graph for an $O(\alpha)$-balanced DBWT with $O(\alpha)$ $\alpha$-heavy DBWT-repetitions and $O(\alpha)$ $\alpha$-heavy F-intervals. 
Let $(u_{i}, v_{i}) \in E_{LF}$ be an undirected edge such that 
node $u_{i} \in U$ represents an $\alpha$-heavy DBWT-repetition, 
or node $v_{i} \in V$ represents an $\alpha$-heavy F-interval. 
At the iteration, node $u_{i}$ is split into two nodes, 
and $v_{i}$ is split into two nodes according to the splitting of $u_{i}$. 
The LF-interval graph $\graphds(D^{O(\alpha)}_{\delta})$ is updated according to the splitting of $u_{i}$ and $v_{i}$. 
One iteration of the balancing operation takes $O(\alpha)$ time. 
See Appendix~\ref{app:details_of_balancing_opeartion} for the details of the balancing operation. 
}

\newsentence\bgroup
\begin{figure}[t]
 \begin{center}
		\includegraphics[scale=0.7]{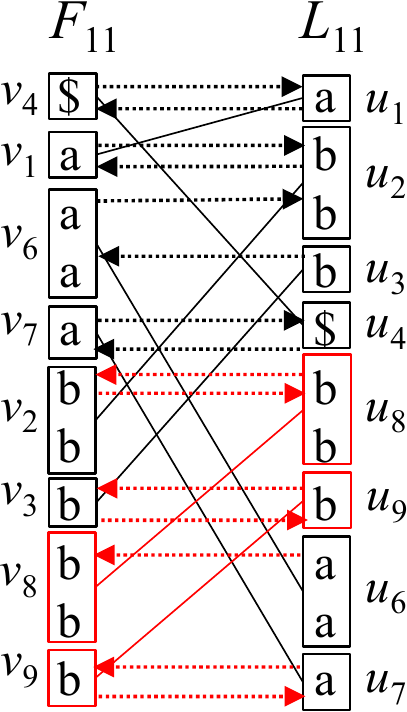}
	  \caption{
      LF-interval graph $\graphds(D_{11})$ updated by splitting node $u_{5}$ into two nodes $u_{8}$ and $u_{9}$. 
	  Here, $\graphds(D_{11})$ is the LF-interval graph in Figure~\ref{fig:graph}, 
	  and the label of $u_{8}$ is $(b, 2)$. 
	  Red rectangles and lines are new nodes and edges created by splitting node $u_{5}$. 
	  }
 \label{fig:split}
 \end{center}
\end{figure}

\begin{figure}[t]
     \begin{center}
		\includegraphics[scale=0.4]{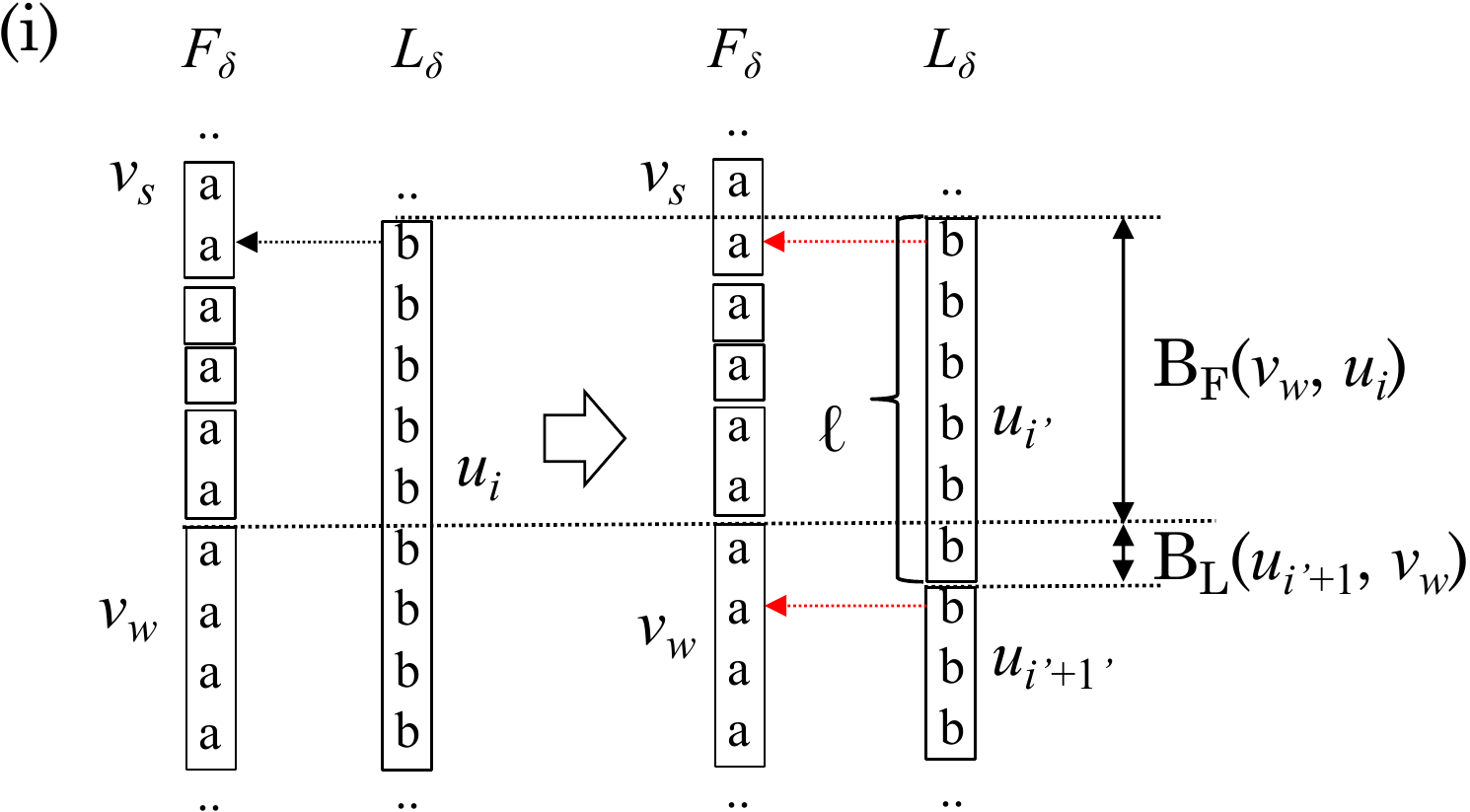}
		\includegraphics[scale=0.4]{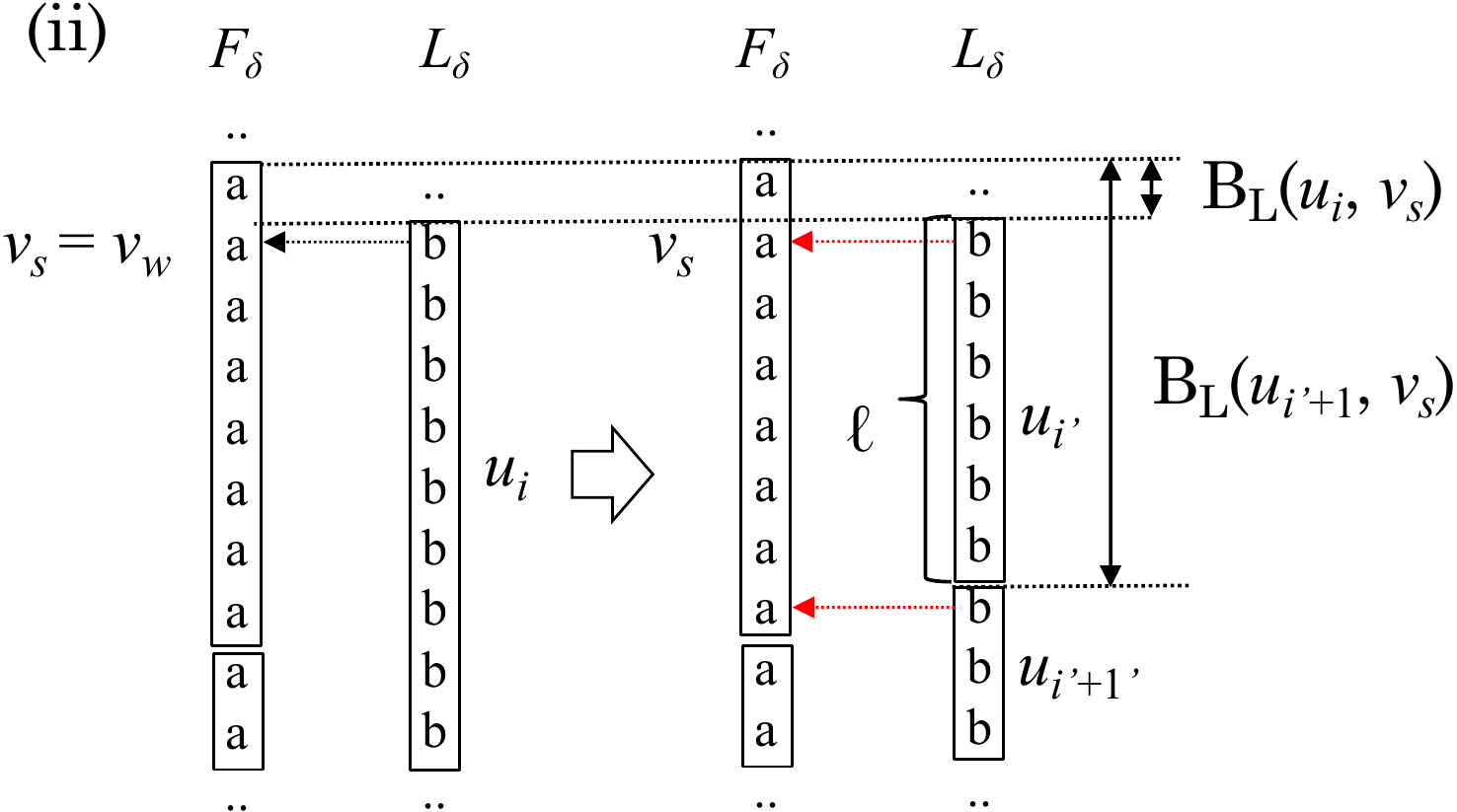}
     \end{center}
     \begin{center}
		\includegraphics[scale=0.4]{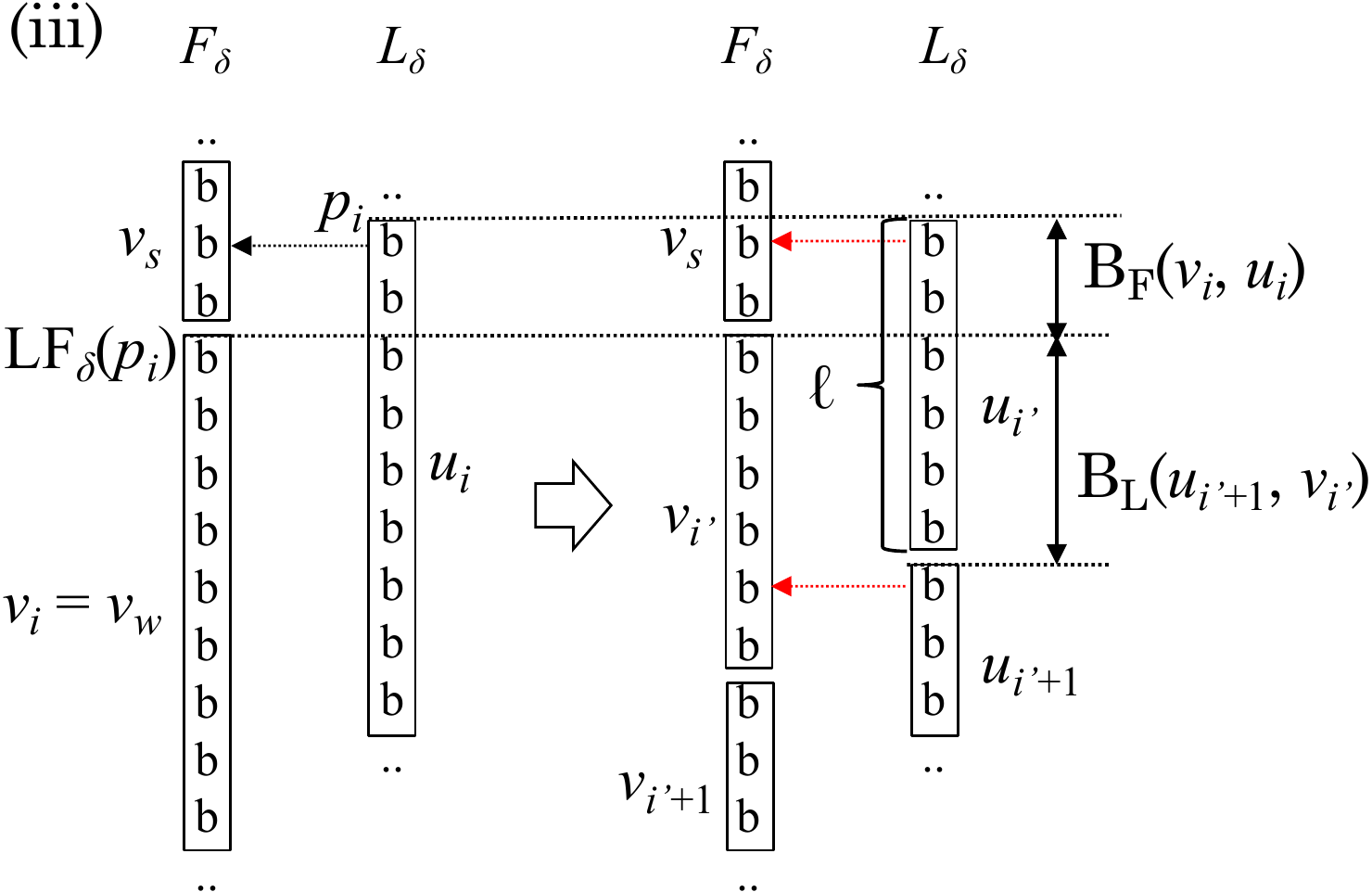}
		\includegraphics[scale=0.4]{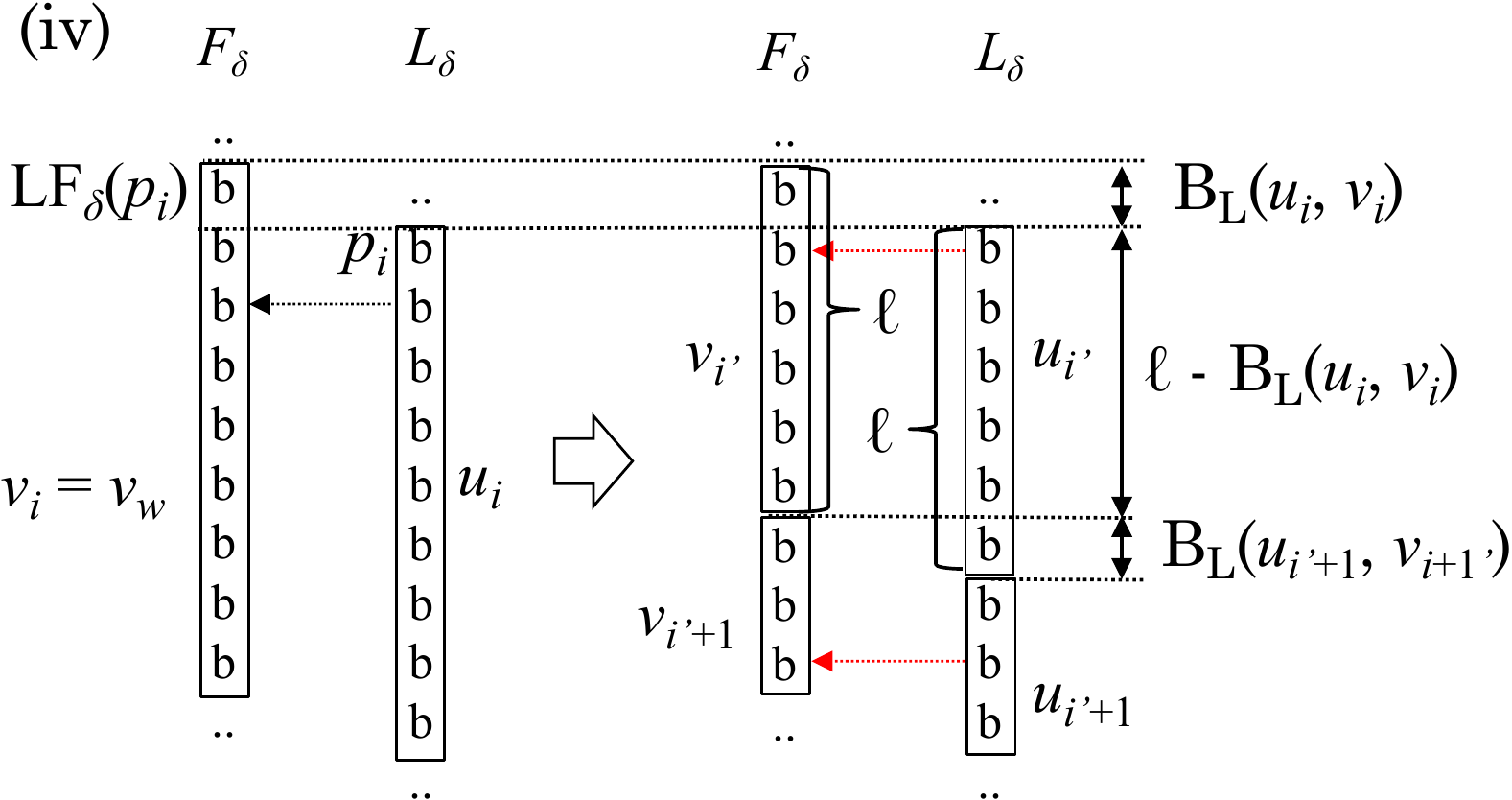}
     \end{center}

	  \caption{
	  Four cases for determining the new directed edge starting from the new node $u_{i^{\prime}+1}$ created by the balancing operation. 
	  Red arrows are the new directed edges.
	  }
 \label{fig:split_case}
\end{figure}

We explain the algorithm of the balancing operation. 
The balancing operation consists of iterations.  
We suppose that (i) each iteration of the balancing operation processes the LF-interval graph $\graphds(D^{O(\alpha)}_{\delta})$ 
for an $O(\alpha)$-balanced DBWT $D^{O(\alpha)}_{\delta} = L_{\delta}[p_{1}..(p_{2}-1)], L_{\delta}[p_{2}..(p_{3}-1)], \ldots, L_{\delta}[p_{k}..(p_{k+1}-1)]$ with $O(\alpha)$ $\alpha$-heavy DBWT-repetitions and $O(\alpha)$ $\alpha$-heavy F-intervals, 
and (ii) the B-tree of $V$ in $\graphds(D^{O(\alpha)}_{\delta})$ stores only nodes satisfying one of the three conditions of Lemma~\ref{lem:mod}. 
Let $(u_{i}, v_{i}) \in E_{LF}$ be an undirected edge such that 
node $u_{i} \in U$ represents an $\alpha$-heavy DBWT-repetition, 
or node $v_{i} \in V$ represents an $\alpha$-heavy F-interval. 
The two nodes $u_{i}$ and $v_{i}$ can be computed in $O(1)$ time using the two arrays stored in 
the LF-interval graph $\graphds(D^{O(\alpha)}_{\delta})$ unless the two arrays are empty. 
If the two arrays are empty, then DBWT $D^{O(\alpha)}_{\delta}$ is $\alpha$-balanced.
Hence, the balancing operation stops and returns the LF-interval graph $\graphds(D^{O(\alpha)}_{\delta})$. 

One iteration of the balancing operation consists of three steps. 
The first step replaces node $u_{i}$ labeled $(L_\delta[p_{i}], p_{i+1}-p_{i})$ with two new nodes $u_{i^{\prime}}$ and $u_{i^{\prime}+1}$ in the doubly linked list of $U$, where $u_{i^\prime}$ is previous to $u_{i^\prime+1}$. 
The labels of the nodes $u_{i^{\prime}}$ and $u_{i^{\prime}+1}$ are $(L_\delta[p_{i}], \ell)$ and $(L_\delta[p_{i}], p_{i+1} - p_{i} - \ell)$, respectively, using non-negative integer $\ell \in \{ 1, 2, \ldots, p_{i+1} - p_{i} - 1 \}$. 
The non-negative integer $\ell$ is set to a value using an edge label in $B_F$ according to whether node $u_i$ represents an $\alpha$-heavy DBWT-repetition or not. 

\subparagraph{Case 1: node $u_{i}$ represents an $\alpha$-heavy DBWT-repetition.}
Node $v_{\pi_{j}} \in V$ is the most forward node on the doubly linked list of $V$ of the nodes connected to $u_{i}$ by directed edges in set $E_{F}$ 
(i.e., $j = \min \{ x \mid 1 \leq x \leq k \mbox{ s.t. } (v_{\pi_{x}}, u_{i}) \in E_{F} \}$, 
where $\pi$ is the permutation introduced in Section~\ref{sec:dynamic});  
$t$ is the number of directed edges connected to $u_{i}$ in $E_{F}$~(i.e., $t = |\{ x \mid 1 \leq x \leq k \mbox{ s.t. } (v_{x}, u_{i}) \in E_{F} \}|$);  
$B_{F}(v_{\pi_{j + \lceil t/2 \rceil}}, u_{i})$ is the label of the directed edge from node $v_{\pi_{j + \lceil t/2 \rceil}}$ to $u_{i}$;  
$\ell$ is set to $B_{F}(v_{\pi_{j + \lceil t/2 \rceil}}, u_{i})$.

\subparagraph{Case 2: $u_{i}$ does not represent an $\alpha$-heavy DBWT-repetition, and $v_{i}$ represents an $\alpha$-heavy F-interval.}
Node $u_{j^{\prime}} \in U$ is the most forward node on the doubly linked list of $U$ of the nodes connected to $v_{i}$ by directed edges in set $E_{L}$ (i.e., $j^{\prime} = \min \{ x \mid 1 \leq x \leq k \mbox{ s.t. } (u_{x}, v_{i}) \in E_{L} \}$;
$t^{\prime}$ is the number of directed edges connected to $v_{i}$ in $E_{L}$; 
$B_{L}(u_{j^{\prime} + \lceil t^{\prime}/2 \rceil}, v_{i})$ is the label of the directed edge from node $u_{j^{\prime} + \lceil t^{\prime}/2 \rceil}$ to $v_{i}$.  
Integer $\ell$ is set to $B_{L}(u_{j^{\prime} + \lceil t^{\prime}/2 \rceil}, v_{i})$. 

Similarly, $v_{i} \in V$ is replaced with two new nodes $v_{i^{\prime}}$ and $v_{i^{\prime}+1}$ in the doubly linked list of $V$, 
and $v_{i^{\prime}}$ (respectively, $v_{i^{\prime}+1}$) has the same labels as $u_{i^{\prime}}$ (respectively, $u_{i^{\prime}+1}$). 
This step takes $O(\alpha)$ time. 

Figure~\ref{fig:split} illustrates an LF-interval graph that is balanced by splitting a node. 
Node $u_{5}$ labeled $(b, 3)$ is replaced with $u_{8}$ labeled $(b, 2)$ and $u_{9}$ labeled $(b, 1)$. 
Similarly, $v_{5}$ labeled $(b, 3)$ is replaced with $v_{8}$ labeled $(b, 2)$ and $v_{9}$ labeled $(b, 1)$. 

The second step updates the edges and their labels~(i.e., the five sets $E_{LF}, E_{L}, E_{F}, B_{L}$, and $B_{F}$) according to the splits of the two nodes $u_{i}$ and $v_{i}$. 
This step is similar to the update-edge step of the update operation, 
and the algorithms for updating the five sets take $O(\alpha)$ time, which is explained below. 

\subparagraph{Updating set $E_{LF}$.} 
Remove edge $(u_{i}, v_{i})$ from set $E_{LF}$, 
and insert two new edges $(u_{i^{\prime}}, v_{i^{\prime}})$ and $(u_{i^{\prime}+1}, v_{i^{\prime}+1})$ into the set for the four new nodes $u_{i^{\prime}}, u_{i^{\prime}+1}, v_{i^{\prime}}$, and $v_{i^{\prime}+1}$.
\subparagraph{Updating sets $E_{L}$ and $B_{L}$.}
The directed edges pointing to node $v_{i}$ are removed from set $E_{L}$. 
Formally, let $X \subseteq U$ be the set of nodes connected to the tails of the directed edges pointing to $v_{i}$~(i.e., $X = \{ u \mid (u, v) \in E_{L} \mbox{ s.t. } v = v_{i} \}$). 
Then, we remove all the directed edges starting from nodes in set $X$ from $E_{L}$. 
Similarly, the directed edge starting at $u_{i}$ is removed from $E_{L}$. 
The number of directed edges removed from $E_{L}$ is at most $(1 + |X|)$. 
Then, $|X| = O(\alpha)$ because $D^{O(\alpha)}_{\delta}$ is $O(\alpha)$-balanced.

By contrast, new directed edges starting from nodes in $X$ are inserted into set $E_{L}$. 
The new directed edge starting at each node $u \in X$ points to $v_{i^{\prime}}$ or $v_{i^{\prime}+1}$. 
If $B_{L}(u, v_{i}) < \ell$ for the length $\ell$ of the DBWT-repetition represented by new node $u_{i^{\prime}}$ 
and the label $B_{L}(u, v_{i})$ of the removed edge $(u, v_{i})$, 
then the new directed edge is $(u, v_{i^{\prime}})$; 
otherwise, it is $(u, v_{i^{\prime}+1})$. 
The label of the directed edge $(u, v_{i^{\prime}})$ is set to the label $B_{L}(u, v_{i})$ of the removed edge. 
Similarly, 
the label of the directed edge $(u, v_{i^{\prime}+1})$ is set to $(B_{L}(u, v_{i}) - \ell)$.

Next, the directed edge starting at new node $u_{i^{\prime}}$ is inserted into set $E_{L}$.
Let $v_{s} \in V$ be the node connected to $u_{i}$ by directed edge $(u_{i}, v_{s}) \in E_{L}$. 
Then, the directed edge starting at new node $u_{i^{\prime}}$ is determined by the following three cases: 
(i) $v_{s} \neq v_{i}$, (ii) $v_{s} = v_{i}$ and $B_{L}(u_{i}, v_{s}) < \ell$, and (iii) $v_{s} = v_{i}$ and $B_{L}(u_{i}, v_{s}) \geq \ell$, where $B_{L}(u_{i}, v_{s})$ is the label of directed edge $(u_{i}, v_{s})$, which is removed from $E_{L}$. 
For case (i), the directed edge starting at $u_{i^{\prime}}$ 
points to $v_{s}$ and its label $B_{L}(u_{i^{\prime}}, v_{s})$ is set to $B_{L}(u_{i}, v_{s})$. 
For case (ii), the directed edge starting at $u_{i^{\prime}}$ points to $v_{i^{\prime}}$ and its label $B_{L}(u_{i^{\prime}}, v_{i^{\prime}})$ is set to $B_{L}(u_{i}, v_{s})$. 
For case (iii), the directed edge starting at $u_{i^{\prime}}$ points to $v_{i^{\prime}+1}$ and 
its label $B_{L}(u_{i^{\prime}}, v_{i^{\prime}+1})$ is set to $(B_{L}(u_{i}, v_{s}) - \ell)$.

Next, the directed edge starting at new node $u_{i^{\prime}+1}$ is inserted into set $E_{L}$.
Let $v_{w} \in V$ be the node such that 
its F-interval $[\LF_{\delta}(p_{w}), \LF_{\delta}(p_{w+1}-1)]$ 
contains position $p_{i} + \ell$. 
Then, the directed edge starting at $u_{i^{\prime}+1}$ is determined by the following two cases: 
(i) $v_{w} \neq v_{i}$ or (ii) $v_{w} = v_{i}$.

For case (i), $v_{w}$ is either equal to $v_{s}$ or not equal to it.
If $v_{w} \neq v_{s}$, 
the new directed edge starting at $u_{i^{\prime}+1}$ points to $v_{w}$ and its label $B_{L}(u_{i^{\prime}+1}, v_{w})$ is set to $(\ell - B_{F}(v_{w}, u_{i}))$~(see also Figure~\ref{fig:split_case}-(i)). 
Otherwise, 
the new directed edge points to $v_{s}$ and its label $B_{L}(u_{i^{\prime}+1}, v_{s})$ is set to $(\ell + B_{L}(u_{i}, v_{s}))$~(see also Figure~\ref{fig:split_case}-(ii)).

For case (ii), node $v_{w} = v_{i}$ is split into two new nodes $v_{i^{\prime}}$ 
and $v_{i^{\prime}+1}$. If $p_{i} < \LF_{\delta}(p_{i})$, 
then the new directed edge starting at $u_{i^{\prime}+1}$ points to 
$v_{i^{\prime}}$ and its label $B_{L}(u_{i^{\prime}+1}, v_{i^{\prime}})$ is set to  
$(\ell - B_{F}(v_{i}, u_{i}))$~(see also Figure~\ref{fig:split_case}-(iii)). 
Otherwise, the new directed edge starting at $u_{i^{\prime}+1}$ points to $v_{i^{\prime}+1}$ and 
its label $B_{L}(u_{i^{\prime}+1}, v_{i^{\prime}+1})$ is set to $B_{L}(u_{i}, v_{i})$~(see also Figure~\ref{fig:split_case}-(iv)).

The following two lemmas can be used to determine node $v_{w}$ and verify whether $p_{i} < \LF_{\delta}(p_{i})$ in $O(\alpha)$ time. 
\begin{lemma}\label{lem:split_update_edge_L}
We can compute $v_{w}$ in $O(\alpha)$ time. 
\end{lemma}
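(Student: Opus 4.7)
The plan is to exploit the $O(\alpha)$-balanced property of $D^{O(\alpha)}_{\delta}$: any DBWT-repetition covers only $O(\alpha)$ starting positions of F-intervals, so the set of candidate F-intervals whose starting positions fall in $[p_{i},p_{i+1}-1]$ has size only $O(\alpha)$. Recall that $v_{w}$ is the unique node whose F-interval $[\LF_{\delta}(p_{w}),\LF_{\delta}(p_{w+1}-1)]$ contains position $p_{i}+\ell$, and that $p_{i}\leq p_{i}+\ell\leq p_{i+1}-1$ because $\ell\in\{1,\ldots,p_{i+1}-p_{i}-1\}$. I would split on the sign of $\LF_{\delta}(p_{w})-p_{i}$ in order to isolate this $O(\alpha)$-sized search space.

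For the case $\LF_{\delta}(p_{w})\leq p_{i}$, the F-interval of $v_{w}$ contains $p_{i}$. By the definition of $E_{L}$, the only node in $V$ whose F-interval contains $p_{i}$ is the node $v_{s}$ connected to $u_{i}$ by the directed edge $(u_{i},v_{s})\in E_{L}$, hence $v_{w}=v_{s}$. I would fetch $v_{s}$ in $O(1)$ time from the adjacency of $u_{i}$, reconstruct $\LF_{\delta}(p_{s})=p_{i}-B_{L}(u_{i},v_{s})$ together with the length stored in $B_{V}(v_{s})$, and test in $O(1)$ time whether $p_{i}+\ell$ lies in $[\LF_{\delta}(p_{s}),\LF_{\delta}(p_{s+1}-1)]$.

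If this test fails, then $\LF_{\delta}(p_{w})>p_{i}$, which combined with $p_{i}+\ell\leq p_{i+1}-1$ yields $\LF_{\delta}(p_{w})\in(p_{i},p_{i+1}-1]$; by the definition of $E_{F}$ this forces $(v_{w},u_{i})\in E_{F}$. Since there are only $O(\alpha)$ such directed edges incident to $u_{i}$, I would enumerate them, for each candidate $v_{x}$ recover its starting position $\LF_{\delta}(p_{x})=p_{i}+B_{F}(v_{x},u_{i})$ and its length from $B_{V}(v_{x})$, and return the unique $v_{x}$ whose F-interval contains $p_{i}+\ell$.

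The main obstacle is bookkeeping rather than anything conceptual: one must confirm that $v_{s}$, each $v_{x}$, and the endpoints of their F-intervals can all be recovered in $O(1)$ time purely from the labels $B_{L}$, $B_{F}$, and $B_{V}$ stored in the LF-interval graph, and that the neighbours of $u_{i}$ in $E_{F}$ can be enumerated in $O(\alpha)$ time under the adjacency representation of Section~\ref{sec:dynamic}. Both facts follow directly once the label-function definitions and the $O(\alpha)$-balanced assumption are unwound, so the remainder of the argument is routine.
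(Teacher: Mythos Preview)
Your proposal is correct and follows essentially the same approach as the paper's own proof. Both arguments first test whether $v_{w}=v_{s}$ by checking if the F-interval of $v_{s}$ (the node with $(u_{i},v_{s})\in E_{L}$) contains $p_{i}+\ell$, and if not, observe that $\LF_{\delta}(p_{w})$ lands inside $[p_{i},p_{i+1}-1]$ so that $(v_{w},u_{i})\in E_{F}$, reducing the search to the $O(\alpha)$ incoming $E_{F}$-edges of $u_{i}$; your framing via the dichotomy $\LF_{\delta}(p_{w})\leq p_{i}$ versus $\LF_{\delta}(p_{w})>p_{i}$ is just a slightly more explicit justification of the same case split.
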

\begin{proof}
If $B_{L}(u_{i}, v_{s}) + \ell \leq \ell^{\prime}$ for the length $\ell^{\prime}$ of the F-interval represented by $v_{s}$ 
and the label $B_{L}(u_{i}, v_{s})$ of directed edge $(u_{i}, v_{s})$ in $E_{L}$, 
then $v_{w} = v_{s}$. 
Otherwise, 
the directed edge starting at $v_{w}$ points to $u_{i}$, 
and $\ell \in [B_{F}(v_{w}, u_{i}) + 1, B_{F}(v_{w}, u_{i}) + \hat{\ell}]$ for the length $\hat{\ell}$ of the F-interval represented by $v_{w}$ 
and the label $B_{F}(v_{w}, u_{i})$ of directed edge $(v_{w}, u_{i})$. 
We find $v_{w}$ in the set of nodes consisting of (i) $v_{s}$ and 
(ii) the nodes connected to $u_{i}$ by directed edges in $E_{F}$. 
The number of directed edges pointing to $u_{i}$ is $O(\alpha)$ because $D^{O(\alpha)}_{\delta}$ is $O(\alpha)$-balanced. 
Hence, $v_{w}$ can be computed in $O(\alpha)$ time. 
\end{proof}
\begin{lemma}\label{lem:split_verify}
We can verify whether $p_{i} < \LF_{\delta}(p_{i})$ in $O(1)$ time without knowing $p_{i}$ and $\LF_{\delta}(p_{i})$. 
\end{lemma}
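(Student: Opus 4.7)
The plan is to reduce the strict inequality $p_{i} < \LF_{\delta}(p_{i})$ to the comparison of two specific nodes in the doubly linked list of $V$, and then to carry out that comparison in $O(1)$ time using the character labels together with the order maintenance data structure on $U$.

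The starting point is the directed edge $(u_{i}, v_{s}) \in E_{L}$ incident to $u_{i}$, which by definition encodes that position $p_{i}$ lies inside the F-interval $[\LF_{\delta}(p_{s}), \LF_{\delta}(p_{s+1}-1)]$ represented by $v_{s}$. Since the F-intervals partition $[1, \delta]$, three subcases arise. If $v_{s} = v_{i}$, then $p_{i} \in [\LF_{\delta}(p_{i}), \LF_{\delta}(p_{i+1}-1)]$ gives $p_{i} \geq \LF_{\delta}(p_{i})$, so the strict inequality fails. If $v_{s}$ precedes $v_{i}$ in the $V$-list, then $\LF_{\delta}(p_{s+1}-1) < \LF_{\delta}(p_{i})$, and hence $p_{i} \leq \LF_{\delta}(p_{s+1}-1) < \LF_{\delta}(p_{i})$, so the inequality holds. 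Otherwise $v_{i}$ precedes $v_{s}$, and symmetrically $p_{i} \geq \LF_{\delta}(p_{s}) > \LF_{\delta}(p_{i+1}-1) \geq \LF_{\delta}(p_{i})$, so the inequality fails. The pointer test $v_{s} \stackrel{?}{=} v_{i}$ is $O(1)$ via the edges of $E_{L}$ and $E_{LF}$, so the whole reduction is $O(1)$ modulo one list comparison.

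The remaining obstacle is comparing $v_{s}$ and $v_{i}$ in the $V$-list in $O(1)$ time, since a generic comparison inside a doubly linked list is not constant time. The idea is to appeal to Lemma~\ref{lem:B_tree_order}, which already identifies list order with B-tree key order. Read off the characters $c_{s}$ and $c_{i}$ from the labels $B_{V}(v_{s})$ and $B_{V}(v_{i})$. If $c_{s} \neq c_{i}$, then because the characters of $F_{\delta}$ are sorted, the list order of $v_{s}$ and $v_{i}$ coincides with the lexicographic order of $c_{s}$ and $c_{i}$, and this is decided in $O(1)$. If $c_{s} = c_{i}$, then by the LF formula (the same argument used inside the proof of Lemma~\ref{lem:B_tree_order}), $v_{s}$ precedes $v_{i}$ in the $V$-list iff $u_{s}$ precedes $u_{i}$ in the $U$-list, where $u_{s}$ is obtained from $v_{s}$ in $O(1)$ via the edge $(u_{s}, v_{s}) \in E_{LF}$; this comparison on $U$ is then answered in $O(1)$ by the order maintenance data structure of Section~\ref{sec:dynamic}. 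Combining all subcases, the verification runs in $O(1)$ time without ever materializing $p_{i}$ or $\LF_{\delta}(p_{i})$.
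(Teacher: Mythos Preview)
Your argument is correct, but it is the dual of the paper's. The paper follows the directed edge $(v_{i},u_{x})\in E_{F}$, so that $\LF_{\delta}(p_{i})\in[p_{x},p_{x+1}-1]$, and then the test reduces to a single comparison of $u_{i}$ against $u_{x}$ in the $U$-list, which is answered immediately by the order maintenance structure (together with the pointer test $u_{x}=u_{i}$). You instead follow $(u_{i},v_{s})\in E_{L}$, landing in $V$, and then need the extra detour through Lemma~\ref{lem:B_tree_order} (character comparison first, order maintenance on $U$ second) because there is no order maintenance structure on $V$. Both routes are $O(1)$; the paper's is a touch more direct, while yours has the minor advantage of reusing the node $v_{s}$ that is already in play in the surrounding case analysis. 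A small bonus of your decomposition is that your Case~1 ($v_{s}=v_{i}\Rightarrow p_{i}\geq\LF_{\delta}(p_{i})$) is tight without appealing to the fact that $\LF_{\delta}(x)\neq x$, whereas the paper's condition~(ii) ($u_{x}=u_{i}$) gives only $p_{i}\leq\LF_{\delta}(p_{i})$ on its own.
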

\begin{proof}
Let $u_{x} \in U$ be the node connected to $v_{i}$ by a directed edge in $E_{F}$. 
Then, $p_{i} < \LF_{\delta}(p_{i})$ if and only if either of the following two conditions holds: 
(i) $u_{x} \neq u_{i}$, and $u_{i}$ precedes $u_{x}$ in the doubly linked list of $U$; 
(ii) $u_{x} = u_{i}$.
We can verify whether $u_{i}$ precedes $u_{x}$ or not in $O(1)$ time using the order maintenance data structure. Hence, Lemma~\ref{lem:split_verify} holds. 
\end{proof}
Hence, updating sets $E_{L}$ and $B_{L}$ takes $O(\alpha)$ time in total.

\subparagraph{Updating sets $E_{F}$ and $B_{F}$.}
Updating sets $E_{F}$ and $B_{F}$ is symmetric to updating sets $E_{L}$ and $B_{L}$. 
Hence, we can update $E_{F}$ and $B_{F}$ in $O(\alpha)$ time in total.

The third step updates the order maintenance data structure, 
the B-tree of $V$, and the two arrays for $\alpha$-heavy DBWT-repetitions and F-intervals according to the splits of nodes $u_{i}$ and $v_{i}$. 
This step takes $O(\alpha)$ time, which is explained below. 

\subparagraph{Updating the  order maintenance data structure.}
Node $u_{i}$ is removed from the order maintenance data structure. 
By contrast, two new nodes $u_{i^{\prime}}$ and $u_{i^{\prime}+1}$ are inserted into the data structure. 
Updating the order maintenance data structure takes $O(1)$ time.

\subparagraph{Updating the B-tree of $V$.}
The target nodes deleted from the B-tree can be limited to node $v_{i}$. 
By contrast, the target nodes inserted into the B-tree of $V$ are limited to new node $v_{i^{\prime}+1}$. 
Node $v_{i}$ is deleted from the B-tree and the new node $v_{i^{\prime}+1}$ is inserted into the tree if $v_{i}$ is stored in the tree; otherwise, $v_{i^{\prime}+1}$ is not inserted into the tree.
If $v_{i}$ is contained in the B-tree of $V$, 
then $v_{i}$ is deleted from the B-tree and $v_{i^{\prime}+1}$ is inserted. 
Otherwise, $v_{i^{\prime}+1}$ is not inserted into the B-tree of $V$. 
This procedure ensures that the B-tree of $V$ stores only nodes satisfying one of the three conditions of Lemma~\ref{lem:mod} after the LF-interval graph has been updated. 
Because $v_{i}$ is replaced with $v_{i^{\prime}+1}$ in the doubly linked list of $V$, 
the deletion and insertion take $O(1)$ time from Lemma~\ref{lem:B_tree_order}.

\subparagraph{Updating the array for $\alpha$-heavy DBWT-repetitions.}
The target nodes deleted from the array for $\alpha$-heavy DBWT-repetitions can be limited to node $u_{i}$. 
Similarly, the target nodes inserted into the array can be limited to two nodes $u_{h}$ and $u_{h^{\prime}}$ 
for (i) the node $u_{h}$ connected to new node $v_{i^{\prime}}$ by directed edge $(v_{i^{\prime}}, u_{h}) \in E_{F}$ 
and (ii) the node $u_{h^{\prime}}$ connected to new node $v_{i^{\prime}+1}$ by directed edge $(v_{i^{\prime}+1}, u_{h^{\prime}}) \in E_{F}$. 
The array contains $O(\alpha)$ nodes because 
DBWT $D^{O(\alpha)}_{\delta}$ has $O(\alpha)$ $\alpha$-heavy DBWT-repetitions. 
Hence updating the array takes $O(\alpha)$ time using Lemma~\ref{lem:verify_heavy}.

\subparagraph{Updating the array for $\alpha$-heavy F-intervals.}
Updating the array for $\alpha$-heavy F-intervals is symmetric to updating the array for $\alpha$-heavy DBWT-repetitions. 
The array for $\alpha$-heavy F-intervals can be updated in $O(\alpha)$ time. 

Hence, the three steps take $O(\alpha)$ time if 
the balancing operation processes the LF-interval graph of an $O(\alpha)$-balanced DBWT at the iteration. 

\egroup

The following lemma concerning the theoretical results on the balancing operation holds. 
\begin{lemma}\label{lem:split_time}
For the LF-interval graph $\graphds(D^{2\alpha+1}_{\delta})$ for a $(2\alpha+1)$-balanced DBWT including at most two $\alpha$-heavy DBWT-repetitions and at most two $\alpha$-heavy F-intervals, 
we assume that the B-tree of $V$ in the LF-interval graph contains only nodes satisfying one of the three conditions of Lemma~\ref{lem:mod}. 
Then, balancing operation $\balance(\graphds(D^{2\alpha+1}_{\delta}))$ takes $O(\alpha)$ time per iteration for all $\alpha \geq 4$, 
and the B-tree of $V$ in the outputted LF-interval graph contains only nodes satisfying one of the three conditions of Lemma~\ref{lem:mod}. 
\end{lemma}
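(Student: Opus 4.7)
The plan is to prove Lemma~\ref{lem:split_time} by establishing, via induction on the iteration count, a loop invariant for the balancing operation and then invoking the per-iteration cost analysis that precedes the lemma.

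\textbf{Loop invariant.} I will maintain that, at the start of each iteration, the current LF-interval graph represents an $O(\alpha)$-balanced DBWT with $O(1)$ $\alpha$-heavy DBWT-repetitions and $O(1)$ $\alpha$-heavy F-intervals, and that the B-tree of $V$ stores only nodes satisfying one of the three conditions of Lemma~\ref{lem:mod}. The base case follows from the hypothesis of the lemma: the input graph is $(2\alpha+1)$-balanced with at most two heavy DBWT-repetitions and at most two heavy F-intervals, and the B-tree invariant is assumed. Given this invariant, the per-iteration cost bound of $O(\alpha)$ follows directly from the bullet-by-bullet analysis already given in the main text (finding $u_i,v_i$ via the arrays in $O(1)$; locating the median edge in $O(\alpha)$; updating $E_{LF},E_L,E_F,B_L,B_F$ on the $O(\alpha)$ affected edges via Lemmas~\ref{lem:split_update_edge_L} and~\ref{lem:split_verify}; updating the order maintenance structure in $O(1)$; updating the B-tree in $O(1)$ by Lemma~\ref{lem:B_tree_order}; and updating the two heavy arrays in $O(\alpha)$ via Lemma~\ref{lem:verify_heavy}).

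\textbf{Preservation of the balance bound.} When the chosen heavy node $u_i$ with $t=O(\alpha)$ incoming directed edges is split at the median, each of $u_{i'}$ and $u_{i'+1}$ inherits at most $\lceil t/2\rceil$ of those edges, and symmetrically for an F-interval split in Case~2. Every other node of the graph gains at most one new incident edge because the split introduces exactly one new starting position on the $U$-side (resp.\ $V$-side) that a neighboring F-interval (resp.\ DBWT-repetition) may newly cover. Starting from the $(2\alpha+1)$-balance with at most two heavy nodes of each type, these $+1$ increments can create at most $O(1)$ new heavy nodes per iteration and cannot push any weight above $2\alpha+2$, so the $O(\alpha)$-balance and the $O(1)$ count of heavy nodes is preserved. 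The hypothesis $\alpha\ge 4$ is used here to ensure that $\lceil(2\alpha+1)/2\rceil<\alpha$, so that the two halves of a split are themselves non-heavy and the inductive bound is strictly monotone.

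\textbf{Preservation of the B-tree invariant.} I will mirror the argument of Lemma~\ref{lem:b-tree_update_lemma1}. The only linked-list changes are the replacement of $u_i$ by $(u_{i'},u_{i'+1})$ in $U$ and of $v_i$ by $(v_{i'},v_{i'+1})$ in $V$. Both $u_{i'}$ and $u_{i'+1}$ carry the character of $u_i$, which is not $\$$ (since $u_i$ was chosen as heavy), so $v_{i'}$ satisfies none of the three conditions of Lemma~\ref{lem:mod} and need not enter the B-tree. For $v_{i'+1}$, the character of $u_{i'+1}$ equals that of $u_i$, and the node following $u_{i'+1}$ in the list of $U$ is the node that used to follow $u_i$; hence $v_{i'+1}$ satisfies a condition of Lemma~\ref{lem:mod} if and only if $v_i$ did, matching the ``delete $v_i$, insert $v_{i'+1}$ only when $v_i$ was stored'' rule in the third step. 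No other node is relabeled or relocated, so the three-condition status of every remaining node of $V$ is unchanged.

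\textbf{Main obstacle.} The subtle part is the weight bookkeeping in the balance-preservation step: one must simultaneously account for the halving on the split side and the possible $+1$ increment on the orthogonal side, and verify that splitting at the edge-median (rather than the coordinate-median) indeed yields the halving bound. Getting the arithmetic to close without circular dependence on $\alpha$ from one iteration to the next is exactly where the hypothesis $\alpha\ge 4$ is consumed; the remaining claims are then routine consequences of the data-structure updates already described.
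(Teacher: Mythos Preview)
Your B-tree invariant argument is correct and matches the paper's reasoning. The gap is in the balance-preservation step. The arithmetic claim ``$\lceil(2\alpha+1)/2\rceil<\alpha$'' is simply false (it equals $\alpha+1$ for every $\alpha\ge 1$), and even with the correct degree bound $t\le 2\alpha$ one only gets $\lceil t/2\rceil\le\alpha$, so a split half can still be $\alpha$-heavy. Consequently your invariant ``$O(1)$ heavy nodes'' is not established: splitting a node of in-degree $2\alpha$ at the median yields two $\alpha$-heavy halves, and the single new edge to $u_h$ can make a third node heavy, so the heavy count can strictly increase in a single iteration. The phrase ``cannot push any weight above $2\alpha+2$'' is not made precise enough to rescue the argument; if ``weight'' means in-degree, a fixed node could in principle absorb a $+1$ over several iterations.

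The paper does not attempt an $O(1)$ heavy-count invariant. Instead it introduces a potential: the \emph{excess} of a node is $\max\{m-(\alpha-1),0\}$, and the graph is called \emph{weakly $\alpha$-balanced} if the sums of excesses over $U$ and over $V$ are each at most $2\alpha+2$. The input satisfies this, and the paper shows each iteration preserves it. On the split side the excess sum drops by at least $1$ (because the median split guarantees both halves receive at least one edge) and rises by at most $1$ (the new edge to $u_h$), so it is nonincreasing. On the orthogonal side the key observation is that the node $v_{h}$ receiving the new $E_L$-edge (namely $v_{\pi_{j+\lceil t/2\rceil}}$) has its F-interval entirely contained in the DBWT-repetition being split --- this containment is exactly where $\alpha\ge 4$ is consumed, since it guarantees the median F-interval is flanked on both sides by other F-intervals inside $u_i$ --- and therefore $v_h$ has at most two incoming $E_L$-edges after the split, hence excess $0$. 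Weak $\alpha$-balance then yields $O(\alpha)$-balance and at most $2\alpha+2$ heavy nodes on each side, which is all the per-iteration cost analysis requires. So the missing ingredient in your proposal is this excess-potential argument together with the containment fact for $v_h$; the role you assign to $\alpha\ge 4$ is not the one it actually plays.
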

\begin{proof}
See Appendix~\ref{app:split_time}. 
\end{proof}

\section{R-comp algorithm}\label{sec:rcomp_algorithm}

In this section, we present the r-comp algorithm, and we also present its space and time complexities. 
The r-comp algorithm takes input string $T$ and parameter $\alpha \geq 2$, and it outputs the RLBWT of $T$. 
\oldsentence{
The pseudo-code of r-comp can be found in Appendix~\ref{app:rcomp}. 
}

The algorithm reads $T$ from its end to its beginning~(i.e., $T[n], T[n-1], \ldots, T[1]$), and 
it gradually builds the LF-interval graph $\graphds(D^{\alpha}_n)$ for an $\alpha$-balanced DBWT $D^\alpha_{n}$ of BWT $L_{n}$ of $T$. 
At each $\delta \in \{ 1, 2, \ldots, n-1 \}$, LF-interval graph $\graphds(D^{\alpha}_{\delta})$ for 
an $\alpha$-balanced DBWT $D^{\alpha}_{\delta}$ of BWT $L_{\delta}$ of $T[(n- \delta + 1)..n]$~(i.e., suffix $T_{\delta}$) is built. 
For the node $u_{i} \in U$ representing special character $\$$ in the doubly linked list of set $U$ of nodes 
in LF-interval graph $\graphds(D^{\alpha}_{\delta})$, 
two nodes $u_{i-1} \in U$ and $u_{i+1} \in U$ are previous and next to node $u_{i}$, respectively, in the list. 
If neither the label of $u_{i-1}$ nor the label of $u_{i+1}$ includes the $(n-\delta)$-th character $c$ of $T$~(i.e., the first character $c$ of $T_{\delta+1}$), 
the update operation  $\update(\graphds(D^{\alpha}_{\delta}), c)$ described in Section~\ref{sec:slow_update} is applied; otherwise the fast update operation  $\fastUpdate(\graphds(D^{\alpha}_{\delta}), c)$ described in Section~\ref{sec:fast_update} is applied. 
Both the update operation and the fast update operation output an LF-interval graph $\graphds(D^{2\alpha+1}_{\delta+1})$ for 
DBWT $D^{2\alpha+1}_{\delta+1}$ of BWT $L_{\delta+1}$ that is not $\alpha$-balanced. 
Thus, the r-comp algorithm balances the LF-interval graph such that it represents an $\alpha$-balanced DBWT $D^{\alpha}_{\delta+1}$ using the balancing operation in Section~\ref{sec:balancing} as $\graphds(D^{\alpha}_{\delta+1})$. 
After $(n-1)$ iterations of those steps, the LF-interval graph $\graphds(D^{\alpha}_n)$ for $\alpha$-balanced DBWT $D^{\alpha}_{n}$ of BWT $L_{n}$ is obtained; 
it is then converted into $D^{\alpha}_{n}$ using the doubly linked list of $U$.
Finally, $D^{\alpha}_{n}$ is converted into the RLBWT of $T$.

\newsentence\bgroup
\begin{algorithm*}[t]
\caption{R-comp algorithm. The algorithm takes string $T$ and parameter $\alpha \geq 2$ as input, and it outputs the RLBWT of $T$. $n:$ length of $T$; $D^{\alpha}_{\delta}:$ $\alpha$-balanced DBWT of BWT $L_\delta$ of $T[(n-\delta + 1)..n]$; $\graphds(D^{\alpha}_{\delta}):$ the LF-interval graph of $\alpha$-balanced DBWT $D^{\alpha}_{\delta}$.}
    \begin{algorithmic}[1]
    \Function{r-comp}{$T$, $\alpha$}
        \State{$D^{\alpha}_{1} \leftarrow \$$} \Comment{Initialize $D^{\alpha}_{1}$ as special character $\$$}
        \State{build $\graphds(D^{\alpha}_{1})$}
        \For{$\delta = 1, 2, \ldots, n-1$}\label{code:for}
        \State{$c \leftarrow T[n-\delta]$}\label{code:read} \Comment{Read character $c$ from $T[n - \delta]$}       
        \If{neither the label of $u_{i-1}$ nor the label of $u_{i+1}$ includes character $c$}\label{code:if} 
            \State{$\graphds(D^{2\alpha+1}_{\delta+1}) \leftarrow \update(\graphds(D^{\alpha}_{\delta}), c)$}\label{code:replace_insert} \Comment{The update operation in Sec.~\ref{sec:slow_update}}   
        \Else
            \State{$\graphds(D^{2\alpha+1}_{\delta+1}) \leftarrow \fastUpdate(\graphds(D^{\alpha}_{\delta}), c)$} \Comment{The fast update operation in Sec.~\ref{sec:fast_update}}\label{code:balance}
        \EndIf\label{code:if_end}
        \State{$\graphds(D^{\alpha}_{\delta+1}) \leftarrow \balance(\graphds(D^{2\alpha+1}_{\delta+1}))$} \Comment{The balancing operation in Sec.~\ref{sec:balancing}}
        \EndFor\label{code:for_end}        
        \State{Recover $D^{\alpha}_{n}$ from $\graphds(D^{\alpha}_{n})$}\label{code:recover}   
        \State{Convert $D^{\alpha}_{n}$ into the RLBWT of $T$}\label{code:convert}
        \State{Return the RLBWT}    
    \EndFunction
    \end{algorithmic}
    \label{algo:rcomp}
\end{algorithm*}

The pseudo-code of r-comp is given in Algorithm~\ref{algo:rcomp}. 
In Lines~\ref{code:for}-\ref{code:for_end}, 
r-comp creates the LF-interval graph for $\alpha$-balanced DBWT $D^{\alpha}_{\delta+1}$ of suffix $T_{\delta+1}$ 
using update and balancing operations of LF-interval graph for $1 \leq \delta \leq n-1$. 
In Lines~\ref{code:if}-\ref{code:if_end}, 
r-comp creates the LF-interval graph for $(2\alpha+1)$-balanced DBWT $D^{2\alpha+1}_{\delta+1}$ of $T_{\delta+1}$ by applying 
update operation $\update(\graphds(D^{\alpha}_{\delta}), c)$ or fast update operation $\fastUpdate(\graphds(D^{\alpha}_{\delta}), c)$ 
to the LF-interval graph for $\alpha$-balanced DBWT $D^{\alpha}_{\delta}$ of $T_{\delta}$. 
At Line~\ref{code:balance}, r-comp creates 
the LF-interval graph for $\alpha$-balanced DBWT $D^{\alpha}_{\delta+1}$ by applying the balancing operation to the LF-interval graph obtained by the update operation. 
At Line~\ref{code:recover}, r-comp recovers $\alpha$-balanced DBWT $D^{\alpha}_{n}$ of $T_{n} = T$ 
from its LF-interval graph $\graphds(D^{\alpha}_{n})$. 
At Line~\ref{code:convert}, r-comp converts $D^{\alpha}_{n}$ into the RLBWT of $T$. 
\egroup

\subsection{Space and time complexities}\label{sec:analysis}

\textbf{Space complexity.}
The r-comp algorithm requires $O(k \log n)$ bits of space for $k$ DBWT-repetitions in the DBWT $D^{\alpha}_{n}$ 
because the LF-interval graph $\graphds(D^{\alpha}_{n})$ for the DBWT requires $O(k \log n)$ bits of space. 
The value of $k$ depends on the number of executions of update, fast update, and balancing operations executed by the r-comp algorithm. 
The following lemma ensures that $k$ can be bounded by $O(r)$. 
\begin{lemma}\label{lem:modified_k_bound_lemma}
We modify the fast update operation. 
Then, the following three statements hold: 
(i) this modification does not affect Lemma~\ref{lem:fast_update_result}; 
(ii) $k \leq r + k_{\splitop}$; (iii) 
$k_{\splitop} \leq \frac{2r}{\lceil \alpha / 2 \rceil - 7}$ holds for any constant $\alpha \geq 16$. 
\end{lemma}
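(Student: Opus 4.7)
The plan is to prove the three statements sequentially, with part (iii) being the main technical challenge.

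For (i), I would first make the modification explicit: after the fast update has produced the new graph, scan the constantly many freshly-created nodes in $U$ and merge any pair of adjacent nodes sharing the same character label, provided the corresponding $V$-nodes remain consistent with Lemma~\ref{lem:mod}. Each merge is the constant-time dual of a split step, so the total extra work is $O(1)$ on top of the existing $O(\alpha)$ cost. I would then verify the invariants of Lemma~\ref{lem:fast_update_result}: the graph still represents a $(2\alpha+1)$-balanced DBWT (merging can only decrease coverage, never increase it), the counts of $\alpha$-heavy DBWT-repetitions and F-intervals can only decrease, and the B-tree of $V$ still contains exactly the nodes satisfying one of the three conditions of Lemma~\ref{lem:mod}, since any merged $F$-node inherits the same three-condition status as its surviving representative.

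For (ii), I would use a direct boundary-counting argument. Thanks to the modification from (i), after every iteration of the main loop any two adjacent DBWT-repetitions sharing the same character label must be separated by a boundary that was introduced by some earlier $\splitop$-call, for otherwise the merge rule would have fused them. Hence the number of such ``artificial'' boundaries is at most $k_{\splitop}$, while the remaining boundaries each separate distinct characters and therefore number at most $r-1$. Summing boundaries and adding one gives $k \leq r + k_{\splitop}$.

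For (iii), I would run an amortized argument with a potential function
\[
\Phi \;=\; \sum_{u \in U}\! \max\!\bigl(0,\ \mathrm{deg}_{F}(u) - (\lceil \alpha/2\rceil - 1)\bigr) \;+\; \sum_{v \in V}\! \max\!\bigl(0,\ \mathrm{deg}_{L}(v) - (\lceil \alpha/2\rceil - 1)\bigr),
\]
where $\mathrm{deg}_{F}(u)$ counts the F-intervals whose starting position is covered by $u$ and $\mathrm{deg}_{L}(v)$ counts the DBWT-repetitions whose starting position is covered by $v$. By Theorem~\ref{thm:insert_balance} the degrees of newly created nodes are at most $2\alpha+1$, so each update or fast update call changes $\Phi$ by only a constant amount and, crucially, this amount can be charged to the creation of a new BWT-run boundary, which happens at most $r$ times across the whole construction. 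A median split of an $\alpha$-heavy node with $t \in [\alpha, 2\alpha+1]$ covered positions yields two children with at most $\lceil t/2\rceil + O(1)$ covered positions each, the $O(1)$ slack absorbing the distortion of the $(2\alpha+1)$-balancing; I expect to show that each split reduces $\Phi$ by at least $\lceil \alpha/2\rceil - 7$. Combining these ingredients gives $(\lceil \alpha/2\rceil - 7)\, k_{\splitop} \leq 2r$, which yields the claim for all $\alpha \geq 16$.

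The main obstacle is the charging of $\Phi$-increases in (iii) to BWT-run creations rather than to individual iterations: I must argue that all graph transformations performed by an update or fast update call either cancel out locally or are exactly paid for by the split budget, so that only genuinely new BWT-run structure injects a lasting increase into $\Phi$. This will likely require a careful case analysis of the replace-node, split-node, and insert-node steps, matching each newly introduced edge either to a removed one of equal contribution or to a new BWT-run endpoint.
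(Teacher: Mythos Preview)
Your proposal has genuine gaps in all three parts, stemming from a misidentification of what the modification should be.

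\textbf{Part (i).} The paper's modification is the opposite of yours: it \emph{restricts} merging rather than extending it. Specifically, when Case~C of the fast update holds but the \emph{previous} update operation did not split a node, the paper downgrades to the Case~A procedure (merge two nodes instead of three). The point is to prevent the fast update from merging a node that was just created by a balancing split; otherwise balancing work can be repeatedly undone and $k_{\splitop}$ cannot be bounded in terms of~$r$. Your aggressive post-hoc merging would make this worse, not better. Also, your assertion that ``merging can only decrease coverage'' is false: merging two adjacent nodes in~$U$ adds their in-degrees in $E_{F}$, so the merged node can become $\alpha$-heavy even if neither constituent was.

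\textbf{Part (ii).} Your boundary-counting relies on the invariant that every same-character adjacency was produced by a balancing split. This invariant does not follow from your modification (which touches only the fast update), since the split-node step of the ordinary update creates a same-character pair flanking~$\$$, and after the next iteration these two pieces may become adjacent without any balancing split involved. The paper instead tracks $|D^{2\alpha+1}_{\delta+1}|-|D^{\alpha}_{\delta}|$ against $r_{\delta+1}-r_{\delta}$ via two telescoping correction terms $\deltagrp_{\delta}-\deltabwt_{\delta}$ (Lemmas~\ref{lem:r_diff_bound}--\ref{lem:modified_bound}); the modified merge rule is exactly what makes this telescope cancel.

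\textbf{Part (iii).} You correctly identify the obstacle in your potential argument: you need $\Phi$-increases over the whole run to total $O(r)$, not $O(n)$, and there is no mechanism in your sketch that ties a $\Phi$-increase to the birth of a BWT-run rather than to an arbitrary iteration. The paper avoids any potential function. It proves separately that $k_{L},k_{F}\le r/(\lceil\alpha/2\rceil-7)$ by constructing an \emph{$\alpha$-partition}: a partition of $\{1,\dots,n\}$ into intervals, each inner-covering at least $\lceil\alpha/2\rceil-5$ starting positions of type-B DBWT-repetitions, and showing that every Case-2 balancing iteration allows one interval of the current partition to be split into two such intervals (Lemma~\ref{lem:alpha_partition_for_case_2}), while updates preserve the partition (Lemma~\ref{lem:prt_f_sub}). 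This yields $k_{F}(\lceil\alpha/2\rceil-5)\le k\le r+k_{L}+k_{F}$, and combining with $k_{L}\le k_{F}$ (or the symmetric case) gives the bound.
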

\begin{proof}
See Appendix~\ref{app:modified_k_bound_lemma}. 
\end{proof}
Because $k = O(r)$ by Lemma~\ref{lem:modified_k_bound_lemma}, the following theorem is obtained. 
\begin{theorem}\label{thm:rcomp_working_space}
The r-comp algorithm takes $O(r \log n)$ bits of working space for $\alpha \geq 16$. 
\end{theorem}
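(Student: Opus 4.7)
The plan is to bound the working space by the total size of the LF-interval graph maintained by r-comp at any point during its execution, and then to apply Lemma~\ref{lem:modified_k_bound_lemma} to convert the bound into one in terms of $r$. The entire algorithm is organized around maintaining $\graphds(D^{\alpha}_{\delta})$, so if that graph fits in $O(r \log n)$ bits throughout, the theorem follows.

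First I would verify that, for any $k$ DBWT-repetitions, the LF-interval graph $\graphds(D^{\alpha}_{\delta})$ and all of its auxiliary dynamic structures use $O(k \log n)$ bits. By Section~\ref{sec:dynamic}, the two doubly linked lists on $U$ and $V$, the two arrays of $\alpha$-heavy nodes, the order maintenance data structure on $U$, and the B-tree on $V$ each occupy $O(k \log n)$ bits. The edge sets $E_{LF}$, $E_{L}$, $E_{F}$ and label functions $B_{U}$, $B_{V}$, $B_{L}$, $B_{F}$ contribute $O(k \alpha \log n)$ bits, since the $\alpha$-balancedness invariant guarantees every node has $O(\alpha)$ incident edges. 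Because $\alpha$ is treated as a constant (we only need $\alpha \geq 16$), this again reduces to $O(k \log n)$. All temporary variables used inside a single update, fast update, or balancing iteration occupy only $O(\log n)$ bits, so they do not affect the overall bound.

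Second I would apply Lemma~\ref{lem:modified_k_bound_lemma} to bound $k$ by $O(r)$. The lemma gives $k \leq r + k_{\splitop}$ together with $k_{\splitop} \leq 2r / (\lceil \alpha/2 \rceil - 7)$ for $\alpha \geq 16$. Plugging in $\alpha = 16$ makes the denominator equal to $1$, so $k_{\splitop} \leq 2r$ and therefore $k \leq 3r$. For any larger constant $\alpha \geq 16$ the denominator is still strictly positive, so $k_{\splitop} = O(r)$ and hence $k = O(r)$. Combining this with the first paragraph gives the claimed $O(r \log n)$ bits of working space.

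The main obstacle, and what makes the bound tight, is concentrated entirely in Lemma~\ref{lem:modified_k_bound_lemma}: without a linear-in-$r$ bound on $k$ the LF-interval graph could in principle blow up by factors depending on $n$. The remaining points to check are essentially routine accounting, namely that the B-tree on $V$ with at most $k$ keys of $O(\log n)$ bits fits in $O(k \log n)$ bits, and that the $\alpha$-balancedness invariant, which is preserved after every call to $\balance$, indeed limits edge incidence to $O(\alpha)$ per node so that the edge-label storage does not exceed $O(k \log n)$ for constant $\alpha$. Once those bookkeeping facts are in place, the theorem follows directly.
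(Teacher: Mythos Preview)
Your argument is correct and follows the same route as the paper: establish that the LF-interval graph and its auxiliary structures occupy $O(k\log n)$ bits, then invoke Lemma~\ref{lem:modified_k_bound_lemma} to get $k=O(r)$. One harmless overcount: each node in $U$ (respectively $V$) has exactly one outgoing edge in $E_L$ (respectively $E_F$), so the edge sets already have size $O(k)$ independent of $\alpha$, but your $O(k\alpha\log n)$ bound with constant $\alpha$ reaches the same conclusion.
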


\textbf{Time complexity.}
The bottleneck of r-comp is the update operation of LF-interval graph with $O(\alpha + \log k)$ time in Section~\ref{sec:slow_update}. 
The number of executions of the update operation can be bounded by $O(r)$. 
This fact indicates that we can bound the running time of r-comp by $O(\alpha n + r \log k)$, i.e., $O(\alpha n + r \log r)$. 
Finally, we obtain the following theorem. 
\begin{theorem}\label{theo:rcomp_time}
R-comp runs in $O(\alpha n + r \log r)$ time for $\alpha \geq 16$. 
\end{theorem}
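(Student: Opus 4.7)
The plan is to separately bound the total cost of (a) the main loop over $\delta$, which itself splits into update and balancing work, and (b) the final recovery and conversion steps, and then show that the sum fits inside $O(\alpha n + r \log r)$. By Theorem~\ref{thm:rcomp_working_space} and Lemma~\ref{lem:modified_k_bound_lemma} the graph has $k = O(r)$ nodes throughout the run, so the terms $\log k$ appearing in the update bounds are $O(\log r)$, and the recovery step at Line~\ref{code:recover} together with the conversion at Line~\ref{code:convert} each cost $O(k) = O(r)$, which is absorbed into the claimed bound.

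For the main loop, I would charge the work of each iteration separately to its fast-update path, its slow-update path, and its balancing path. The fast update runs in $O(\alpha)$ time by Lemma~\ref{lem:fast_update_result}, and since the loop runs $n-1$ times the fast-update contribution totals $O(\alpha n)$, regardless of which branch is actually taken. The slow update runs in $O(\alpha + \log k) = O(\alpha + \log r)$ time by Lemma~\ref{lem:slow_update_result}; I need to bound how often it can be invoked. The key observation is that the slow-update branch is only entered when neither $u_{i-1}$ nor $u_{i+1}$ carries the incoming character $c$, i.e., exactly when inserting $c$ creates a genuinely new BWT-run that did not previously border $\$$. Each slow update therefore contributes (at least) one new equal-letter run to the eventual BWT of $T$, so the number of slow-update calls across the whole algorithm is at most $r$. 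This gives a total slow-update cost of $O(r(\alpha + \log r)) = O(\alpha r + r \log r)$.

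For the balancing operation, by Lemma~\ref{lem:split_time} one iteration (one split inside $\balance$) costs $O(\alpha)$ time, so the total balancing cost is $O(\alpha \cdot K)$ where $K$ is the total number of node splits performed across all invocations of $\balance$ during the algorithm. By Lemma~\ref{lem:modified_k_bound_lemma}(ii)-(iii), the total number of splits over the entire run is $k_{\splitop} = O(r)$ for $\alpha \geq 16$, so the whole balancing cost is $O(\alpha r)$. Adding everything yields $O(\alpha n) + O(\alpha r + r \log r) + O(\alpha r) + O(r)$, and since $r \leq n$ the $\alpha r$ terms are absorbed by $\alpha n$, giving the claimed $O(\alpha n + r \log r)$ bound.

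The main obstacle, and the only non-routine step, is the charging argument that justifies bounding the number of slow-update invocations by $O(r)$: one must verify that the predicate tested in Line~\ref{code:if} (neither neighbor of $u_i$ labelled with $c$) really does force a new BWT-run to appear in the output RLBWT and that distinct slow-update events charge distinct runs. A clean way to do this is to maintain the invariant that the characters around the $\$$-node in the doubly linked list of $U$ match the surviving run-boundary structure of $L_\delta$, so that each slow update increases the run count by at least one, while fast updates can only merge or preserve runs; after the full $n-1$ iterations the run count is exactly $r$, capping the number of slow updates by $r$. The other two ingredients ($k = O(r)$ and $k_{\splitop} = O(r)$) are already supplied by Lemma~\ref{lem:modified_k_bound_lemma}, and the per-operation costs are already established in Lemmas~\ref{lem:fast_update_result},~\ref{lem:slow_update_result}, and~\ref{lem:split_time}, so the remainder of the proof is routine arithmetic.
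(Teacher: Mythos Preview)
Your proposal is correct and follows essentially the same decomposition as the paper: (a) at most $n$ fast updates at $O(\alpha)$ each, (b) at most $O(r)$ slow updates at $O(\alpha+\log k)=O(\alpha+\log r)$ each, (c) $k_{\splitop}=O(r)$ balancing iterations at $O(\alpha)$ each, all glued together by $k=O(r)$ from Lemma~\ref{lem:modified_k_bound_lemma}. The paper's proof invokes exactly the same three lemmas and the same ``each slow update creates a new BWT-run'' charging argument to bound $k_{\mathsf{slow}}\le r$. One small inaccuracy in your sketch: fast updates do \emph{not} only merge or preserve runs---in cases (iv) and (v) of Lemma~\ref{lem:r_diff_bound} the run count can increase by $\deltabwt_\delta$---so the clean invariant you propose does not literally hold; the paper's own justification here is equally brief, and a fully rigorous version of this step would track run boundaries more carefully (the machinery behind Lemma~\ref{lem:modified_bound} effectively does this), but the high-level charging idea you state is the same one the paper uses.
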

\oldsentence{
\begin{proof}
See Appendix~\ref{app:rcomp_time}.
\end{proof}
}
\newsentence\bgroup
\begin{proof}
R-comp consists of update and balancing operations of LF-interval graph. 
The B-tree of $V$ in the LF-interval graph for DBWT $D_{1}^{\alpha}$ 
contains only nodes satisfying one of the three conditions of Lemma~\ref{lem:mod}. 
This fact indicates that each operation of r-comp takes an LF-interval graph such that 
the B-tree of $V$ contains only nodes satisfying one of the three conditions of Lemma~\ref{lem:mod}. 
Because Lemmas~\ref{lem:slow_update_result}, \ref{lem:fast_update_result}, and \ref{lem:split_time} hold, 
the running time of r-comp is $O(n + k_{\mathsf{slow}} \log k + \alpha (k_{\mathsf{fast}} + k_{\splitop}))$ 
for (i) the number $k_{\mathsf{slow}}$ of executions of the update operation in Section~\ref{sec:slow_update}, 
(ii) the number $k_{\mathsf{fast}}$ of executions of the fast update operation in Section~\ref{sec:fast_update}, 
and (iii) $\alpha \geq 16$. 

We show that $k_{\mathsf{slow}} = O(r)$ for the number $r$ of BWT-runs in BWT $L_{n}$. 
When an update operation $\update(\graphds(D_{\delta}^{\alpha}), c)$ is executed for a DBWT $D_{\delta}^{\alpha}$ of BWT $L_{\delta}$, 
the BWT is changed into the next BWT $L_{\delta+1}$ using a replacement of special character $L_{\delta+1}[\reppos] = \$$ with input character $c$. 
The replacement creates a new BWT-run because 
the update operation ensures that the characters adjacent to special character $\$$ are not $c$~(i.e., $L_{\delta}[\reppos-1] \neq c$ and $L_{\delta}[\reppos+1] \neq c$). 
Hence, the BWT $L_{n}$ of $T$ consists of at least $k_{\mathsf{slow}}$ BWT-runs. 
By contrast, the BWT $L_{n}$ of $T$ consists of $r$ BWT-runs, 
and hence $k_{\mathsf{slow}} \leq r$. 

Clearly, $k_{\mathsf{fast}} = O(n)$. 
$k_{\splitop} = O(r)$ by Lemma~\ref{lem:modified_k_bound_lemma}-(iii). 
Because $k = O(r)$, 
the running time of r-comp is $O(\alpha n + r \log r)$. 
\end{proof}
\egroup

\section{Experiments}\label{sec:exp}
\subparagraph{Setup.}
\newcommand{\Times}[1]{#1$\times$}
\newcommand{\TimesTBL}[1]{(#1)}
\newcommand{\Dataset}[1]{\textsf{#1}}
\newcommand{\RCompG}{r-comp$_\textrm{saving}$}
\newcommand{\RCompS}{r-comp}

\begin{table*}[ht]
\footnotesize
\centering
\caption{
Statistics of datasets.
}
\label{tab:dataset}
\begin{tabular}{l||r|r|r|r|r|r}
\hline
String & $\sigma$ & $n$ [$10^3$] & $r$ [$10^3$] & $\PFP$ [$10^3$] & $n/r$ & $\PFP/r$ \\
\hline\hline
\Dataset{cere} & 5 & 461,287 & 11,575 & 94,221 & 40 & 8 \\
\Dataset{coreutils} & 236 & 205,282 & 4,684 & 40,079 & 44 & 9 \\
\Dataset{einstein.de.txt} & 117 & 92,758 & 101 & 1,949 & 915 & 19 \\
\Dataset{einstein.en.txt} & 139 & 467,627 & 290 & 7,596 & 1,611 & 26 \\
\Dataset{Escherichia\_Coli} & 15 & 112,690 & 15,044 & 53,316 & 7 & 4 \\
\Dataset{influenza} & 15 & 154,809 & 3,023 & 50,403 & 51 & 17 \\
\Dataset{kernel} & 160 & 257,962 & 2,791 & 17,156 & 92 & 6 \\
\Dataset{para} & 5 & 429,266 & 15,637 & 88,485 & 27 & 6 \\
\Dataset{world\_leaders} & 89 & 46,968 & 573 & 10,919 & 82 & 19 \\
\hline
\Dataset{boost} & 96 & 1,073,769 & 65 & 8,871 & 16,597 & 137 \\
\Dataset{samtools} & 112 & 1,074,236 & 629 & 14,767 & 1,708 & 23 \\
\Dataset{sdsl} & 126 & 1,076,495 & 597 & 14,209 & 1,803 & 24 \\
\hline
\Dataset{enwiki} & 207 & 37,849,201 & 70,190 & 975,218 & 539 & 14 \\
\hline
\Dataset{chr19.1000} & 5 & 59,125,169 & 45,143 & 816,365 & 1,310 & 18 \\
\hline
\end{tabular}
\end{table*}

We empirically tested the performance of the r-comp algorithm on strings from four datasets with different types of highly repetitive strings:
(i) nine strings from the Pizza\&Chili repetitive corpus \cite{dataset:pc-repetitive-corpus};
(ii) three strings (boost, samtools, and sdsl) of the latest revisions of Git repositories, each of which is 1GB in size;
(iii) a 37GB string (enwiki) of English Wikipedia articles with a complete edit history~\cite{dataset:enwiki-all-pages}; and
(iv) a 59GB string (\Dataset{chr19.1000}) obtained by concatenating chromosome 19 from 1,000 human genomes in the 1000 Genomes Project~\cite{1000Genomes}.
Table \ref{tab:dataset} shows the relevant statistics for each string in the datasets.
The ratio $n/r$ of string length $n$ to the number $r$ of BWT-runs in BWT is the compression ratio of each string. 
The strings with high compression ratios are versions of Wikipedia articles (\Dataset{einstein.de.txt} and \Dataset{einstein.en.txt}), revisions of Git repositories (\Dataset{boost}, \Dataset{samtools}, and \Dataset{sdsl}), and 1000 human genomes (\Dataset{chr19.1000}). 
The ratio $\PFP/r$ is the ratio of the total size $\PFP$ of the dictionary and factorization by prefix-free parsing to the size of the RLBWT for a string.

We implemented two versions of the r-comp algorithm (i.e., \RCompS{} and \RCompG{}). 
Here, \RCompS{} is the straightforward implementation of the r-comp algorithm presented in Section \ref{sec:rcomp_algorithm};  
\RCompG{} is a space-saving implementation of the r-comp algorithm. 
This version is more space efficient than \RCompS{} because it uses a grouping technique with parameter $g=16$.
Details of \RCompG{} are presented in Appendix~\ref{app:space_reduction_technique}.
We compared \RCompS{} and \RCompG{} with three state-of-the-art algorithms, one indirect construction algorithm of RLBWT (Big-BWT) and two direct construction algorithms of RLBWT (the PP and Faster-PP methods), which were reviewed in Section~\ref{sec:related_works} and summarized in Table~\ref{table:result}.
The implementations of those methods were downloaded from \url{https://gitlab.com/manzai/Big-BWT}, \url{https://github.com/xxsds/DYNAMIC}, and \url{https://github.com/itomomoti/OnlineRlbwt}, respectively.

We conducted all experiments on one core of an 48-core Intel Xeon Gold 6126 CPU at 2.6 GHz in a machine with 2 TB of RAM running the 64-bit version of CentOS 7.9.
All methods were written in C++ and compiled using \texttt{g++} version 7.3.0 with optimization flags \texttt{-O3}, \texttt{-DNDEBUG}, and \texttt{-march=naive}.
To determine working space, we measured the peak number of bytes allocated by the standard memory allocation functions \texttt{new} and \texttt{malloc}.
The source code of the r-comp algorithm is available at \url{https://github.com/kampersanda/rcomp}.

\subparagraph{Results.}
\begin{table*}[tb]
\footnotesize
\centering
\caption{
Experimental results for construction time in seconds and working space in MiB.
Each value in parentheses shows the ratio of the measured value (i.e., time or space) obtained by each method divided by the measured value obtained by \RCompG{}.
Because the PP method did not finish the constructions for \Dataset{enwiki} and \Dataset{chr19.1000} within 24 hours, these processes were terminated.
Similarly, the Faster-PP method did not finish the construction for \Dataset{chr19.1000} within 24 hours, and the process was terminated.
}
\label{tab:compare_result}

\scalebox{0.85}{
\begin{tabular}{l||cc|cc|cc|cc|cc}
\hline
String & \multicolumn{2}{c|}{\RCompG{}} & \multicolumn{2}{c|}{\RCompS{}} & \multicolumn{2}{c|}{PP} & \multicolumn{2}{c|}{Faster-PP} & \multicolumn{2}{c}{Big-BWT} \\
 & sec. & MiB & sec. & MiB & sec. & MiB & sec. & MiB & sec. & MiB \\
\hline\hline
\Dataset{cere} & 366 & 480 & 271 & 1,824 & 6,786 & 43.7 & 764 & 183 & 65.7 & 797 \\
 &  &  & \TimesTBL{0.74} & \TimesTBL{3.80} & \TimesTBL{18.6} & \TimesTBL{0.09} & \TimesTBL{2.09} & \TimesTBL{0.38} & \TimesTBL{0.18} & \TimesTBL{1.66} \\
\Dataset{coreutils} & 75.9 & 197 & 74.7 & 833 & 4,013 & 22.3 & 212 & 71.7 & 27.6 & 340 \\
 &  &  & \TimesTBL{0.98} & \TimesTBL{4.23} & \TimesTBL{52.9} & \TimesTBL{0.11} & \TimesTBL{2.80} & \TimesTBL{0.36} & \TimesTBL{0.36} & \TimesTBL{1.72} \\
\Dataset{einstein.de.txt} & 15.8 & 4.38 & 10.3 & 17.5 & 1,604 & 1.46 & 69.2 & 1.50 & 6.19 & 13.4 \\
 &  &  & \TimesTBL{0.65} & \TimesTBL{3.99} & \TimesTBL{101.7} & \TimesTBL{0.33} & \TimesTBL{4.39} & \TimesTBL{0.34} & \TimesTBL{0.39} & \TimesTBL{3.06} \\
\Dataset{einstein.en.txt} & 89.0 & 12.6 & 60.9 & 50.8 & 8,418 & 2.78 & 369 & 4.30 & 27.1 & 48.4 \\
 &  &  & \TimesTBL{0.68} & \TimesTBL{4.03} & \TimesTBL{94.6} & \TimesTBL{0.22} & \TimesTBL{4.14} & \TimesTBL{0.34} & \TimesTBL{0.30} & \TimesTBL{3.84} \\
\Dataset{Escherichia\_Coli} & 129 & 609 & 106 & 2,362 & 1,806 & 39.5 & 209 & 239 & 33.1 & 458 \\
 &  &  & \TimesTBL{0.83} & \TimesTBL{3.88} & \TimesTBL{14.0} & \TimesTBL{0.06} & \TimesTBL{1.62} & \TimesTBL{0.39} & \TimesTBL{0.26} & \TimesTBL{0.75} \\
\Dataset{influenza} & 58.3 & 128 & 55.1 & 562 & 2,255 & 13.0 & 161 & 45.5 & 29.6 & 430 \\
 &  &  & \TimesTBL{0.94} & \TimesTBL{4.38} & \TimesTBL{38.7} & \TimesTBL{0.10} & \TimesTBL{2.75} & \TimesTBL{0.35} & \TimesTBL{0.51} & \TimesTBL{3.36} \\
\Dataset{kernel} & 92.4 & 117 & 90.6 & 484 & 4,817 & 13.4 & 260 & 42.4 & 20.3 & 139 \\
 &  &  & \TimesTBL{0.98} & \TimesTBL{4.15} & \TimesTBL{52.1} & \TimesTBL{0.12} & \TimesTBL{2.82} & \TimesTBL{0.36} & \TimesTBL{0.22} & \TimesTBL{1.19} \\
\Dataset{para} & 326 & 633 & 274 & 2,459 & 6,607 & 46.7 & 734 & 250 & 64.7 & 750 \\
 &  &  & \TimesTBL{0.84} & \TimesTBL{3.89} & \TimesTBL{20.2} & \TimesTBL{0.07} & \TimesTBL{2.25} & \TimesTBL{0.40} & \TimesTBL{0.20} & \TimesTBL{1.18} \\
\Dataset{world\_leaders} & 8.86 & 24.2 & 9.72 & 102 & 750 & 3.36 & 36.9 & 8.34 & 5.88 & 93.4 \\
 &  &  & \TimesTBL{1.10} & \TimesTBL{4.20} & \TimesTBL{84.6} & \TimesTBL{0.14} & \TimesTBL{4.16} & \TimesTBL{0.34} & \TimesTBL{0.66} & \TimesTBL{3.86} \\
\hline
\Dataset{boost} & 152 & 2.85 & 74.6 & 11.5 & 18,028 & 1.19 & 769 & 1.00 & 55.2 & 73.0 \\
 &  &  & \TimesTBL{0.49} & \TimesTBL{4.02} & \TimesTBL{118.9} & \TimesTBL{0.42} & \TimesTBL{5.07} & \TimesTBL{0.35} & \TimesTBL{0.36} & \TimesTBL{25.63} \\
\Dataset{samtools} & 209 & 27.3 & 230 & 112 & 18,985 & 4.46 & 888 & 9.50 & 60.3 & 90.3 \\
 &  &  & \TimesTBL{1.10} & \TimesTBL{4.10} & \TimesTBL{90.8} & \TimesTBL{0.16} & \TimesTBL{4.25} & \TimesTBL{0.35} & \TimesTBL{0.29} & \TimesTBL{3.30} \\
\Dataset{sdsl} & 233 & 25.9 & 204 & 106 & 18,826 & 4.31 & 891 & 9.06 & 60.0 & 85.6 \\
 &  &  & \TimesTBL{0.88} & \TimesTBL{4.09} & \TimesTBL{80.7} & \TimesTBL{0.17} & \TimesTBL{3.82} & \TimesTBL{0.35} & \TimesTBL{0.26} & \TimesTBL{3.30} \\
\hline
\Dataset{enwiki} & 24,550 & 3,042 & 16,624 & 12,345 & n/a & n/a & 55,367 & 1,149 & 2,375 & 7,022 \\
 &  &  & \TimesTBL{0.68} & \TimesTBL{4.06} & & & \TimesTBL{2.26} & \TimesTBL{0.38} & \TimesTBL{0.10} & \TimesTBL{2.31} \\
\hline
\Dataset{chr19.1000} & 39,138 & 1,931 & 29,727 & 7,577 & n/a & n/a & n/a & n/a & 3,535 & 4,911 \\
 &  &  & \TimesTBL{0.76} & \TimesTBL{3.92} & & & & & \TimesTBL{0.09} & \TimesTBL{2.54} \\
\hline
\end{tabular}
}
\end{table*}

Table \ref{tab:compare_result} shows the experimental results for each method with respect to construction time and working space.
A comparison of the r-comp variants (\RCompS{} and \RCompG{}) shows that the working space of \RCompG{} was \Times{3.8--4.4} smaller than that of \RCompS{}, whereas the construction time of \RCompG{} was at most only \Times{2.0} slower and at most \Times{2.0} faster on \Dataset{world\_leaders} and \Dataset{samtools}.
These results show \RCompG{} has a high compression performance when compared with \RCompS{}.
Comparisons of the experimental results of \RCompG{} and the other methods are presented below. 

\RCompG{} was the fastest in the comparison to the direct RLBWT constructions of the PP and Faster-PP methods.
Especially for strings with a large ratio $n/r$, \RCompG{} was \Times{81--118} faster than the PP method and \Times{3.8--5.1} faster than the Faster-PP method;
the working space of \RCompG{} was \Times{2.4--6.1} larger than that that of the PP method and \Times{2.8--2.9} larger than that of the Faster-PP method, which shows that 
the working space of \RCompG{} is reasonable considering its construction time.
For the large dataset \Dataset{enwiki}, \RCompG{} finished the construction in 6.8 hours, whereas the Faster-PP method took 15.4 hours. 
In addition, the PP method did not finish within 24 hours. 
For the 1000 human genomes \Dataset{chr19.1000}, \RCompG{} finished the construction in 11 hours, whereas the PP and Faster-PP methods did not finish within 24 hours.

In the comparison between r-comp and Big-BWT, r-comp was more space efficient than Big-BWT on most strings.
Because $r$ was much smaller than $\PFP$, the results are consistent with the theoretical bound $O(r\log{n})$ of the working space of the r-comp algorithm (Theorem~\ref{thm:rcomp_working_space}).
On strings with large values of $\PFP / r$, whereas \RCompG{} was slower than Big-BWT, the difference in the construction times between \RCompG{} and Big-BWT were reasonable if one considers the space efficiency of \RCompG{}. 
For example, for \Dataset{boost}, r-comp was \Times{26} more space efficient and only \Times{2.7} slower;
for \Dataset{world\_leaders}, r-comp was \Times{3.9} more space efficient and only \Times{1.5} slower.

Overall, \RCompG{} was the fastest RLBWT construction in $O(r\log{n})$ bits of space.
Although Big-BWT was faster than \RCompG{}, it was not space efficient for 
several strings (e.g., \Dataset{boost}).
In practice, \RCompG{} achieved a better tradeoff between construction time and working space. 

\section{Conclusion}
We presented r-comp, the first optimal-time construction algorithm of RLBWT in $O(n)$ time with $O(r \log n)$ bits of working space for highly repetitive strings with $r = O(n / \log{n})$. 
Experimental results using benchmark and real-world datasets of highly repetitive strings demonstrated the superior performance of the r-comp algorithm. 

The idea behind the DBWT presented in this paper has a wide variety of applications, and 
it is applicable to the construction of various types of data structures. 
Therefore, a future task is to develop optimal-time constructions of various data structures for fast queries. 

\clearpage
\bibliographystyle{plainurl}
\bibliography{ref}

\clearpage
\appendix
\section{Proof of \texorpdfstring{$\PFP/r = \Omega(\sqrt{n})$}{PFP/r = Omega(sqrt(n))}}\label{app:bigbwt}
Prefix-free parsing creates (i) a set of $k$ strings $s_{1}, s_{2}, \ldots, s_{k}$ and 
(ii) a sequence of $k'$ integers $p(1), p(2), \ldots, p(k')$  for a given string $T$ of length $n$. 
The concatenation of $s_{p(1)}$, $s_{p(2)}$, $\ldots$, $s_{p(k')}$ is equal to $T$~(i.e., $s_{p(1)} s_{p(2)} \cdots s_{p(k')} = T$). 
Here, $\PFP$ is defined as $k' + \sum_{i = 1}^{k} |s_{i}|$, where $|s_{i}|$ is the length of $s_{i}$. 
See \cite{DBLP:journals/almob/BoucherGKLMM19} for a more detailed definition of prefix-free parsing.

Consider a string $T = a^{n}$ of length $n = m^{2}$ for a positive integer $m \geq 1$, 
where $a^{n}$ is the repetition of character $a$ with length $n$. 
The value of $\PFP$ is smallest if $s_{1} = a^{m}$, $k = 1$, and $P = 1, 1, \ldots, 1$ with $k' = m$. 
Hence, $\PFP \geq 2 \sqrt{n}$ for the string $T$. 
By contrast, the BWT of $T$ is $a^{n}$, and thus 
the number $r$ of BWT-runs in the BWT is $1$. 
Hence, $\PFP/r = \Omega(\sqrt{n})$ in the worst case.

\oldsentence{
\section{Supplementary examples and figures for Section~\ref{sec:preliminary}}\label{app:preliminary:examples}
\section{Details for DBWT and LF-interval graph}
\subsection{Supplementary examples and figures}\label{app:graph:examples}
\subsection{Proof of Lemma~\ref{lem:B_tree_order}}\label{app:B_tree_order}
}



\section{Details for Section~\ref{sec:update_graph}}
\oldsentence{
\subsection{Supplementary examples and figures}\label{app:basic_update:examples}
\subsection{Proof of Lemma~\ref{lem:split_node_formula}}\label{app:split_node_formula}
\subsection{Proof of Lemma~\ref{lem:time_update_edge_step}}\label{app:time_update_edge_step}
\subsection{Proof of Lemma~\ref{lem:time_update_data_structures}}\label{app:time_update_data_structures}
}


\subsection{Proof of Lemma~\ref{lem:computingedge}}\label{app:computingedge}
New directed edges in two sets $E_{L}$ and $E_{F}$ are determined by the DBWT-repetitions in DBWT $D^{2\alpha+1}_{\delta+1}$ and their F-intervals 
for the LF-interval graph $\graphds(D^{2\alpha+1}_{\delta+1})$ outputted by the update operation of LF-interval graph $\graphds(D^{\alpha}_{\delta})$.
To explain how to compute new edges, we specify the DBWT-repetitions in DBWT $D^{2\alpha+1}_{\delta+1}$ and their F-intervals. 


\subparagraph{DBWT-repetitions in DBWT $D^{2\alpha+1}_{\delta+1}$.}
We represent the DBWT-repetitions in DBWT $D^{2\alpha+1}_{\delta+1}$ and their F-intervals using 
a function $\shfn$. 
Function $\shfn(x) \in \{ 0, 1 \}$ returns $0$ for a given integer $x \in \{ 1, 2, \ldots, \delta \}$ if $x$ is smaller than the position $\inspos$ of special character $\$$ in BWT $L_{\delta+1}$; 
otherwise, it returns $1$. 
By the extension of BWT in Section~\ref{lab:ExBWT}, 
the $x$-th character of $L_{\delta}$ shifts by $\shfn(x)$~(i.e., $L_{\delta}[x]$ is moved to $L_{\delta+1}[x + \shfn(x)]$) 
for all $x \in \{ 1, 2, \ldots, \delta \} \setminus \{ \reppos \}$, where $\reppos$ is the position of special character $\$$ in BWT $L_{\delta}$. 
By contrast, 
the update operation (i) replaces the node $u_{i} \in U$ representing special character $\$$ with new node $u_{i^{\prime}}$, 
(ii) splits node $u_{j} \in U$ such that $p_{j} < \inspos < p_{j+1}$ into two new nodes $u_{j^{\prime}}$ and $u_{j^{\prime}+1}$, 
and (iii) inserts a new node $u_{x^{\prime}}$ representing special character $\$$ in the doubly linked list of set $U$ at the position next to a node. 
By the update operation, 
all the characters in each DBWT-repetition $L_{\delta}[p_{x}..p_{x+1}-1]$ in $D^{\alpha}_{\delta}$ 
shift in the same direction~(i.e., $\shfn(p_{x}) = \shfn(p_{x}+1) = \cdots = \shfn(p_{x+1}-1)$) except for the $i$-th and $j$-th DBWT-repetitions. 
The following lemma ensures that each node in $U$ represents a substring of BWT $L_{\delta+1}$. 

\begin{lemma}\label{lem:shiftU}
After the insert-node step of the update operation has been executed, the following four statements hold:  
(i) $u_{x}$ represents substring $L_{\delta+1}[p_{x} + \shfn(p_{x})..p_{x+1}-1 + \shfn(p_{x})]$ for all $x \in \{ 1, 2, \ldots, k \} \setminus \{ i, j \}$;  
(ii) $u_{i^{\prime}}$ represents substring $L_{\delta+1}[p_{i} + \shfn(p_{i})]$; 
(iii) $u_{x^{\prime}}$ represents substring $L_{\delta+1}[\inspos]$; 
(iv) $u_{j^{\prime}}$ and $u_{j^{\prime}+1}$ represent 
substrings $L_{\delta+1}[p_{j}..\inspos-1]$ and $L_{\delta+1}[\inspos+1..p_{j+1}]$, respectively. 
\end{lemma}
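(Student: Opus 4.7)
The plan is to prove Lemma~\ref{lem:shiftU} by tracking precisely how each position of $L_{\delta}$ is transported into $L_{\delta+1}$ under the extension of BWT recalled in Section~\ref{lab:ExBWT}. The key observation is that this extension consists of exactly two local edits: the replacement $L_{\delta}[\reppos] \leftarrow c$ (where $\reppos$ is the position of $\$$ in $L_{\delta}$ and $c$ is the first character of $T_{\delta+1}$), followed by the insertion of $\$$ at position $\inspos$. Hence every position $x \in \{1, \dots, \delta\} \setminus \{\reppos\}$ is moved to position $x + \shfn(x)$ in $L_{\delta+1}$, while the character at $\reppos = p_{i}$ is first overwritten by $c$ and then shifted by $\shfn(\reppos)$, and a fresh $\$$ is placed at $\inspos$.

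With this shift picture in hand, each of the four statements becomes a direct bookkeeping claim. For (i), I first observe that for every $x \in \{1, \ldots, k\} \setminus \{i, j\}$, the interval $[p_{x}, p_{x+1}-1]$ contains neither $\reppos$ nor $\inspos$: it does not contain $\reppos$ because $\reppos = p_{i}$ (the $\$$ sits alone as the DBWT-repetition represented by $u_{i}$), and it does not contain $\inspos$ because $j$ is precisely the index with $p_{j} < \inspos < p_{j+1}$. Consequently $\shfn(p_{x}) = \shfn(p_{x+1}-1)$, so the entire substring shifts by the same amount and $u_{x}$ represents $L_{\delta+1}[p_{x} + \shfn(p_{x})..p_{x+1}-1 + \shfn(p_{x})]$. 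For (ii), the single-character DBWT-repetition represented by $u_{i}$ lives at position $\reppos = p_{i}$, gets overwritten with $c$ by the first edit, and then shifts by $\shfn(p_{i})$; since after the replace-node step $u_{i^{\prime}}$ is labeled $(c, 1)$ and is placed in the same slot as $u_{i}$, it must represent $L_{\delta+1}[p_{i} + \shfn(p_{i})]$. For (iii), the insert-node step places $u_{x^{\prime}}$ with label $(\$, 1)$ at the doubly linked-list position determined by the three cases in Section~\ref{sec:update_graph}; in each of those cases one verifies that the slot chosen is exactly the one corresponding to the newly inserted $\$$, namely $L_{\delta+1}[\inspos]$.

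The only genuinely nontrivial case is (iv). Here, the original DBWT-repetition represented by $u_{j}$ straddles $\inspos$: the prefix $L_{\delta}[p_{j}..\inspos-1]$ lies entirely below $\inspos$ so shifts by $0$, while the suffix $L_{\delta}[\inspos..p_{j+1}-1]$ lies at or above $\inspos$ so shifts by $1$. Therefore in $L_{\delta+1}$ the prefix occupies positions $[p_{j}, \inspos-1]$ and the suffix occupies positions $[\inspos+1, p_{j+1}]$, with the newly inserted $\$$ at $\inspos$ sitting between them. By Lemma~\ref{lem:edge_label} and the split-node step, $u_{j^{\prime}}$ is labeled $(L_{\delta}[p_{j}], \inspos - p_{j})$ and $u_{j^{\prime}+1}$ is labeled $(L_{\delta}[p_{j}], p_{j+1} - \inspos)$, and the insert-node step places $u_{x^{\prime}}$ between them in the doubly linked list of $U$. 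Matching labels against intervals gives that $u_{j^{\prime}}$ represents $L_{\delta+1}[p_{j}..\inspos-1]$ and $u_{j^{\prime}+1}$ represents $L_{\delta+1}[\inspos+1..p_{j+1}]$, as claimed.

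The main obstacle I anticipate is treating the boundary cases cleanly, in particular (a) when $\inspos = \delta+1$, so that $u_{j}$ does not exist and statement (iv) is vacuous while (iii) has $u_{x^{\prime}}$ appended at the tail of the list, and (b) when the first character $c$ of $T_{\delta+1}$ places $u_{i^{\prime}}$ so as to affect which edges change in the update-edge step; these must be handled via the three cases of the insert-node step and Lemma~\ref{lem:v_g+1_pos} to make sure the location of $u_{x^{\prime}}$ in the doubly linked list is consistent with position $\inspos$ of the inserted $\$$ in $L_{\delta+1}$. Once that alignment is formally established, statements (i)--(iv) follow immediately from the two-edit description of the BWT extension.
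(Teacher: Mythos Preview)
Your proposal is correct and takes the same approach as the paper: the paper's own proof is the single sentence ``Lemma~\ref{lem:shiftU} follows from the replace-node, split-node, and insert-node steps of the update operation,'' and your argument is simply a careful unpacking of that one-liner via the shift function $\shfn$. One minor slip worth fixing: your intermediate claim that $[p_x, p_{x+1}-1]$ never contains $\inspos$ for $x \notin \{i,j\}$ is literally false in the boundary case where $u_j$ fails to exist because $\inspos = p_y$ for some $y$ (not only when $\inspos = \delta+1$), but the conclusion you actually need, namely $\shfn(p_x) = \shfn(p_{x+1}-1)$, still holds there since $\inspos$ then sits at the left endpoint $p_y$ of the interval.
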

\begin{proof}
Lemma~\ref{lem:shiftU} follows from the replace-node, split-node, and insert-node steps of the update operation. 
\end{proof}

Similarly, 
the $x$-th character of $F_{\delta}$ shifts by $\shfn(x)$~(i.e., $F_{\delta}[x]$ is moved to $F_{\delta+1}[x + \shfn(x)]$) 
for all $x \in \{ 1, 2, \ldots, \delta \}$. 
By contrast, 
the update operation 
(i) replaces the node $v_{i} \in V$ representing special character $\$$ with new node $v_{x^{\prime}}$ representing the same character, 
(ii) splits the node $v_{j} \in V$ such that $p_{j} < \inspos < p_{j+1}$ into two new nodes $v_{j^{\prime}}$ and $v_{j^{\prime}+1}$, 
and (iii) inserts a new node $v_{i^{\prime}}$ representing special character $\$$ in the doubly linked list of set $V$ at the position next to a node. 
By the update operation, 
all the characters in each F-interval $[\LF_{\delta}(p_{x})..\LF_{\delta}(p_{x+1}-1)]$ for $D^{\alpha}_{\delta}$ 
shift in the same direction~(i.e., $\shfn(\LF_{\delta}(p_{x})) = \shfn(\LF_{\delta}(p_{x})+1) = \cdots = \shfn(\LF_{\delta}(p_{x+1}-1))$). 
The following lemma ensures that each node in $V$ represents an interval on $F_{\delta+1}$. 
\begin{lemma}\label{lem:shiftV}
After the insert-node step of the update operation has been executed, the following four statements hold: 
(i) $v_{x}$ represents interval $[\LF_{\delta}(p_{x}) + \shfn(\LF_{\delta}(p_{x})), \LF_{\delta}(p_{x+1}-1) + \shfn(\LF_{\delta}(p_{x}))]$ on $F_{\delta+1}$ for all $x \in \{ 1, 2, \ldots, k \} \setminus \{ i, j \}$; 
(ii) $v_{i^{\prime}}$ represents interval $[\inspos, \inspos]$ on $F_{\delta+1}$; 
(iii) $v_{x^{\prime}}$ represents interval $[1, 1]$ on $F_{\delta+1}$; 
(iv) $v_{j^{\prime}}$ and $v_{j^{\prime}+1}$ represent 
intervals $[\LF_{\delta}(p_{j}) + \shfn(\LF_{\delta}(p_{j})), \LF_{\delta}(p_{j}) + \shfn(\LF_{\delta}(p_{j})) + (\inspos - p_{j} - 1)]$ 
and $[\LF_{\delta}(p_{j}) + \shfn(\LF_{\delta}(p_{j})) + (\inspos - p_{j}), \LF_{\delta}(p_{j+1}-1) + \shfn(\LF_{\delta}(p_{j}))]$ on $F_{\delta+1}$, respectively. 
\end{lemma}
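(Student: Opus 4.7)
The plan is to mirror the proof of Lemma~\ref{lem:shiftU} by tracking how each character of $F_\delta$ is relocated in $F_{\delta+1}$, and then checking each class of nodes in $V$ after the insert-node step. First I would observe that passing from $F_\delta$ to $F_{\delta+1}$ amounts to a single insertion: since $T_{\delta+1}$ starts with $c$, the LF correspondence on this new suffix places the new $c$ at the same position $\inspos$ in $F_{\delta+1}$ that is used on the $L$-side. Hence every $F_\delta[y]$ with $y<\inspos$ stays in place, while every $F_\delta[y]$ with $y\geq \inspos$ shifts by $1$, which is exactly $\shfn(y)$.

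Next I would dispose of the two new-node cases. For (iii), since $\$$ remains the unique smallest character it occupies position $1$ in both $F_\delta$ and $F_{\delta+1}$; the node $v_{x^\prime}$ created by the insert-node step represents this $\$$, giving the F-interval $[1,1]$. For (ii), the replace-node step inserted $v_{i^\prime}$ with label $(c,1)$ immediately after $v_g$ in the doubly linked list of $V$; by Lemma~\ref{lem:v_g+1_pos}, $v_g$'s F-interval ends at position $\inspos-1$ in $F_\delta$ and $v_{\mathsf{gnext}}$ (if present) starts at $\inspos$, so the newly inserted $c$ at position $\inspos$ of $F_{\delta+1}$ is exactly what $v_{i^\prime}$ represents.

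The core of the argument is the shift-uniformity required by (i) and (iv): for every remaining node $v_x$, including $v_j$, the original F-interval $[\LF_\delta(p_x),\LF_\delta(p_{x+1}-1)]$ lies entirely on one side of the cut at $\inspos$, so $\shfn$ is constant on it and equal to $\shfn(\LF_\delta(p_x))$. This will follow from two facts: (a) the F-intervals of the nodes in $V$, in the order of the doubly linked list, are contiguous and together cover all of $F_\delta$, because the DBWT partitions $L_\delta$ and LF is a bijection; and (b) Lemma~\ref{lem:v_g+1_pos} forces $\inspos$ to fall precisely in the gap between $v_g$'s and $v_{\mathsf{gnext}}$'s F-intervals, so no F-interval can straddle it. Statement (i) then follows by applying the constant shift. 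For (iv) the same uniformity applies to $v_j$, and the split of its F-interval into $v_{j^\prime}$ and $v_{j^\prime+1}$ mirrors the split of $u_j$ on the $L$-side at position $\inspos$; by the LF formula the split point on the F-side lies at offset $(\inspos-p_j)$ from $\LF_\delta(p_j)$, yielding the two claimed intervals after the shift $\shfn(\LF_\delta(p_j))$ is applied.

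The main obstacle will be the shift-uniformity step: one must handle the boundary case in which $v_{\mathsf{gnext}}$ does not exist (so $\inspos=\delta+1$ and the shift is vacuously $0$ on every F-interval), and, most importantly, one must carefully distinguish the L-side occurrence of $\inspos$ from the F-side occurrence. Although both are written $\inspos$ and coincide numerically through the LF correspondence on the new suffix, $u_j$'s L-segment contains the former while the F-interval of $v_j$ need not contain the latter, which is precisely why $v_j$ still inherits a uniform shift. Once this distinction is made explicit, (iv) reduces to the same shift-and-split calculation that underlies Lemma~\ref{lem:shiftU}.
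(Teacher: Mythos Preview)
Your proposal is correct and follows the same approach as the paper's proof, which simply states that the lemma ``follows from the replace-node, split-node, and insert-node steps of the update operation.'' Your version is considerably more detailed: you make explicit the shift-uniformity argument (that $\shfn$ is constant on each F-interval because $\inspos$ coincides with the start of $v_{\mathsf{gnext}}$'s F-interval by Lemma~\ref{lem:v_g+1_pos}), which the paper asserts in the paragraph preceding the lemma but does not justify.
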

\begin{proof}
Lemma~\ref{lem:shiftV} follows from the replace-node, split-node, and insert-node steps of the update operation. 
\end{proof}

The following lemma ensures that each node $v_{x}$ in $V$ represents the F-interval of the DBWT-repetition represented by $u_{x}$ on $F_{\delta+1}$. 
\begin{lemma}\label{lem:new_DBWT_repetition_and_F-interval}
After the insert-node step of the update operation has been executed, 
the following four statements hold: 
\begin{enumerate}
    \item $v_{x}$ represents the F-interval of the DBWT-repetition of $u_{x}$ for all $x \in \{ 1, 2, \ldots, k \} \setminus \{ i, j \}$, i.e., 
    $[\LF_{\delta}(p_{x}) + \shfn(\LF_{\delta}(p_{x})), \LF_{\delta}(p_{x+1}-1) + \shfn(\LF_{\delta}(p_{x}))] = [\LF_{\delta+1}(p_{x} + \shfn(p_{x})), \LF_{\delta+1}(p_{x+1}-1 + \shfn(p_{x})) ]$;
    \item $v_{i^{\prime}}$ represents the F-interval of the DBWT-repetition of $u_{i^{\prime}}$;
    \item $v_{x^{\prime}}$ represents the F-interval of the DBWT-repetition of $u_{x^{\prime}}$;
    \item $v_{j^{\prime}}$ and $v_{j^{\prime}+1}$ represent the F-intervals of the DBWT-repetitions of $u_{j^{\prime}}$ and $u_{j^{\prime}+1}$, respectively.
\end{enumerate}
\end{lemma}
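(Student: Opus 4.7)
The plan is to combine Lemmas~\ref{lem:shiftU} and \ref{lem:shiftV} with the LF formula applied to $L_{\delta+1}$, verifying case by case that the F-interval of $v_x$ coincides with $[\LF_{\delta+1}(s), \LF_{\delta+1}(t)]$ for the endpoints $s, t$ of the DBWT-repetition represented by $u_x$. The technical heart of the argument is a shift relation: for every $x \in \{1, \ldots, \delta\} \setminus \{\reppos\}$,
\[
\LF_{\delta+1}(x + \shfn(x)) = \LF_\delta(x) + \shfn(\LF_\delta(x)).
\]
I would prove this identity from the LF formula via two observations. First, because the transition from $L_\delta$ to $L_{\delta+1}$ simultaneously removes the $\$$ at $\reppos$ and inserts a new $\$$ at $\inspos$, we have $\occ_<(L_{\delta+1}, c') = \occ_<(L_\delta, c')$ for every $c' > \$$. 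Second, the $\rank$ count of $c' = L_\delta[x]$ in $L_{\delta+1}[1..x+\shfn(x)]$ differs from $\rank(L_\delta, x, c')$ only by the contribution of the new $c$ inserted at position $\reppos + \shfn(\reppos)$, and that contribution is present exactly when $\LF_\delta(x) \geq \inspos$, producing the $\shfn(\LF_\delta(x))$ term.

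Given the shift relation, Case~(1) follows immediately: the update operation guarantees that a single DBWT-repetition has a common shift on the $L$ side and that its F-interval has a common shift on the $F$ side, so applying the relation to the endpoints $p_x$ and $p_{x+1}-1$ yields the desired equality. Case~(2) reduces to checking $\LF_{\delta+1}(\reppos + \shfn(\reppos)) = \inspos$, which follows by substituting $c' = c$ into the LF formula and invoking $\inspos = \occ_<(L_\delta, c) + \rank(L_\delta, \reppos, c) + 1$ from Section~\ref{lab:ExBWT}. Case~(3) reduces to $\LF_{\delta+1}(\inspos) = 1$, immediate since $\$$ is the unique minimum character in $L_{\delta+1}$.

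For Case~(4), the left endpoint of $u_{j'}$ at position $p_j$ in $L_{\delta+1}$ has shift $0$, so the shift relation gives $\LF_{\delta+1}(p_j) = \LF_\delta(p_j) + \shfn(\LF_\delta(p_j))$, matching the left endpoint of the first interval in Lemma~\ref{lem:shiftV}(iv). For $u_{j'+1}$, I would use that the two split nodes $u_{j'}$ and $u_{j'+1}$ consist of the same character, so the LF formula applied componentwise produces consecutive F-intervals in $F_{\delta+1}$ separated only by the position $\LF_{\delta+1}(\inspos) = 1$ which does not lie between them; combined with the length accounting $(\inspos - p_j) + (p_{j+1} - \inspos) = p_{j+1} - p_j$ and the shift relation applied to $x = p_{j+1}-1$, this forces both F-intervals to coincide with those stated in Lemma~\ref{lem:shiftV}(iv).

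The main obstacle will be the careful bookkeeping needed to establish the shift relation: one must track separately the contributions to $\occ_<$ and $\rank$ coming from the replacement $\$ \to c$ at $\reppos$ and from the insertion of $\$$ at $\inspos$, and show that the only net correction is $\shfn(\LF_\delta(x))$. A secondary subtlety is verifying that $\shfn(\LF_\delta(x))$ correctly encodes whether the newly inserted $c$ in $F_{\delta+1}$ lies before or after the character originally at $F_\delta[\LF_\delta(x)]$; this requires comparing $\LF_\delta(x)$ with $\inspos$ and invoking the identity for $\LF_{\delta+1}(\reppos + \shfn(\reppos))$ established in Case~(2). Once the shift relation is in place, the four cases reduce to straightforward substitutions.
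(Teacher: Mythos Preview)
Your overall strategy coincides with the paper's: both reduce the lemma to the shift relation
\[
\LF_{\delta+1}(x+\shfn(x)) \;=\; \LF_\delta(x) + \shfn(\LF_\delta(x)) \qquad (x \neq \reppos),
\]
established via the LF formula, and then read off the four cases from Lemmas~\ref{lem:shiftU} and~\ref{lem:shiftV}. But your bookkeeping for the shift relation is wrong as stated. The claim $\occ_<(L_{\delta+1}, c') = \occ_<(L_\delta, c')$ is false: passing from $L_\delta$ to $L_{\delta+1}$ removes one $\$$, adds one $\$$, \emph{and} adds one $c$, so $\occ_<(L_{\delta+1}, c') = \occ_<(L_\delta, c') + [c < c']$. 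Conversely, the $\rank$ of $c'$ in $L_{\delta+1}[1..x+\shfn(x)]$ picks up the new $c$ at position $\reppos+\shfn(\reppos)$ only when $c' = c$; for $c' \neq c$ the rank is unchanged, not shifted by $\shfn(\LF_\delta(x))$. Thus your two ``observations'' do not combine to give the shift relation.

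The correct decomposition, which is exactly the paper's, splits on the relation between $c$ and $c' = L_\delta[x]$: if $c < c'$ the $+1$ shift comes entirely from $\occ_<$, and one checks via the LF formula that $\LF_\delta(x) \geq \inspos$; if $c > c'$ there is no shift and $\LF_\delta(x) < \inspos$; if $c = c'$ the shift equals $[\reppos < x]$, and one verifies from the LF formula that $[\reppos < x] = [\LF_\delta(x) \geq \inspos]$ in this case. With this correction the rest of your plan (in particular your more explicit handling of Cases~(2)--(4)) goes through and agrees with the paper, which simply writes ``Similarly'' for those cases.
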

\begin{proof}
Note that the DBWT-repetition of $u_{x}$ and the interval of $v_{x}$ are $L_{\delta+1}[p_{x} + \shfn(p_{x})..p_{x+1}-1 + \shfn(p_{x})]$ and $[\LF_{\delta}(p_{x}) + \shfn(\LF_{\delta}(p_{x})),  \LF_{\delta}(p_{x+1}-1) + \shfn(\LF_{\delta}(p_{x}))]$, respectively, by Lemma~\ref{lem:shiftU}-(i) and Lemma~\ref{lem:shiftV}-(i). 
Lemma~\ref{lem:new_DBWT_repetition_and_F-interval}-(1) holds if the following two conditions hold: 
(i) $\LF_{\delta+1}(p_{x} + \shfn(p_{x})) = \LF_{\delta}(p_{x}) + \shfn(\LF_{\delta}(p_{x}))$ 
and (ii) $\LF_{\delta+1}(p_{x+1}-1 + \shfn(p_{x})]) =  \LF_{\delta}(p_{x+1}-1) + \shfn(\LF_{\delta}(p_{x}))$. 

For the input character $c$ of the update operation, 
if (a) $c$ is smaller than $L_{\delta+1}[p_{x}]$ or (b) $c = L_{\delta+1}[p_{x}]$ and $p_{i} < p_{x}$~(i.e., $\reppos < p_{x}$), 
then $\LF_{\delta+1}(p_{x} + \shfn(p_{x})) = \LF_{\delta}(p_{x}) + 1$ and $\inspos \leq \LF_{\delta}(p_{x})$ by the LF formula. 
Moreover, $\LF_{\delta}(p_{x}) + \shfn(\LF_{\delta}(p_{x})) = \LF_{\delta}(p_{x}) + 1$ by $\inspos \leq \LF_{\delta}(p_{x})$. 
Otherwise~(i.e., (a) $c$ is larger than $L_{\delta+1}[p_{x}]$ or (b) $c = L_{\delta+1}[p_{x}]$ and $p_{i} > p_{x}$), 
$\LF_{\delta+1}(p_{x} + \shfn(p_{x})) = \LF_{\delta}(p_{x})$ and $\inspos > \LF_{\delta}(p_{x})$ by the LF formula. 
Furthermore, $\LF_{\delta}(p_{x}) + \shfn(\LF_{\delta}(p_{x})) = \LF_{\delta}(p_{x})$ by $\inspos > \LF_{\delta}(p_{x})$. 
Hence, Condition (i) holds. Similarly, Condition (ii) holds using the LF formula. 
Therefore, Lemma~\ref{lem:new_DBWT_repetition_and_F-interval}-(1) holds. 
Similarly, we can prove Lemma~\ref{lem:new_DBWT_repetition_and_F-interval}-(2-4). 
\end{proof}

\subparagraph{Computing new edges in set $E_{L}$.}
\begin{figure}[t]
 \begin{center}
		\includegraphics[scale=0.6]{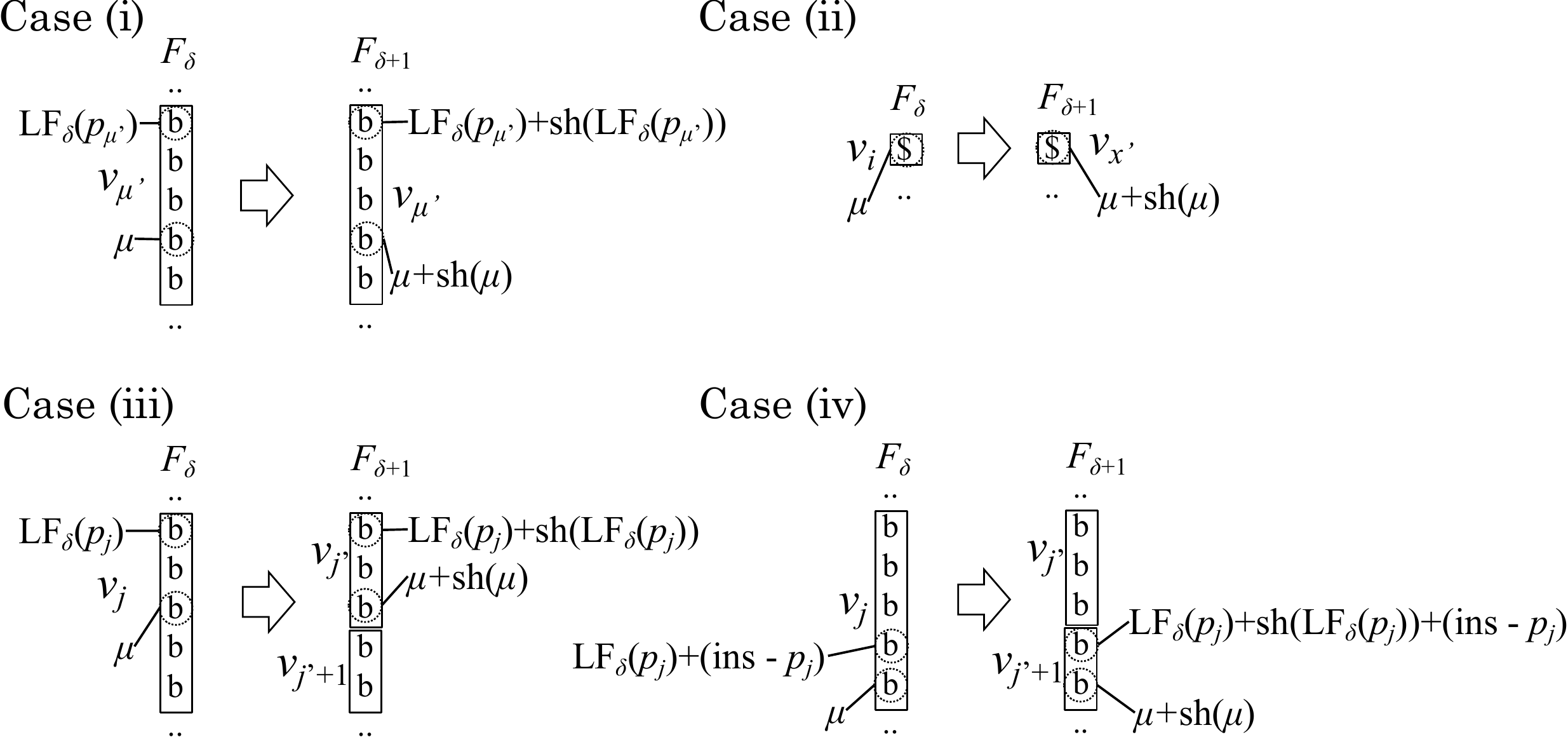}
	  \caption{
	  Four cases in the proof of Lemma~\ref{lem:v_e_computation_lemma_U}.
	  }
 \label{fig:myu_node}
 \end{center}
\end{figure}

We show that for a given node $u_{h} \in U$, 
the new directed edge $(u_{h}, v_{h^{\prime}}) \in E_{L}$ starting at the given node $u_{h}$ and the label $e$ of this edge can be computed in $O(1)$ time. 
Here, $v_{h^{\prime}} \in V$ is the node connected to the head of the new directed edge. 
The following lemma can be used to compute node $v_{h^{\prime}}$ and label $e$.

\begin{lemma}\label{lem:v_e_computation_lemma_U}
Let $\mu \in \{ 1, 2, \ldots, \delta \}$ be the integer such that 
the DBWT-repetition of $u_{h}$ starts at position $\mu + \shfn(\mu)$ on $L_{\delta+1}$~(i.e., $u_{h}$ represents a substring of $L_{\delta+1}$ starting at position $\mu + \shfn(\mu)$), 
and let $v_{\mu^{\prime}}$ be the node in $\{ v_{1}, v_{2}, \ldots, v_{k} \}$ such that 
the F-interval of $v_{\mu^{\prime}}$ contains position $\mu$ on $F_{\delta}$. 
Then, we can compute $v_{h^{\prime}}$ and $e$ in $O(1)$ time if we know $v_{\mu^{\prime}}$ and integer $\mu - \LF_{\delta}(p_{\mu^{\prime}})$. 
\end{lemma}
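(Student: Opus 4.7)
The plan is to characterize $v_{h'}$ as the unique node of $V$ whose new F-interval on $F_{\delta+1}$ contains position $\mu + \shfn(\mu)$, then locate it in $O(1)$ by a case analysis on $\mu'$ that uses the explicit description of each new F-interval given in Lemma~\ref{lem:shiftV}. Once $v_{h'}$ is identified, the label is just the single subtraction $e = (\mu + \shfn(\mu)) - \LF_{\delta+1}(p_{h'})$, where $\LF_{\delta+1}(p_{h'})$ is the starting position of the new F-interval of $v_{h'}$.

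First I would dispatch the generic case $\mu' \not\in \{i, j\}$. By Lemma~\ref{lem:shiftV}-(i), the new F-interval of $v_{\mu'}$ is obtained from the old one by a uniform translation by $\shfn(\LF_\delta(p_{\mu'}))$, and because $\shfn$ is constant on each old F-interval, $\shfn(\mu) = \shfn(\LF_\delta(p_{\mu'}))$. Therefore $\mu + \shfn(\mu)$ lies in the new F-interval of $v_{\mu'}$, so $v_{h'} = v_{\mu'}$; the two shifts cancel and $e$ reduces to $\mu - \LF_\delta(p_{\mu'})$, which is exactly the integer handed to us.

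The interesting case is $\mu' = j$. By Lemma~\ref{lem:shiftV}-(iv) the old F-interval of $v_j$ splits into the new F-intervals of $v_{j'}$ of length $\inspos - p_j$ and of $v_{j'+1}$ of length $p_{j+1} - \inspos$. Here I would compare the given offset $\mu - \LF_\delta(p_{\mu'})$ to the length $\inspos - p_j$: if it is strictly smaller, set $v_{h'} = v_{j'}$ with $e = \mu - \LF_\delta(p_j)$; otherwise set $v_{h'} = v_{j'+1}$ with $e = (\mu - \LF_\delta(p_j)) - (\inspos - p_j)$. The length $\inspos - p_j$ coincides with $B_F(v_{\mathsf{gnext}}, u_j)$ saved by the split-node step (Lemma~\ref{lem:edge_label}) and can be kept in a local variable, so the comparison and the arithmetic are $O(1)$. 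The remaining corner case $\mu' = i$ forces $\mu = 1$, since $v_i$'s old F-interval is $[1,1]$; then $v_{h'}$ is either $v_{x'}$ or the node immediately following it in the doubly linked list of $V$, selected according to the value of $\shfn(1)$.

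The main obstacle is not the asymptotic cost, which is transparently $O(1)$ in each branch, but the bookkeeping: the proof must guarantee that $\inspos - p_j$ and pointers to the new nodes $v_{i'}, v_{j'}, v_{j'+1}, v_{x'}$ are already accessible in $O(1)$ at the moment this lemma is invoked. This is precisely what the replace-node, split-node, and insert-node steps of Section~\ref{sec:update_graph} prepare, together with Lemma~\ref{lem:split_node_formula}, so the plan closes.
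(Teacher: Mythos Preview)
Your proposal is correct and follows essentially the same approach as the paper: a case analysis on whether $\mu' \in \{i, j\}$, using Lemma~\ref{lem:shiftV} to locate $\mu + \shfn(\mu)$ in the appropriate new F-interval and then reading off $e$ as a single subtraction (with the split threshold $\inspos - p_j = B_F(v_{\mathsf{gnext}}, u_j)$ supplied by the split-node step). Two small points where the paper is slightly tighter: it explicitly justifies $\shfn(\mu) = \shfn(\LF_\delta(p_{\mu'}))$ via the position of $v_{i'}$ in the list of $V$, and in the case $\mu' = i$ it observes that $\inspos > 1$ (since $c \neq \$$) forces $\shfn(1)=0$, so $v_{h'} = v_{x'}$ unconditionally and your second branch there is vacuous.
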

\begin{proof}
First, we show that $\shfn(\mu) = \shfn(\LF_{\delta}(p_{\mu^{\prime}}))$ holds for the starting position $\LF_{\delta}(p_{\mu^{\prime}})$ of F-interval $[\LF_{\delta}(p_{\mu^{\prime}}), \LF_{\delta}(p_{\mu^{\prime}+1}-1)]$,  represented as $v_{\mu^{\prime}}$ on $F_{\delta}$. 
The node $v_{i^{\prime}}$ representing the input character $c$ is inserted into the doubly linked list of $V$ at the position next 
to a node in the replace-node step of the update operation, 
and the F-interval of $v_{i^{\prime}}$ is $[\inspos, \inspos]$ on $F_{\delta+1}$. 
If $v_{i^{\prime}}$ precedes $v_{\mu^{\prime}}$ in the doubly linked list of $V$, 
then $\inspos \leq \LF_{\delta}(p_{\mu^{\prime}}) \leq \mu$ by $\mu \in [\LF_{\delta}(p_{\mu^{\prime}}), \LF_{\delta}(p_{\mu^{\prime}+1}-1)]$. 
Otherwise, $\LF_{\delta}(p_{\mu^{\prime}}) \leq \mu < \inspos$. 
Hence, $\shfn(\mu) = \shfn(\LF_{\delta}(p_{\mu^{\prime}}))$.

Second, we prove Lemma~\ref{lem:v_e_computation_lemma_U}. We consider the following four cases:  
(i) $v_{\mu^{\prime}} \not \in \{ v_{i}, v_{j} \}$; (ii) $v_{\mu^{\prime}} = v_{i}$; (iii) $v_{\mu^{\prime}} = v_{j}$ and $\mu - \LF_{\delta}(p_{\mu^{\prime}}) < \inspos - p_{j}$; and (iv) $v_{\mu^{\prime}} = v_{j}$ and $\mu - \LF_{\delta}(p_{\mu^{\prime}}) \geq \inspos - p_{j}$. 
For case (i), 
$v_{\mu^{\prime}}$ represents F-interval $[\LF_{\delta}(p_{\mu^{\prime}}), \LF_{\delta}(p_{\mu^{\prime}+1}-1)]$ on $F_{\delta}$, and the position $\mu$ on $F_{\delta}$ is contained in the F-interval. 
By Lemma~\ref{lem:shiftV}-(i), 
the F-interval of $v_{\mu^{\prime}}$ on $F_{\delta+1}$ is $[\LF_{\delta}(p_{\mu^{\prime}}) + \shfn(\LF_{\delta}(p_{\mu^{\prime}})), \LF_{\delta}(p_{\mu^{\prime}+1}-1) + \shfn(\LF_{\delta}(p_{\mu^{\prime}}))]$. 
This F-interval contains the position $\mu + \shfn(\mu)$ on $F_{\delta+1}$ by $\mu \in [\LF_{\delta}(p_{\mu^{\prime}}), \LF_{\delta}(p_{\mu^{\prime}+1}-1)]$ 
and $\shfn(\mu) = \shfn(\LF_{\delta}(p_{\mu^{\prime}}))$~(see also Figure~\ref{fig:myu_node} for case (i)).
$v_{h^{\prime}} = v_{\mu^{\prime}}$ because the DBWT-repetition of $u_{h}$ starts at position $\mu + \shfn(\mu)$ on $L_{\delta+1}$. 
In addition, $e = (\mu + \shfn(\mu)) - (\LF_{\delta}(p_{\mu^{\prime}}) + \shfn(\LF_{\delta}(p_{\mu^{\prime}})))$, 
and thus $e = \mu - \LF_{\delta}(p_{\mu^{\prime}})$ by $\shfn(\mu) = \shfn(\LF_{\delta}(p_{\mu^{\prime}}))$. 

For case (ii), $\mu = 1$ because the F-interval of $v_{i}$ is $[1, 1]$~(see also Figure~\ref{fig:myu_node} for case (ii)). 
$\inspos > 1$ because $c \neq \$$. 
Moreover, $\mu + \shfn(\mu) = 1$ by $\inspos > 1$ and $\mu = 1$.  
The F-interval of $v_{x^{\prime}}$ is $[1, 1]$ on $F_{\delta+1}$. 
Hence, $v_{h^{\prime}} = v_{x^{\prime}}$ and $e = 0$.

For case (iii), 
the F-interval of $v_{j^{\prime}}$ is $[\LF_{\delta}(p_{j}) + \shfn(\LF_{\delta}(p_{j})), \LF_{\delta}(p_{j}) + \shfn(\LF_{\delta}(p_{j})) + (\inspos - p_{j})-1]$ on $F_{\delta+1}$ by Lemma~\ref{lem:shiftV}-(iv)~(see also Figure~\ref{fig:myu_node} for case (iii)). 
The F-interval of $v_{j^{\prime}}$ contains position $\mu + \shfn(\mu)$ because 
$\mu \in [\LF_{\delta}(p_{j}), \LF_{\delta}(p_{j}) + (\inspos - p_{j})-1]$ and $\shfn(\mu) = \shfn(\LF_{\delta}(p_{j}))$. 
Hence, $v_{h^{\prime}} = v_{j^{\prime}}$ and $e = (\mu - \LF_{\delta}(p_{j})) + (\shfn(\mu) - \shfn(\LF_{\delta}(p_{j})))$. 
We have $e = \mu - \LF_{\delta}(p_{j})$ by $\shfn(\mu) = \shfn(\LF_{\delta}(p_{j}))$.

Similarly, for case (iv), 
the F-interval of $v_{j^{\prime}+1}$ is $[\LF_{\delta}(p_{j}) + \shfn(\LF_{\delta}(p_{j})) + (\inspos - p_{j}), \LF_{\delta}(p_{j+1}-1) + \shfn(\LF_{\delta}(p_{j}))]$ on $F_{\delta+1}$~(see also Figure~\ref{fig:myu_node} for case (iv)). 
The F-interval of $v_{j^{\prime}+1}$ contains position $\mu + \shfn(\mu)$ 
because $\mu \in [\LF_{\delta}(p_{j}) + (\inspos - p_{j}), \LF_{\delta}(p_{j+1}-1)]$ 
and $\shfn(\mu) = \shfn(\LF_{\delta}(p_{j}))$. 
We have $v_{h^{\prime}} = v_{j^{\prime}+1}$ and $e = (\mu - \LF_{\delta}(p_{j}) + (\inspos - p_{j})) + (\shfn(\mu) - \shfn(\LF_{\delta}(p_{j})))$. 
Furthermore, $e = \mu - \LF_{\delta}(p_{j}) + (\inspos - p_{j})$ by 
$\shfn(\mu) - \shfn(\LF_{\delta}(p_{j})) = 0$. 
For (i) node $v_{g} \in V$ searched for in the replace-node step 
and (ii) node $v_{\mathsf{gnext}} \in V$ next to $v_{g}$ in the doubly linked list before executing the replace-node step, 
$v_{\mathsf{gnext}}$ is connected to $u_{j}$ by the directed edge $E_{F}(v_{\mathsf{gnext}}, u_{j})$, 
and we know that $\inspos - p_{j} = B_{F}(v_{\mathsf{gnext}}, u_{j})$~(see the proof of Lemma~\ref{lem:edge_label}). 
The label $B_{F}(v_{\mathsf{gnext}}, u_{j})$ of the directed edge is obtained by the split-node step of the update operation. 
Hence, we can compute $v_{h^{\prime}}$ and $e$ in $O(1)$ time. 
\end{proof}

We show that $v_{h^{\prime}}$ and label $e$ in $O(1)$ time using Lemma~\ref{lem:v_e_computation_lemma_U}. 
We consider five cases for $u_{h}$: 
(i) $u_{h} = u_{i'}$; 
(ii) $u_{h} = u_{j^{\prime}}$; 
(iii) $u_{h} = u_{j^{\prime}+1}$; 
(iv) $u_{h} = u_{x^{\prime}}$; and 
(v) $u_{h}$ is not a new node~(i.e., $u_{h} \in \{ u_{1}, u_{2}, \ldots, u_{k} \} \setminus \{ u_{i}, u_{j} \}$).

\subparagraph{Case (i).}
The DBWT-repetition of $u_{i^{\prime}}$ starts at position $p_{i} + \shfn(p_{i})$ on $L_{\delta+1}$ by Lemma~\ref{lem:shiftU}-(ii). 
Before the update operation is executed, 
$u_{i}$ is connected to a node $v_{\mu^{\prime}} \in V$ by directed edge $(u_{i}, v_{\mu^{\prime}}) \in E_{L}$, 
and the F-interval of $v_{\mu^{\prime}}$ contains position $p_{i}$ on $F_{\delta}$. 
The label of the directed edge $(u_{i}, v_{\mu^{\prime}})$ is $B_{L}(u_{i}, v_{\mu^{\prime}}) = p_{i} - \LF_{\delta}(p_{\mu^{\prime}})$, 
and hence, we can compute $v_{h^{\prime}}$ and $e$ using Lemma~\ref{lem:v_e_computation_lemma_U} in $O(1)$ time.

\subparagraph{Case (ii).}
The DBWT-repetition of $u_{j^{\prime}}$ starts at position $p_{j} + \shfn(p_{j})$ on $L_{\delta+1}$ by Lemma~\ref{lem:shiftU}-(iv) and $\shfn(p_{j}) = 0$. 
Before the update operation is executed, 
$u_{j}$ is connected to a node $v_{\mu^{\prime}}$ by directed edge $(u_{j}, v_{\mu^{\prime}}) \in E_{L}$, 
and the F-interval of $v_{\mu^{\prime}}$ contains position $p_{j}$ on $F_{\delta}$. 
The label of directed edge $(u_{j}, v_{\mu^{\prime}})$ is $B_{L}(u_{j}, v_{\mu^{\prime}}) = p_{j} - \LF_{\delta}(p_{\mu^{\prime}})$, 
and hence, we can compute $v_{h^{\prime}}$ and $e$ using Lemma~\ref{lem:v_e_computation_lemma_U} in $O(1)$ time.

\subparagraph{Case (iii).}
The DBWT-repetition of $u_{j^{\prime} + 1}$ starts at position $(\inspos+1)$ on $L_{\delta+1}$ by Lemma~\ref{lem:shiftU}-(iv). 
The new node $v_{i^{\prime}}$ representing the input character $c$ is inserted into the doubly linked list of $V$ at the position previous to 
node $v_{\mathsf{gnext}} \in V$, which was next to the node $v_{g}$ searched for in the replace-node step. 
We have $v_{\mathsf{gnext}} \neq v_{i}$ by $c > \$$, 
and hence $v_{\mathsf{gnext}} \in \{ v_{1}, v_{2}, \ldots, v_{k} \} \setminus \{ v_{i}, v_{j}\}$ or 
$v_{\mathsf{gnext}} = v_{j}$. 
If $v_{\mathsf{gnext}} \in \{ v_{1}, v_{2}, \ldots, v_{k} \} \setminus \{ v_{i}, v_{j}\}$, then
the F-interval of $v_{\mathsf{gnext}}$ starts at position $(\inspos+1)$ on $F_{\delta+1}$, 
and hence $v_{h^{\prime}} = v_{\mathsf{gnext}}$ and $e = 0$. 
Otherwise, the F-interval of $v_{j^{\prime}}$ starts at position $(\inspos+1)$ on $F_{\delta+1}$, 
and hence $v_{h^{\prime}} = v_{j^{\prime}}$ and $e = 0$. 
This fact indicates that we can compute $v_{h^{\prime}}$ and $e$ in $O(1)$ time.

\subparagraph{Case (iv).}
The DBWT-repetition of $u_{x^{\prime}}$ starts at position $\inspos$ on $L_{\delta+1}$ by Lemma~\ref{lem:shiftU}-(iii), 
and the F-interval of $v_{i^{\prime}}$ starts at position $\inspos$ on $F_{\delta+1}$ by Lemma~\ref{lem:shiftV}-(ii). 
Hence, $v_{h^{\prime}} = v_{i^{\prime}}$ and $e = 0$. 
This fact indicates that we can compute $v_{h^{\prime}}$ and $e$ in $O(1)$ time.

\subparagraph{Case (v).}
The DBWT-repetition of $u_{h}$ starts at position $p_{h} + \shfn(p_{h})$ on $L_{\delta+1}$ by Lemma~\ref{lem:shiftU}-(i). 
Before the update operation is executed, 
$u_{h}$ is connected to a node $v_{\mu^{\prime}}$ by directed edge $(u_{h}, v_{\mu^{\prime}}) \in E_{L}$, 
and the F-interval of $v_{\mu^{\prime}}$ contains position $p_{h}$ on $F_{\delta}$. 
The label of the directed edge is $B_{L}(u_{h}, v_{\mu^{\prime}}) = p_{h} - \LF_{\delta}(p_{\mu^{\prime}})$, 
and hence we can compute $v_{h^{\prime}}$ and $e$ using Lemma~\ref{lem:v_e_computation_lemma_U} in $O(1)$ time.

\subparagraph{Computing new edges in set $E_{F}$.}
Next, we show that for a given node $v_{h} \in V$, 
the new directed edge $(v_{h}, u_{h^{\prime}}) \in E_{F}$ starting at the given node $v_{h}$ and the label $e$ of the new directed edge can be computed in $O(t)$ time. 
Here, (i) $u_{h^{\prime}} \in U$ is the node connected to the head of the new directed edge, 
and (ii) $t = 1$ if $v_{h} \neq v_{j^{\prime}+1}$; otherwise $t = \alpha$. 
The following lemma can be used to compute node $u_{h^{\prime}}$ and label $e$.
\begin{lemma}\label{lem:v_e_computation_lemma_V}
Let $\mu \in \{ 1, 2, \ldots, \delta \}$ be the integer such that 
the F-interval of $v_{h}$ starts at position $\mu + \shfn(\mu)$ on $F_{\delta+1}$, 
and let $u_{\mu^{\prime}}$ be the node in $\{ u_{1}, u_{2}, \ldots, u_{k} \}$ such that 
the DBWT-repetition of $u_{\mu^{\prime}}$ contains position $\mu$ on $L_{\delta}$. 
Then, we can compute $u_{h^{\prime}}$ and $e$ in $O(1)$ time if we know $u_{\mu^{\prime}}$ and integer $\mu - p_{\mu^{\prime}}$. 
\end{lemma}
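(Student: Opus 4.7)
The plan is to prove Lemma~\ref{lem:v_e_computation_lemma_V} by mirroring the proof of Lemma~\ref{lem:v_e_computation_lemma_U}, swapping the roles of the $L$-side and the $F$-side. The previous lemma located, for a given $u_h \in U$, the F-interval on $F_{\delta+1}$ containing the starting position $p_h + \shfn(p_h)$ of the DBWT-repetition of $u_h$; here, dually, we must locate the DBWT-repetition on $L_{\delta+1}$ that contains the starting position $\mu + \shfn(\mu)$ of the F-interval of $v_h$. The hypothesis gives us the node $u_{\mu^{\prime}}$ whose DBWT-repetition in $L_{\delta}$ contains $\mu$ along with the offset $\mu - p_{\mu^{\prime}}$, which is exactly the analog of the input data $(v_{\mu^{\prime}},\mu - \LF_{\delta}(p_{\mu^{\prime}}))$ used before.

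First I would establish the symmetric preliminary equality $\shfn(\mu) = \shfn(p_{\mu^{\prime}})$. Recall that $u_{x^{\prime}}$ is the new $\$$-node inserted into the doubly linked list of $U$ and that it represents $L_{\delta+1}[\inspos]$ by Lemma~\ref{lem:shiftU}-(iii). Using Lemma~\ref{lem:shiftU}-(i) to place $u_{\mu^{\prime}}$ on $L_{\delta+1}$, I would argue that if $u_{x^{\prime}}$ precedes $u_{\mu^{\prime}}$ in the list then $p_{\mu^{\prime}} \geq \inspos$ and hence $\mu \geq p_{\mu^{\prime}} \geq \inspos$, while if $u_{\mu^{\prime}}$ precedes $u_{x^{\prime}}$ then $p_{\mu^{\prime}+1}-1 < \inspos$ and hence $\mu < \inspos$; in both subcases $\shfn(\mu)$ and $\shfn(p_{\mu^{\prime}})$ agree.

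Next I would split the analysis into four cases according to whether $u_{\mu^{\prime}} \in \{u_i,u_j\}$, exactly paralleling the four cases in Lemma~\ref{lem:v_e_computation_lemma_U}. When $u_{\mu^{\prime}} \notin \{u_i,u_j\}$, Lemma~\ref{lem:shiftU}-(i) shows that the DBWT-repetition of $u_{\mu^{\prime}}$ on $L_{\delta+1}$ contains $\mu+\shfn(\mu)$, so $u_{h^{\prime}} = u_{\mu^{\prime}}$ and $e = \mu - p_{\mu^{\prime}}$. When $u_{\mu^{\prime}} = u_i$, we have $\mu = \reppos$, and by Lemma~\ref{lem:shiftU}-(ii) the single-character node $u_{i^{\prime}}$ sits at $\reppos + \shfn(\reppos)$, giving $u_{h^{\prime}} = u_{i^{\prime}}$ and $e = 0$. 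When $u_{\mu^{\prime}} = u_j$, the offset $\mu - p_j$ decides between the two split nodes described in Lemma~\ref{lem:shiftU}-(iv): if $\mu - p_j < \inspos - p_j$ then $u_{h^{\prime}} = u_{j^{\prime}}$ with $e = \mu - p_j$, and otherwise $u_{h^{\prime}} = u_{j^{\prime}+1}$ with $e = (\mu - p_j) - (\inspos - p_j)$.

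The only quantity in this case analysis that is not already provided is $\inspos - p_j$, needed in the two subcases $u_{\mu^{\prime}} = u_j$. The main obstacle is therefore to argue that this quantity is available in $O(1)$ time without actually recomputing $\inspos$. This is where I would invoke Lemma~\ref{lem:edge_label}, which identifies $\inspos - p_j$ with the edge label $B_{F}(v_{\mathsf{gnext}}, u_j)$ computed during the split-node step of the update operation. Once that is in place, each case returns $u_{h^{\prime}}$ and $e$ using only the inputs $u_{\mu^{\prime}}$, $\mu - p_{\mu^{\prime}}$, and $O(1)$ precomputed data, establishing the $O(1)$ bound.
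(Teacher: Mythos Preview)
Your proposal is correct and follows exactly the approach the paper intends: the paper's own proof simply states that Lemma~\ref{lem:v_e_computation_lemma_V} is symmetric to Lemma~\ref{lem:v_e_computation_lemma_U} and can be obtained by modifying that proof, and you have carried out precisely this symmetric adaptation (swapping $L$- and $F$-sides, replacing $v_{\mu'}$ by $u_{\mu'}$, and using Lemma~\ref{lem:shiftU} in place of Lemma~\ref{lem:shiftV}). Your case analysis, including the identification $u_{h'}=u_{i'}$ when $u_{\mu'}=u_i$ and the use of $B_F(v_{\mathsf{gnext}},u_j)=\inspos-p_j$ from Lemma~\ref{lem:edge_label} in the $u_{\mu'}=u_j$ subcases, is the correct dual of the paper's argument.
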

\begin{proof}
Lemma~\ref{lem:v_e_computation_lemma_V} is symmetric to Lemma~\ref{lem:v_e_computation_lemma_U}. 
We can prove Lemma~\ref{lem:v_e_computation_lemma_V} by modifying the proof of Lemma~\ref{lem:v_e_computation_lemma_U}.
\end{proof}

We consider five cases for $v_{h}$:
(i) $v_{h} = v_{i^{\prime}}$; 
(ii) $v_{h} = v_{j^{\prime}}$; 
(iii) $v_{h} = v_{j^{\prime}+1}$; 
(iv) $v_{h} = v_{x^{\prime}}$; and
(v) $v_{h}$ is not a new node~(i.e., $v \in \{ v_{1}, v_{2}, \ldots, v_{k} \} \setminus \{ v_{i}, v_{j} \}$).

\subparagraph{Case (i).}
The F-interval of $v_{i^{\prime}}$ starts at position $\inspos$ on $F_{\delta+1}$ by Lemma~\ref{lem:shiftV}-(ii), 
and the DBWT-repetition of $u_{x^{\prime}}$ starts at position $\inspos$ on $L_{\delta+1}$ by Lemma~\ref{lem:shiftU}-(iii). 
Hence, $u_{h^{\prime}} = u_{x^{\prime}}$ and $e = 0$. 
This fact indicates that we can compute $u_{h^{\prime}}$ and $e$ in $O(1)$ time.

\subparagraph{Case (ii).}
The F-interval of $v_{j^{\prime}}$ starts at position $\LF_{\delta}(p_{j}) + \shfn(\LF_{\delta}(p_{j}))$ on $F_{\delta+1}$ by Lemma~\ref{lem:shiftV}-(iv). 
Before the update operation is executed, 
$v_{j}$ is connected to a node $u_{\mu^{\prime}} \in U$ by directed edge $(v_{j}, u_{\mu^{\prime}}) \in E_{F}$, 
and the DBWT-repetition of $u_{\mu^{\prime}}$ contains position $\LF_{\delta}(p_{j})$ on $L_{\delta}$. 
The label of directed edge $(v_{j}, u_{\mu^{\prime}})$ is $B_{F}(v_{j}, u_{\mu^{\prime}}) = \LF_{\delta}(p_{j}) - p_{\mu^{\prime}}$, 
and hence we can compute $u_{h^{\prime}}$ and $e$ using Lemma~\ref{lem:v_e_computation_lemma_V} in $O(1)$ time.

\subparagraph{Case (iii).}
\begin{figure}[t]
 \begin{center}
		\includegraphics[scale=0.7]{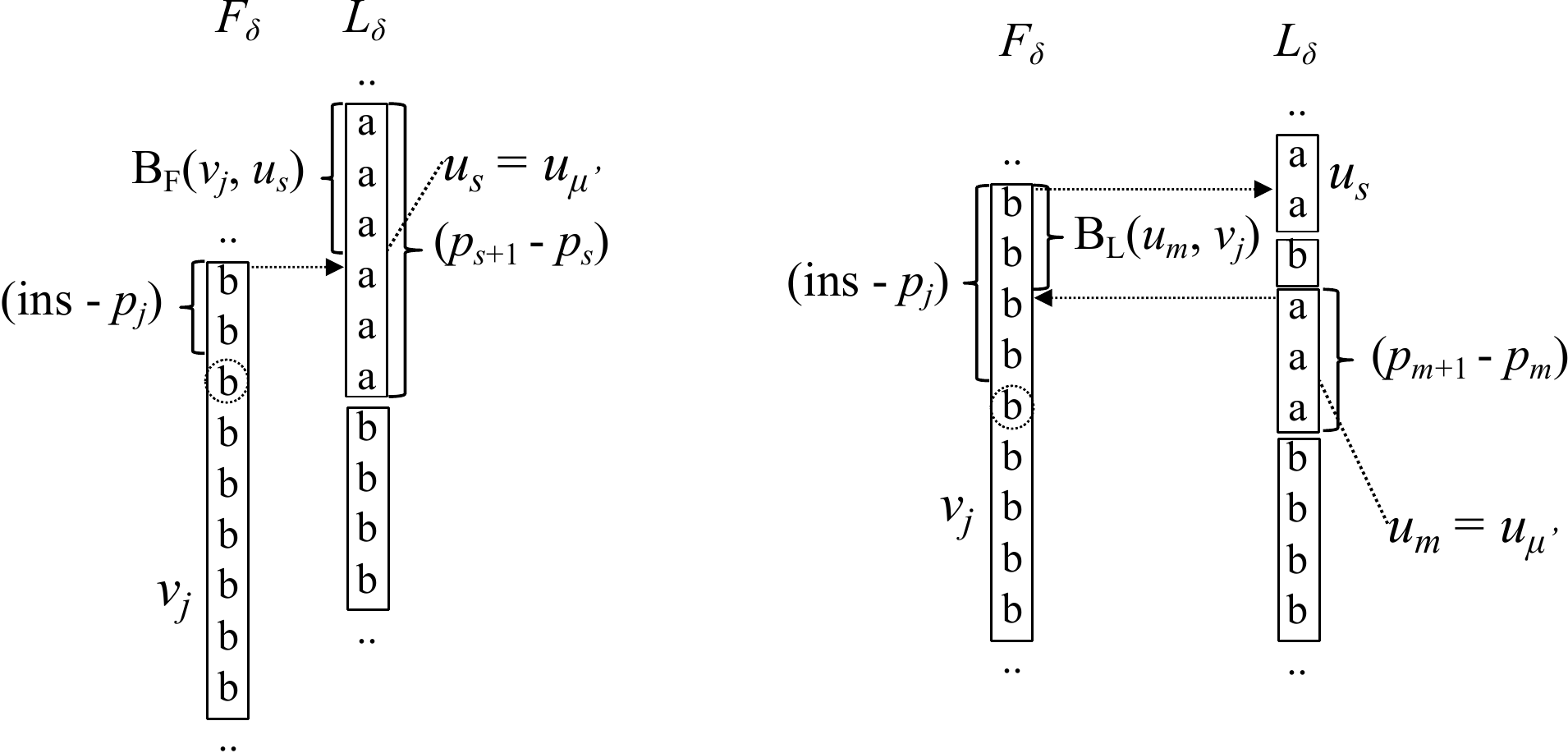}
	  \caption{
      (Left) Three nodes $v_{j}$, $u_{s}$, and $u_{\mu^{\prime}}$ for $B_{F}(v_{j}, u_{s}) + (\inspos - p_{j}) < (p_{s+1} - p_{s})$ in Lemma~\ref{lem:comp_vj}. 
      (Right) Four nodes $v_{j}$, $u_{s}$, $u_{m}$, and $u_{\mu^{\prime}}$ for $B_{F}(v_{j}, u_{s}) + (\inspos - p_{j}) \geq (p_{s+1} - p_{s})$. 
	  Dotted circles represent the starting position of the F-interval of $v_{j^{\prime}+1}$, 
	  i.e., the characters in the dotted circles are $F_{\delta}[\LF_{\delta}(p_{j}) + (\inspos - p_{j})]$. 
	  }
 \label{fig:u_w}
 \end{center}
\end{figure}

The F-interval of $v_{j^{\prime}+1}$ starts at position $\LF_{\delta}(p_{j}) + (\inspos - p_{j}) + \shfn(\LF_{\delta}(p_{j}) + (\inspos - p_{j}))$ on $F_{\delta+1}$ by Lemma~\ref{lem:shiftV}-(iv) and $\shfn(\LF_{\delta}(p_{j})) = \shfn(\LF_{\delta}(p_{j}) + (\inspos - p_{j}))$. 
Let $u_{\mu^{\prime}} \in \{ u_{1}, u_{2}, \ldots, u_{k} \}$ be the node such that the DBWT-repetition of $u_{\mu^{\prime}}$ contains position $\LF_{\delta}(p_{j}) + (\inspos - p_{j})$ on $L_{\delta}$~(i.e., $p_{\mu^{\prime}} \leq \LF_{\delta}(p_{j}) + (\inspos - p_{j}) < p_{\mu^{\prime}+1}$). 
Then, we can compute $u_{h^{\prime}}$ and $e$ using Lemma~\ref{lem:v_e_computation_lemma_V} in $O(1)$ time after computing node $u_{\mu^{\prime}}$ and integer value $((\LF_{\delta}(p_{j}) + (\inspos - p_{j})) - p_{\mu^{\prime}})$. 
The following lemma can be used to compute $u_{\mu^{\prime}}$ and integer value $((\LF_{\delta}(p_{j}) + (\inspos - p_{j})) - p_{\mu^{\prime}})$ in $O(\alpha)$ time. 

\begin{lemma}\label{lem:comp_vj}
Let $u_{s} \in U$ be the node connected to node $v_{j}$ by directed edge $(v_{j}, u_{s}) \in E_{F}$ before the update operation has been executed. Then, the following two statements hold: 
(i) 
$u_{\mu^{\prime}} = u_{s}$ if $B_{F}(v_{j}, u_{s}) + (\inspos - p_{j}) < (p_{s+1} - p_{s})$ for the label $B_{F}(v_{j}, u_{s})$ of directed edge $(v_{j}, u_{s})$. 
Otherwise, let $u_{m} \in \{ u_{1}, u_{2}, \ldots, u_{k} \}$ be the node such that 
$(u_{m}, v_{j}) \in E_{L}$ and $B_{L}(u_{m}, v_{j}) \leq \inspos - p_{j} < B_{L}(u_{m}, v_{j}) + (p_{m+1} - p_{m})$. 
Then, $u_{\mu^{\prime}} = u_{m}$. 
(ii) We can compute $u_{\mu^{\prime}}$ and integer value $((\LF_{\delta}(p_{j}) + (\inspos - p_{j})) - p_{\mu^{\prime}})$ in $O(\alpha)$ time 
using node $v_{j}$ and LF-interval graph $\graphds(D^{\alpha}_{\delta})$. 
\end{lemma}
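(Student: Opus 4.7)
The plan is to prove each part of Lemma~\ref{lem:comp_vj} by expressing the target position $q := \LF_{\delta}(p_{j}) + (\inspos - p_{j})$ in terms of quantities that the LF-interval graph already stores on its edges and labels. The key identity I will use repeatedly is the definition of the edge-label functions: for $(v_{j}, u_{s}) \in E_{F}$ we have $B_{F}(v_{j}, u_{s}) = \LF_{\delta}(p_{j}) - p_{s}$, and for $(u_{m}, v_{j}) \in E_{L}$ we have $B_{L}(u_{m}, v_{j}) = p_{m} - \LF_{\delta}(p_{j})$. With these I never need to know $\LF_{\delta}(p_{j})$ explicitly.

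For statement (i), I would first consider $u_{s}$ itself. Rewriting $q = p_{s} + B_{F}(v_{j}, u_{s}) + (\inspos - p_{j})$, the DBWT-repetition represented by $u_{s}$ covers $q$ exactly when $B_{F}(v_{j}, u_{s}) + (\inspos - p_{j}) < (p_{s+1} - p_{s})$, giving $u_{\mu^{\prime}} = u_{s}$ in the first case~(Figure~\ref{fig:u_w}, left). When this condition fails, $q$ lies strictly past the end of $u_{s}$, hence inside the F-interval of $v_{j}$ (since $\inspos - p_{j} < \LF_{\delta}(p_{j+1}-1) - \LF_{\delta}(p_{j})$), so the node $u_{m}$ whose DBWT-repetition contains $q$ must satisfy $p_{m} \in [\LF_{\delta}(p_{j}), \LF_{\delta}(p_{j+1}-1)]$ and is therefore connected to $v_{j}$ via a directed edge in $E_{L}$~(Figure~\ref{fig:u_w}, right). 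Substituting $p_{m} = \LF_{\delta}(p_{j}) + B_{L}(u_{m}, v_{j})$ into the defining inclusion $p_{m} \leq q < p_{m+1}$ yields exactly the characterization $B_{L}(u_{m}, v_{j}) \leq \inspos - p_{j} < B_{L}(u_{m}, v_{j}) + (p_{m+1} - p_{m})$.

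For statement (ii), the algorithm follows the case split above. Using $v_{j}$, I access its unique outgoing edge in $E_{F}$ to obtain $u_{s}$ and the label $B_{F}(v_{j}, u_{s})$ in $O(1)$ time; the value $\inspos - p_{j}$ is available from the split-node step via $B_{F}(v_{\mathsf{gnext}}, u_{j})$ as already noted in the proof of Lemma~\ref{lem:edge_label}, and the length $(p_{s+1}-p_{s})$ is read from $B_{U}(u_{s})$. If the inequality holds I return $u_{s}$ and the offset $q - p_{s} = B_{F}(v_{j}, u_{s}) + (\inspos - p_{j})$. Otherwise I enumerate the directed edges in $E_{L}$ entering $v_{j}$; since $\graphds(D^{\alpha}_{\delta})$ is $\alpha$-balanced, there are $O(\alpha)$ such edges, and exactly one satisfies the two-sided inequality above, yielding $u_{m}$ and the offset $q - p_{m} = (\inspos - p_{j}) - B_{L}(u_{m}, v_{j})$ in $O(\alpha)$ time.

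The main obstacle I anticipate is the second case: establishing unambiguously that the target position $q$ actually lies in the F-interval of $v_{j}$ whenever the first condition fails, and therefore that the required $u_{m}$ is indeed witnessed by some edge of $E_{L}$ pointing to $v_{j}$. This rests on the inequality $\inspos - p_{j} < p_{j+1} - p_{j} \leq \LF_{\delta}(p_{j+1}-1) - \LF_{\delta}(p_{j}) + 1$, which follows from $p_{j} < \inspos < p_{j+1}$ together with the fact that the $j$-th DBWT-repetition and its F-interval have the same length. Once this containment is verified, the characterization in (i) is forced and the $O(\alpha)$-time enumeration in (ii) is justified.
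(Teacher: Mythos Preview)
Your proposal is correct and follows essentially the same approach as the paper's own proof: both case-split on whether the target position $q = \LF_{\delta}(p_{j}) + (\inspos - p_{j})$ still falls inside the DBWT-repetition of $u_{s}$, and in the ``otherwise'' case both locate $u_{\mu'}$ among the at most $\alpha$ nodes with an $E_{L}$-edge into $v_{j}$. Your write-up is in fact slightly more explicit than the paper's in two places: you spell out the offset $q - p_{s} = B_{F}(v_{j},u_{s}) + (\inspos - p_{j})$ for the first case (the paper's sentence about the offset literally only covers the second case), and you justify why $p_{\mu'}$ must land in the F-interval of $v_{j}$ when the first test fails. One cosmetic nit: in your second paragraph you write $\inspos - p_{j} < \LF_{\delta}(p_{j+1}-1) - \LF_{\delta}(p_{j})$, which should be $\leq$ (as you correctly state in your final paragraph via $\inspos - p_{j} < p_{j+1} - p_{j} = \LF_{\delta}(p_{j+1}-1) - \LF_{\delta}(p_{j}) + 1$); this does not affect the conclusion.
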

\begin{proof}
(i) 
Figure~\ref{fig:u_w} illustrates four nodes $v_{j}$, $u_{s}$, $u_{m}$, and $u_{\mu^{\prime}}$. 
The DBWT-repetition of $u_{s}$ contains position $(\LF_{\delta}(p_{j}) + (\inspos - p_{j}))$ on $L_{\delta}$ 
if $B_{F}(v_{j}, u_{s}) + (\inspos - p_{j}) < p_{s+1} - p_{s}$~(see also the left of Figure~\ref{fig:u_w}). 
Hence, $u_{s} = u_{\mu^{\prime}}$. 
Otherwise, $u_{m}$ exists and its DBWT-repetition contains position $(\LF_{\delta}(p_{j}) + (\inspos - p_{j}))$ on $L_{\delta}$~(see also the right of Figure~\ref{fig:u_w}). 
Hence, $u_{\mu^{\prime}} = u_{m}$.

(ii) 
We show that we can get $u_{\mu^{\prime}}$ in $O(\alpha)$ time. 
Recall that $\inspos - p_{j} = B_{F}(v_{\mathsf{gnext}}, u_{j})$ by the proof of Lemma~\ref{lem:edge_label} 
for (i) node $v_{g} \in V$ searched for in the replace-node step 
and (ii) node $v_{\mathsf{gnext}} \in V$ next to $v_{g}$ in the doubly linked list before executing the replace-node step. 
$p_{s+1}-p_{s}$ is stored in the label of node $u_{s}$, 
and hence, we can determine whether $B_{F}(v_{j}, u_{s}) + (\inspos - p_{j}) < (p_{s+1} - p_{s})$ or not in $O(1)$ time. 
If $B_{F}(v_{j}, u_{s}) + (\inspos - p_{j}) < (p_{s+1} - p_{s})$, 
then $u_{s} = u_{\mu^{\prime}}$ by Lemma~\ref{lem:comp_vj}-(i). 
Otherwise, $u_{\mu^{\prime}} = u_{m}$. 
Node $u_{m}$ is a node whose directed edge points to $v_{j}$, 
and we can verify whether $u = u_{m}$ in $O(1)$ time for a given node $u \in U$. 
The number of directed edges pointing to $v_{j}$ is no more than $\alpha$ because $D^{\alpha}_{\delta}$ is $\alpha$-balanced. 
Hence, we can find $u_{\mu^{\prime}}$ in $O(\alpha)$ time. 

Next, we can compute integer value $((\LF_{\delta}(p_{j}) + (\inspos - p_{j})) - p_{\mu^{\prime}})$ in $O(1)$ time after finding $u_{\mu^{\prime}}$ 
because $p_{\mu^{\prime}} - \LF_{\delta}(p_{j})$ is stored in the label of directed edge $(u_{\mu^{\prime}}, v_{j}) \in E_{L}$. 
Therefore, Lemma~\ref{lem:comp_vj}-(ii) holds.
\end{proof}

\subparagraph{Case (iv).}
The F-interval of $v_{x^{\prime}}$ starts at position $\LF_{\delta}(p_{i}) + \shfn(\LF_{\delta}(p_{i}))$ on $F_{\delta+1}$ by Lemma~\ref{lem:shiftV}-(iii), $\LF_{\delta}(p_{i}) = 1$, and $\shfn(\LF_{\delta}(p_{i})) = 0$. 
Before the update operation is executed, 
$v_{i}$ is connected to node $u_{1} \in U$ by directed edge $(v_{i}, u_{1}) \in E_{F}$, 
and the DBWT-repetition of $u_{1}$ contains position $\LF_{\delta}(p_{i})$ on $L_{\delta}$. 
The label of directed edge $(v_{i}, u_{1})$ is $B_{F}(v_{i}, u_{1}) = \LF_{\delta}(p_{i}) - p_{1}$, 
and hence we can compute $u_{h^{\prime}}$ and $e$ using Lemma~\ref{lem:v_e_computation_lemma_V} in $O(1)$ time.

\subparagraph{Case (v).}
The F-interval of $v_{h}$ starts at position $\LF_{\delta}(p_{h}) + \shfn(\LF_{\delta}(p_{h}))$ on $F_{\delta+1}$ by Lemma~\ref{lem:shiftV}-(i). 
Before the update operation is executed, 
$v_{h}$ is connected to a node $u_{\mu^{\prime}}$ by a directed edge in $E_{F}$, 
and the DBWT-repetition of $u_{\mu^{\prime}}$ contains position $\LF_{\delta}(p_{h})$ on $L_{\delta}$. 
The label of the directed edge is $\LF_{\delta}(p_{h}) - p_{\mu^{\prime}}$, 
and hence we can compute $u_{h^{\prime}}$ and $e$ using Lemma~\ref{lem:v_e_computation_lemma_V} in $O(1)$ time.

Therefore, Lemma~\ref{lem:computingedge} holds.

\subsection{Proof of Lemma~\ref{lem:update_edge_count}}\label{app:update_edge_count}
\subparagraph{Proof of Lemma~\ref{lem:update_edge_count}-(i).}
The following lemma states the relationship between (i) the set $E_{F}$ in $\graphds(D^{\alpha}_{\delta})$ 
and (ii) the set $E_{F}$ in $\graphds(D^{2\alpha + 1}_{\delta+1})$. 
\begin{lemma}\label{lem:new_edge_goal_F}
The following five statements hold: 
(i) For a node $u_{x} \in \{ u_{1}, u_{2}, \ldots, u_{k} \} \setminus \{ u_{i}, u_{j} \}$, 
if directed edge $(v_{x^{\prime}}, u_{x})$ is contained in the $E_{F}$ stored in $\graphds(D^{2\alpha + 1}_{\delta+1})$, 
then directed edge $(v_{i}, u_{x})$ is contained in the $E_{F}$ stored in $\graphds(D^{\alpha}_{\delta})$. 
(ii) 
For a node $u_{x} \in \{ u_{1}, u_{2}, \ldots, u_{k} \} \setminus \{ u_{i}, u_{j} \}$, 
if directed edge $(v_{j^{\prime}}, u_{x})$ is contained in the $E_{F}$ stored in $\graphds(D^{2\alpha + 1}_{\delta+1})$, 
then directed edge $(v_{j}, u_{x})$ is contained in the $E_{F}$ stored in $\graphds(D^{\alpha}_{\delta})$. 
(iii) Directed edge $(v_{i^{\prime}}, u_{x^{\prime}})$ is contained in the $E_{F}$ stored in $\graphds(D^{2\alpha + 1}_{\delta+1})$. 
(iv) For a node $v_{x} \in \{ v_{1}, v_{2}, \ldots, v_{k} \} \setminus \{ v_{i}, v_{j} \}$, 
if either directed edge $(v_{x}, u_{j^{\prime}})$ or $(v_{x}, u_{j^{\prime}+1})$ is contained in the $E_{F}$ stored in $\graphds(D^{2\alpha + 1}_{\delta+1})$, 
then directed edge $(v_{x}, u_{j})$ is contained in the $E_{F}$ stored in $\graphds(D^{\alpha}_{\delta})$. 
(v) For two nodes $u_{x} \in \{ u_{1}, u_{2}, \ldots, u_{k} \} \setminus \{ u_{i}, u_{j} \}$ and $v_{y} \in \{ v_{1}, v_{2}, \ldots, v_{k} \} \setminus \{ v_{i}, v_{j} \}$, 
directed edge $(v_{y}, u_{x})$ is contained in the $E_{F}$ stored in LF-interval graph $\graphds(D^{2\alpha + 1}_{\delta+1})$ 
if and only if $(v_{y}, u_{x})$ is contained in the $E_{F}$ stored in LF-interval graph $\graphds(D^{\alpha}_{\delta})$.  
\end{lemma}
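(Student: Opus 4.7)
The plan is to prove each of the five statements by carefully tracking the covering relationship between DBWT-repetitions and F-intervals through the shift function $\shfn$ introduced in the proof of Lemma~\ref{lem:computingedge}. Recall that, by definition, $(v_a, u_b) \in E_F$ holds iff the F-interval represented by $v_a$ contains the starting position of the DBWT-repetition represented by $u_b$. I will use Lemmas~\ref{lem:shiftU} and \ref{lem:shiftV} to express the new starting positions of DBWT-repetitions and F-intervals (on $L_{\delta+1}$ and $F_{\delta+1}$) in terms of the old ones (on $L_\delta$ and $F_\delta$), and then check the containment condition on both sides.

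First I would dispose of statement (iii): by Lemma~\ref{lem:shiftV}-(ii), $v_{i^{\prime}}$ represents the interval $[\inspos, \inspos]$ on $F_{\delta+1}$, and by Lemma~\ref{lem:shiftU}-(iii), $u_{x^{\prime}}$ represents a substring starting at $\inspos$ on $L_{\delta+1}$, so $(v_{i^{\prime}}, u_{x^{\prime}}) \in E_F$ trivially. Next I would handle statement (v), which is the preservation claim. For unchanged nodes $v_y, u_x$, their F-interval and DBWT-repetition starting positions on $F_{\delta+1}$ and $L_{\delta+1}$ are obtained from the old ones by the shift $\shfn$ (Lemmas~\ref{lem:shiftU}-(i) and \ref{lem:shiftV}-(i)). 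The crucial sub-fact is that $\shfn$ is monotone (it is $0$ for positions $<\inspos$ and $1$ otherwise), and furthermore $\shfn$ is constant on each old F-interval of a node distinct from $v_i$ and $v_j$, as already used in establishing Lemma~\ref{lem:shiftV}. Consequently, $p_x \in [\LF_\delta(p_y), \LF_\delta(p_{y+1}-1)]$ if and only if $p_x + \shfn(p_x)$ lies in the shifted new interval, proving the equivalence.

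For statements (i), (ii), and (iv) I would argue by translating the containment in $\graphds(D^{2\alpha+1}_{\delta+1})$ back to $\graphds(D^\alpha_\delta)$. For (i), $v_{x^{\prime}}$ represents the interval $[1,1]$ on $F_{\delta+1}$ (Lemma~\ref{lem:shiftV}-(iii)); if $(v_{x^{\prime}}, u_x) \in E_F$, then the new DBWT-repetition of $u_x$ starts at position $1$, so by Lemma~\ref{lem:shiftU}-(i) we get $p_x + \shfn(p_x) = 1$, forcing $p_x = 1$, and since $v_i$'s old F-interval is $[1,1]$ (because $\$$ is the smallest character), $(v_i, u_x) \in E_F$ in the old graph. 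For (ii), Lemma~\ref{lem:shiftV}-(iv) gives that the new interval of $v_{j^{\prime}}$ is contained in the shifted image of $v_j$'s old interval; containment of $p_x + \shfn(p_x)$ in this new interval then implies $p_x \in [\LF_\delta(p_j), \LF_\delta(p_{j+1}-1)]$, i.e.\ $(v_j, u_x) \in E_F$ old. Statement (iv) is symmetric: Lemma~\ref{lem:shiftU}-(iv) shows that the new DBWT-repetitions of $u_{j^{\prime}}$ and $u_{j^{\prime}+1}$ lie within the shifted image of $u_j$'s old DBWT-repetition, so containment of either starting position in the new F-interval of $v_x$ pulls back to containment of $p_j$ in $v_x$'s old F-interval.

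The main obstacle I expect is the bookkeeping around the exceptional nodes $v_i, v_j, u_i, u_j$, where $\shfn$ changes value inside the old interval/repetition. Here one cannot simply invoke monotonicity of the shift on a single interval, and instead must exploit the explicit formulas in Lemmas~\ref{lem:shiftU}-(ii--iv) and \ref{lem:shiftV}-(ii--iv), together with the fact that $\inspos$ is inserted strictly between $p_j$ and $p_{j+1}$ (and between positions of $F_\delta$ via $\LF_\delta$). Once these boundary cases are handled, the remaining verifications are direct arithmetic checks using the identity $\shfn(\mu)=\shfn(\LF_\delta(p_{\mu^\prime}))$ established inside the proof of Lemma~\ref{lem:v_e_computation_lemma_U}.
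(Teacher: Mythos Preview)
Your overall strategy---invoke Lemmas~\ref{lem:shiftU} and \ref{lem:shiftV} to express the new starting positions in terms of the old ones via the shift $\shfn$, then check the containment defining $E_F$---is exactly the paper's approach, and statement~(iii) is handled correctly.

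However, there is a systematic error in your proposal: you have the direction of the $E_F$ containment reversed. By definition, $(v_a,u_b)\in E_F$ means that the DBWT-repetition of $u_b$ covers the \emph{starting position of the F-interval} of $v_a$, i.e.\ $\LF_\delta(p_a)\in[p_b,p_{b+1}-1]$; it is \emph{not} that the F-interval of $v_a$ contains the starting position $p_b$ of $u_b$'s DBWT-repetition (that is the condition for $(u_b,v_a)\in E_L$). This slip propagates through your sketches of (ii), (iv), and (v). For instance, in (ii) you test whether $p_x+\shfn(p_x)$ lies in the new interval of $v_{j'}$ and conclude $p_x\in[\LF_\delta(p_j),\LF_\delta(p_{j+1}-1)]$, but the correct hypothesis is that $\LF_\delta(p_j)+\shfn(\LF_\delta(p_j))$ lies in $[p_x+\shfn(p_x),\,p_{x+1}-1+\shfn(p_x)]$, and the correct conclusion is $\LF_\delta(p_j)\in[p_x,p_{x+1}-1]$. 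Similarly, your sketch of (iv) speaks of ``containment of $p_j$ in $v_x$'s old F-interval,'' which is again the $E_L$ relation rather than $E_F$.

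Once the definition is corrected, your plan goes through essentially as in the paper: for (v) one uses that $\shfn$ is constant on the interval $[p_x,p_{x+1}-1]$ of $u_x$ (since $u_x\neq u_j$), so $\LF_\delta(p_y)\in[p_x,p_{x+1}-1]$ iff $\LF_\delta(p_y)+\shfn(\LF_\delta(p_y))\in[p_x+\shfn(p_x),\,p_{x+1}-1+\shfn(p_x)]$; for (ii) the starting position of $v_{j'}$'s new F-interval equals $\LF_\delta(p_j)+\shfn(\LF_\delta(p_j))$, and the same constancy of $\shfn$ on $u_x$'s repetition pulls the containment back; for (iv) one splits into the two halves $[p_j,\inspos-1]$ and $[\inspos+1,p_{j+1}]$ and uses $\shfn(\LF_\delta(p_x))\in\{0,1\}$ accordingly. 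Fix the orientation and the remaining arithmetic is exactly as you outline.
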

\begin{proof}
(i) The DBWT-repetition of $u_{x}$ starts at the first position on $L_{\delta+1}$ 
because the F-interval of $v_{x^{\prime}}$ is $[1, 1]$ on $F_{\delta+1}$ by Lemma~\ref{lem:shiftV}-(iii). 
The DBWT-repetition of $u_{x}$ starts at the first position on $L_{\delta}$~(i.e., $p_{x} = 1$) by Lemma~\ref{lem:shiftU}-(i) and $\shfn(1) = 0$. 
The F-interval of $v_{i}$ is $[1, 1]$ on $F_{\delta}$, 
and hence the DBWT-repetition of $u_{x}$ covers the starting position of the F-interval of $v_{i}$ on $L_{\delta}$, 
i.e., Lemma~\ref{lem:new_edge_goal_F}-(i) holds.

(ii) The DBWT-repetition of $u_{x}$ is substring $L_{\delta+1}[p_{x} + \shfn(p_{x})..p_{x+1} - 1 + \shfn(p_{x})]$ by Lemma~\ref{lem:shiftU}-(i), 
and the F-interval of $v_{j^{\prime}}$ starts at position $\LF_{\delta}(p_{j}) + \shfn(\LF_{\delta}(p_{j}))$ on $F_{\delta+1}$ by Lemma~\ref{lem:shiftV}-(iv). 
$\LF_{\delta}(p_{j}) + \shfn(\LF_{\delta}(p_{j})) \in [p_{x} + \shfn(p_{x}), p_{x+1} - 1 + \shfn(p_{x})]$ and $\shfn(\LF_{\delta}(p_{j})) = \shfn(p_{x})$ because 
the DBWT-repetition of $u_{x}$ covers the starting position of the F-interval of $v_{j^{\prime}}$. 
$\LF_{\delta}(p_{j}) \in [p_{x}, p_{x+1} - 1]$ by $\shfn(\LF_{\delta}(p_{j})) = \shfn(p_{x})$ 
and $\LF_{\delta}(p_{j}) + \shfn(\LF_{\delta}(p_{j})) \in [p_{x} + \shfn(p_{x}), p_{x+1} - 1 + \shfn(p_{x})]$. 
Before the update operation is executed, 
the DBWT-repetition of $u_{x}$ is substring $L_{\delta}[p_{x}..p_{x+1} - 1]$, 
and the F-interval of $v_{j}$ starts at position $\LF_{\delta}(p_{j})$ on $F_{\delta}$. 
The DBWT-repetition of $u_{x}$ covers the starting position of the F-interval of $v_{j}$ on $L_{\delta}$ by $\LF_{\delta}(p_{j}) \in [p_{x}, p_{x+1} - 1]$, 
and hence Lemma~\ref{lem:new_edge_goal_F}-(ii) holds. 

(iii) The DBWT-repetition of $u_{x^{\prime}}$ starts at position $\inspos$ on $L_{\delta+1}$ by Lemma~\ref{lem:shiftU}-(iii), 
and the F-interval of $u_{i^{\prime}}$ starts at position $\inspos$ on $F_{\delta+1}$ by Lemma~\ref{lem:shiftV}-(ii). 
Hence, the DBWT-repetition of $u_{x^{\prime}}$ covers the starting position $\inspos$ of the F-interval of $v_{i^{\prime}}$ on $L_{\delta}$, 
i.e., Lemma~\ref{lem:new_edge_goal_F}-(iii) holds. 

(iv) The DBWT-repetitions of $u_{j^{\prime}}$ and $u_{j^{\prime}+1}$ are substrings 
$L_{\delta+1}[p_{j}..\inspos-1]$ and $L_{\delta+1}[\inspos+1..p_{j+1}]$, respectively, by Lemma~\ref{lem:shiftU}-(iv). 
The F-interval of $v_{x}$ starts at position $\LF_{\delta}(p_{x}) + \shfn(\LF_{\delta}(p_{x}))$ on $F_{\delta+1}$ by Lemma~\ref{lem:shiftV}-(i). 
If the DBWT-repetitions of $u_{j^{\prime}}$ covers the starting position of the F-interval of $v_{x}$, 
then $\LF_{\delta}(p_{x}) < \inspos$ and $\LF_{\delta}(p_{x}) + \shfn(\LF_{\delta}(p_{x})) \in [p_{j}, \inspos-1]$. 
In addition, $\shfn(\LF_{\delta}(p_{x})) = 0$ by $\LF_{\delta}(p_{x}) < \inspos$, 
and hence $\LF_{\delta}(p_{x}) \in [p_{j}, \inspos-1]$. 
The DBWT-repetition of $u_{j}$ is $L_{\delta}[p_{j}..p_{j+1}-1]$, 
and $[p_{j}, p_{j+1}-1] \supseteq [p_{j}, \inspos-1]$ by $p_{j} < \inspos < p_{j+1}$. 
Hence, the DBWT-repetition of $u_{j}$ covers the starting position $\LF_{\delta}(p_{x})$ of the F-interval of $v_{x}$ on $L_{\delta}$. 

Otherwise~(i.e., the DBWT-repetitions of $u_{j^{\prime}+1}$ cover the starting position of the F-interval of $v_{x}$), 
$\LF_{\delta}(p_{x}) > \inspos$ and $\LF_{\delta}(p_{x}) + \shfn(\LF_{\delta}(p_{x})) \in [\inspos + 1, p_{j+1}]$. 
$\shfn(\LF_{\delta}(p_{x})) = 1$ by $\LF_{\delta}(p_{x}) > \inspos$, 
and hence $\LF_{\delta}(p_{x}) \in [\inspos, p_{j+1}-1]$. 
The DBWT-repetition $L_{\delta}[p_{j}..p_{j+1}-1]$ of $u_{j}$ covers the starting position $\LF_{\delta}(p_{x})$ of the F-interval of $v_{x}$ on $L_{\delta}$ by $[p_{j}, p_{j+1}-1] \supseteq [\inspos, p_{j+1}-1]$ and $\LF_{\delta}(p_{x}) \in [\inspos, p_{j+1}-1]$. 
Therefore, Lemma~\ref{lem:new_edge_goal_F}-(iv) holds. 

(v) The DBWT-repetition of $u_{x}$ is substring $L_{\delta+1}[p_{x} + \shfn(p_{x})..p_{x+1} - 1 + \shfn(p_{x})]$ by Lemma~\ref{lem:shiftU}-(i), 
and the F-interval of $v_{y}$ starts at position $\LF_{\delta}(p_{y}) + \shfn(\LF_{\delta}(p_{y}))$ on $F_{\delta+1}$ by Lemma~\ref{lem:shiftV}-(i). 
If the DBWT-repetition of $u_{x}$ covers the starting position of the F-interval of $v_{y}$ on $L_{\delta+1}$, 
then $\LF_{\delta}(p_{y}) \in [p_{x}, p_{x+1} - 1]$ by $\shfn(p_{x}) = \shfn(\LF_{\delta}(p_{y}))$ and $\LF_{\delta}(p_{y}) + \shfn(\LF_{\delta}(p_{y})) \in [p_{x} + \shfn(p_{x}), p_{x+1} - 1 + \shfn(p_{x})]$. 
Before the update operation is executed, 
the DBWT-repetition of $u_{x}$ is $L_{\delta}[p_{x}..p_{x+1} - 1]$ on $L_{\delta}$, 
and the F-interval of $v_{y}$ starts at position $\LF_{\delta}(p_{y})$ on $F_{\delta}$. 
Hence, the DBWT-repetition of $u_{x}$ covers the starting position $\LF_{\delta}(p_{y})$ of the F-interval of $v_{y}$ on $L_{\delta}$, i.e., $(v_{y}, u_{x})$ is contained in the $E_{F}$ stored in LF-interval graph $\graphds(D^{\alpha}_{\delta})$. 

By contrast, if $(v_{y}, u_{x})$ is contained in the $E_{F}$ stored in LF-interval graph $\graphds(D^{\alpha}_{\delta})$, 
then $\LF_{\delta}(p_{y}) + \shfn(\LF_{\delta}(p_{y})) \in [p_{x} + \shfn(p_{x}), p_{x+1} - 1 + \shfn(p_{x})]$ by 
$\LF_{\delta}(p_{y}) \in [p_{x}, p_{x+1} - 1]$ and $\shfn(p_{x}) = \shfn(\LF_{\delta}(p_{y}))$. 
The DBWT-repetition $L_{\delta+1}[p_{x} + \shfn(p_{x})..p_{x+1} - 1 + \shfn(p_{x})]$ of $u_{x}$ covers the starting position $\LF_{\delta}(p_{y}) + \shfn(\LF_{\delta}(p_{y}))$ of the F-interval of $v_{y}$ on $L_{\delta+1}$ by 
$\LF_{\delta}(p_{y}) + \shfn(\LF_{\delta}(p_{y})) \in [p_{x} + \shfn(p_{x}), p_{x+1} - 1 + \shfn(p_{x})]$. 
Therefore, Lemma~\ref{lem:new_edge_goal_F}-(v) holds.
\end{proof}

Function $\eqfn(u, u^{\prime})$ returns $1$ if two given nodes $u$ and $u^{\prime}$ are the same; otherwise, it returns $0$. 
The following lemma uses this function to bound the number of directed edges pointing to a given node $u \in U$ for $\graphds(D^{2\alpha+1}_{\delta+1})$. 
\begin{lemma}\label{lem:the_number_of_directed_edges_to_u} 
Node $u_{h} \in U$ is a node connected to the head of the directed edge starting at the new node $v_{j^{\prime} + 1} \in V$ created in the split-node step. 
After the update operation of $\graphds(D^{\alpha}_{\delta})$ has been executed, 
the following three statements hold: 
(i) for any node $u \in \{ u_{1}, u_{2}, \ldots, u_{k} \} \setminus \{ u_{i}, u_{j} \}$, 
the number of directed edges pointing to $u$ is at most $\alpha - 1 + \eqfn(u, u_{h})$; 
(ii) for any node $u \in \{ u_{i^{\prime}}, u_{x^{\prime}} \}$, 
the number of directed edges pointing to $u$ is at most one; 
(iii) for any node $u \in \{ u_{j^{\prime}}, u_{j^{\prime}+1} \}$,
the number of directed edges pointing to $u$ is at most $\alpha - 1 + \eqfn(u, u_{h})$. 
\end{lemma}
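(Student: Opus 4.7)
\newsentence\bgroup
The plan is to analyze the incoming directed edges to each node $u$ in $\graphds(D^{2\alpha+1}_{\delta+1})$ by classifying them according to the source $v \in V$, using Lemma~\ref{lem:new_edge_goal_F} together with the representations in Lemmas~\ref{lem:shiftU} and \ref{lem:shiftV}. The central idea is that every incoming edge $(v, u) \in E_F$ in the new graph either corresponds to a unique edge in the old graph $\graphds(D^{\alpha}_{\delta})$ with target in $\{u, u_i, u_j\}$, or is the single newly-created edge $(v_{j^\prime+1}, u_h)$. Because $D^{\alpha}_{\delta}$ is $\alpha$-balanced, every node of the old graph has in-degree at most $\alpha - 1$ in $E_F$, and pushing this bound through the correspondence yields the claimed bounds.

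For (i), fix $u \in \{u_1, \ldots, u_k\} \setminus \{u_i, u_j\}$. Every incoming edge $(v, u) \in E_F$ of the new graph falls into one of five disjoint cases: (a) $v \in V \setminus \{v_i, v_j, v_{i^\prime}, v_{x^\prime}, v_{j^\prime}, v_{j^\prime+1}\}$, which by Lemma~\ref{lem:new_edge_goal_F}-(v) corresponds to the same edge $(v, u)$ in the old graph; (b) $v = v_{x^\prime}$, which by Lemma~\ref{lem:new_edge_goal_F}-(i) implies $(v_i, u) \in E_F$ in the old graph; (c) $v = v_{j^\prime}$, which by Lemma~\ref{lem:new_edge_goal_F}-(ii) implies $(v_j, u) \in E_F$ in the old graph; (d) $v = v_{i^\prime}$, which cannot occur since $v_{i^\prime}$'s F-interval is $[\inspos, \inspos]$ and $u \neq u_{x^\prime}$; (e) $v = v_{j^\prime+1}$, which by definition occurs iff $u = u_h$. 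Cases (a)--(c) give a \emph{distinct} old edge with target $u$ for each new edge (since $v_{x^\prime}$, $v_{j^\prime}$, and the preserved sources collectively map to distinct old sources $v_i$, $v_j$, and themselves), so their total is at most the old in-degree of $u$, which is $\leq \alpha - 1$. Adding case (e) yields $\alpha - 1 + \eqfn(u, u_h)$. The argument for (iii) is parallel: by Lemma~\ref{lem:new_edge_goal_F}-(iv) the preserved sources correspond to old edges with target $u_j$, and a short position-containment check using Lemmas~\ref{lem:shiftU}-(iv) and \ref{lem:shiftV}-(iv) shows that edges from $v_{x^\prime}$ and $v_{j^\prime}$ to $u \in \{u_{j^\prime}, u_{j^\prime+1}\}$ also correspond respectively to $(v_i, u_j)$ and $(v_j, u_j)$ in the old graph; the correspondence is injective, so the total is again at most the old in-degree of $u_j$, plus $\eqfn(u, u_h)$ from $v_{j^\prime+1}$.

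For (ii), note that by Lemma~\ref{lem:shiftU}-(ii,iii) both $u_{i^\prime}$ and $u_{x^\prime}$ represent single-position DBWT-repetitions of $L_{\delta+1}$. Since the F-intervals form a partition of $[1, \delta+1]$, at most one F-interval can start at any single position, and hence each of $u_{i^\prime}$ and $u_{x^\prime}$ has at most one incoming edge in $E_F$.

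The main obstacle is establishing, in case (iii), that the contributions from $v_{x^\prime}$ and $v_{j^\prime}$ to the targets $u_{j^\prime}$ and $u_{j^\prime+1}$---cases not directly addressed by the four explicit items of Lemma~\ref{lem:new_edge_goal_F}---are correctly absorbed into the count of old edges into $u_j$. The verification proceeds by writing the relevant F-interval starting positions with the shift function $\shfn$ and splitting on whether each starting position is below or above $\inspos$, which pins down the value of $\shfn$ and reduces each subcase to a containment statement of the form $\LF_{\delta}(p_{j}) \in [p_{j}, p_{j+1}-1]$ in the old graph, i.e., to $(v_j, u_j) \in E_F$ (and analogously $(v_i, u_j) \in E_F$ for $v_{x^\prime}$). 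Once this bookkeeping is in place, the bound follows from the $\alpha$-balanced hypothesis on $D^{\alpha}_{\delta}$.
\egroup
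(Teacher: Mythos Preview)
Your proposal is correct and follows essentially the same approach as the paper's proof: both arguments classify the incoming $E_F$-edges to a node $u$ by their source (preserved nodes versus the new nodes $v_{x'}, v_{i'}, v_{j'}, v_{j'+1}$), use Lemma~\ref{lem:new_edge_goal_F} to relate each contribution back to an edge in the old graph targeting $u$ (or $u_j$ in part~(iii)), and conclude from the $\alpha$-balanced hypothesis that these contributions sum to at most $\alpha-1$, with only the $v_{j'+1}$-edge providing the extra $\eqfn(u,u_h)$. The paper phrases the bookkeeping as ``subtract the contributions of $v_i,v_j$ from the old in-degree, then add back what $v_{x'},v_{j'}$ can contribute'' (introducing auxiliary nodes $u_s,u_{s'}$), while you phrase it as an injective map from new edges to old edges with the same target; these are equivalent formulations of the same count.
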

\begin{proof}
(i) 
Let $u_{s}$ and $u_{s^{\prime}}$ be the two nodes connected to the heads of the directed edges starting from $v_{i}$ and $v_{j}$, 
respectively, in $\graphds(D^{\alpha}_{\delta})$~(i.e., $(v_{i}, u_{s}), (v_{j}, u_{s^{\prime}}) \in E_{F}$). 
The number of directed edges starting from nodes in $\{ v_{1}, v_{2}, \ldots, v_{k} \} \setminus \{ v_{i}, v_{j} \}$ and pointing to $u$ 
is at most $\alpha - (1 + \eqfn(u, u_{s}) + \eqfn(u, u_{s^{\prime}}))$ in $\graphds(D^{\alpha}_{\delta})$ 
because the number of directed edges pointing to $u$ is at most $(\alpha - 1)$ in $\graphds(D^{\alpha}_{\delta})$. 
By Lemma~\ref{lem:new_edge_goal_F}-(v), in $\graphds(D^{2\alpha + 1}_{\delta+1})$, 
the number of directed edges starting from nodes in $\{ v_{1}, v_{2}, \ldots, v_{k} \} \setminus \{ v_{i}, v_{j} \}$ and pointing to $u$ 
is at most $\alpha - (1 + \eqfn(u, u_{s}) + \eqfn(u, u_{s^{\prime}}))$. 
The new directed edges pointing to $u$ is at most three by Lemma~\ref{lem:new_edge_goal_F}, 
and the directed edges start from $v_{j^{\prime}}, v_{j^{\prime}+1}$, and $v_{x^{\prime}}$. 
If the directed edge starting from $v_{x^{\prime}}$ points to $u$, 
then $\eqfn(u, u_{s}) = 1$~(i.e., $u$ is connected to the head of the directed edge starting at $v_{i}$) by Lemma~\ref{lem:new_edge_goal_F}-(i). 
Similarly, the directed edge starting from $v_{j^{\prime}}$ points to $u$, 
then $\eqfn(u, u_{s^{\prime}}) = 1$ by Lemma~\ref{lem:new_edge_goal_F}-(ii). 
Hence, in $\graphds(D^{2\alpha + 1}_{\delta+1})$, 
the number of directed edges pointing to $u$ is at most $\alpha - 1 + \eqfn(u, u_{h})$. 

(ii) The number of directed edges pointing to a node is at most one if 
the node represents a DBWT-repetition of length $1$. 
The two nodes $u_{i^{\prime}}$ and $u_{x^{\prime}}$ 
represent two DBWT-repetitions of length $1$. 
Hence, Lemma~\ref{lem:the_number_of_directed_edges_to_u}-(ii) holds. 

(iii) Because $u$ is a new node, 
$u$ is only connected to the heads of new directed edges in $E_{F}$. 
The number of new directed edges starting from nodes in $\{ v_{1}, v_{2}, \ldots, v_{k} \} \setminus \{ v_{i}, v_{j} \}$ 
and pointing to $u$ is at most $\alpha - (1 + \eqfn(u_{j}, u_{s}) + \eqfn(u_{j}, u_{s^{\prime}}) )$ by Lemma~\ref{lem:new_edge_goal_F}-(iv) 
because the number of directed edges pointing to $u_{j}$ is at most $(\alpha - 1)$ in $\graphds(D^{\alpha}_{\delta})$. 
The number of new directed edges starting from new nodes in $V$ and pointing to $u$ are at most three, 
and the directed edges start from $v_{j^{\prime}}, v_{j^{\prime}+1}$, and $v_{x^{\prime}}$. 
If $u$ is connected to the head of the directed edge starting at $v_{x^{\prime}}$, 
then $\eqfn(u_{j}, u_{s}) = 1$~(i.e., $u_{j}$ is connected to the head of the directed edge starting at $v_{i}$) 
because the DBWT-repetition of $u_{j}$ starts at the first position of $L_{\delta}$ in this case, 
and the F-interval of $v_{i}$ starts at the first position of $F_{\delta}$. 
Similarly, if $u$ is connected to the head of the directed edge starting at $v_{j^{\prime}}$, 
then $\eqfn(u_{j}, u_{s^{\prime}}) = 1$~(i.e., $u_{j}$ is connected to the head of the directed edge starting at $v_{j}$). 
Hence, 
the number of directed edges pointing to $u$ is at most $\alpha - 1 + \eqfn(u, u_{h})$ in $\graphds(D^{2\alpha + 1}_{\delta+1})$.
\end{proof}
As a result, Lemma~\ref{lem:update_edge_count}-(i) holds by Lemma~\ref{lem:the_number_of_directed_edges_to_u}.

\subparagraph{Proof of Lemma~\ref{lem:update_edge_count}-(ii).}
Set $E_{L}$ is symmetric to set $E_{F}$, and hence, the following two lemmas hold. 

\begin{lemma}\label{lem:new_edge_goal_U}
The following five statements hold: 
(i) For a node $v \in \{ v_{1}, v_{2}, \ldots, v_{k} \} \setminus \{ v_{i}, v_{j} \}$, 
if directed edge $(u_{i^{\prime}}, v)$ is contained in the $E_{L}$ stored in $\graphds(D^{2\alpha + 1}_{\delta+1})$, 
then directed edge $(u_{i}, v)$ is contained in the $E_{L}$ stored in $\graphds(D^{\alpha}_{\delta})$. 
(ii) 
For a node $v \in \{ v_{1}, v_{2}, \ldots, v_{k} \} \setminus \{ v_{i}, v_{j} \}$, 
if directed edge $(u_{j^{\prime}}, v)$ is contained in the $E_{L}$ stored in $\graphds(D^{2\alpha + 1}_{\delta+1})$, 
then directed edge $(u_{j}, v)$ is contained in the $E_{L}$ stored in $\graphds(D^{\alpha}_{\delta})$. 
(iii) Directed edge $(u_{x^{\prime}}, v_{i^{\prime}})$ is contained in the $E_{L}$ stored in $\graphds(D^{2\alpha + 1}_{\delta+1})$. 
(iv) For a node $u \in \{ u_{1}, u_{2}, \ldots, u_{k} \} \setminus \{ u_{i}, u_{j} \}$, 
if either directed edge $(u, v_{j^{\prime}})$ or $(u, v_{j^{\prime}+1})$ is contained in the $E_{L}$ stored in $\graphds(D^{2\alpha + 1}_{\delta+1})$, 
then directed edge $(u, v_{j})$ is contained in the $E_{L}$ stored in $\graphds(D^{\alpha}_{\delta})$. 
(v) For two nodes $u \in \{ u_{1}, u_{2}, \ldots, u_{k} \} \setminus \{ u_{i}, u_{j} \}$ and $v \in \{ v_{1}, v_{2}, \ldots, v_{k} \} \setminus \{ v_{i}, v_{j} \}$, 
directed edge $(u, v)$ is contained in the $E_{L}$ stored in LF-interval graph $\graphds(D^{2\alpha + 1}_{\delta+1})$ 
if and only if $(u, v)$ is contained in the $E_{L}$ stored in LF-interval graph $\graphds(D^{\alpha}_{\delta})$. 
\end{lemma}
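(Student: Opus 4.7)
The plan is to mirror the proof of Lemma~\ref{lem:new_edge_goal_F} with the roles of $U$ and $V$ (equivalently, the roles of $L$-positions and $F$-positions) swapped. The definitions of $E_{L}$ and $E_{F}$ are symmetric: $(u_{x}, v_{y}) \in E_{L}$ records that the F-interval of $v_{y}$ covers the starting position of the DBWT-repetition of $u_{x}$, while $(v_{y}, u_{x}) \in E_{F}$ records the opposite covering. Lemmas~\ref{lem:shiftU} and \ref{lem:shiftV} already describe, in terms of $\shfn$, the new DBWT-repetition on $L_{\delta+1}$ and new F-interval on $F_{\delta+1}$ for every node in the updated graph. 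The proof of Lemma~\ref{lem:new_edge_goal_F} uses only these descriptions together with two facts about $\shfn$: it is monotone non-decreasing with its single jump at $\inspos$, and $\shfn(x) = \shfn(y)$ whenever $x$ and $y$ lie in a common interval that does not straddle $\inspos$. Both facts are blind to whether we are working with $L$-positions or $F$-positions, so the same argument scheme applies.

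I would first handle part~(v), the direct counterpart of Lemma~\ref{lem:new_edge_goal_F}-(v): membership of $(u_{x}, v_{y})$ in $E_{L}$ is equivalent to $p_{x}$ lying in the $F$-interval of $v_{y}$; using $\shfn(p_{x}) = \shfn(\LF_{\delta}(p_{y}))$ whenever either side of the equivalence holds, adding and subtracting the common shift translates containment in $L_{\delta}$ to containment in $L_{\delta+1}$ in both directions. Part~(iii) is immediate from Lemma~\ref{lem:shiftU}-(iii) and Lemma~\ref{lem:shiftV}-(ii), since $u_{x^{\prime}}$ and $v_{i^{\prime}}$ both live at position $\inspos$. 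For part~(i), Lemma~\ref{lem:shiftU}-(ii) places the DBWT-repetition of $u_{i^{\prime}}$ at $p_{i} + \shfn(p_{i})$, so an F-interval covering it in the new graph, once the shift is stripped off via the same identity used in part~(v), covers $p_{i}$ in the old graph; hence the covering F-interval equals $v_{i}$ or it is some $v \neq v_{i}, v_{j}$ whose old F-interval contained $p_{i}$, giving $(u_{i}, v) \in E_{L}$ in $\graphds(D^{\alpha}_{\delta})$. Part~(ii) is analogous via Lemma~\ref{lem:shiftU}-(iv): both $u_{j^{\prime}}$ and $u_{j^{\prime}+1}$ have starting positions that, after undoing $\shfn$, land in the old interval $[p_{j}, p_{j+1}-1]$ representing the DBWT-repetition of $u_{j}$.

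Part~(iv) is the direct mirror of Lemma~\ref{lem:new_edge_goal_F}-(iv) and is the only case that requires a small case split: starting from an edge $(u, v_{j^{\prime}})$ (respectively $(u, v_{j^{\prime}+1})$) in the new $E_{L}$, one unfolds the new F-interval of $v_{j^{\prime}}$ (respectively $v_{j^{\prime}+1}$) via Lemma~\ref{lem:shiftV}-(iv), uses $\shfn(\LF_{\delta}(p_{j})) = 0$ and $\shfn(\LF_{\delta}(p_{j}) + (\inspos - p_{j})) = 1$ to deduce that $p_{x}$ lies in $[p_{j}, \inspos-1]$ or $[\inspos, p_{j+1}-1]$ respectively, and in either case concludes $p_{x} \in [\LF_{\delta}(p_{j}), \LF_{\delta}(p_{j+1}-1)]$, which is exactly $(u, v_{j}) \in E_{L}$ in the old graph. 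The main obstacle, such as it is, is bookkeeping: one must keep straight that $\shfn$ acts on both $L$- and $F$-positions by the same rule so that the symmetry with the proof of Lemma~\ref{lem:new_edge_goal_F} is exact; no new idea beyond the shift identity is needed.
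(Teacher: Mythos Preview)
Your overall approach---mirror the proof of Lemma~\ref{lem:new_edge_goal_F} by swapping the roles of $L$- and $F$-positions---is exactly what the paper does; its proof reads, in full, ``This lemma is symmetric to Lemma~\ref{lem:new_edge_goal_F}. We can prove Lemma~\ref{lem:new_edge_goal_U} by modifying the proof of Lemma~\ref{lem:new_edge_goal_F}.''

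There is, however, a concrete error in your treatment of part~(iv). You claim $\shfn(\LF_{\delta}(p_{j})) = 0$ and $\shfn(\LF_{\delta}(p_{j}) + (\inspos - p_{j})) = 1$, and from this deduce that $p_{x}$ lies in $[p_{j}, \inspos-1]$ or $[\inspos, p_{j+1}-1]$. Neither $\shfn$-identity holds in general: the split of $v_{j}$ into $v_{j^{\prime}}$ and $v_{j^{\prime}+1}$ does \emph{not} occur at position $\inspos$ on $F_{\delta+1}$---that is where $u_{j}$ is split on $L_{\delta+1}$, which is the $E_{F}$ side of the story. The correct observation that makes the mirror go through is different: by Lemma~\ref{lem:v_g+1_pos}, $\inspos$ is the starting position of the F-interval of $v_{\mathsf{gnext}}$ on $F_{\delta}$ (or else $\inspos = \delta+1$), so no F-interval in $D^{\alpha}_{\delta}$ straddles $\inspos$, and in particular $\shfn$ is constant---say equal to $s$---on $[\LF_{\delta}(p_{j}), \LF_{\delta}(p_{j+1}-1)]$. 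By Lemma~\ref{lem:shiftV}-(iv) the F-intervals of $v_{j^{\prime}}$ and $v_{j^{\prime}+1}$ on $F_{\delta+1}$ together cover exactly $[\LF_{\delta}(p_{j}) + s, \LF_{\delta}(p_{j+1}-1) + s]$; if $p_{x} + \shfn(p_{x})$ lies in this set, a two-case analysis on $s$ (as in the proof of part~(v)) forces $\shfn(p_{x}) = s$, whence $p_{x} \in [\LF_{\delta}(p_{j}), \LF_{\delta}(p_{j+1}-1)]$. Note also that the intervals $[p_{j}, \inspos-1]$ and $[\inspos, p_{j+1}-1]$ you wrote down are the $L$-intervals of $u_{j^{\prime}}$ and $u_{j^{\prime}+1}$, not the F-intervals of $v_{j^{\prime}}$ and $v_{j^{\prime}+1}$---precisely the bookkeeping hazard you flagged in your last sentence.
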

\begin{proof}
This lemma is symmetric to Lemma~\ref{lem:new_edge_goal_F}. 
We can prove Lemma~\ref{lem:new_edge_goal_U} by modifying the proof of Lemma~\ref{lem:new_edge_goal_F}. 
\end{proof}

\begin{lemma}\label{lem:the_number_of_directed_edges_to_v}
Node $v_{h^{\prime}} \in V$ is a node connected to the head of the directed edge starting at the new node $u_{j^{\prime} + 1} \in U$ created in the split-node step. 
After the update operation of $\graphds(D^{\alpha}_{\delta})$ has been executed, 
the following three statements hold: 
(i) for any node $v \in \{ v_{1}, v_{2}, \ldots, v_{k} \} \setminus \{ v_{i}, v_{j} \}$, 
the number of directed edges pointing to $v$ is at most $\alpha - 1 + \eqfn(v, v_{h^{\prime}})$; 
(ii) for any node $v \in \{ v_{i^{\prime}}, v_{x^{\prime}} \}$, 
the number of directed edges pointing to $v$ is at most one; 
(iii) for any node $v \in \{ v_{j^{\prime}}, v_{j^{\prime}+1} \}$,
the number of directed edges pointing to $v$ is at most $\alpha - 1 + \eqfn(v, v_{h^{\prime}})$. 
\end{lemma}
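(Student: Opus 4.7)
The plan is to mirror the proof of Lemma~\ref{lem:the_number_of_directed_edges_to_u} exactly, exploiting the symmetry between $E_L$ (edges $U\to V$) and $E_F$ (edges $V\to U$). In place of Lemma~\ref{lem:new_edge_goal_F}, I would use Lemma~\ref{lem:new_edge_goal_U}, which classifies, for each possible target $v$, which new edges in $E_L$ can point to $v$ after the update, and relates them to edges present in $\graphds(D^{\alpha}_{\delta})$.

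First, for part (i), I would fix $v \in \{v_1,\dots,v_k\} \setminus \{v_i, v_j\}$ and let $v_s, v_{s'} \in V$ be the heads of the directed edges in $E_L$ starting from $u_i$ and $u_j$ in the input graph $\graphds(D^{\alpha}_{\delta})$, i.e.\ $(u_i, v_s), (u_j, v_{s'}) \in E_L$. Since $D^{\alpha}_{\delta}$ is $\alpha$-balanced, the total number of directed edges pointing to $v$ is at most $\alpha - 1$. The number of these edges starting from $\{u_1,\dots,u_k\}\setminus\{u_i, u_j\}$ is then at most $\alpha - 1 - \eqfn(v, v_s) - \eqfn(v, v_{s'})$; by Lemma~\ref{lem:new_edge_goal_U}-(v) exactly these edges persist in $\graphds(D^{2\alpha+1}_{\delta+1})$. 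I would then enumerate possible new edges from $u_{i'}, u_{j'}, u_{j'+1}, u_{x'}$ that could point to $v$: by Lemma~\ref{lem:new_edge_goal_U}-(i) an edge $(u_{i'}, v)$ forces $v = v_s$, contributing $\eqfn(v, v_s)$; by Lemma~\ref{lem:new_edge_goal_U}-(ii) an edge $(u_{j'}, v)$ forces $v = v_{s'}$, contributing $\eqfn(v, v_{s'})$; by Lemma~\ref{lem:new_edge_goal_U}-(iii) the edge from $u_{x'}$ targets $v_{i'} \neq v$; and the edge from $u_{j'+1}$ contributes at most $\eqfn(v, v_{h'})$ by definition of $v_{h'}$. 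Summing gives the bound $\alpha - 1 + \eqfn(v, v_{h'})$.

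For part (ii), I would argue that $v_{i'}$ and $v_{x'}$ represent F-intervals of length one (the single occurrences in $F_{\delta+1}$ of input character $c$ and of special character $\$$, by Lemma~\ref{lem:shiftV}-(ii,iii)). A length-one F-interval is covered by exactly one DBWT-repetition, so at most one directed edge in $E_L$ can point to it. For part (iii), since $v_{j'}$ and $v_{j'+1}$ are new, any edge targeting them is new, and by Lemma~\ref{lem:new_edge_goal_U}-(iv) any such edge originating in $\{u_1,\dots,u_k\}\setminus\{u_i, u_j\}$ corresponds to an edge in $E_L$ pointing to $v_j$ in the old graph — whose total count is at most $\alpha - 1 - \eqfn(v_j, v_s) - \eqfn(v_j, v_{s'})$. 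The remaining edges targeting $v_{j'}$ or $v_{j'+1}$ come only from $u_{i'}, u_{j'}, u_{j'+1}, u_{x'}$; the edge from $u_{x'}$ targets $v_{i'}$, the edge from $u_{j'+1}$ contributes at most $\eqfn(v, v_{h'})$, and the edges from $u_{i'}$ and $u_{j'}$ can reach $v_{j'}$ or $v_{j'+1}$ only when $\eqfn(v_j, v_s) = 1$ or $\eqfn(v_j, v_{s'}) = 1$, respectively, since the corresponding old DBWT-repetitions ($L_\delta[\reppos]$ and $L_\delta[p_j..p_{j+1}-1]$) must cover the starting positions of the F-intervals of $v_s$ and $v_{s'}$. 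Collecting all contributions yields the claimed bound $\alpha - 1 + \eqfn(v, v_{h'})$.

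The main obstacle will be the bookkeeping in part (iii): carefully identifying in which cases the new edges from $u_{i'}$ and $u_{j'}$ can actually target $v_{j'}$ or $v_{j'+1}$, and showing that any such contribution is already paid for by one of the $\eqfn(v_j, v_s)$ or $\eqfn(v_j, v_{s'})$ terms previously subtracted, so that the total does not exceed $\alpha - 1 + \eqfn(v, v_{h'})$. This requires the same kind of geometric case analysis on $F_{\delta+1}$ that appears in Lemma~\ref{lem:new_edge_goal_U}, and is the counterpart of the delicate argument made for $u_{x^{\prime}}$ and $v_{j^{\prime}}$ in the proof of Lemma~\ref{lem:the_number_of_directed_edges_to_u}-(iii).
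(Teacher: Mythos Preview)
Your proposal is correct and matches the paper's own approach: the paper's proof of this lemma consists solely of the remark that it is symmetric to Lemma~\ref{lem:the_number_of_directed_edges_to_u} and can be obtained by modifying that proof, which is precisely what you do via Lemma~\ref{lem:new_edge_goal_U} in place of Lemma~\ref{lem:new_edge_goal_F}. One small slip in your part~(iii) exposition: for edges in $E_L$ it is the F-interval that covers the starting position of the DBWT-repetition, not the other way around, so the justification for $\eqfn(v_j,v_s)=1$ should read that the F-interval of $v_j$ covers $p_i=\reppos$ (hence $(u_i,v_j)\in E_L$, i.e.\ $v_s=v_j$), and similarly for $v_{s'}$.
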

\begin{proof}
This lemma is symmetric to Lemma~\ref{lem:the_number_of_directed_edges_to_u}. 
We can prove Lemma~\ref{lem:the_number_of_directed_edges_to_v} by modifying the proof of Lemma~\ref{lem:the_number_of_directed_edges_to_u}. 
\end{proof}

Hence, Lemma~\ref{lem:update_edge_count}-(ii) holds by Lemma~\ref{lem:the_number_of_directed_edges_to_v}.


\subsection{Proof of Theorem~\ref{thm:insert_balance}}\label{app:insert_balance}
(i) The four steps of the update operation and updating the four data structures in the given LF-interval graph take $O(\alpha + \log k)$ time in total.

(ii) Recall that LF-interval graph $\graphds(D^{\alpha}_{\delta})$ consists of nine sets $U, V$, $E_{LF}, E_{L}, {E}_{F}$, 
$B_{U}, B_{V}$, $B_{L}$, and $B_{F}$. 
Clearly, the five sets $U$, $V$, $E_{LF}$, $B_{U}$, and $B_{V}$ are correctly updated. 
The directed edges in the $E_{F}$ stored in LF-interval graph $\graphds(D^{2\alpha + 1}_{\delta+1})$ 
can be divided into (i) directed edges connected to new nodes~(i.e., $u_{i^{\prime}}, v_{i^{\prime}}, u_{j^{\prime}}, v_{j^{\prime}}, u_{j^{\prime}+1}, 
v_{j^{\prime}+1}$, $u_{x^{\prime}}$, and $v_{x^{\prime}}$) 
and (ii) all other directed edges. 
The directed edges connected to new nodes are created in the update-edge step of the update operation. 
By Lemma~\ref{lem:new_edge_goal_F}-(v), the other directed edges are contained in the $E_{F}$ stored in LF-interval graph $\graphds(D^{\alpha}_{\delta})$. 
Each directed edge in the $E_{F}$ stored in $\graphds(D^{\alpha}_{\delta})$ is not removed from $E_{L}$ in the update-edge step 
if and only if the directed edge is contained in the $E_{F}$ stored in $\graphds(D^{2\alpha + 1}_{\delta+1})$. 
Hence, set $E_{F}$ has been correctly updated. 
Similarly, set $E_{L}$ has been correctly updated. 
The labels of new directed edges are computed in the update-edge step, 
and hence, the following lemma ensures that $B_{L}$ and $B_{F}$ are correctly updated. 
\begin{lemma}\label{lem:static_edge}
For two nodes $u_{x} \in \{ u_{1}, u_{2}, \ldots, u_{k} \} \setminus \{ u_{i}, u_{j} \}$ and $v_{y} \in \{ v_{1}, v_{2}, \ldots, v_{k} \} \setminus \{ v_{i}, v_{j} \}$, 
the following two statements hold: 
(i) label $B_{L}(u_{x}, v_{y})$ in $\graphds(D^{2\alpha + 1}_{\delta+1})$ is equal to 
label $B_{L}(u_{x}, v_{y})$ in $\graphds(D^{\alpha}_{\delta})$; 
(ii) label $B_{F}(v_{y}, u_{x})$ in $\graphds(D^{2\alpha + 1}_{\delta+1})$ is equal to 
label $B_{F}(v_{y}, u_{x})$ in $\graphds(D^{\alpha}_{\delta})$. 
\end{lemma}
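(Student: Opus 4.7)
The plan is to compute the updated labels directly from Lemmas~\ref{lem:shiftU} and \ref{lem:shiftV}, and then show that the apparent shift correction actually vanishes because the covering relationship witnessing the edge forces $p_x$ and $\LF_\delta(p_y)$ to lie in the same shift class.

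First I would write out the new edge labels explicitly. Since $u_x \in \{u_1,\ldots,u_k\} \setminus \{u_i, u_j\}$, Lemma~\ref{lem:shiftU}-(i) says that the DBWT-repetition of $u_x$ is the substring $L_{\delta+1}[p_x + \shfn(p_x)..p_{x+1}-1+\shfn(p_x)]$, so its starting position on $L_{\delta+1}$ is $p_x + \shfn(p_x)$. Likewise, by Lemma~\ref{lem:shiftV}-(i), the F-interval represented by $v_y$ on $F_{\delta+1}$ starts at $\LF_\delta(p_y) + \shfn(\LF_\delta(p_y))$. Applying the definition of $B_L$ in $\graphds(D^{2\alpha+1}_{\delta+1})$ then gives the new label
\[
B_L(u_x, v_y) \;=\; \bigl(p_x + \shfn(p_x)\bigr) - \bigl(\LF_\delta(p_y) + \shfn(\LF_\delta(p_y))\bigr),
\]
and symmetrically the new $B_F(v_y, u_x)$ is the negative of this quantity. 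Since the label in $\graphds(D^\alpha_\delta)$ is $p_x - \LF_\delta(p_y)$, the lemma reduces to showing $\shfn(p_x) = \shfn(\LF_\delta(p_y))$.

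Next I would exploit the uniform-shift property used to prove Lemmas~\ref{lem:shiftU} and \ref{lem:shiftV}: all positions inside a single F-interval $[\LF_\delta(p_y),\LF_\delta(p_{y+1}-1)]$ of $D^\alpha_\delta$ share the same $\shfn$ value, and likewise all positions inside a single DBWT-repetition $[p_x, p_{x+1}-1]$ share the same $\shfn$ value. For the statement on $B_L$, I invoke the fact that $(u_x, v_y) \in E_L$ in $\graphds(D^\alpha_\delta)$ holds iff $p_x \in [\LF_\delta(p_y), \LF_\delta(p_{y+1}-1)]$, so $p_x$ sits inside the F-interval of $v_y$; uniform shift inside that F-interval therefore gives $\shfn(p_x) = \shfn(\LF_\delta(p_y))$, and both shift terms cancel. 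For the statement on $B_F$, the symmetric argument applies: $(v_y, u_x) \in E_F$ iff $\LF_\delta(p_y) \in [p_x, p_{x+1}-1]$, which places $\LF_\delta(p_y)$ inside the DBWT-repetition of $u_x$, and uniform shift within that repetition again yields $\shfn(p_x) = \shfn(\LF_\delta(p_y))$.

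There is essentially no obstacle beyond being careful about which edge set ($E_L$ or $E_F$) is being treated and therefore which interval supplies the covering relation and the uniform-shift argument; the computation itself is a one-line cancellation once Lemmas~\ref{lem:shiftU} and \ref{lem:shiftV} are in hand. Consequently the proof can be kept to a single short paragraph presenting the two symmetric cases.
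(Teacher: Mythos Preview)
Your proposal is correct and follows essentially the same line as the paper's proof: compute the new label via Lemmas~\ref{lem:shiftU}-(i) and \ref{lem:shiftV}-(i), reduce to showing $\shfn(p_x)=\shfn(\LF_\delta(p_y))$, and derive this from the covering relation together with the uniform-shift property of the surviving F-interval (for $E_L$) or DBWT-repetition (for $E_F$). The only cosmetic difference is that the paper spells out the uniform shift on the F-interval explicitly as ``$\inspos \le \LF_\delta(p_y)$ or $\LF_\delta(p_{y+1}-1) < \inspos$'' rather than citing it as a previously observed property, and then declares part~(ii) symmetric.
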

\begin{proof}
The DBWT-repetition of $u_{x}$ on $L_{\delta+1}$ is $L_{\delta+1}[p_{x} + \shfn(p_{x})..p_{x+1}-1 + \shfn(p_{x})]$ by Lemma~\ref{lem:shiftU}-(i). 
The F-interval of $v_{y}$ on $F_{\delta+1}$ is $[\LF_{\delta}(p_{y}) + \shfn(\LF_{\delta}(p_{y})), \LF_{\delta}(p_{y+1}-1) + \shfn(\LF_{\delta}(p_{y}))]$ by Lemma~\ref{lem:shiftV}-(ii).

(i) We have $\inspos \leq \LF_{\delta}(p_{y})$ or $\LF_{\delta}(p_{y+1}-1) < \inspos$ because new node $v_{i^{\prime}}$ is inserted into the doubly linked list of $V$ at the position next to a node in the replace-node step of the update operation, and the new node $v_{i^{\prime}}$ represents the F-interval starting at position $\inspos$ on $F_{\delta+1}$. 
We have $\shfn(p_{x}) = \shfn(\LF_{\delta}(p_{y}))$ by (1) $p_{x} \in [\LF_{\delta}(p_{y}), \LF_{\delta}(p_{y+1}-1)]$ and 
(2) $\inspos \leq \LF_{\delta}(p_{y})$ or $\LF_{\delta}(p_{y+1}-1) < \inspos$. 
The label $B_{L}(u_{x}, v_{y})$ in $\graphds(D^{2\alpha + 1}_{\delta+1})$ is $p_{x} + \shfn(p_{x}) - (\LF_{\delta}(p_{y}) + \shfn(\LF_{\delta}(p_{y}))) = p_{x} - \LF_{\delta}(p_{y})$. 
The label $B_{L}(u_{x}, v_{y})$ in $\graphds(D^{\alpha}_{\delta})$ is $p_{x} - \LF_{\delta}(p_{y})$. 
Hence, Lemma~\ref{lem:static_edge}-(i) holds. 

(ii) Lemma~\ref{lem:static_edge}-(ii) is symmetric to Lemma~\ref{lem:static_edge}-(i). 
We can prove Lemma~\ref{lem:static_edge}-(ii) by modifying the proof of Lemma~\ref{lem:static_edge}-(i). 
\end{proof}
Hence, the update operation of the LF-interval graph $\graphds(D^{\alpha}_{\delta})$ 
correctly outputs the LF-interval graph $\graphds(D^{2\alpha + 1}_{\delta+1})$. 

Next, Lemma~\ref{lem:the_number_of_directed_edges_to_u} indicates that 
the DBWT $D^{2\alpha + 1}_{\delta+1}$ contains at most one $\alpha$-heavy DBWT-repetition, 
and the $\alpha$-heavy DBWT-repetition covers at most $\alpha$ starting positions of F-intervals. 
Similarly, 
Lemma~\ref{lem:the_number_of_directed_edges_to_v} indicates that 
the DBWT $D^{2\alpha + 1}_{\delta+1}$ contains at most one $\alpha$-heavy F-interval, 
and the $\alpha$-heavy F-interval covers at most $\alpha$ starting positions of DBWT-repetitions.
Hence, the DBWT $D^{2\alpha + 1}_{\delta+1}$ is $(2\alpha + 1)$-balanced 
with at most two $\alpha$-heavy DBWT-repetitions and at most two $\alpha$-heavy F-intervals. 
Finally, Theorem~\ref{thm:insert_balance}-(ii) holds.


\oldsentence{
\section{Details for Section~\ref{sec:slow_update}}
\subsection{Supplementary examples and figures}\label{app:slow_update:examples}
\subsection{Proof of Lemma~\ref{lem:mod}}\label{app:mod}
\subsection{Proof of Lemma~\ref{lem:insert_and_delete_nodes_by_slow_update}}\label{app:insert_and_delete_nodes_by_slow_update}
}

\section{Details for Section~\ref{sec:fast_update}}
\oldsentence{
\subsection{Supplementary examples and figures}\label{app:fast_update:examples}
\subparagraph{The replacement of two nodes $v_{i}$ and $v_{j}$.}
\subparagraph{Merged nodes for three cases A, B, and C.}
\subsection{Proof of Lemma~\ref{lem:b-tree_update_lemma1}}\label{app:b-tree_update_lemma1}
\subsection{Proof of Lemma~\ref{lem:fast_insertion}}\label{app:fast_insertion}
}

\subsection{Details of the fast update operation}\label{app:details_of_fast_update_operation}
\begin{figure*}[t]
 \begin{center}
		\includegraphics[scale=0.6]{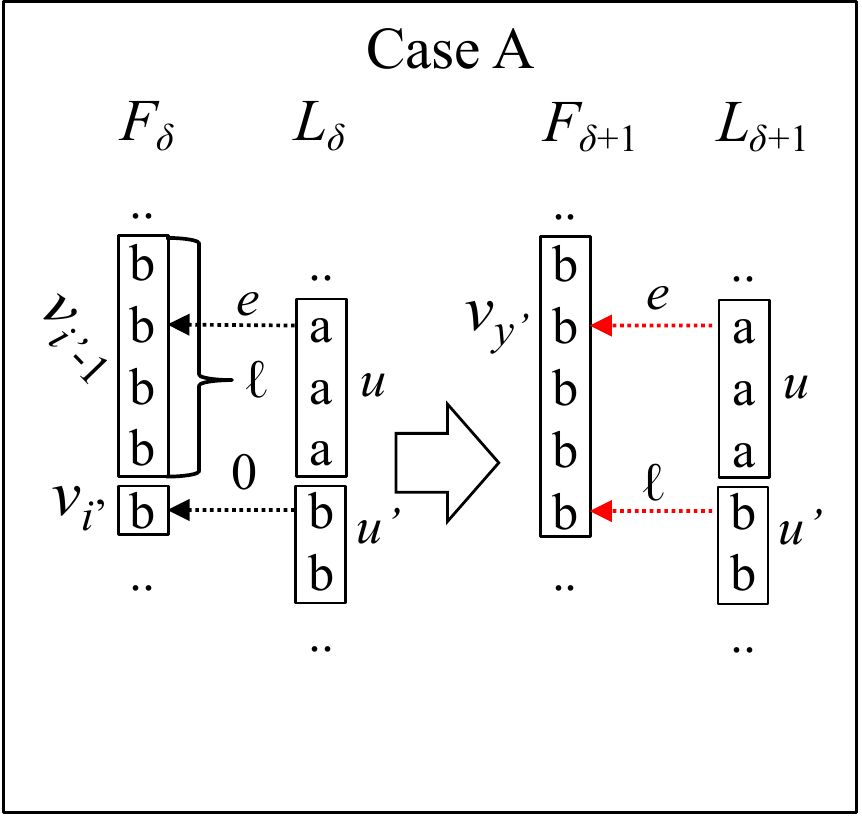}
		\includegraphics[scale=0.6]{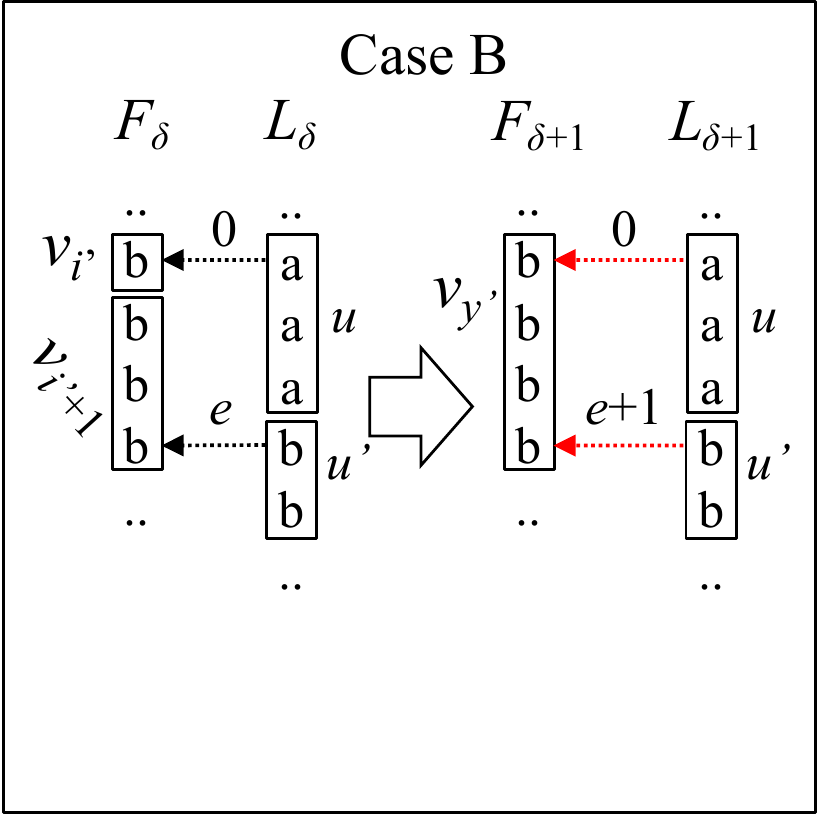}
		\includegraphics[scale=0.6]{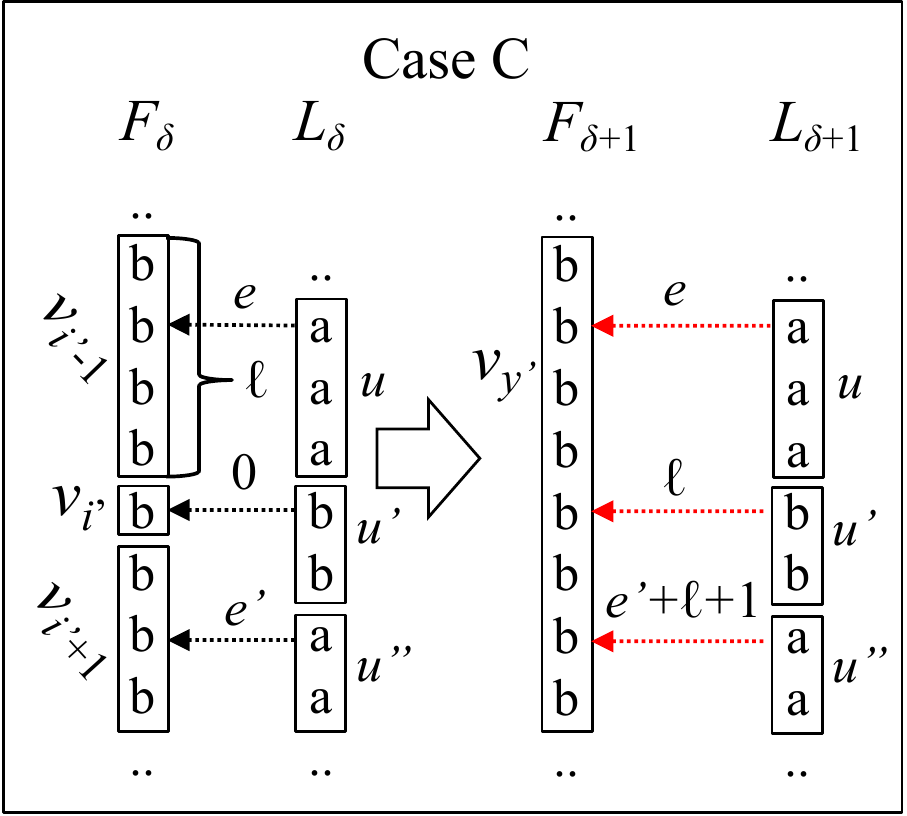}
	  \caption{
	  Three cases A, B, and C for replacing directed edges in $E_{L}$. 
	  Red arrows are new directed edges in $E_{L}$. 
	  The integers on the directed edges are their labels. 
	  }
 \label{fig:replace_edges_in_merge}
 \end{center}
\end{figure*}

In this subsection, we explain the details of the fast update operation 
and show that the operation can be performed in $O(\alpha)$ time. 
The fast update operation consists of the following five phases.  
In the first phase, 
the fast update operation updates the given LF-interval graph $\graphds(D^{\alpha}_{\delta})$ by the update operation presented in Section~\ref{sec:update_graph}. 
Here, the replace-node step is executed in $O(1)$ time using the technique presented in Section~\ref{sec:fast_update}, 
and updating the B-tree of $V$ is skipped. This phase takes $O(\alpha)$ time. 

In the second phase, 
the fast update operation appropriately merges nodes into new nodes $u_{y^{\prime}} \in U$ and $u_{y^{\prime}} \in V$ 
using the procedure for case A, B, or C, presented in Section~\ref{sec:fast_update}. 
The doubly linked lists of $U$ and $V$ are updated according to the merging of nodes. 
This phase takes $O(\alpha)$ time. 

We show that one of the three cases A, B, and C always holds.
The three cases do not hold only if neither $v_{i^{\prime}-1}$ nor $v_{i^{\prime}+1}$ has a label including character $c$, 
but we can show that either of the two nodes always has a label including character $c$ using a property of LF function. 
Hence, one of the three cases always holds. 
Formally, the following lemma holds. 
\begin{lemma}\label{lem:not_case_D}
Either $v_{i^{\prime}-1}$ or $v_{i^{\prime}+1}$ has a label including character $c$ 
for fast update operation $\fastUpdate(\graphds(D^{\alpha}_{\delta}), c)$.
\end{lemma}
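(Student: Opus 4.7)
My plan is to handle the two fast-update triggers symmetrically and reduce the lemma to the non-existence of a particular fixed point of $\LF_\delta$. By the defining condition of $\fastUpdate(\graphds(D^{\alpha}_{\delta}), c)$, at least one of $u_{i-1}, u_{i+1}$ has a label including $c$; by symmetry it suffices to treat the case where $u_{i-1}$ has such a label and prove that $v_{i^{\prime}-1}$ has a label including $c$ (the other case is identical after swapping predecessor and successor). Because $B_U(u_{i^{\prime}-1}) = B_V(v_{i^{\prime}-1})$ by the remark following the definition of $B_V$, this reduces to showing that the predecessor $u_{i^{\prime}-1}$ of $u_{i^{\prime}}$ in the doubly linked list of $U$ still represents a run of $c$'s after the replace-, split-, and insert-node steps.

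The crux is to establish $\inspos \neq \reppos$. Since $u_{i-1}$'s DBWT-repetition is a run of $c$'s ending at position $\reppos - 1$, we have $L_{\delta}[\reppos - 1] = c$. Using $L_{\delta}[\reppos] = \$$, the LF formula gives
\[
\LF_{\delta}(\reppos - 1) = \occ_{<}(L_{\delta}, c) + \rank(L_{\delta}, \reppos - 1, c) = \occ_{<}(L_{\delta}, c) + \rank(L_{\delta}, \reppos, c) = \inspos - 1,
\]
where the last equality follows from the definition of $\inspos$ in Section~\ref{lab:ExBWT}. Since $T_{\delta}$ contains the unique terminator $\$$ at its end, the function $\LF_{\delta}$ is a single cycle of length $\delta \geq 2$ on $\{1, \ldots, \delta\}$ and therefore admits no fixed point; in particular $\LF_{\delta}(\reppos - 1) \neq \reppos - 1$, which forces $\inspos \neq \reppos$. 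The symmetric case is handled identically: from $L_{\delta}[\reppos + 1] = c$ one derives $\LF_{\delta}(\reppos + 1) = \inspos$, and the same fixed-point obstruction rules out $\inspos = \reppos + 1$.

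With $\inspos \neq \reppos$ in hand, I will trace through the update steps to confirm that $u_{i^{\prime}-1}$ carries character $c$. Directly after the replace-node step, the predecessor of $u_{i^{\prime}}$ is $u_{i-1}$, labeled $c$. The split-node step acts on some $u_j$ only when $p_j < \inspos < p_{j+1}$; if $u_j = u_{i-1}$ then both halves inherit the character $c$ and $u_{j^{\prime}+1}$ becomes the new predecessor of $u_{i^{\prime}}$, and otherwise $u_{i-1}$ is untouched. The insert-node step places $u_{x^{\prime}}$ either between $u_{j^{\prime}}$ and $u_{j^{\prime}+1}$ (never between the predecessor and $u_{i^{\prime}}$), at the tail of the list, or previous to the node $u_x$ whose DBWT-repetition starts at $\inspos$; in case (iii), having $u_x = u_{i^{\prime}}$ would force $\inspos = \reppos$, which has just been excluded. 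Hence the predecessor of $u_{i^{\prime}}$ remains a $c$-labeled node, and $v_{i^{\prime}-1}$ inherits this label by $E_{LF}$. The main obstacle is the fixed-point argument ruling out $\inspos = \reppos$; once that is in place, the remainder is a straightforward case split over the insert-node step.
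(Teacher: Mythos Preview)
Your proof is correct and takes essentially the same approach as the paper's: both hinge on the fact that $\LF_\delta$ admits no fixed point (for $\delta\ge 2$), and both use this to rule out the bad placement of $u_{x'}$ immediately adjacent to $u_{i'}$. The only difference is organizational---you compute $\LF_\delta(\reppos-1)=\inspos-1$ (resp.\ $\LF_\delta(\reppos+1)=\inspos$) directly from the LF formula to obtain $\inspos\neq\reppos$ (resp.\ $\inspos\neq\reppos+1$) and then trace through the insert-node cases, whereas the paper argues by contradiction, deriving the fixed point from the assumed bad placement of $u_{x'}$.
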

\begin{proof}
We use proof by contradiction to prove the lemma. 
LF function has the property that $\LF_{\delta}(x) \neq x$ for all $\delta \geq 2$ and $x \in \{ 1, 2, \ldots, \delta \}$. 
By contrast, if neither $v_{i^{\prime}-1}$ nor $v_{i^{\prime}+1}$ has a label including character $c$, 
then there exists an integer $y \in \{ 1, 2, \ldots, \delta \}$ such that $\LF_{\delta}(y) = y$, 
which contradicts the property of LF function. 

We show that $\LF_{\delta}(x) \neq x$ for all $\delta \geq 2$ and $x \in \{ 1, 2, \ldots, \delta \}$. 
Recall that for the suffix $T_{\delta}$ of length $\delta$,  
$x_{1}, x_{2}, \ldots, x_{\delta}$ are a permutation of sequence $1, 2, \ldots, \delta$ 
such that $T_{\delta}[x_{1}..\delta] \prec T_{\delta}[x_{2}..\delta] \prec \cdots \prec T_{\delta}[x_{n}..\delta]$), 
and this permutation is used to define the BWT of $T_{\delta}$. 
For an integer $t \in \{ 1, 2, \ldots, \delta \}$, 
let $t^{\prime} \in \{ 1, 2, \ldots, \delta \}$ be the integer satisfying either of the following two conditions: 
(i) $x_{t} - 1 = x_{t^{\prime}}$ or (ii) $x_{t} = 1$ and $x_{t^{\prime}} = \delta$. 
Then, $\LF_{\delta}(t) = t^{\prime}$ from the definition of LF function~(see also Section~\ref{sec:preliminary}). 
Because $x_{1}, x_{2}, \ldots, x_{\delta}$ are a permutation of sequence $1, 2, \ldots, \delta$, 
$t \neq t^{\prime}$ always holds for $\delta \geq 2$. 
Hence, $\LF_{\delta}(x) \neq x$ for all $\delta \geq 2$ and $x \in \{ 1, 2, \ldots, \delta \}$. 

Next, the fast update operation ensures that 
either or both $u_{i-1}$ and $u_{i+1}$ have a label including character $c$. 
This fact indicates that 
neither $v_{i^{\prime}-1}$ nor $v_{i^{\prime}+1}$ has a label including character $c$ only if 
either of the following two conditions.  
(i) Node $u_{i-1}$ has a label including character $c$, $u_{i+1}$ does not have a label including character $c$, 
and the new node $u_{x^{\prime}}$ representing special character $\$$ is inserted into the doubly linked list of $U$ 
between $u_{i-1}$ and $u_{i^{\prime}}$~(i.e., $u_{i^{\prime}-1} = u_{x^{\prime}}$ and $u_{i^{\prime}+1} = u_{i+1}$).  
(ii) Node $u_{i-1}$ does not have a label including character $c$, $u_{i+1}$ has a label including character $c$, 
and $u_{x^{\prime}}$ is inserted into the doubly linked list of $U$ 
between $u_{i^{\prime}}$ and $u_{i+1}$~(i.e., $u_{i^{\prime}-1} = u_{i-1}$ and $u_{i^{\prime}+1} = u_{x^{\prime}}$). 

We show that $\LF_{\delta}(\inspos - 1) = \inspos - 1$ 
for the position $\inspos$ of special character $\$$ in BWT $L_{\delta+1}$ if the fast update operation satisfies the former condition. 
In this case, for the node $v_{i-1} \in V$ connected to $u_{i-1}$ by edge $(u_{i-1}, v_{i-1}) \in E_{LF}$ 
and the node $v_{i^{\prime}}$ representing the input character $c$ of this update operation, 
$v_{i^{\prime}}$ is inserted into the doubly linked list of $V$ at the position next to $v_{i-1}$ by the LF formula. 
Analogous to the extension of BWT described in Section~\ref{lab:ExBWT}, 
the node $v_{i^{\prime}}$ represents the $\inspos$-th character of $F_{\delta+1}$. 
Hence, the F-interval represented by $v_{i-1}$ ends at position $\inspos - 1$ on $F_{\delta}$. 

Similarly, 
the DBWT-repetition represented by $u_{i-1}$ ends at position $\inspos - 1$ on $L_{\delta}$ 
because the node $u_{x^{\prime}}$ represents the $\inspos$-th character of $L_{\delta+1}$, 
and $u_{x^{\prime}}$ is inserted into the doubly linked list of $U$ at the position next to $u_{i-1}$. 
Because $u_{i-1}$ is connected to $v_{i-1}$ by undirected edge $(u_{i-1}, v_{i-1}) \in E_{LF}$, 
we obtain $\LF_{\delta}(\inspos-1) = \inspos - 1$ by the LF formula. 

Next, 
we show that $\LF_{\delta}(\inspos) = \inspos$ if the fast update operation satisfies the latter condition. 
This proof is symmetric to the proof for the former condition. 
In this case, 
for the node $v_{i+1} \in V$ is connected to $u_{i+1}$ by edge $(u_{i+1}, v_{i+1}) \in E_{LF}$, 
$v_{i^{\prime}}$ is inserted into the doubly linked list of $V$ at the position previous to $v_{i-1}$ by the LF formula. 
Similarly, 
the latter condition ensures that $u_{x^{\prime}}$ is inserted into the doubly linked list of $U$ at the position previous to $u_{i-1}$. 
Hence, the F-interval represented by $v_{i+1}$ and the DBWT-repetition of $u_{i+1}$ 
start at position $\inspos$ on $F_{\delta}$ and $L_{\delta}$, respectively. 
This fact indicates that $\LF_{\delta}(\inspos) = \inspos$ by the LF formula. 

Therefore, either $v_{i^{\prime}-1}$ or $v_{i^{\prime}+1}$ must have a label including character $c$, 
i.e., we obtain Lemma~\ref{lem:not_case_D}. 
\end{proof}

In the third phase, 
the fast update operation updates edges and their labels~(i.e., five sets $E_{LF}, E_{L}, E_{F}, B_{L}$, and $B_{F}$) 
according to the second phase~(i.e., the merging of nodes). 
This phase takes takes $O(\alpha)$ time, which is explained below. 
\subparagraph{Updating set $E_{LF}$.} 
We remove the edges connected to merged nodes from $E_{LF}$. 
For case A, 
two edges $(u_{i^{\prime}-1}, v_{i^{\prime}-1})$ and $(u_{i^{\prime}}, v_{i^{\prime}})$ are removed from $E_{LF}$. 
For case B, two edges $(u_{i^{\prime}}, v_{i^{\prime}})$ and $(u_{i^{\prime}+1}, v_{i^{\prime}+1})$ are removed from $E_{LF}$. 
For case C, three edges $(u_{i^{\prime}-1}, v_{i^{\prime}-1})$, $(u_{i^{\prime}}, v_{i^{\prime}})$, and $(u_{i^{\prime}+1}, v_{i^{\prime}+1})$ are removed from $E_{LF}$. 
Subsequently, new edge $(u_{y^{\prime}}, v_{y^{\prime}})$ is inserted into $E_{LF}$.

\subparagraph{Updating two sets $E_{L}$ and $B_{L}$.}
First, directed edges pointing to merged nodes are replaced with new edges pointing to the new node $v_{y^{\prime}}$. 
Figure~\ref{fig:replace_edges_in_merge} illustrates the replaced edges and new edges in $E_{L}$ for the three cases A, B, and C.  
Formally, each directed edge $(u, v) \in E_{L}$ pointing to merged nodes is replaced with new edge $(u, v_{y^{\prime}})$. 
For case A, $v = v_{i^{\prime}-1}$ or $v = v_{i^{\prime}}$. 
The label $B_{L}(u, v_{y^{\prime}})$ of the new edge $(u, v_{y^{\prime}})$ is the same as 
the label of the replaced edge $(u, v)$ if $v = v_{i^{\prime}-1}$; 
otherwise, $B_{L}(u, v_{y^{\prime}}) = \ell$ for the label $(c, \ell)$ of the node $v_{i^{\prime}-1}$. 
For case B, $v = v_{i^{\prime}}$ or $v = v_{i^{\prime}+1}$. 
The label of the new edge $(u, v_{y^{\prime}})$ is the same as the label of the replaced edge if $v = v_{i^{\prime}}$; 
otherwise, the label of the new edge is $(1 + B_{L}(u, v_{i^{\prime}+1}))$ 
for the label $B_{L}(u, v_{i^{\prime}+1})$ of the replaced edge $(u, v_{i^{\prime}+1})$. 
For case C, $v = v_{i^{\prime}-1}$, $v = v_{i^{\prime}}$, or $v = v_{i^{\prime}+1}$. 
For $v = v_{i^{\prime}-1}$, 
the label of the new edge $(u, v_{y^{\prime}})$ is the same as the label of the replaced edge $(u, v_{i^{\prime}-1})$. 
For $v = v_{i^{\prime}}$, 
the label of the new edge is $\ell$. 
For $v = v_{i^{\prime}+1}$, the label of the new edge is $(\ell + 1 + B_{L}(u, v_{i^{\prime}+1}) )$.

Second, directed edges starting from merged nodes are removed from $E_{L}$. 
For case A, the directed edges starting from $u_{i^{\prime}-1}$ and $u_{i^{\prime}}$ are removed from $E_{L}$. 
For case B, the directed edges starting from $u_{i^{\prime}}$ and $u_{i^{\prime}+1}$ are removed from $E_{L}$.
For case C, the directed edges starting from $u_{i^{\prime}-1}$, $u_{i^{\prime}}$, and $u_{i^{\prime}+1}$ are removed from $E_{L}$.

Third, the new directed edge $(u_{y^{\prime}}, v)$ starting at the new node $u_{y^{\prime}}$ is inserted into $E_{L}$ for a node $v \in V$. 
The node $v$ and label $B_{L}(u_{y^{\prime}}, v)$ of the new directed edge are determined by the three cases A, B, and C. 
For case A, 
let $v^{\prime} \in V$ be the node connected to $u_{i^{\prime}-1}$ by the directed edge $(u_{i^{\prime}-1}, v^{\prime})$ in $E_{L}$. 
If $v^{\prime}$ is not merged into new node $v_{y^{\prime}} \in V$~(i.e., $v^{\prime} \not \in \{ v_{i^{\prime} - 1}, v_{i^{\prime}} \} $), 
then $v = v^{\prime}$, 
and the label $B_{L}(u_{y^{\prime}}, v)$ is set to the label $B_{L}(u_{i^{\prime}-1}, v^{\prime})$ of 
the directed edge from $u_{i^{\prime}-1}$ to $v^{\prime}$ in $E_{L}$. 
Otherwise~(i.e., $v^{\prime}$ is merged into new node $v_{y^{\prime}}$), 
we appropriately determine $v$ and $B_{L}(u_{y^{\prime}}, v)$ according to the merging of nodes, 
which is similar to the first step for updating two sets $E_{L}$ and $B_{L}$. 
That is, if $v^{\prime} = v_{i^{\prime} - 1}$, 
then, $v = v_{y^{\prime}}$ and 
the label $B_{L}(u_{y^{\prime}}, v)$ is set to the label $B_{L}(u_{i^{\prime}-1}, v^{\prime})$ of 
the directed edge from $u_{i^{\prime}-1}$ to $v^{\prime}$, which was removed from $E_{L}$ 
by the first step for updating two sets $E_{L}$ and $B_{L}$. 
Otherwise~(i.e., $v^{\prime} = v_{i^{\prime}}$), 
$v = v_{y^{\prime}}$ and 
the label $B_{L}(u_{y^{\prime}}, v)$ is set to $\ell$ for the label $(c, \ell)$ of the node $v_{i^{\prime}-1}$. 
Similarly, $v$ and $B_{L}(u_{y^{\prime}}, v)$ are determined according to the merging of nodes for cases B and C.

The number of removed edges and new edges for updating $E_{L}$ can be bounded by $O(\alpha)$ because 
the first phase creates the LF-interval graph for a $(2\alpha+1)$-balanced DBWT. 
Hence, updating both sets $E_{L}$ and $B_{L}$ takes $O(\alpha)$ time.

\subparagraph{Updating sets $E_{F}$ and $B_{F}$.}
Updating sets $E_{F}$ and $B_{F}$ is symmetric to updating sets $E_{L}$ and $B_{L}$. 
Hence, updating both sets $E_{F}$ and $B_{F}$ takes $O(\alpha)$ time.

In the fourth phase, 
the fast update operation updates the order maintenance data structure and the arrays for $\alpha$-heavy DBWT-repetitions and F-intervals according to the merging of nodes. 
This phase takes $O(\alpha)$ time, which is explained below.

\subparagraph{Updating the order maintenance data structure.}
Merged nodes are removed from the order maintenance data structure. 
In addition, new node $u_{y^{\prime}}$ is inserted into the data structure. 
Updating the order maintenance data structure takes $O(1)$ time.

\subparagraph{Updating the array for $\alpha$-heavy DBWT-repetitions.}
We delete each node from the array for $\alpha$-heavy DBWT-repetitions 
if the node does not represent an $\alpha$-heavy DBWT-repetition. 
This array stores at most two nodes for $\alpha$-heavy DBWT-repetitions before the nodes are merged 
because the update operation presented in Section~\ref{sec:update_graph} was executed in the first phase. 
Hence, the deletion takes $O(\alpha)$ time by using Lemma~\ref{lem:verify_heavy}. 

Next, if merging nodes creates nodes representing new $\alpha$-heavy DBWT-repetitions, 
we insert the nodes representing new $\alpha$-heavy DBWT-repetitions into the array for $\alpha$-heavy DBWT-repetitions. 
The target nodes inserted into the array are limited to new node $u_{y^{\prime}}$ and 
the node $u \in U$ connected to new node $v_{y^{\prime}}$ by directed edge $(v_{y^{\prime}}, u) \in E_{F}$ 
because the number of directed edges pointing to the other nodes in $U$ 
is not increased by the merging of nodes. 
The insertion takes $O(\alpha)$ time using Lemma~\ref{lem:verify_heavy}, 
and hence updating the array takes $O(\alpha)$ time.

\subparagraph{Updating the array for $\alpha$-heavy F-intervals.}
Updating the array for $\alpha$-heavy F-intervals is symmetric to updating the array for $\alpha$-heavy DBWT-repetitions. 
Hence, the array for $\alpha$-heavy F-intervals can be updated in $O(\alpha)$ time. 

In the fifth phase, we update the B-tree of $V$. 
This phase takes in $O(\alpha)$ time using the technique explained in Section~\ref{sec:fast_update}. 
Therefore, the fast update operation runs in $O(\alpha)$ time in total.

\subsection{Proof of Lemma~\ref{lem:fast_update_result}}\label{app:fast_update_result}
\begin{proof}
(i) The bottleneck of the update operation presented in Section~\ref{sec:update_graph} is the update of the B-tree of $V$ and replace-node step execution, which take $O(\log k)$ time. 
We already showed that the fast update operation can update the B-tree of $V$ in $O(1)$ time and execute the replace-node step in $O(1)$ time. 
Hence, the fast update operation runs in $O(\alpha)$ time~(see also Appendix~\ref{app:details_of_fast_update_operation}).

(ii) 
We show that the fast update operation $\fastUpdate(\graphds(D^{\alpha}_{\delta}), c)$ correctly outputs the LF-interval graph $\graphds(D^{2\alpha + 1}_{\delta+1})$ for DBWT $D^{2\alpha + 1}_{\delta+1}$. 
The fast update operation creates the LF-interval graph $\graphds(D^{2\alpha + 1}_{\delta+1})$ via 
the LF-interval graph $\graphds(\hat{D}^{2\alpha + 1}_{\delta+1})$ outputted by the update operation in Section~\ref{sec:update_graph}~(see also Appendix~\ref{app:details_of_fast_update_operation}). 
Here, the DBWT $\hat{D}^{2\alpha + 1}_{\delta+1}$ is a DBWT of BWT $L_{\delta+1}$. 
The LF-interval graph $\graphds(D^{2\alpha + 1}_{\delta+1})$ is created by merging 
at most six nodes in the LF-interval graph $\graphds(\hat{D}^{2\alpha + 1}_{\delta+1})$, 
i.e., at most three nodes $u_{i^{\prime}-1}$, $u_{i^{\prime}}$, and $u_{i^{\prime}+1} \in U$ are merged into 
a new node $u_{y^{\prime}}$, 
and at most three nodes $v_{i^{\prime}-1}$, $v_{i^{\prime}}$, and $v_{i^{\prime}+1} \in V$ are merged into 
a new node $v_{y^{\prime}}$. 
$D^{2\alpha + 1}_{\delta+1}$ is also a DBWT of $L_{\delta+1}$ because 
the three nodes $u_{i^{\prime}-1}$, $u_{i^{\prime}}$, and $u_{i^{\prime}+1}$ represent three consecutive DBWT-repetitions in DBWT $\hat{D}^{2\alpha + 1}_{\delta+1}$. 
Hence, the fast update operation correctly outputs the LF-interval graph for DBWT $D^{2\alpha + 1}_{\delta+1}$. 

Next, we show that DBWT $D^{2\alpha + 1}_{\delta+1}$ is $(2\alpha + 1)$-balanced 
with at most two $\alpha$-heavy DBWT-repetitions and at most two $\alpha$-heavy F-intervals. 
Recall that $u_{h} \in U$~(respectively, $v_{h^{\prime}} \in V$) is the node connected to the head of the directed edge starting at new node $v_{j^{\prime} + 1} \in V$~(respectively, $u_{j^{\prime}+1} \in U$) created in the split-node step. 
For a node $u \in U$~(respectively, $v \in V$) in LF-interval graph $\graphds(\hat{D}^{2\alpha + 1}_{\delta+1})$, 
let $m_{U}(u)$~(respectively, $m_{V}(v)$) be the number of directed edges pointing to node $u$~(respectively, $v$). 
By Lemma~\ref{lem:the_number_of_directed_edges_to_u}, 
$m_{U}(u_{h}) \leq \alpha$ and 
$m_{U}(u_{x}) \leq \alpha - 1$ for all node $u_{x} \in U \setminus \{ u_{h} \}$ in $\graphds(\hat{D}^{2\alpha + 1}_{\delta+1})$. 
Similarly, 
by Lemma~\ref{lem:the_number_of_directed_edges_to_v}, 
$m_{V}(v_{h^{\prime}}) \leq \alpha$ and 
$m_{V}(v_{x}) \leq \alpha - 1$ for all nodes $v_{x} \in V \setminus \{ v_{h^{\prime}} \}$ in $\graphds(\hat{D}^{2\alpha + 1}_{\delta+1})$.

For case A, the fast update operation merges two nodes $u_{i^{\prime}-1}$ and 
$u_{i^{\prime}}$ into new node $u_{y^{\prime}}$. At the same time, 
two nodes $v_{i^{\prime}-1}$ and 
$v_{i^{\prime}}$ are merged into new node $v_{y^{\prime}}$. 
The number of directed edges pointing to new node $u_{y^{\prime}}$ is never larger than $m_{U}(u_{i^{\prime}-1}) + m_{U}(u_{i^{\prime}})$. 
We have $m_{U}(u_{i^{\prime}-1}) + m_{U}(u_{i^{\prime}}) \leq 2\alpha -1$ 
by $m_{U}(u_{h}) \leq \alpha$ and 
$m_{U}(u_{x}) \leq \alpha - 1$ for all nodes $u_{x} \in U \setminus \{ u_{h} \}$. 
The number of directed edges pointing to each node in $U \setminus \{ u_{i^{\prime}-1}, u_{i^{\prime}}\}$ is not increased by merging nodes. 
Similarly, 
the number of directed edges pointing to the new node $v_{y^{\prime}}$ is never larger than $m_{V}(v_{i^{\prime}-1}) + m_{V}(v_{i^{\prime}}) \leq 2\alpha -1$. 
The number of directed edges pointing to each node in $V \setminus \{ v_{i^{\prime}-1}, v_{i^{\prime}}\}$ is not increased by merging nodes. 
Hence, $D^{2\alpha + 1}_{\delta+1}$ is $(2\alpha + 1)$-balanced 
with at most two $\alpha$-heavy DBWT-repetitions and at most two $\alpha$-heavy F-intervals for case A.

For case B, the fast update operation merges two nodes $u_{i^{\prime}}$ and 
$u_{i^{\prime}+1}$ into new node $u_{y^{\prime}}$. At the same time, 
two nodes $v_{i^{\prime}}$ and $v_{i^{\prime}+1}$ are merged into new node $v_{y^{\prime}}$. 
This case is symmetric to case A, and hence, $D^{2\alpha + 1}_{\delta+1}$ is $(2\alpha + 1)$-balanced 
with at most two $\alpha$-heavy DBWT-repetitions and at most two $\alpha$-heavy F-intervals for case B.

For case C, the fast update operation merges three nodes $u_{i^{\prime}-1}$, $u_{i^{\prime}}$,   
and $u_{i^{\prime}+1}$ into new node $u_{y^{\prime}}$. At the same time, 
three nodes $v_{i^{\prime}-1}$, $v_{i^{\prime}}$, and $v_{i^{\prime}+1}$ are merged into new node $v_{y^{\prime}}$.
The number of directed edges pointing to new node $u_{y^{\prime}}$ is never larger than $m_{U}(u_{i^{\prime}-1}) +m_{U}(u_{i^{\prime}}) + m_{U}(u_{i^{\prime}+1})$. 
Because $m_{U}(u_{h}) \leq \alpha$ and 
$m_{U}(u_{x}) \leq \alpha - 1$ for all node $u_{x} \in U \setminus \{ u_{h} \}$,  
We have $m_{U}(u_{i^{\prime}-1}) + m_{U}(u_{i^{\prime}+1}) \leq 2 \alpha - 1$. 
We have $m_{U}(u_{i^{\prime}}) \leq 1$ because the length of the DBWT-repetition of $u_{i^{\prime}}$ is $1$. 
Hence, $m_{U}(u_{i^{\prime}-1}) +m_{U}(u_{i^{\prime}}) + m_{U}(u_{i^{\prime}+1}) \leq 2 \alpha$. 
The number of directed edges pointing to each node in $U \setminus \{ u_{i^{\prime}-1}, u_{i^{\prime}}, u_{i^{\prime}+1}\}$ is not increased by merging nodes. 
Similarly, 
the number of directed edges pointing to new node $v_{y^{\prime}}$ is never larger than $m_{V}(v_{i^{\prime}-1}) + m_{V}(v_{i^{\prime}}) + m_{V}(v_{i^{\prime}+1})$, and $m_{V}(v_{i^{\prime}-1}) + m_{V}(v_{i^{\prime}}) + m_{V}(v_{i^{\prime}+1}) \leq 2 \alpha$. 
The number of directed edges pointing to each node in $V \setminus \{ v_{i^{\prime}-1}, v_{i^{\prime}}, v_{i^{\prime}+1}\}$ is not increased by merging nodes. 
Hence, $D^{2\alpha + 1}_{\delta+1}$ is $(2\alpha + 1)$-balanced 
with at most two $\alpha$-heavy DBWT-repetitions and at most two $\alpha$-heavy F-intervals for case C.

(iii)
The procedure of the fast update operation ensures that 
the targets inserted into the B-tree of $V$ can be limited to only three nodes $v_{x^{\prime}}$, $v_{j^{\prime}+1}$, and $v_{y^{\prime}}$, 
which is similar to Lemma~\ref{lem:insert_and_delete_nodes_by_slow_update}-(i). 
The fast update operation inserts each of the three nodes into the B-tree of $V$ if 
the node satisfies one of the three conditions of Lemma~\ref{lem:mod}, 
and hence 
the B-tree of $V$ in the outputted LF-interval graph contains all the nodes satisfying one of the three conditions of Lemma~\ref{lem:mod}.

Similarly, 
the targets deleted from the B-tree of $V$ can be limited to only four nodes $v_{i}$, $v_{j}$, $v_{i^{\prime}-1}$, and $v_{i^{\prime}+1}$,  
which is similar to Lemma~\ref{lem:insert_and_delete_nodes_by_slow_update}-(ii). 
The fast update operation deletes each of the four nodes from the B-tree of $V$ if 
the node satisfies none of the three conditions of Lemma~\ref{lem:mod}, 
and hence 
each node in the B-tree of $V$ satisfies one of the three conditions of Lemma~\ref{lem:mod}. 
Therefore, we obtain Lemma~\ref{lem:fast_update_result}-(iii).
\end{proof}

\section{Details for Section~\ref{sec:balancing}}
\oldsentence{
\subsection{Details of the balancing operation}\label{app:details_of_balancing_opeartion}
}

\subsection{Proof of Lemma~\ref{lem:split_time}}\label{app:split_time}
We already showed that 
each iteration of the balancing operation takes $O(\alpha)$ time if 
the balancing operation processes the LF-interval graph for an $O(\alpha)$-balanced DBWT at the iteration. 
The next lemma ensures that 
each iteration of the balancing operation outputs 
an LF-interval graph representing an $O(\alpha)$-balanced DBWT 
with $O(\alpha)$ $\alpha$-heavy DBWT-repetitions and $O(\alpha)$ $\alpha$-heavy F-intervals if 
an input of the balancing operation is the LF-interval graph for a $(2\alpha+1)$-balanced DBWT 
with $O(1)$ $\alpha$-heavy DBWT-repetitions and $\alpha$-heavy F-intervals. 

\begin{lemma}\label{lem:balance_run}
For all $\alpha \geq 4$, 
each iteration of the balancing operation outputs the LF-interval graph for an $O(\alpha)$-balanced DBWT 
with $O(\alpha)$ $\alpha$-heavy DBWT-repetitions and $O(\alpha)$ $\alpha$-heavy F-intervals if the input of the balancing operation is the LF-interval graph for a $(2\alpha+1)$-balanced DBWT with at most two $\alpha$-heavy DBWT-repetitions and at most two $\alpha$-heavy F-intervals.
\end{lemma}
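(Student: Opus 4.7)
The plan is to prove Lemma~\ref{lem:balance_run} by carefully analyzing a single iteration and combining this with an inductive invariant maintained throughout the sequence of iterations. I would work with the invariant that at the start of every iteration the LF-interval graph is $(c_1\alpha)$-balanced with at most $c_2\alpha$ $\alpha$-heavy DBWT-repetitions and at most $c_2\alpha$ $\alpha$-heavy F-intervals, for suitable absolute constants $c_1, c_2$ to be pinned down during the analysis. The base case is the input to the balancing operation itself, which is $(2\alpha+1)$-balanced with at most two heavy nodes of each type, so it satisfies the invariant provided $c_1 \geq 2$ and $c_2\alpha \geq 2$, which holds once $\alpha \geq 1$.

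For the inductive step, I would fix an iteration that splits a heavy node, say $u_i$ (the case $v_i$ is symmetric). Using the edge-update rules from Section~\ref{sec:balancing} together with the analysis developed for Lemma~\ref{lem:the_number_of_directed_edges_to_u} and Lemma~\ref{lem:the_number_of_directed_edges_to_v}, I would trace the change in in-degree for every node: (i) the two split children $u_{i'}, u_{i'+1}$ partition the $t \leq c_1\alpha$ incoming $E_F$-edges of $u_i$ at the median, so each inherits at most $\lceil t/2 \rceil \leq \lceil c_1\alpha/2 \rceil$ of them, plus possibly one new edge from $v_{i'+1}$; (ii) the companion children $v_{i'}, v_{i'+1}$ similarly partition the incoming $E_L$-edges of $v_i$; (iii) for each other node $u_y \neq u_i$, its in-degree grows by at most one, corresponding to whether the single newly introduced F-interval starting position $\LF_\delta(p_i)+\ell$ falls inside $u_y$'s range; (iv) symmetrically for $v_z \neq v_i$. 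Consequently the maximum in-degree in the output graph is at most $\max(\lceil c_1\alpha/2 \rceil + 1,\ c_1\alpha + 1)$, so $(c_1\alpha + 1)$-balance is preserved and the $(c_1\alpha)$-balance invariant holds once $c_1$ is chosen large enough (e.g., $c_1 = 2$ suffices for Step (i), and a slightly larger constant absorbs the $+1$'s).

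For the heavy-node count, the iteration removes $u_i$ from the heavy pool and can introduce at most three new heavy nodes in $U$ (the two children $u_{i'}, u_{i'+1}$ and the unique $u_y$ whose in-degree grew by one), and symmetrically in $V$. I would then argue that the heavy count remains $O(\alpha)$ by amortizing against a potential function such as $\Phi = \sum_{u \in U \text{ heavy}} (\deg(u) - \alpha + 1) + \sum_{v \in V \text{ heavy}} (\deg(v) - \alpha + 1)$: splitting at the median of $t \geq \alpha$ edges produces children with in-degree at most $\lceil t/2 \rceil$, which is strictly smaller than $t$ whenever $t \geq 2$, so the contribution of the split node strictly decreases, and the contributions from the $O(1)$ newly heavy nodes add only a constant. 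This bounds both the number of iterations and the running heavy-count by $O(\alpha)$, giving the lemma for all $\alpha \geq 4$. The hard part, and the place the proof must be most careful, is verifying that the at-most-one extra incoming edge gained by $u_y$ (and $v_z$) is indeed correctly charged so that the potential-based bound yields constants independent of the iteration index; this reduces to reusing the edge-conservation argument from Lemma~\ref{lem:new_edge_goal_F} and Lemma~\ref{lem:new_edge_goal_U} in the balancing setting, which follows by the same case analysis on how the split position interacts with each existing edge endpoint.
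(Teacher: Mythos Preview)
Your overall strategy is the right one and your potential $\Phi$ is precisely the paper's invariant (the paper calls a graph \emph{weakly $\alpha$-balanced} when the total excess on each of $U$ and $V$ is at most $2\alpha+2$, and shows this is preserved iteration by iteration). The genuine gap is your symmetric treatment of the $V$-side in case~1. When $u_i$ is split at the median of its $t\ge\alpha$ incoming $E_F$-edges, each child of $u_i$ receives at least one such edge, and that is what gives the strict drop of at least $1$ in the $U$-excess, exactly offsetting the single new $E_F$-edge landing on your $u_y$. But the companion node $v_i$ is split at the \emph{same} offset $\ell$, and $\ell$ has no reason to be the median of $v_i$'s incoming $E_L$-edges; one of $v_{i'},v_{i'+1}$ can inherit all of them, so their combined excess is only $\le\Lambda_V(v_i)$, not $\le\Lambda_V(v_i)-1$. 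The new DBWT-repetition start $p_i+\ell$ then adds one $E_L$-edge to some $v_h$ (your $v_z$), and nothing in your argument prevents its excess from going up by~$1$. Thus the $V$-excess, and hence $\Phi$, can rise by $1$ every iteration, and your $(c_1\alpha)$-balance invariant drifts with the iteration count; ``a slightly larger constant'' cannot absorb an unbounded number of $+1$'s.

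The paper repairs this with a structural observation you do not supply: because $\ell$ is by construction the offset of an \emph{interior} F-interval start among the $t\ge\alpha$ starts covered by $u_i$, the F-interval of the receiving node $v_h$ lies entirely inside $[p_i,p_{i+1}-1]$. After the split the only DBWT-repetition starts in that range belong to $u_{i'}$ and $u_{i'+1}$, so $v_h$ has in-degree at most $2$ and, for $\alpha\ge4$, zero excess. This is exactly where the hypothesis $\alpha\ge4$ is used, and it is what makes the $V$-excess non-increasing in case~1 (and symmetrically the $U$-excess in case~2). The edge-conservation statements you cite (Lemmas~\ref{lem:new_edge_goal_F} and~\ref{lem:new_edge_goal_U}) pertain to the update operation and do not deliver this containment; you need the specific median choice of the balancing step to get it.
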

Lemma~\ref{lem:balance_run} will be proved later. 
By Lemma~\ref{lem:balance_run}, 
each iteration of the balancing operation takes an LF-interval graph representing an $O(\alpha)$-balanced DBWT with $O(\alpha)$ $\alpha$-heavy DBWT-repetitions and $O(\alpha)$ $\alpha$-heavy F-intervals under the assumption that 
the input of the balancing operation is 
an LF-interval graph representing an $O(\alpha)$-balanced DBWT with $O(\alpha)$ $\alpha$-heavy DBWT-repetitions and $O(\alpha)$ $\alpha$-heavy F-intervals. 
This assumption always holds 
because the input of the balancing operation is the LF-interval graph $\graphds(D^{2\alpha+1}_{\delta})$ for a $(2\alpha+1)$-balanced DBWT including at most two $\alpha$-heavy DBWT-repetitions and at most two $\alpha$-heavy F-intervals. 
Hence, the balancing operation takes $O(\alpha)$ time per iteration for $\alpha \geq 4$. 

Similarly, each iteration of the balancing operation outputs the B-tree of $V$ that includes only nodes satisfying one of the three conditions of Lemma~\ref{lem:mod} if 
the iteration processes a B-tree of $V$ that includes only nodes satisfying one of the three conditions of Lemma~\ref{lem:mod}. 
Lemma~\ref{lem:split_time} assumes that the B-tree of $V$ includes only nodes satisfying one of the three conditions of Lemma~\ref{lem:mod} 
for the given LF-interval graph $\graphds(D^{2\alpha+1}_{\delta})$. 
Hence, the B-tree of $V$ in the outputted LF-interval graph contains only nodes satisfying one of the three conditions of Lemma~\ref{lem:mod}. 
Therefore, Lemma~\ref{lem:split_time} holds if Lemma~\ref{lem:balance_run} holds. 

In the next subsection, we prove Lemma~\ref{lem:balance_run}. 

\subsubsection{Proof of Lemma~\ref{lem:balance_run}}\label{app:balance_run}
We prove Lemma~\ref{lem:balance_run} using the \emph{excess} of a node and 
a \emph{weakly} $\alpha$-balanced LF-interval graph. 
The excess of a node $u \in U$ for an LF-interval graph 
is defined as the difference between (i) the number $m$ of directed edges pointing to $u$ in the LF-interval graph and (ii) $\alpha-1$ if $m \geq \alpha$. 
Otherwise, the excess of $u$ is $0$. 
In other words, the excess of $u$ is defined as $\max \{ m - (\alpha-1), 0 \}$. 
Similarly, the excess of a node $v \in V$ for an LF-interval graph is 
defined as $\max \{ m^{\prime} - (\alpha-1), 0 \}$ for the number $m^{\prime}$ of directed edges pointing to $v$ in the LF-interval graph.

We say that an LF-interval graph is weakly $\alpha$-balanced if the following two conditions hold: 
(i) the sum of the excesses of all the nodes in $U$ is at most $2\alpha+2$ and 
(ii) the sum of the excesses of all the nodes in $V$ is at most $2\alpha+2$. 
Every weakly $\alpha$-balanced LF-interval graph is $O(\alpha)$-balanced, 
and the input LF-interval graph $\graphds(D_{\delta}^{2\alpha+1})$ of the balancing operation $\balance(\graphds(D_{\delta}^{2\alpha+1}))$ is weakly $\alpha$-balanced. 
Formally, the following lemma holds. 
\begin{lemma}\label{lem:weakly_balance}
The following two statements hold: 
(i) The DBWT of every weakly $\alpha$-balanced LF-interval graph is $O(\alpha)$-balanced 
with $O(\alpha)$ $\alpha$-heavy DBWT-repetitions and $\alpha$-heavy F-intervals and 
(ii) the LF-interval graph $\graphds(D_{\delta}^{2\alpha+1})$ is weakly $\alpha$-balanced. 
\end{lemma}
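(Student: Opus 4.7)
The plan is to derive both statements directly from the definitions, using only simple counting.

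For statement (i), I would first observe the trivial bound that, for any single node $u \in U$, the number of directed edges pointing to $u$ equals $(\alpha-1) + \mathsf{excess}(u)$ whenever the excess is positive, and is at most $\alpha-1$ otherwise; in particular, it is at most $(\alpha-1) + \mathsf{excess}(u)$. Since the sum of excesses over $U$ is at most $2\alpha+2$, each individual excess is at most $2\alpha+2$, so the number of edges pointing to any node in $U$ is at most $(\alpha-1) + (2\alpha+2) = 3\alpha+1$. The symmetric argument yields the same bound for nodes in $V$. Thus the underlying DBWT is $(3\alpha+2)$-balanced, hence $O(\alpha)$-balanced. To count $\alpha$-heavy DBWT-repetitions, note that any $\alpha$-heavy node contributes excess at least $1$; because the total excess over $U$ is at most $2\alpha+2$, there are at most $2\alpha+2 = O(\alpha)$ such nodes. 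The same reasoning bounds the number of $\alpha$-heavy F-intervals by $O(\alpha)$.

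For statement (ii), I would use the fact that $D_{\delta}^{2\alpha+1}$ is $(2\alpha+1)$-balanced, so every node in $U$ and in $V$ has at most $2\alpha$ incoming directed edges. Any node that is not $\alpha$-heavy has at most $\alpha-1$ incoming edges and therefore excess $0$. Any $\alpha$-heavy node has between $\alpha$ and $2\alpha$ incoming edges, so its excess is at most $2\alpha - (\alpha-1) = \alpha+1$. The hypothesis supplies at most two $\alpha$-heavy DBWT-repetitions, so the sum of excesses over $U$ is at most $2(\alpha+1) = 2\alpha+2$. The symmetric count using at most two $\alpha$-heavy F-intervals gives the same bound on $V$, so $\graphds(D_\delta^{2\alpha+1})$ is weakly $\alpha$-balanced.

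The only subtlety, and the place where one must be careful, is the direction of the inequality when passing between ``$\alpha$-heavy'' (at least $\alpha$ incoming edges) and ``$\alpha$-balanced'' (at most $\alpha-1$ incoming edges); once that is pinned down, both parts are just arithmetic. I do not anticipate any real obstacle: the lemma is essentially a bookkeeping translation between ``excess'' and ``heavy,'' and no structural property of the LF-interval graph beyond the per-node edge-count bound is needed.
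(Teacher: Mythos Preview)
Your proposal is correct and mirrors the paper's own proof almost exactly: both parts are pure counting from the definition of excess, yielding the same $O(\alpha)$ edge bound per node (you get $3\alpha+1$, the paper writes $3\alpha+3$), the same $2\alpha+2$ bound on the number of $\alpha$-heavy nodes, and the same $2(\alpha+1)=2\alpha+2$ total-excess computation for part~(ii).
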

\begin{proof}
(i) 
The DBWT of every weakly $\alpha$-balanced LF-interval graph is $O(\alpha)$-balanced 
because the number of directed edges pointing to a node in the LF-interval graph is at most $3\alpha + 3$.
If the number of nodes representing $\alpha$-heavy DBWT-repetitions is at least $2\alpha + 3$ in $U$, 
then the sum of the excesses of all the nodes in $U$ is larger than $2\alpha + 2$. 
Similarly, 
if the number of nodes representing $\alpha$-heavy F-intervals is at least $2\alpha + 3$ in $V$, 
then the sum of the excesses of all the nodes in $V$ is larger than $2\alpha + 2$. 
Hence, every weakly $\alpha$-balanced LF-interval graph has $O(\alpha)$ nodes for $\alpha$-heavy DBWT-repetitions and $\alpha$-heavy F-intervals. 
Therefore, we obtain Lemma~\ref{lem:weakly_balance}-(i).

(ii) 
The DBWT $D_{\delta}^{2\alpha+1}$ is $(2\alpha + 1)$-balanced, 
and it has at most two $\alpha$-heavy DBWT-repetitions and 
at most two $\alpha$-heavy F-intervals. 
The sum of the excesses of all the nodes in $U$ is at most $2\alpha+2$. 
Similarly, the sum of the excesses of all the nodes in $V$ is at most $2\alpha+2$. 
Hence, the LF-interval graph $\graphds(D_{\delta}^{2\alpha+1})$ is weakly $\alpha$-balanced. 
\end{proof}

The following lemma ensures that each iteration of balancing operation outputs a weakly $\alpha$-balanced LF-interval graph if the iteration is given a weakly $\alpha$-balanced LF-interval graph. 
\begin{lemma}\label{lem:weak2}
If an iteration of balancing operation is given a weakly $\alpha$-balanced LF-interval graph $\graphds(D_{\delta}^{O(\alpha)})$, 
then the iteration of the balancing operation outputs a weakly $\alpha$-balanced LF-interval graph $\graphds(\hat{D}^{O(\alpha)}_{\delta})$ for $\alpha \geq 4$. 
\end{lemma}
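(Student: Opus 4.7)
My plan is to prove Lemma~\ref{lem:weak2} by a detailed case analysis on the iteration type (Case~1, where the split node $u_i \in U$ is $\alpha$-heavy with in-degree $t \geq \alpha$, and Case~2, where the split node $v_i \in V$ is $\alpha$-heavy). Since the LF-interval graph is intrinsically symmetric between the pair $(U, E_F)$ and the pair $(V, E_L)$, it suffices to bound $\Delta E_U := \sum_u e'(u) - \sum_u e(u)$ in Case~1, where $e'$ denotes the excess after the iteration; the bound on $\Delta E_V$ in Case~1 and both bounds in Case~2 follow by symmetric arguments.

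The first step is edge bookkeeping: I would verify that one iteration introduces exactly one net edge in $E_F$ (the outgoing $E_F$-edge from $v_i$ is replaced by the two outgoing edges from $v_{i'}$ and $v_{i'+1}$, while the $t$ in-edges to $u_i$ are merely redistributed between $u_{i'}$ and $u_{i'+1}$ according to the median split at $\ell = B_F(v_{\pi_{j+\lceil t/2 \rceil}}, u_i)$) and exactly one net edge in $E_L$. The second step is to decompose $\Delta E_U$ into (a) the ``split contribution'' from replacing $u_i$ by $u_{i'}, u_{i'+1}$ with inherited in-degrees $d_1 = \lceil t/2 \rceil$ and $d_2 = \lfloor t/2 \rfloor$, and (b) the ``external contribution'' from the single net new $E_F$-edge on nodes outside $\{u_i, u_{i'}, u_{i'+1}\}$. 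Using convexity of $f(x) = \max\{0, x-(\alpha-1)\}$ together with $d_1+d_2 = t$, part~(a) is at most $-(\alpha-1)$ when $t \geq 2\alpha-1$ and at most $-1$ when $\alpha \leq t \leq 2\alpha-2$.

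The key structural observation for part~(b) is that $v_{i'}$ inherits the same starting F-position $\LF_{\delta}(p_i)$ as $v_i$, so the new outgoing $E_F$-edge of $v_{i'}$ points to the same target $u_s$ as $v_i$'s did (when $u_s \neq u_i$) or to a node in $\{u_{i'}, u_{i'+1}\}$ (when $u_s = u_i$). In either situation the pair $v_{i'} \to u_{s_1}$ and $v_i \to u_s$ cancels most of its mass on a common endpoint, so only $v_{i'+1} \to u_{s_2}$ is truly novel. A case split on whether $u_{s_2}$ equals $u_s$, $u_{i'}$, $u_{i'+1}$, or a different ``other'' node bounds the sum $\epsilon_1 + \epsilon_2 + E^*$ by $2$, where $\epsilon_j$ counts the new $E_F$-edges landing on $u_{i'+j-1}$ and $E^*$ denotes the excess gain at other nodes.

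The main obstacle is the ``barely-heavy'' regime $t = \alpha$, where part~(a) only saves $1$ unit and the naive $1$-Lipschitz bound on $f$ gives the wrong sign. To close this gap I would replace the Lipschitz inequality $f(d_j+\epsilon_j) \leq f(d_j)+\epsilon_j$ by its exact value, using that $f(d_j+\epsilon_j) = 0$ whenever $d_j + \epsilon_j \leq \alpha-1$; for $\alpha \geq 4$ and $t = \alpha$ we have $d_j \leq \lceil \alpha/2 \rceil$ and $\epsilon_j \leq 2$, so $d_j+\epsilon_j \leq \alpha-1$ in all but a handful of border sub-cases that can be checked directly. Combining the resulting $U$-side bound $\Delta E_U \leq 0$ with its symmetric $V$-side counterpart $\Delta E_V \leq 0$ preserves the weakly $\alpha$-balanced invariant $\sum_u e(u), \sum_v e(v) \leq 2\alpha+2$, completing the proof.
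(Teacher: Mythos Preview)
Your symmetry reduction is the real gap. Swapping $(U,E_F)\leftrightarrow(V,E_L)$ turns Case~1 into Case~2, so your bound on $\Delta E_U$ in Case~1 indeed transfers to $\Delta E_V$ in Case~2. It does \emph{not} transfer to $\Delta E_V$ in Case~1: there the split length $\ell$ is chosen as the median of $u_i$'s $E_F$-in-edges, not of $v_i$'s $E_L$-in-edges, so when $v_i$ is split the inherited $E_L$-in-degrees of $v_{i'},v_{i'+1}$ need not both be $\geq 1$ (and $v_i$ need not be $\alpha$-heavy at all). Your ``split contribution'' on the $V$-side then yields only $f(d_1')+f(d_2')-f(t')\leq 0$ by subadditivity of $f$, not $\leq -1$, while the new $E_L$-edge from $u_{i'+1}$ to some node $v_h$ can still raise an excess by $+1$. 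The inequality $\Delta E_V\leq 0$ does not follow from your outline.

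The paper closes this side with a separate structural observation (Lemma~\ref{lem:W_split_L_lemma2}(ii)): in Case~1 the split position $p_i+\ell$ is itself an F-interval start, and because $t\geq\alpha\geq 4$ there is at least one further F-interval start strictly between $p_i+\ell$ and $p_{i+1}$; hence the F-interval of $v_h$ lies entirely inside $[p_i,p_{i+1}-1]$. After the split the only DBWT-repetition starts in that range are those of $u_{i'}$ and $u_{i'+1}$, so $v_h$ has $E_L$-in-degree at most~$2$ and its new excess is~$0$. This is precisely where the hypothesis $\alpha\geq 4$ is used. Your plan contains no analogue of this observation, so the bound on $\Delta E_V$ in Case~1 --- and, by the correct symmetry, on $\Delta E_U$ in Case~2 --- is missing.

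A smaller point on your $U$-side analysis in Case~1: your bound of~$2$ on the new-edge contribution is looser than necessary and forces the ``barely-heavy'' case split. The paper's accounting gives exactly~$+1$ here (only $v_{i'+1}$'s outgoing $E_F$-edge is genuinely new; $v_{i'}$'s goes to the same target as $v_i$'s did), which together with the split contribution $\leq -1$ (using only that both halves are non-empty) yields $\Delta E_U\leq 0$ directly.
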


We obtain Lemma~\ref{lem:balance_run} by Lemmas~\ref{lem:weakly_balance} and \ref{lem:weak2} without any assumptions. 
The proof of Lemma~\ref{lem:weak2} is given below. 

\subparagraph{Proof of Lemma~\ref{lem:weak2} for case 1~(i.e., $u_{i} \in U$ represents an $\alpha$-heavy DBWT-repetition).}
In this case, node $u_{i} \in U$ represents an $\alpha$-heavy DBWT-repetition, 
and the node is split into nodes $u_{i^{\prime}}$ and $u_{i^{\prime}+1}$, 
where $U = \{ u_{1}, u_{2}, \ldots, u_{k} \}$. 
At the same time, $v_{i} \in V$ is split into two nodes $v_{i^{\prime}}$ and $v_{i^{\prime}+1}$, 
where $V = \{ v_{1}, v_{2}, \ldots, v_{k} \}$. 

We show that the sum of the excesses of all the nodes in $U$ is at most $2\alpha+2$ for LF-interval graph $\graphds(\hat{D}^{O(\alpha)}_{\delta})$. 
Let $\Lambda_{U}(u)$ be the excess of a given node $u \in U$ for LF-interval graph $\graphds(D_{\delta}^{O(\alpha)})$. 
Similarly, let $\Lambda^{\prime}_{U}(u^{\prime})$ be the excess of a given node $u^{\prime} \in U$ for the LF-interval graph $\graphds(\hat{D}^{O(\alpha)}_{\delta})$. 
The following lemma ensures that we can bound the excesses of nodes in $U$ for $\graphds(\hat{D}^{O(\alpha)}_{\delta})$ using 
the excesses of nodes in $U$ for $\graphds(D_{\delta}^{O(\alpha)})$. 
\begin{lemma}\label{lem:W_split_L_lemma1}
Let $u_{h} \in U$ be the node connected to $v_{i^{\prime}+1}$ by a directed edge in $E_{F}$. 
Then, the following two statements hold: 
(i) $\Lambda^{\prime}_{U}(u) \leq \Lambda_{U}(u) + \eqfn(u, u_{h})$ for all nodes $u \in \{ u_{1}, u_{2}, \ldots, u_{k} \} \setminus \{ u_{i} \}$. 
(ii) $\Lambda^{\prime}_{U}(u_{i^{\prime}}) + \Lambda^{\prime}_{U}(u_{i^{\prime}+1}) \leq \Lambda_{U}(u_{i}) - 1 + \eqfn(u_{i^{\prime}}, u_{h}) + \eqfn(u_{i^{\prime}+1}, u_{h})$ for any $\alpha \geq 2$. Here, $\eqfn$ is the function introduced in Appendix~\ref{app:update_edge_count}. 
\end{lemma}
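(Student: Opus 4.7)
My plan is to prove both parts of Lemma~\ref{lem:W_split_L_lemma1} by carefully accounting for which directed edges in $E_{F}$ are modified when $u_i$ and $v_i$ are split. Only two kinds of changes occur: (a) the $t$ edges that previously pointed to $u_i$ are redistributed between $u_{i'}$ and $u_{i'+1}$ according to whether their source F-interval's starting position lies in $[p_i, p_i + \ell - 1]$ (going to $u_{i'}$) or in $[p_i + \ell, p_{i+1}-1]$ (going to $u_{i'+1}$), where $\ell = B_{F}(v_{\pi_{j + \lceil t/2 \rceil}}, u_{i})$ by the choice in Case 1; (b) the single outgoing edge from $v_i$ is removed and replaced by outgoing edges from $v_{i'}$ and $v_{i'+1}$, where $v_{i'}$ preserves the starting position of $v_i$'s F-interval and $v_{i'+1}$'s edge targets $u_h$ by definition. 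Note that since $u_i$ represents an $\alpha$-heavy DBWT-repetition, $t \geq \alpha$, so $\Lambda_{U}(u_{i}) = t - (\alpha - 1)$.

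For part (i), I would fix $u \in \{u_1, \ldots, u_k\} \setminus \{u_i\}$ and track its in-degree. It is unaffected by (a) since that only redistributes edges among $u_{i'}$ and $u_{i'+1}$. For (b), the removed edge from $v_i$ and the new edge from $v_{i'}$ cancel out because both target the node containing the same position in $L_\delta$ (when this node is not $u_i$ itself); consequently, the only net change to the in-degree of $u$ is an increment of exactly one when $u = u_h$, and zero otherwise. Since a unit increase in in-degree raises the excess by at most one, this yields $\Lambda'_U(u) \le \Lambda_U(u) + \eqfn(u, u_h)$.

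For part (ii), I would directly compute the new in-degrees. The choice of $\ell$ places exactly $\lceil t/2 \rceil$ of the redistributed edges into $u_{i'}$'s range and $\lfloor t/2 \rfloor$ into $u_{i'+1}$'s range, and a uniform treatment of Case A ($u_{s} = u_{i}$, where $v_{i}$ itself contributed to $u_{i}$'s old in-degree) and Case B ($u_{s} \neq u_{i}$) shows that the replacement of $v_i$ by $v_{i'}$ contributes the same count on $\{u_{i'}, u_{i'+1}\}$ as $v_i$ originally did on $u_i$. Hence the in-degrees satisfy $\deg^{-}(u_{i'}) = \lceil t/2 \rceil + \eqfn(u_{i'}, u_h)$ and $\deg^{-}(u_{i'+1}) = \lfloor t/2 \rfloor + \eqfn(u_{i'+1}, u_h)$. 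The desired bound then reduces to an arithmetic check: when both excesses are positive the two $-(\alpha-1)$ contributions combine to $-2(\alpha-1)$, which is at most $-\alpha$ for $\alpha \ge 2$; when one excess vanishes, $\lfloor t/2 \rfloor \ge 1$ absorbs the slack; and when both vanish, the inequality is trivial. This delivers $\Lambda'_U(u_{i'}) + \Lambda'_U(u_{i'+1}) \le \Lambda_U(u_i) - 1 + \eqfn(u_{i'}, u_h) + \eqfn(u_{i'+1}, u_h)$ for $\alpha \ge 2$.

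The main obstacle is the bookkeeping in part (ii): one must verify the identity $\deg^{-}(u_{i'}) + \deg^{-}(u_{i'+1}) = t + \eqfn(u_{i'}, u_h) + \eqfn(u_{i'+1}, u_h)$ uniformly across Cases A and B, which requires using the fact that $v_{i'}$ inherits $v_i$'s starting position. Once this identity is established, the excess inequality becomes a short piecewise arithmetic argument.
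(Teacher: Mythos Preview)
Your proposal is correct and follows essentially the same approach as the paper: both arguments track how the in-degrees in $E_F$ change when $u_i$ and $v_i$ are split, observing that the edge from $v_{i'}$ inherits $v_i$'s target position while $v_{i'+1}$ contributes the single extra edge to $u_h$. The only minor difference is that in part~(ii) you compute the exact new in-degrees $\lceil t/2\rceil+\eqfn(u_{i'},u_h)$ and $\lfloor t/2\rfloor+\eqfn(u_{i'+1},u_h)$ and then do a three-case arithmetic check, whereas the paper works only with the coarser facts $m_1+m_2=m+\eqfn(u_{i'},u_h)+\eqfn(u_{i'+1},u_h)$ and $m_1,m_2\ge 1$, and maximizes $\max\{0,m_1-(\alpha-1)\}+\max\{0,m_2-(\alpha-1)\}$ over that constraint set; both routes yield the same bound.
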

\begin{proof}
(i) 
Let $m$ and $m^{\prime}$ be the number of directed edges pointing to node $u$ in LF-interval graphs $\graphds(D_{\delta}^{O(\alpha)})$ and $\graphds(\hat{D}^{O(\alpha)}_{\delta})$, respectively. 
$m^{\prime} = m$ if $u \neq u_{h}$. 
Otherwise, $m^{\prime} = m + 1$ because new node $v_{i^{\prime}+1}$ is connected to $u$ by a directed edge in $E_{F}$. Hence, $m^{\prime} = m + \eqfn(u, u_{h})$. 

Next, if $m + \eqfn(u, u_{h}) \geq \alpha$, 
then $\Lambda^{\prime}_{U}(u) = m + \eqfn(u, u_{h}) - (\alpha - 1)$ 
and $\Lambda_{U}(u) = m - (\alpha - 1)$. 
Otherwise, $\Lambda^{\prime}_{U}(u) = 0$ and $\Lambda_{U}(u) = 0$. 
Hence, $\Lambda^{\prime}_{U}(u) \leq \Lambda_{U}(u) + \eqfn(u, u_{h})$. 

(ii) 
Let $m \geq \alpha$ be the number of directed edges pointing to $u_{i}$ in $\graphds(D_{\delta}^{O(\alpha)})$. 
Similarly, let $m_{1}$ and $m_{2}$ be the numbers of directed edges pointing to $u_{i^{\prime}}$ and $u_{i^{\prime}+1}$, respectively, in $\graphds(\hat{D}^{O(\alpha)}_{\delta})$. 
For case 1, $m_{1} \geq 1$ and $m_{2} \geq 1$. 
Moreover, $m_{1} + m_{2} = m$ if $u_{i^{\prime}} \neq u_{h}$ and $u_{i^{\prime}+1} \neq u_{h}$. 
Otherwise, $m_{1} + m_{2} = m + 1$ because new node $v_{i^{\prime}+1}$ 
is connected to $u_{i^{\prime}}$ or $u_{i^{\prime}+1}$ by a directed edge in $E_{F}$. 
Hence, $m_{1} + m_{2} = m + \eqfn(u_{i^{\prime}}, u_{h}) + \eqfn(u_{i^{\prime}+1}, u_{h})$. 

We have $\Lambda^{\prime}_{U}(u_{i^{\prime}}) + \Lambda^{\prime}_{U}(u_{i^{\prime}+1}) = \max \{ 0, m_{1} - (\alpha - 1) \} + \max \{ 0, m_{2} - (\alpha - 1) \}$. 
The maximal value of $\Lambda^{\prime}_{U}(u_{i^{\prime}}) + \Lambda^{\prime}_{U}(u_{i^{\prime}+1})$ is $(m - 1 + \eqfn(u_{i^{\prime}}, u_{h}) + \eqfn(u_{i^{\prime}+1}, u_{h})) - (\alpha -1)$ by $m_{1} \geq 1$, $m_{2} \geq 1$, and $m_{1} + m_{2} = m + \eqfn(u_{i^{\prime}}, u_{h}) + \eqfn(u_{i^{\prime}+1}, u_{h})$. 
By contrast, $\Lambda_{U}(u_{i}) = m - (\alpha-1)$ by $m \geq \alpha$. 
Hence, $\Lambda^{\prime}_{U}(u_{i^{\prime}}) + \Lambda^{\prime}_{U}(u_{i^{\prime}+1}) \leq \Lambda_{U}(u_{i}) - 1 + \eqfn(u_{i^{\prime}}, u_{h}) + \eqfn(u_{i^{\prime}+1}, u_{h})$ by $\Lambda_{U}(u_{i}) = m - (\alpha-1)$. 
\end{proof}

The sum of the excesses of all nodes in $U$ for $\graphds(\hat{D}^{O(\alpha)}_{\delta})$ is at most $(\sum_{x = 1}^{k} \Lambda_{U}(u_{x})) + (\sum_{x = 1}^{k} \eqfn(u_{x}, u_{h})) + \eqfn(u_{i^{\prime}}, u_{h}) + \eqfn(u_{i^{\prime}+1}, u_{h}) - 1$ by Lemma~\ref{lem:W_split_L_lemma1}. 
We have $\sum_{x = 1}^{k} \Lambda_{U}(u_{x}) \leq 2 \alpha + 2$ because $\graphds(D_{\delta}^{O(\alpha)})$ is weakly $\alpha$-balanced. 
Furthermore, $(\sum_{x = 1}^{k} \eqfn(u_{x}, u_{h})) + \eqfn(u_{i^{\prime}}, u_{h}) + \eqfn(u_{i^{\prime}+1}, u_{h}) = 1$. 
Hence, the sum of the excesses of all nodes in $U$ for $\graphds(\hat{D}^{O(\alpha)}_{\delta})$ is at most $2 \alpha + 2$.

Similarly, we can show that the sum of the excesses of all nodes in $V$ for $\graphds(\hat{D}^{O(\alpha)}_{\delta})$ is at most $2 \alpha + 2$. 
Let $\Lambda_{V}(v)$ be the excess of a given node $v \in V$ for LF-interval graph $\graphds(D_{\delta}^{O(\alpha)})$. 
Similarly, let $\Lambda^{\prime}_{V}(v^{\prime})$ be the excess of a given node $v^{\prime} \in V$ for LF-interval graph $\graphds(\hat{D}^{O(\alpha)}_{\delta})$. 
The following lemma ensures that we can bound the excesses of nodes in $V$ for $\graphds(\hat{D}^{O(\alpha)}_{\delta})$ using 
the excesses of nodes in $V$ for $\graphds(D_{\delta}^{O(\alpha)})$.
\begin{lemma}\label{lem:W_split_L_lemma2}
Let $v_{h} \in V$ be the node connected to $u_{i^{\prime}+1}$ by a directed edge in $E_{L}$. 
Then, the following three statements hold for any $\alpha \geq 4$: 
(i) $\Lambda^{\prime}_{V}(v) \leq \Lambda_{V}(v)$ for all nodes $v \in \{ v_{1}, v_{2}, \ldots, v_{k} \} \setminus \{ v_{i}, v_{h} \}$; 
(ii) $\Lambda^{\prime}_{V}(v_{h}) = 0$;
(iii) $\Lambda^{\prime}_{V}(v_{i^{\prime}}) + \Lambda^{\prime}_{V}(v_{i^{\prime}+1}) \leq \Lambda_{V}(v_{i})$. 
\end{lemma}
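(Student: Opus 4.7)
The plan is to track how the count of $E_L$ edges pointing to each node in the new $V$ changes relative to the old graph $\graphds(D_{\delta}^{O(\alpha)})$, using the key observation that splitting $u_i$ is the only structural change affecting DBWT-repetition starts — namely, $u_i$ (starting at $p_i$) is replaced by $u_{i^{\prime}}$ (still starting at $p_i$) and $u_{i^{\prime}+1}$ (starting at $p_i + \ell$). Consequently, for any node $v$ whose F-interval is unchanged, the number of $E_L$ edges incoming to $v$ increases by exactly $1$ when $p_i + \ell$ lies in that F-interval, i.e.\ when $v = v_w$, and is otherwise unchanged, since the contribution of the start $p_i$ is preserved by $u_{i^{\prime}}$. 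First I would use this to dispatch part (i): for $v \in \{v_1, \ldots, v_k\} \setminus \{v_i, v_h\}$ we have $v \neq v_w$, so the count at $v$ is unaltered and $\Lambda^\prime_V(v) = \Lambda_V(v)$; the case where $v_h \notin \{v_1, \ldots, v_k\}$ (which occurs precisely when $v_w = v_i$) makes the extra exclusion vacuous and the argument still goes through.

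For part (ii), the crux of the lemma and where the hypothesis $\alpha \geq 4$ enters, I would exploit the specific split point chosen in Case 1. Since $u_i$ is $\alpha$-heavy, $t \geq \alpha \geq 4$, and by construction $\ell = B_F(v_{\pi_{j+\lceil t/2 \rceil}}, u_i)$, so $v_h = v_w = v_{\pi_{j+\lceil t/2 \rceil}}$. Because $\lceil t/2 \rceil + 1 \leq t - 1$ for $t \geq 4$, the F-interval immediately following $v_h$ in $\pi$-order, namely $v_{\pi_{j+\lceil t/2 \rceil + 1}}$, is still among those whose start is covered by $u_i$; its start thus lies in $[p_i, p_{i+1}-1]$. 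This forces $v_h$'s F-interval to be entirely contained in $[p_i, p_{i+1}-1]$. In the old graph, the only DBWT-repetition whose start lies in $[p_i, p_{i+1}-1]$ is $u_i$ itself at position $p_i$, and $p_i$ is strictly to the left of $v_h$'s F-interval, which starts at $p_i + \ell$; hence no old edge in $E_L$ points to $v_h$. In the new graph, $u_{i^{\prime}+1}$ is the unique node contributing, so the new count equals $1$ and $\Lambda^\prime_V(v_h) = \max\{0, 1 - (\alpha - 1)\} = 0$.

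For part (iii), since the F-intervals of $v_{i^{\prime}}$ and $v_{i^{\prime}+1}$ partition the old F-interval of $v_i$, the sum $c_{i^{\prime}} + c_{i^{\prime}+1}$ of incoming counts in the new graph equals the old count at $v_i$, plus an extra $1$ exactly when $p_i + \ell$ lies in $v_i$'s F-interval, i.e.\ when $v_w = v_i$. In the sub-case $v_w \neq v_i$, the elementary inequality $\max\{0, a - (\alpha-1)\} + \max\{0, b - (\alpha-1)\} \leq \max\{0, a + b - (\alpha-1)\}$ immediately gives $\Lambda^\prime_V(v_{i^{\prime}}) + \Lambda^\prime_V(v_{i^{\prime}+1}) \leq \Lambda_V(v_i)$. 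In the sub-case $v_w = v_i$, I would show that both $c_{i^{\prime}} \geq 1$ and $c_{i^{\prime}+1} \geq 1$, by using Lemma~\ref{lem:split_verify} to locate $p_i + \ell$ within the appropriate half of $v_i$'s F-interval (namely, $v_{i^{\prime}}$ if $p_i < \LF_\delta(p_i)$ and $v_{i^{\prime}+1}$ otherwise) and, symmetrically, locating $p_i$ itself; combined with the refined inequality $\max\{0, a - (\alpha-1)\} + \max\{0, b - (\alpha-1)\} \leq \max\{0, a + b - \alpha\}$ valid when $a, b \geq 1$, this absorbs the extra $+1$ and yields the desired bound.

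The main obstacle will be the sub-case $v_w = v_i$ of part (iii): the naive counting overshoots by exactly one, and so one must use the fine-grained placement of $p_i$ and $p_i + \ell$ inside the two F-subintervals (controlled by the sign of $p_i - \LF_\delta(p_i)$) to guarantee that both halves of the split $v_i$ are non-empty in the new graph, thereby tightening the excess bound by precisely the required unit; verifying this in every configuration of $v_s$ and $v_w$ (both with respect to whether they equal $v_i$) is where most of the careful bookkeeping will live.
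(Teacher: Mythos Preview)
Your overall plan is sound and your part (ii) argument — pinning down $v_w=v_{\pi_{j+\lceil t/2\rceil}}$, then using $\lceil t/2\rceil+1\leq t-1$ for $t\geq\alpha\geq 4$ to force the F-interval of $v_w$ to lie strictly inside $[p_i,p_{i+1}-1]$ — is actually more explicit than the paper, which merely asserts that containment ``in case~1 for $\alpha\geq 4$'' and concludes that at most two edges (from $u_{i'}$ and $u_{i'+1}$) can point to $v_h$.

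Where you diverge from the paper is part (iii), and there you are making life harder than necessary. The paper observes that your own containment argument already rules out the sub-case $v_w=v_i$: the F-interval of $v_w$ sits strictly inside $[p_i,p_{i+1}-1]$, whereas the F-interval of $v_i$ has the \emph{same} length as $[p_i,p_{i+1}-1]$; hence $v_w\neq v_i$, so $v_h$ is always an old node and in particular $v_h\notin\{v_{i'},v_{i'+1}\}$. Once this is noted, the new edge out of $u_{i'+1}$ never lands on $v_{i'}$ or $v_{i'+1}$, and every edge pointing to $v_{i'}$ or $v_{i'+1}$ in the new graph maps injectively to an edge pointing to $v_i$ in the old graph (with $u_{i'}\mapsto u_i$). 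This immediately gives $c_{i'}+c_{i'+1}\leq m_V(v_i)$ and hence $\Lambda'_V(v_{i'})+\Lambda'_V(v_{i'+1})\leq\Lambda_V(v_i)$.

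So your ``main obstacle'' — the delicate placement analysis of $p_i$ and $p_i+\ell$ inside the two halves of $v_i$'s F-interval, and the refined inequality needed to absorb the extra $+1$ — is work on a vacuous case. Your sub-case $v_w\neq v_i$ is the only one that occurs, and your counting-plus-inequality argument there is correct and equivalent to the paper's injection; but recognising that $v_w\neq v_i$ up front (exactly the length contradiction the paper records) saves all the bookkeeping you were bracing for.
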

\begin{proof}
(i) The number of directed edges pointing to $v$ is not changed by splitting node $u_{i}$, 
and hence $\Lambda^{\prime}_{V}(v) \leq \Lambda_{V}(v)$. 

(ii) The DBWT-repetition of $u_{i}$ is $L_{\delta}[p_{i}..p_{i+1}-1]$, 
and the F-interval of $v_{h}$ is $[\LF_{\delta}(p_{h})$, $\LF_{\delta}(p_{h+1}-1)]$. 
Interval $[\LF_{\delta}(p_{h}), \LF_{\delta}(p_{h+1}-1)]$ is contained in the interval $[p_{i}, p_{i+1}-1]$ of the DBWT-repetition in case 1 
for $\alpha \geq 4$. 
Nodes $u_{i^{\prime}}$ and $u_{i^{\prime}+1}$ represent the two DBWT-repetitions $L_{\delta}[p_{i}..x-1]$ and $L_{\delta}[x..p_{i+1}-1]$, 
respectively, where $p_{i} < x < p_{i+1}$ is an integer. 
Hence, each directed edge pointing to $v_{h}$ starts at $u_{i^{\prime}}$ or $u_{i^{\prime}+1}$ in LF-interval graph $\graphds(\hat{D}^{O(\alpha)}_{\delta})$. 
In other words, the number of directed edges pointing to $v_{h}$ is at most two in the LF-interval graph $\graphds(\hat{D}^{O(\alpha)}_{\delta})$. 
Therefore, $\Lambda^{\prime}_{V}(v_{h}) = 0$.

(iii) 
We use proof by contradiction to show that $v_{h}$ is neither $v_{i^{\prime}}$ nor $v_{i^{\prime}+1}$ for $\alpha \geq 4$. 
We assume that $v_{h} = v_{i^{\prime}}$ or $v_{h} = v_{i^{\prime}+1}$. 
Then the F-interval $[\LF_{\delta}(p_{i}), \LF_{\delta}(p_{i+1}-1)]$ of $v_{i}$ is properly contained in the interval $[p_{i}, p_{i+1}-1]$ of the DBWT-repetition represented by $u_{i}$~(i.e., $[\LF_{\delta}(p_{i}), \LF_{\delta}(p_{i+1}-1)] \subset [p_{i}, p_{i+1}-1]$). 
The length of the F-interval is shorter than that of the DBWT-repetition by $[\LF_{\delta}(p_{i}), \LF_{\delta}(p_{i+1}-1)] \subset [p_{i}, p_{i+1}-1]$, 
but the length of the F-interval must be equal to the length of the DBWT-repetition. 
Hence, $v_{h}$ is neither $v_{i^{\prime}}$ nor $v_{i^{\prime}+1}$. 

Because $v_{h} \not \in \{ v_{i^{\prime}}, v_{i^{\prime}+1} \}$, 
the following two statements hold for a directed edge $(u, v_{i^{\prime}})$ or $(u, v_{i^{\prime}+1})$ in $E_{L}$ stored in $\graphds(\hat{D}^{O(\alpha)}_{\delta})$: 
(i) $u \neq u_{i^{\prime} + 1}$, i.e., $u \in (\{ u_{1}, u_{2}, \ldots u_{k} \} \cup \{ u_{i^{\prime}} \}) \setminus \{ u_{i} \}$ 
and (ii) $\graphds(D^{O(\alpha)}_{\delta})$ has a directed edge $(u, v_{i})$ in $E_{L}$ if $u \neq u_{i^{\prime}}$; 
otherwise, it has a directed edge $(u_{i}, v_{i})$ in $E_{L}$. 
The two statements indicate that 
the number of directed edges pointing to $v_{i^{\prime}}$ or $v_{i^{\prime}+1}$ is no more than that of directed edges pointing to $v_{i}$ 
because we can map directed edges pointing to $v_{i^{\prime}}$ or $v_{i^{\prime}+1}$ in $\graphds(\hat{D}^{O(\alpha)}_{\delta})$, 
into distinct directed edges pointing to $v_{i}$ in $\graphds(D^{O(\alpha)}_{\delta})$. 
Hence, we obtain $\Lambda^{\prime}_{V}(v_{i^{\prime}}) + \Lambda^{\prime}_{V}(v_{i^{\prime}+1}) \leq \Lambda_{V}(v_{i})$. 

\end{proof}
The sum of the excesses of all nodes in $V$ for $\graphds(\hat{D}^{O(\alpha)}_{\delta})$ is at most $\sum_{x = 1}^{k} \Lambda_{V}(v_{x})$ by Lemma~\ref{lem:W_split_L_lemma2}. 
We have $\sum_{x = 1}^{k} \Lambda_{V}(v_{x}) \leq 2 \alpha + 2$ because $\graphds(D_{\delta}^{O(\alpha)})$ is weakly $\alpha$-balanced. 
Hence, the sum of the excesses of all nodes in $V$ for $\graphds(\hat{D}^{O(\alpha)}_{\delta})$ is at most $2 \alpha + 2$. 
Therefore, $\graphds(\hat{D}^{O(\alpha)}_{\delta})$ is weakly $\alpha$-balanced for all $\alpha \geq 4$. 

\subparagraph{Proof of Lemma~\ref{lem:weak2} for case 2~(i.e., $u_{i}$ does not represent an $\alpha$-heavy DBWT-repetition, and $v_{i}$ represents an $\alpha$-heavy F-interval).}
Case 2 is symmetric to case 1. 
Hence, we can prove Lemma~\ref{lem:weak2} for this case by modifying the proof of Lemma~\ref{lem:weak2} for case 1.

\section{Details for Section~\ref{sec:rcomp_algorithm}}\label{app:rcomp}
\oldsentence{
\subsection{Pseudo-code of r-comp}\label{app:pseudocode}
\subsection{Proof of Theorem~\ref{theo:rcomp_time}}\label{app:rcomp_time}
}
\subsection{Proof of Lemma~\ref{lem:modified_k_bound_lemma}}\label{app:modified_k_bound_lemma}
\subsubsection{Proof of Lemma~\ref{lem:modified_k_bound_lemma}-(i)}
Fast update operation $\fastUpdate(\graphds(D_{\delta}^{\alpha}),c)$, which is introduced in Section~\ref{sec:fast_update}, 
may merge nodes created by balancing operations for the first character $c$ of $T_{\delta+1}$. 
This merging makes it difficult to bound $k_{\splitop}$ by $O(r)$. 
The fast update operation has three cases, A, B, and C for merging nodes, 
and the operation has the possibility of merging nodes created by balancing operations only if 
the previous update operation~(i.e., $\update(\graphds(D_{\delta-1}^{\alpha}), T[n-\delta+1])$ or $\fastUpdate(\graphds(D_{\delta-1}^{\alpha}), T[n-\delta+1])$) 
did not split nodes in the split-node step. 
We change the two cases A and C into two new cases A', and C', respectively, using the result of the previous update operation, 
and we modify the procedure for the two cases so that it does not merge the nodes created by balancing operations. 

\subparagraph{Case A': (i) case A holds or (ii) case C holds, and the previous update operation did not split nodes in the split-node step.} 
In this case, the fast update operation executes the procedure for case A. 
That is, 
for the node $u_{i^{\prime}} \in U$ representing the input character $c$ of $\fastUpdate(\graphds(D_{\delta}^{\alpha}),c)$ 
and the node $u_{i^\prime-1} \in U$ previous to node $u_{i^{\prime}}$ in the doubly linked list of $U$, 
the consecutive nodes $u_{i^{\prime}-1}$ and $u_{i^{\prime}}$ are merged into new node $u_{y^{\prime}}$. 
Set $V$ is updated according to the merge of these nodes in $U$.

\subparagraph{Case C': (i) case C holds and (ii) the previous update operation split nodes in the split-node step.} 
In this case, the fast update operation executes the procedure for case C. 
That is, 
for the node $u_{i^{\prime}+1} \in U$ next to node $u_{i^{\prime}}$ in the doubly linked list of $U$, 
the three consecutive nodes $u_{i^{\prime}-1}$, $u_{i^{\prime}}$, and $u_{i^{\prime}+1}$ are merged into a new node $u_{y^{\prime}}$. 
Set $V$ is updated according to the merge of these nodes in $U$. 

One of the three cases A', B, and C' always holds 
because one of the three cases A, B, or C always holds. 
Lemma~\ref{lem:fast_update_result} still holds although the LF-interval graph outputted by the fast update operation is changed. 
This is because the procedure for case A, B, or C is executed. 
Hence, we obtain Lemma~\ref{lem:modified_k_bound_lemma}-(i).

\subsubsection{Proof of Lemma~\ref{lem:modified_k_bound_lemma}-(ii)}\label{app:update_count_bound}
R-comp creates the LF-interval graph for DBWT $D_{n}^{\alpha}$ via the LF-interval graphs 
for $(2n-3)$ DBWTs $D_{1}^{\alpha}, D_{2}^{2\alpha + 1}, D_{2}^{\alpha}, D_{3}^{2\alpha + 1}, \ldots, D_{n-1}^{\alpha}, D_{n}^{2\alpha + 1}$. 
The number $k$ of DBWT-repetitions in $D_{n}^{\alpha}$ is the sum of the difference 
between the size of each DBWT $D$ and that of the DBWT obtained by updating the LF-interval graph for $D$. 
Formally, $k = |D_{1}^{\alpha}| + (\sum_{\delta = 1}^{n-1} (|D_{\delta+1}^{2\alpha + 1}| - |D_{\delta}^{\alpha}|)) + (\sum_{\delta = 2}^{n} (|D_{\delta}^{\alpha}| - |D_{\delta}^{2\alpha + 1}|))$.
The term $\sum_{\delta = 2}^{n} (|D_{\delta}^{\alpha}| - |D_{\delta}^{2\alpha + 1}|)$ represents the number of DBWT-repetitions increased by balancing operations in the r-comp algorithm. 
Each iteration of a balancing operation removes a node from $U$ and inserts two nodes into $U$. 
This fact indicates $\sum_{\delta = 2}^{n} (|D_{\delta}^{\alpha}| - |D_{\delta}^{2\alpha + 1}|) = k_{\splitop}$ for 
the total number $k_{\splitop}$ of iterations executed in the balancing operations in the r-comp algorithm. 

Next, the term $|D_{\delta+1}^{2\alpha + 1}| - |D_{\delta}^{\alpha}|$ represents the number of DBWT-repetitions increased by the $\delta$-th update operation in the r-comp algorithm. 
We give the upper bound on the term using three values $r_{\delta}$, $\deltabwt_{\delta}$, and $\deltagrp_{\delta}$, which are explained below. 
For $1 \leq \delta \leq n$, $r_{\delta}$ is the number of BWT-runs in BWT $L_{\delta}$ of $T_{\delta}$. 
Clearly, $r_{1} = 1$ and $r_{n} = r$ for the number $r$ of BWT-runs in BWT $L_{n}$. 

Value $\deltabwt_{\delta} \in \{ 0, 1 \}$ is defined as $1$ if 
(i) $1 < \inspos < \delta+1$ and 
(ii) $L_{\delta+1}[\inspos - 1]$ and $L_{\delta+1}[\inspos + 1]$ are the same character 
for the position $\inspos$ of special character $\$$ in $L_{\delta+1}$. 
Otherwise, $\deltabwt_{\delta} = 0$. 
The following lemma ensures that the difference between $r_{\delta+1}$ and $r_{\delta}$ can be bounded by at most $1 + \deltabwt_{\delta}$. 

\begin{lemma}\label{lem:r_diff_bound}
We consider the following five cases for the position $\reppos$ of special character $\$$ in $L_{\delta}$: (i) $L_{\delta}[\reppos-1] = L_{\delta}[\reppos+1] = c$; 
(ii) $L_{\delta}[\reppos-1] = L_{\delta}[\reppos+1] \neq c$; 
(iii) $L_{\delta}[\reppos-1] \neq c$, $L_{\delta}[\reppos+1] \neq c$, and $L_{\delta}[\reppos-1] \neq L_{\delta}[\reppos+1]$; 
(iv) $L_{\delta}[\reppos-1] = c$ and $L_{\delta}[\reppos+1] \neq c$; 
and (v) $L_{\delta}[\reppos-1] \neq c$ and $L_{\delta}[\reppos+1] = c$. 
Here, $L_{\delta}[0]$ is defined as a character $c^{\prime}$ that does not appear in BWT $L_{n}$ for simplicity. 
Similarly, $L_{\delta}[\delta+1] = c^{\prime}$. 
The following five statements hold: 
$r_{\delta+1} - r_{\delta} =  -1 + \deltabwt_{\delta}$ for case (i); 
$r_{\delta+1} - r_{\delta} = 1 + \deltabwt_{\delta}$ for case (ii); 
$r_{\delta+1} - r_{\delta} = 1 + \deltabwt_{\delta}$ for case (iii);
$r_{\delta+1} - r_{\delta} = \deltabwt_{\delta}$ for case (iv);
$r_{\delta+1} - r_{\delta} = \deltabwt_{\delta}$ for case (v).
\end{lemma}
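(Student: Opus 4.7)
The plan is to decompose the transformation $L_\delta \to L_{\delta+1}$ into two independent local operations and analyze each separately. Let $L'$ denote the intermediate string of length $\delta$ obtained from $L_\delta$ by replacing $L_\delta[\reppos] = \$$ with the character $c$, so that $L_{\delta+1}$ is obtained from $L'$ by inserting $\$$ at position $\inspos$. Let $r'$ denote the number of runs in $L'$. I would split the target quantity as $r_{\delta+1} - r_\delta = (r' - r_\delta) + (r_{\delta+1} - r')$ and bound each summand in turn.

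First I would analyze the replacement contribution $r' - r_\delta$. Because $\$$ is unique in $L_\delta$, the position $\reppos$ is always its own run, flanked by the runs containing $L_\delta[\reppos-1]$ and $L_\delta[\reppos+1]$ (with the sentinel convention $L_\delta[0] = L_\delta[\delta+1] = c'$ where $c' \neq c$ making the boundary cases behave like ordinary ``unequal to $c$'' cases). A straightforward local count in each of the five listed cases gives: $-2$ in case (i) (three runs collapse into one), $0$ in cases (ii) and (iii) (the $\$$-run is simply relabelled as a singleton $c$-run between two runs that it did not merge with), and $-1$ in cases (iv) and (v) (the new $c$ at $\reppos$ merges with exactly one of its two neighbours).

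Next I would analyze the insertion contribution $r_{\delta+1} - r'$. Since $\$$ does not occur elsewhere in $L_{\delta+1}$, the inserted $\$$ always forms a new singleton run, contributing $+1$. The only additional contribution occurs when the insertion splits an existing run of $L'$, i.e.\ when $1 < \inspos < \delta+1$ and $L'[\inspos-1] = L'[\inspos]$. Since $L_{\delta+1}[\inspos-1] = L'[\inspos-1]$ and $L_{\delta+1}[\inspos+1] = L'[\inspos]$, this extra $+1$ happens precisely when $\deltabwt_\delta = 1$ by definition. Hence $r_{\delta+1} - r' = 1 + \deltabwt_\delta$ in every case. Adding the two contributions yields the five equalities claimed in the lemma.

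The main subtlety I anticipate is justifying that the two stages are truly decoupled: specifically, that the replacement at $\reppos$ does not affect the local context used to evaluate the insertion effect, and conversely. This is immediate once one observes that the insertion position $\inspos$ corresponds to a position of $c$ in $F_{\delta+1}$ while $\reppos$ corresponds to the sole position of $\$$, so the replacement and insertion act at distinct physical indices in $L_{\delta+1}$; the index shift between $L'$ and $L_{\delta+1}$ caused by the insertion preserves equality of neighbouring characters, which is all that the case analysis depends on. Boundary cases ($\reppos \in \{1,\delta\}$ or $\inspos \in \{1,\delta+1\}$) are absorbed cleanly by the $c'$ convention and by the explicit range condition $1 < \inspos < \delta+1$ in the definition of $\deltabwt_\delta$.
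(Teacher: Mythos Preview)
Your proposal is correct and follows essentially the same approach as the paper: the paper also introduces the intermediate string $L'_{\delta}$ obtained by replacing $\$$ with $c$, writes $r_{\delta+1}-r_{\delta}=(1+\deltabwt_{\delta})+(m_{\delta}-r_{\delta})$ where $m_{\delta}$ is the run count of $L'_{\delta}$, and then reads off $m_{\delta}-r_{\delta}\in\{-2,0,-1\}$ from the five cases exactly as you do. Your extra discussion of ``decoupling'' is more than the paper provides, and note that the telescoping identity $r_{\delta+1}-r_{\delta}=(r'-r_{\delta})+(r_{\delta+1}-r')$ makes the two stages independent automatically, so no special argument about distinct indices is really needed.
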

\begin{proof}
Let $L^{\prime}_{\delta}$ be the string obtained by replacing special character $\$$ from BWT $L_{\delta}$ with the first character $c$ of $T_{\delta+1}$, 
i.e., $L^{\prime}_{\delta} = L_{\delta}[1..\reppos-1] c L_{\delta}[\reppos+1..\delta]$. 
BWT $L_{\delta+1}$ can be created by inserting special character $\$$ into $L^{\prime}_{\delta}$ at position $\inspos$, 
and hence $r_{\delta+1} - m_{\delta} = 1 + \deltabwt_{\delta}$ for the number $m_{\delta}$ of runs in $L^{\prime}_{\delta}$. 
This fact indicates that $r_{\delta+1} - r_{\delta} = (1 + \deltabwt_{\delta}) + (m_{\delta} - r_{\delta})$.
For case (i), $r_{\delta+1} - r_{\delta} = -1 + \deltabwt_{\delta}$ because $m_{\delta} - r_{\delta} = -2$. 
For cases (ii) and (iii), $r_{\delta+1} - r_{\delta} = 1 + \deltabwt_{\delta}$ because $m_{\delta} - r_{\delta} = 0$. 
For cases (iv) and (v), $r_{\delta+1} - r_{\delta} = \deltabwt_{\delta}$ because $m_{\delta} - r_{\delta} = -1$.
\end{proof}

Value $\deltagrp_{\delta} \in \{ 0, 1 \}$ represents the number of split nodes in $U$ 
in the split-node step of the $\delta$-th update operation. 
That is, $\deltagrp_{\delta} = 1$ if the $\delta$-th update operation split nodes in the split-node step. 
Otherwise, $\deltagrp_{\delta} = 0$. 
The following lemma ensures that the difference between $|D_{\delta+1}^{2\alpha+1}|$ and $|D_{\delta}^{\alpha}|$ can be bounded by at most $1 + \deltagrp_{\delta}$. 

\begin{lemma}\label{lem:delta_bound_1}
The following three statements hold for $1 \leq \delta \leq n-1$: 
\begin{enumerate}
    \item $|D_{\delta+1}^{2\alpha+1}| - |D_{\delta}^{\alpha}| = \deltagrp_{\delta} + 1$ 
for $\graphds(D_{\delta+1}^{2\alpha+1}) = \update(\graphds(D_{\delta}^{\alpha}), c)$; 
    \item $|D_{\delta+1}^{2\alpha+1}| - |D_{\delta}^{\alpha}| = \deltagrp_{\delta}$ 
for $\graphds(D_{\delta+1}^{2\alpha+1}) = \fastUpdate(\graphds(D_{\delta}^{\alpha}), c)$ with case A' or B;
    \item $|D_{\delta+1}^{2\alpha+1}| - |D_{\delta}^{\alpha}| = \deltagrp_{\delta} - 1$ 
    for $\graphds(D_{\delta+1}^{2\alpha+1}) = \fastUpdate(\graphds(D_{\delta}^{\alpha}), c)$ with case C'.
\end{enumerate}
\end{lemma}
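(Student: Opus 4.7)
The plan is to compute each difference $|D_{\delta+1}^{2\alpha+1}| - |D_{\delta}^{\alpha}|$ as a net node-count change in $U$, tracked step-by-step through each subroutine of the relevant operation, since $|D|$ equals $|U|$ for any LF-interval graph of the DBWT $D$.

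For statement~(1), I would walk through the four steps of $\update(\graphds(D^{\alpha}_{\delta}), c)$ from Section~\ref{sec:update_graph}. The replace-node step removes $u_i$ and inserts $u_{i'}$, contributing $0$. The split-node step contributes exactly $\deltagrp_{\delta}$, because by definition $\deltagrp_{\delta}=1$ holds precisely when $u_j$ is replaced by the two new nodes $u_{j'}$ and $u_{j'+1}$, while $\deltagrp_{\delta}=0$ corresponds to no change in $|U|$. The insert-node step unconditionally adds the new node $u_{x'}$ representing the shifted occurrence of $\$$, contributing $+1$. The update-edge step does not touch $|U|$. Summing yields $\deltagrp_{\delta}+1$.

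For statements~(2) and~(3), I would reuse the same accounting for the first phase of the fast update operation, since that phase is exactly the update operation of Section~\ref{sec:update_graph} (with an $O(1)$-time replace-node step replacing the $O(\log k)$-time one), and therefore contributes $\deltagrp_{\delta}+1$ to $|U|$. Then I would add the net node-count change of the merging phase described in Section~\ref{sec:fast_update} and Appendix~\ref{app:details_of_fast_update_operation}: case~A' and case~B each merge two consecutive nodes of $U$ into a single new node $u_{y'}$, contributing $-1$; case~C' merges three consecutive nodes into one, contributing $-2$. Adding these offsets to $\deltagrp_{\delta}+1$ gives $\deltagrp_{\delta}$ for cases~A' and~B, and $\deltagrp_{\delta}-1$ for case~C', matching the three claims.

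The main subtlety I expect is verifying that the consecutive nodes scheduled for merging truly exist and are pairwise distinct in $U$ immediately after the first phase has finished. In case~C' this is especially delicate, because both $u_{i'-1}$ and $u_{i'+1}$ must be present alongside $u_{i'}$ in the doubly linked list of $U$, and neither may coincide with a node introduced by the split-node step of the current operation. This is precisely why case~C' is guarded by the extra condition that the \emph{previous} update operation (not the current one) performed the split, so that the current operation's split-node step cannot have disturbed the neighborhood of $u_{i'}$. Once that disjointness is confirmed, every node removed from $U$ by either phase is replaced by a fixed number of new nodes, so the three identities follow as exact equalities rather than mere bounds.
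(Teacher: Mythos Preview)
Your counting argument is correct and mirrors the paper's own proof, which simply tallies total removals and insertions per case (for instance, $(3+\deltagrp_{\delta})$ removed versus $(3+2\deltagrp_{\delta})$ inserted for A'/B) rather than tracking the phases separately; the arithmetic is identical. One correction to your final paragraph: the guard on case~C' (that the \emph{previous} operation split) is not there to stop the \emph{current} split from disturbing the neighborhood of $u_{i'}$ --- the current split can still occur under C', and the merged nodes may well include freshly created ones without affecting the count --- but rather to prevent the merge from absorbing nodes produced by earlier \emph{balancing} operations, a concern relevant only to the later bound on $k_{\splitop}$ in Lemma~\ref{lem:modified_k_bound_lemma}(iii), not to this lemma.
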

\begin{proof}
(1) The update operation $\update(\graphds(D_{\delta}^{\alpha}), c)$ removes $(1 + \deltagrp_{\delta})$ nodes from the set $U$ of nodes, 
and it inserts $(2 + 2\deltagrp_{\delta})$ nodes into $U$. 
Hence, $|D_{\delta+1}^{2\alpha+1}| - |D_{\delta}^{\alpha}| = \deltagrp_{\delta} + 1$. 

(2) The fast update operation with cases A' and B removes $(3 + \deltagrp_{\delta})$ nodes from $U$, 
and it inserts $(3 + 2\deltagrp_{\delta})$ nodes into $U$. 
Hence, $|D_{\delta+1}^{2\alpha+1}| - |D_{\delta}^{\alpha}| = \deltagrp_{\delta}$. 

(3) The fast update operation with case C' removes $(4 + \deltagrp_{\delta})$ nodes from $U$, 
and it inserts $(3 + 2\deltagrp_{\delta})$ nodes into $U$. 
Hence, $\graphds(D_{\delta+1}^{2\alpha+1}) = \fastUpdate(\graphds(D_{\delta}^{\alpha}), c)$. 
\end{proof}

Lemmas~\ref{lem:r_diff_bound} and \ref{lem:delta_bound_1} indicate that 
the difference between $|D_{\delta+1}^{2\alpha+1}|$ and $|D_{\delta}^{\alpha}|$ can be bounded using the difference between $r_{\delta+1}$ and $r_{\delta}$. 
The following lemma ensures that the difference between $|D_{\delta+1}^{2\alpha+1}|$ and $|D_{\delta}^{\alpha}|$ 
can be bounded using six values $r_{\delta+1}$, $r_{\delta}$, $\deltagrp_{\delta}$, $\deltagrp_{\delta-1}$, $\deltabwt_{\delta}$, and $\deltabwt_{\delta-1}$. 

\begin{lemma}\label{lem:modified_bound}
$|D_{\delta+1}^{2\alpha + 1}| \leq |D_{\delta}^{\alpha}| + (r_{\delta+1} - r_{\delta}) + (\deltagrp_{\delta} - \deltabwt_{\delta}) - (\deltagrp_{\delta-1} - \deltabwt_{\delta-1})$ for $2 \leq \delta \leq n-1$.
\end{lemma}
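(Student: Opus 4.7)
\textbf{Proof plan for Lemma~\ref{lem:modified_bound}.}

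The plan is to rearrange the inequality into the equivalent form
\[
|D_{\delta+1}^{2\alpha+1}| - |D_{\delta}^{\alpha}| - \deltagrp_{\delta}
 \;\leq\; (r_{\delta+1} - r_{\delta} - \deltabwt_{\delta}) \;-\; (\deltagrp_{\delta-1} - \deltabwt_{\delta-1}),
\]
introduce the shorthand $A_{\delta} = \deltagrp_{\delta} - \deltabwt_{\delta}$, and then combine Lemmas~\ref{lem:r_diff_bound} and \ref{lem:delta_bound_1} through a case analysis that is driven by (a) which of the five character configurations at the $\delta$-th step holds and (b) which update variant (slow, fast A$'$, fast B, fast C$'$) r-comp executes. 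The left-hand side has three possible values $\{1,0,-1\}$ by Lemma~\ref{lem:delta_bound_1}, and the first factor on the right has three possible values $\{-1,0,1\}$ by Lemma~\ref{lem:r_diff_bound}, so everything boils down to a constant-size bookkeeping argument once I can synchronize the two classifications.

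First I would establish three preliminary facts. \emph{Fact 1:} $A_{\delta} \leq 0$ for every $\delta$. When $\deltagrp_{\delta}=1$, the split-node step cut a single DBWT-repetition at the insertion position $\inspos$; since the two halves are runs of the same character, $L_{\delta+1}[\inspos-1]=L_{\delta+1}[\inspos+1]$ and $\deltabwt_{\delta}=1$. \emph{Fact 2:} Since $\inspos_{\delta-1} = \reppos_{\delta}$, the value $\deltabwt_{\delta-1}$ equals $1$ exactly when $L_{\delta}[\reppos-1]=L_{\delta}[\reppos+1]$; so $\deltabwt_{\delta-1}=1$ in cases (i) and (ii) of Lemma~\ref{lem:r_diff_bound} and $\deltabwt_{\delta-1}=0$ in cases (iii)--(v). \emph{Fact 3:} The character-based classification (i)--(v) determines which update variant the r-comp algorithm selects. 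Slow update applies precisely in cases (ii) and (iii) (both adjacent characters are $\neq c$, so neither $u_{i-1}$ nor $u_{i+1}$ has label $c$). For the fast update, the correspondence with sub-cases is (i)$\leftrightarrow$C, (iv)$\leftrightarrow$A, (v)$\leftrightarrow$B: if $L_{\delta}[\reppos\mp 1]=c$, the LF-formula places another $c$ at position $\inspos\mp 1$ of $F_{\delta+1}$, so the F-interval node $v_{i'\mp 1}$ inherits label $c$, and conversely.

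With these facts in hand the case analysis is routine. For slow update ($\theta=1$) we are in case (ii) or (iii), both giving $B_{\delta} := r_{\delta+1}-r_{\delta}-\deltabwt_{\delta} = 1$; the desired $1 \leq 1 - A_{\delta-1}$ is Fact~1. For fast A$'$ or B ($\theta=0$) the underlying configuration is (iv), (v), or (i with $\deltagrp_{\delta-1}=0$); in (iv) and (v) we have $B_{\delta}=0$ and $\deltabwt_{\delta-1}=0$, so $A_{\delta-1}=\deltagrp_{\delta-1}\leq 0$ forces $\deltagrp_{\delta-1}=0$ and $0\leq 0$, while in case (i) with $\deltagrp_{\delta-1}=0$ (enforced by the A$'$ definition) Fact~2 gives $\deltabwt_{\delta-1}=1$, so $A_{\delta-1}=-1$ and $0 \leq -1-(-1)$. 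For fast C$'$ ($\theta=-1$) the underlying case is (i) with $\deltagrp_{\delta-1}=1$, yielding $B_{\delta}=-1$, $\deltabwt_{\delta-1}=1$, $A_{\delta-1}=0$, and the required $-1 \leq -1-0$.

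The main obstacle is Fact 3's second half: proving that the fast-update sub-case is forced by the character configuration around $\reppos$. Verifying $(\mathrm{iv}) \Rightarrow \mathrm{A}$ and the analogous implications requires a careful LF-formula calculation showing that (a) the $c$'s flanking $\reppos$ in $L_{\delta}$ map to positions $\inspos\pm 1$ of $F_{\delta+1}$, and (b) the F-interval immediately adjacent to the newly inserted $v_{i'}$ contains exactly such a position. Once this correspondence is nailed down, the rest of the proof is a finite-state verification and the boundary situations ($\reppos\in\{1,\delta\}$, $\inspos\in\{1,\delta+1\}$) are handled by the sentinel convention already used in Lemma~\ref{lem:r_diff_bound}.
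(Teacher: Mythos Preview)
Your plan is correct and mirrors the paper's own proof: both carry out the same five-way case analysis over the configurations of Lemma~\ref{lem:r_diff_bound}, read off $|D_{\delta+1}^{2\alpha+1}|-|D_{\delta}^{\alpha}|$ from Lemma~\ref{lem:delta_bound_1} for the update variant that fires in each case, and verify the arithmetic; your Facts~1 and~2 are exactly the implicit observations the paper relies on (e.g.\ to force $\deltagrp_{\delta-1}=0$ in cases (iii)--(v)).

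One small correction to your Fact~3 sketch: the sub-cases A/B/C are decided by the labels of the $U$-list neighbors $u_{i'\mp 1}$ of $u_{i'}$ (equivalently $v_{i'\mp 1}$, since $B_U=B_V$), not by the characters at positions $\inspos\mp 1$ of $F_{\delta+1}$. The replace-node step leaves $u_{i'}$ where $u_i$ was, so generically $u_{i'\mp 1}=u_{i\mp 1}$ with labels $L_{\delta}[\reppos\mp 1]$, giving the correspondence $(\mathrm{i}){\Rightarrow}\mathrm{C}$, $(\mathrm{iv}){\Rightarrow}\mathrm{A}$, $(\mathrm{v}){\Rightarrow}\mathrm{B}$ immediately; the only subtlety is when the new $\$$-node $u_{x'}$ lands adjacent to $u_{i'}$, and Lemma~\ref{lem:not_case_D} rules this out precisely on the side where $L_{\delta}[\reppos\mp 1]=c$. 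The paper simply asserts these implications, so your identification of this as the ``main obstacle'' is fair, but the verification is on the $L$-side and shorter than your outline suggests.
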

\begin{proof}

We show that  Lemma~\ref{lem:modified_bound} holds for any case used in Lemma~\ref{lem:r_diff_bound}.  
\subparagraph{Case (i).}
We have $\deltabwt_{\delta-1} = 1$, $\deltagrp_{\delta-1} \in \{ 0, 1\}$, 
and $r_{\delta+1} - r_{\delta} = -1 + \deltabwt_{\delta}$. 
For $\deltagrp_{\delta-1} = 1$, 
the $\delta$-th update operation is the fast update operation with case C'. 
By Lemma~\ref{lem:delta_bound_1}-(3), 
$|D_{\delta+1}^{2\alpha + 1}| - |D_{\delta}^{\alpha}| = (-1 + \deltabwt_{\delta}) + (\deltagrp_{\delta} - \deltabwt_{\delta})$. 
Because $r_{\delta+1} - r_{\delta} = -1 + \deltabwt_{\delta}$ and $\deltagrp_{\delta-1} - \deltabwt_{\delta-1} = 0$, 
we obtain $|D_{\delta+1}^{2\alpha + 1}| = |D_{\delta}^{\alpha}| + (r_{\delta+1} - r_{\delta}) + (\deltagrp_{\delta} - \deltabwt_{\delta}) - (\deltagrp_{\delta-1} - \deltabwt_{\delta-1})$. 


For $\deltagrp_{\delta-1} = 0$, 
the $\delta$-th update operation is the fast update operation with case A'. 
By Lemma~\ref{lem:delta_bound_1}-(2), 
$|D_{\delta+1}^{2\alpha+1}| - |D_{\delta}^{\alpha}| = (-1 + \deltabwt_{\delta}) + (\deltagrp_{\delta} - \deltabwt_{\delta}) - (-1)$. 
Because $r_{\delta+1} - r_{\delta} =  -1 + \deltabwt_{\delta}$ and $\deltagrp_{\delta-1} - \deltabwt_{\delta-1} = -1$, 
we obtain $|D_{\delta+1}^{2\alpha + 1}| = |D_{\delta}^{\alpha}| + (r_{\delta+1} - r_{\delta}) + (\deltagrp_{\delta} - \deltabwt_{\delta}) - (\deltagrp_{\delta-1} - \deltabwt_{\delta-1})$. 

\subparagraph{Case (ii).}
We have $\deltabwt_{\delta-1} = 1$, $r_{\delta+1} - r_{\delta} = 1 + \deltabwt_{\delta}$, 
and $\graphds(D_{\delta+1}^{2\alpha+1}) = \update(\graphds(D_{\delta}^{\alpha}), c)$. 
By Lemma~\ref{lem:delta_bound_1}-(1), 
$|D_{\delta+1}^{2\alpha + 1}| - |D_{\delta}^{\alpha}| = (1 + \deltabwt_{\delta}) + (\deltagrp_{\delta} - \deltabwt_{\delta})$. 
Because $r_{\delta+1} - r_{\delta} = 1 + \deltabwt_{\delta}$ and 
$-1 \leq \deltagrp_{\delta-1} - \deltabwt_{\delta-1} \leq 0$, 
we obtain $|D_{\delta+1}^{2\alpha + 1}| \leq |D_{\delta}^{\alpha}| + (r_{\delta+1} - r_{\delta}) + (\deltagrp_{\delta} - \deltabwt_{\delta}) - (\deltagrp_{\delta-1} - \deltabwt_{\delta-1})$. 

\subparagraph{Case (iii).}
We have $\deltabwt_{\delta-1} = 0$, $r_{\delta+1} - r_{\delta} = 1 + \deltabwt_{\delta}$, 
$\graphds(D_{\delta+1}^{2\alpha+1}) = \update(\graphds(D_{\delta}^{\alpha}), c)$, and $\deltagrp_{\delta-1} = 0$. 
By Lemma~\ref{lem:delta_bound_1}-(1), 
$|D_{\delta+1}^{2\alpha + 1}| - |D_{\delta}^{\alpha}| = (1 + \deltabwt_{\delta}) + (\deltagrp_{\delta} - \deltabwt_{\delta})$. 
Because $r_{\delta+1} - r_{\delta} = 1 + \deltabwt_{\delta}$ and 
$\deltagrp_{\delta-1} - \deltabwt_{\delta-1} = 0$, 
we obtain $|D_{\delta+1}^{2\alpha + 1}| = |D_{\delta}^{\alpha}| + (r_{\delta+1} - r_{\delta}) + (\deltagrp_{\delta} - \deltabwt_{\delta}) - (\deltagrp_{\delta-1} - \deltabwt_{\delta-1})$. 

\subparagraph{Case (iv).}
We have $\deltabwt_{\delta-1} = 0$, $r_{\delta+1} - r_{\delta} = \deltabwt_{\delta}$, 
$\graphds(D_{\delta+1}^{2\alpha+1}) = \fastUpdate(\graphds(D_{\delta}^{\alpha}), c)$, and $\deltagrp_{\delta-1} = 0$ 
The $\delta$-th update operation is the fast update operation with case A'.  
By Lemma~\ref{lem:delta_bound_1}-(2), 
$|D_{\delta+1}^{2\alpha+1}| - |D_{\delta}^{\alpha}| = \deltabwt_{\delta} + (\deltagrp_{\delta} - \deltabwt_{\delta})$. 
Because $r_{\delta+1} - r_{\delta} = \deltabwt_{\delta}$ and $\deltagrp_{\delta-1} - \deltabwt_{\delta-1} = 0$, 
we obtain $|D_{\delta+1}^{2\alpha + 1}| = |D_{\delta}^{\alpha}| + (r_{\delta+1} - r_{\delta}) + (\deltagrp_{\delta} - \deltabwt_{\delta}) - (\deltagrp_{\delta-1} - \deltabwt_{\delta-1})$. 

\subparagraph{Case (v).}
This case is symmetric to the fourth case, 
and hence we obtain $|D_{\delta+1}^{2\alpha + 1}| = |D_{\delta}^{\alpha}| + (r_{\delta+1} - r_{\delta}) + (\deltagrp_{\delta} - \deltabwt_{\delta}) - (\deltagrp_{\delta-1} - \deltabwt_{\delta-1})$. 
\end{proof}

We prove $k \leq r + k_{\splitop}$ using Lemma~\ref{lem:modified_bound}. 
We have $k = |D_{1}^{\alpha}| + (\sum_{\delta = 1}^{n-1} (|D_{\delta+1}^{2\alpha + 1}| - |D_{\delta}^{\alpha}|)) + k_{\splitop}$. 
By Lemma~\ref{lem:modified_bound}, 
we obtain $\sum_{\delta = 1}^{n-1} (|D_{\delta+1}^{2\alpha + 1}| - |D_{\delta}^{\alpha}|) \leq (|D_{2}^{2\alpha + 1}| - |D_{1}^{\alpha}|)  + (r_{n} - r_{2}) + (\deltagrp_{n-1} - \deltabwt_{n-1}) - (\deltagrp_{1} - \deltabwt_{1})$. 
Because $|D_{2}^{2\alpha + 1}| = 2$, $|D_{1}^{\alpha}| = 1$, $r_{2} = 2$, $\deltagrp_{1} - \deltabwt_{1} = 0$, and $r = r_{n}$, 
we obtain $\sum_{\delta = 1}^{n-1} (|D_{\delta+1}^{2\alpha + 1}| - |D_{\delta}^{\alpha}|) \leq r - 1 + (\deltagrp_{n-1} - \deltabwt_{n-1})$. 
$-1 \leq \deltagrp_{n-1} - \deltabwt_{n-1} \leq 0$ always holds because 
$\deltagrp_{n-1} = 0$ if $\deltabwt_{n-1} = 0$. 
Finally, $k \leq r + k_{\splitop}$. 

\subsubsection{Proof of Lemma~\ref{lem:modified_k_bound_lemma}-(iii)}
The modification of the fast update operation affects the total number $k_{\splitop}$ of iterations executed in the balancing operations in the r-comp algorithm. 
We show that the new value of $k_{\splitop}$ can be bounded by $\frac{2r}{\lceil \alpha / 2 \rceil - 7}$ for any constant $\alpha \geq 16$. 

Each iteration of balancing operation executes the procedure for case 1 or 2, which were introduced in Section~\ref{sec:balancing}. 
Let $k_{L}$ and $k_{F}$ be the numbers of iterations for cases 1 and 2 executed in the r-comp algorithm, respectively. 
The following lemma bounds $k_{L}$ and $k_{F}$. 
\begin{lemma}\label{lem:kL_and_kF}
The following two statements hold for any $\alpha \geq 16$: 
(i) $k_{L}, k_{F} \leq \frac{r}{\lceil \alpha / 2 \rceil - 7}$ if $k_{L} \leq k_{F}$ 
and (ii) $k_{L}, k_{F} \leq \frac{r}{\lceil \alpha / 2 \rceil - 7}$ if $k_{L} > k_{F}$.
\end{lemma}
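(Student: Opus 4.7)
The plan is to establish both bounds through a symmetric amortized argument that tracks heaviness across the $U$ and $V$ sides of the LF-interval graph. I will define $H_U$ as the total $E_F$-heaviness injected at nodes in $U$ over the entire execution of r-comp, where the heaviness of a node $u$ with $m(u)$ incoming $E_F$-edges is $\max\{0, m(u) - (\alpha-1)\}$; the symmetric quantity $H_V$ will be defined on $V$ using $E_L$-edges. The first step is to analyze a single case 1 iteration: because the split chooses the median incoming edge (Section~\ref{sec:balancing}), a node with $t \geq \alpha$ incoming $E_F$-edges is replaced by two nodes each with at most $\lceil t/2 \rceil + O(1)$ incoming edges, so by Lemma~\ref{lem:W_split_L_lemma1} at least $\lceil \alpha/2 \rceil - c_1$ units of heaviness are removed from $H_U$ for a small constant $c_1$, while Lemma~\ref{lem:W_split_L_lemma2}-(ii) forces the only $V$-side node whose heaviness can grow---the node $v_h$---to have excess zero after the iteration, so at most a constant $c_2$ units of heaviness are injected into $H_V$ per iteration. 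Case 2 iterations will be handled symmetrically.

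The second step is to bound the heaviness injected by the update phase itself by $O(r)$. Although each individual $\update$ or $\fastUpdate$ can create at most two new $\alpha$-heavy nodes on each side (Theorem~\ref{thm:insert_balance} and Lemma~\ref{lem:fast_update_result}), their cumulative heaviness contribution is controlled by the BWT-run structure: an amortized argument in the spirit of the proof of Lemma~\ref{lem:modified_k_bound_lemma}-(ii) charges each newly injected unit of heaviness to the creation of a BWT-run in $L_{\delta+1}$, yielding a total of at most $c_3 r$ units injected into each of $H_U$ and $H_V$ by the update phase. Combining with the case 1 and case 2 injections will produce the coupled inequalities $(\lceil \alpha/2 \rceil - c_1)\, k_L \leq c_3 r + c_2\, k_F$ and $(\lceil \alpha/2 \rceil - c_1)\, k_F \leq c_3 r + c_2\, k_L$.

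In case (i), $k_L \leq k_F$, I will substitute $k_L \leq k_F$ into the second inequality to obtain $(\lceil \alpha/2 \rceil - c_1 - c_2)\, k_F \leq c_3 r$, which after the constant accounting yields $k_F \leq r / (\lceil \alpha/2 \rceil - 7)$ when $\alpha \geq 16$; the bound $k_L \leq k_F$ then gives the same estimate for $k_L$. Case (ii) is handled by the symmetric substitution into the first inequality. The hard part will be the clean extraction of the specific numerical constant $7$: this requires careful bookkeeping of the heaviness contributions coming from the four cases in Lemma~\ref{lem:W_split_L_lemma1} and Lemma~\ref{lem:W_split_L_lemma2}, from the update-side injection bounded via Lemma~\ref{lem:the_number_of_directed_edges_to_u} and Lemma~\ref{lem:the_number_of_directed_edges_to_v}, and from the merging step of the fast update in cases A', B, and C'. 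Each individual constant is small, but demonstrating that they aggregate to at most $7$ once $\alpha \geq 16$ is the delicate numerical fact on which the stated bound hinges.
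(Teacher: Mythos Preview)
Your amortized heaviness argument has a genuine gap in the very first step. You claim that a case 1 iteration, by splitting a node $u_i$ with $t\geq\alpha$ incoming $E_F$-edges at the median, removes at least $\lceil\alpha/2\rceil-c_1$ units of heaviness from $H_U$. But Lemma~\ref{lem:W_split_L_lemma1}-(ii) only gives $\Lambda'_U(u_{i'})+\Lambda'_U(u_{i'+1})\leq\Lambda_U(u_i)-1+\eqfn(\cdot)+\eqfn(\cdot)$, i.e.\ the excess drops by at most one unit net of the possible $+1$ from the new edge to $u_h$. Concretely, when $t=\alpha$ exactly, the original excess is $1$; after the median split both children have roughly $\alpha/2<\alpha$ edges and excess $0$, so only one unit of heaviness is removed, not $\lceil\alpha/2\rceil$. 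Your coupled inequalities therefore collapse to $k_L\leq c_3 r+c_2 k_F$ with no $\alpha$ on the left, which gives nothing.

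The second step is equally problematic. You assert that the update phase injects $O(r)$ total heaviness by charging each injected unit to a new BWT-run, in analogy with Lemma~\ref{lem:modified_k_bound_lemma}-(ii). But Lemma~\ref{lem:fast_update_result}-(ii) shows that a single fast update---which by design is called precisely when \emph{no} new BWT-run is created---can produce a merged node $u_{y'}$ with up to $2\alpha$ incoming edges, i.e.\ excess $\alpha+1$, whereas before the update the $\alpha$-balanced input had zero excess. There are $\Theta(n)$ fast updates, so the total heaviness injected by the update phase is $O(n\alpha)$ in the worst case, with nothing BWT-run-related to charge it to.

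The paper's proof avoids heaviness entirely. It introduces a combinatorial object called an \emph{$\alpha$-partition} of $\{1,\dots,\delta\}$ into intervals, each of which is required to inner-cover at least $\lceil\alpha/2\rceil-5$ starting positions of type-B DBWT-repetitions. The key fact (Lemma~\ref{lem:prt_f}) is that each case~2 iteration, because it splits at the median, allows one $\alpha$-big interval to be cut into two, growing the partition size by exactly one; update operations preserve the partition size via explicit mapping functions $\bmap_L,\bmap_F$. Hence $D_n^\alpha$ admits an $\alpha$-partition of size $k_F$, forcing $|D_n^\alpha|\geq k_F(\lceil\alpha/2\rceil-5)$. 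Comparing with the upper bound $|D_n^\alpha|\leq r+k_L+k_F$ from Lemma~\ref{lem:modified_k_bound_lemma}-(ii) and using $k_L\leq k_F$ yields $k_F(\lceil\alpha/2\rceil-5)\leq r+2k_F$, whence the constant $7$. The argument is about counting nodes, not tracking excess.
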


Because $k_{\splitop}$ is the total number of iterations executed in the balancing operation in the r-comp algorithm, 
$k_{\splitop} = k_{L} + k_{F}$. Hence, Lemma~\ref{lem:modified_k_bound_lemma}-(iii) holds by Lemma~\ref{lem:kL_and_kF} 
and $k_{\splitop} = k_{L} + k_{F}$. 
In the remaining part, we prove Lemma~\ref{lem:kL_and_kF}.


\paragraph{Proof of Lemma~\ref{lem:kL_and_kF}-(i).}
We prove Lemma~\ref{lem:kL_and_kF}-(i) using an $\alpha$-\emph{partition}. 
An $\alpha$-\emph{partition} for a DBWT $D_{\delta}$ of BWT $L_{\delta}$ is a partition of $\{1, 2, \ldots, \delta \}$ 
into disjoint intervals $[w_{1}, w_{2}-1], [w_{2}, w_{3}-1], \ldots, [w_{d}, w_{d+1}-1]$~($w_{1} = 1 < w_{2} < w_{3} < \cdots < w_{d+1} = \delta+1$) such that each interval is called $\alpha$-\emph{big-interval}. 
The size of the partition is denoted as $d$, 
and each $\alpha$-big interval $[w_{x}, w_{x+1}-1]$ ensures that 
interval $[w_{x}, w_{x+1}-1]$ covers 
at least $(\lceil \alpha / 2 \rceil - 5)$ starting positions of the DBWT-repetitions in $D_{\delta}$. 
The following lemma ensures that 
there exists an $\alpha$-partition of size $k_{F}$ for 
the LF-interval graph $\graphds(D_{n}^{\alpha})$ created by the $(n-1)$-th balancing operation of r-comp. 

\begin{lemma}\label{lem:prt_f}
There exists an $\alpha$-partition of size $k_{F}$ for DBWT $D_{n}^{\alpha}$. 
\end{lemma}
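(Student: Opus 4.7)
The plan is to construct the desired $\alpha$-partition inductively throughout the execution of r-comp, using the $k_F$ Case~2 balancing iterations to define the $\alpha$-big-intervals. The key observation is that each Case 2 iteration splits an $\alpha$-heavy F-interval $[\LF_\delta(p_i), \LF_\delta(p_{i+1}-1)]$ at the median of its incoming directed edges in $E_L$, so each of the two resulting F-intervals covers at least $\lceil \alpha/2 \rceil$ starting positions of DBWT-repetitions in the DBWT at the time of the split. Because F-intervals and the DBWT-repetition indices share the common index set $\{1,\ldots,\delta\}$, these splits immediately translate into candidate boundaries for a partition of $\{1, 2, \ldots, \delta\}$.

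First, I would proceed by induction over the sequence of operations performed by r-comp, simultaneously maintaining (i) an $\alpha$-partition of the current index set whose size equals the number of Case 2 iterations executed so far, and (ii) the invariant that every $\alpha$-big-interval covers at least $\lceil \alpha/2 \rceil - 5$ starting positions of DBWT-repetitions in the current DBWT. The inductive step would analyze each operation class separately: an update or fast update perturbs only the $O(1)$ DBWT-repetitions in the vicinity of the insertion position $\inspos$ and shifts partition boundaries by the function $\shfn$; a Case 1 balancing iteration splits a DBWT-repetition, which can only increase the coverage inside each big-interval; and a Case 2 balancing iteration refines the ambient structure so that one new big-interval is added, with each side of the new split initially covering at least $\lceil \alpha/2 \rceil$ starting positions. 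Accumulated over the whole run, this yields the desired $\alpha$-partition of size $k_F$.

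The main obstacle will be showing that the constant slack $-5$ is sufficient to absorb the cumulative perturbations of each big-interval's coverage across all subsequent operations. The argument requires an amortized analysis along the lifetime of each boundary: each update operation can remove or merge a bounded number of DBWT-repetitions near a boundary (via the replace-node, split-node, and insert-node steps, and in particular the merging in Cases A, B, and C of the fast update in Section~\ref{sec:fast_update}), and the cumulative effect per boundary must be shown to be at most $5$ by the time r-comp finishes. Handling the merging step of the fast update is the most delicate part, since it can simultaneously shift a boundary and delete DBWT-repetitions on both sides of it; bounding this cleanly will require identifying, for each boundary, a specific local region whose total perturbation can be charged either to the Case 2 split that created the boundary or to a later Case 2 split that absorbs it. This amortization is exactly what links Lemma~\ref{lem:prt_f} to the bound $k_F \leq r/(\lceil \alpha/2\rceil - 7)$ needed in Lemma~\ref{lem:kL_and_kF}-(i).
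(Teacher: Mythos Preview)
Your inductive framework and the identification of the Case~2 split as the source of the $\lceil \alpha/2\rceil$ count are correct, and this is also how the paper proceeds. However, your understanding of the role of the constant $-5$ is where the proposal diverges from a working argument, and this is a genuine gap.

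You propose that the slack $-5$ is an amortized budget absorbing \emph{cumulative} perturbations of a big-interval's coverage over all subsequent operations. This cannot work: a fixed boundary may have the special character $\$$ land in its vicinity an unbounded number of times over the run of r-comp, and each fast-update merge can delete a DBWT-repetition starting position near that boundary. There is no a~priori constant bound on this cumulative loss, so no finite slack can absorb it in the way you describe. The ``charging to a later Case~2 split'' idea does not rescue this, because a boundary need never be absorbed by a later split.

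The paper avoids this problem by engineering the definition of $\alpha$-big interval so that preservation under every operation is \emph{exact}, not approximate. Two devices make this possible. First, DBWT-repetitions are classified into type~A (the $\$$ repetition and, conditionally, the one immediately following it) and type~B (all others), and only type~B starting positions are counted. Second, ``covers'' is replaced by ``inner covers,'' which excludes the first one or two positions of the interval. With these definitions, the paper proves (Lemmas~\ref{lem:bmap_L_special}, \ref{lem:bmap_L}, \ref{lem:include_bmap_1}, \ref{lem:include_bmap_2}) that every type~B starting position inner-covered by a big-interval before an update maps, via explicit shift functions $\bmap_L,\bmap_F$, to a type~B starting position inner-covered by the corresponding interval afterward. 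The fast-update merge is harmless precisely because the non-leftmost merged nodes are always type~A. The slack $-5$ is then consumed \emph{once}, at the moment a Case~2 split creates the interval: of the $\lceil t'/2\rceil \ge \lceil \alpha/2\rceil$ raw starting positions on each side, at most four can fail to be type~B or fail to be inner-covered, and one more is lost to the floor/ceiling discrepancy. After creation, the count is monotone non-decreasing.
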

Lemma~\ref{lem:prt_f} will be proved later. 
This lemma indicates that $D_{n}^{\alpha}$ consists of 
at least $k_{F} (\lceil \alpha / 2 \rceil - 5)$ DBWT-repetitions. 
By contrast, $D_{n}^{\alpha}$ consists of at most $(r + k_{L} + k_{F})$ DBWT-repetitions by Lemma~\ref{lem:modified_k_bound_lemma}-(ii)~(i.e., $k \leq r + k_{\splitop}$) and $k_{\splitop} = k_{L} + k_{F}$. 
We have $k_{F} (\lceil \alpha / 2 \rceil - 5) \leq r + 2 k_{F}$ for $k_{L} \leq k_{F}$, 
and thus we obtain inequality $k_{F} \leq \frac{r}{\lceil \alpha / 2 \rceil - 7}$. 
Hence, $k_{L}, k_{F} \leq \frac{r}{\lceil \alpha / 2 \rceil - 7}$ by $k_{L} \leq k_{F}$, i.e., 
Lemma~\ref{lem:kL_and_kF}-(i) holds. 
In the next paragraphs, we prove Lemma~\ref{lem:prt_f}. 

\subparagraph{Proof of Lemma~\ref{lem:prt_f}.}
We formally define $\alpha$-big intervals for proving Lemma~\ref{lem:prt_f}. 
For a DBWT $D_{\delta} = L_{\delta}[p_{1}..(p_{2}-1)]$, $L_{\delta}[p_{2}..(p_{3}-1)]$, $\ldots$, $L_{\delta}[p_{k}..(p_{k+1}-1)]$ of BWT $L_{\delta}$, 
we classify each DBWT-repetition as type A or type B. 
A DBWT-repetition $L_{\delta}[p_{x}..p_{x+1}-1]$ is type A if 
either of the following two conditions holds: 
(i) The DBWT-repetition represents special character $\$$ in the BWT~(i.e., $p_{x} = \reppos$ for the position $\reppos$ of special character $\$$ in BWT $L_{\delta}$). 
(ii) The DBWT-repetition is next to special character $\$$ in $L_{\delta}$~(i.e., $p_{x} = \reppos + 1$), 
and the $(\delta-1)$-th update operation split nodes in the split-node step. 
If the DBWT-repetition is not type A, then it is type B. 
Clearly, the number of DBWT-repetitions of type A in a DBWT is at most two, 
and the other DBWT-repetitions are type B. 

We say that an interval $[w, w^{\prime}]$ on $F_{\delta}$ \emph{inner covers} the starting position $p_{x}$ of a DBWT-repetition $L_{\delta}[p_{x}..p_{x+1}-1]$ of type B if either of the following two conditions holds: 
(i) $p_{x} \neq \reppos + 1$ and $p_{x} \in [w + 1, w^{\prime}]$; 
(ii) $p_{x} = \reppos + 1$ and $p_{x} \in [w + 2, w^{\prime}]$. 
An interval $[w, w^{\prime}]$ on $F_{\delta}$ is called $\alpha$-big interval if 
the interval satisfies the following two conditions: 
(i) the interval inner covers at least $(\lceil \alpha / 2 \rceil - 5)$ starting positions of DBWT-repetitions of type B in DBWT $D_{\delta}$; 
(ii) $w = 1$ or $D_{\delta}$ has a DBWT-repetition of type B such that its F-interval starts at position $w$ on $F_{\delta}$. 


The basic idea behind the proof of Lemma~\ref{lem:prt_f} is to create an $\alpha$-partition for the LF-interval graph outputted by an update operation or a balancing operation 
using the $\alpha$-partition for the input LF-interval graph of the operation. 
Because r-comp consists of update and balancing operations, 
we can obtain an $\alpha$-partition for DBWT $D_{n}^{\alpha}$ by repeating this procedure. 

Recall that the input of the $\delta$-th iteration of r-comp is the LF-interval graph $\graphds(D_{\delta}^{\alpha})$ for DBWT $D_{\delta}^{\alpha}$ 
of BWT $L_{\delta}$ and the first character $c$ of suffix $T_{\delta+1}$. 
At the $\delta$-th iteration, update operation $\update(\graphds(D_{\delta}^{\alpha}), c)$ or fast update operation $\fastUpdate(D_{\delta}^{\alpha}), c)$ 
outputs the LF-interval graph $\graphds(D_{\delta+1}^{2\alpha+1})$ of DBWT $D_{\delta+1}^{2\alpha+1}$, 
and balancing operation $\balance(\graphds(D_{\delta+1}^{2\alpha+1}))$ outputs LF-interval graph $\graphds(D_{\delta+1}^{\alpha})$ of $D_{\delta+1}^{\alpha}$. 
Let $k_{F, \delta}$ be the number of iterations for case 2 in the $\delta$-th balancing operation $\balance(\graphds(D_{\delta+1}^{2\alpha+1}))$ executed in the r-comp algorithm. 
The following lemma ensures that we can create $\alpha$-partitions of large size for $D_{\delta}^{\alpha}$ and $D_{\delta+1}^{2\alpha+1}$. 

\begin{lemma}\label{lem:prt_f_sub}
The following two statements hold: 
(i) If there exists an $\alpha$-partition of size $d$ for $D_{\delta+1}^{2\alpha+1}$, 
then there exists an $\alpha$-partition of size $(d+k_{F, \delta})$ for $D_{\delta}^{\alpha+1}$. 
(ii) If there exists an $\alpha$-partition of size $d$ for $D_{\delta}^{\alpha}$, 
then there exists an $\alpha$-partition of size $d$ for $D_{\delta+1}^{2\alpha+1}$.
\end{lemma}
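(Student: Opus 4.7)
The plan is to prove the two statements in opposite orders of difficulty: statement (ii) by a direct structural transport of intervals through the update operation, and statement (i) by an induction over the iterations of the balancing operation, in which each case~2 iteration contributes exactly one additional $\alpha$-big interval to the partition. Both parts rest on the slack built into the definition of an $\alpha$-big interval, namely the constant $-5$ in the bound $\lceil \alpha/2 \rceil - 5$, which is designed precisely to absorb the small perturbations caused by $\$$, the adjacent DBWT-repetition, and type~A repetitions created by a split in the previous update.

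\textbf{Proof sketch of (ii).} The DBWT $D_{\delta+1}^{2\alpha+1}$ is obtained from $D_{\delta}^{\alpha}$ by shifting positions according to the function $\shfn$ (Lemmas~\ref{lem:shiftU} and \ref{lem:shiftV}), inserting a new $\$$-node, replacing the old $\$$-node with $c$, and possibly splitting one DBWT-repetition into two around $\inspos$. Given an $\alpha$-partition $[w_1, w_2-1], \ldots, [w_d, w_{d+1}-1]$ of $\{1, \ldots, \delta\}$ for $D_{\delta}^{\alpha}$, I would map each boundary $w_x$ to $w_x + \shfn(w_x)$ and verify, using the two lemmas above, that the shifted intervals on $\{1, \ldots, \delta+1\}$ still satisfy the boundary condition (either the interval begins at position~$1$ or at the start of the F-interval of a type-B repetition in $D_{\delta+1}^{2\alpha+1}$). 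Since the update introduces at most one new $\$$-node and at most one split, each $\alpha$-big interval loses at most a bounded number of inner-covered starting positions, leaving at least $\lceil \alpha/2 \rceil - 5$ intact.

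\textbf{Proof sketch of (i).} I would proceed by induction on the number of iterations executed by $\balance(\graphds(D_{\delta+1}^{2\alpha+1}))$. The base case is the input partition for $D_{\delta+1}^{2\alpha+1}$. A case~1 iteration splits an $\alpha$-heavy DBWT-repetition but leaves the F-interval structure unchanged, so the partition is preserved. A case~2 iteration splits an $\alpha$-heavy F-interval at the median directed edge $B_{L}(u_{j^{\prime}+\lceil t^{\prime}/2\rceil}, v_{i})$, creating a new F-interval whose starting position $w^\star$ lies strictly inside some existing $\alpha$-big interval $[w_x, w_{x+1}-1]$. Because the split F-interval covers at least $\alpha$ starting positions of DBWT-repetitions and the median partitions them roughly in half, both $[w_x, w^\star-1]$ and $[w^\star, w_{x+1}-1]$ inner-cover at least $\lceil \alpha/2 \rceil - 5$ starting positions of type-B repetitions. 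Thus each case~2 iteration increases the partition size by exactly one, and after the $k_{F,\delta}$ case~2 iterations the total size becomes $d + k_{F,\delta}$ for $D_{\delta+1}^{\alpha}$.

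\textbf{Main obstacle.} The principal bookkeeping difficulty lies in the interplay between type-A and type-B repetitions under the ``inner covers'' condition. When the newly created F-interval boundary $w^\star$ falls close to the position $\reppos$ of $\$$, close to the repetition adjacent to $\$$ when the previous update split nodes, or close to an existing partition boundary, several starting positions are excluded from the inner-covered count. I expect the proof to require a short case analysis on the relative location of $w^\star$ and these exceptional positions, together with a careful count showing that the excluded positions number at most five in every case, so that the slack in $\lceil \alpha/2 \rceil - 5$ suffices. Once this accounting is in place, both (i) and (ii) follow by direct verification against the definitions in the preceding section.
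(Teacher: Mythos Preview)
Your plan for part~(i) is essentially the paper's: both argue by induction over the iterations of $\balance$, with case~1 preserving the partition (Lemma~\ref{lem:alpha_partition_for_case_1}, via the fact that splitting a DBWT-repetition can only add type-B starting positions, Lemma~\ref{lem:partition_lemma}) and case~2 incrementing it by one by splitting the unique $\alpha$-big interval that contains the heavy F-interval at the median edge $p_{j'+\lceil t'/2\rceil}$ (Lemma~\ref{lem:alpha_partition_for_case_2}). Your ``main obstacle'' paragraph correctly identifies where the constant~$5$ is spent.

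Part~(ii), however, has two real gaps.

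First, mapping each boundary $w_x$ to $w_x+\shfn(w_x)$ and appealing only to Lemmas~\ref{lem:shiftU} and~\ref{lem:shiftV} is not enough, because those lemmas describe the plain update operation before the merging step of the \emph{fast} update. When the type-B repetition whose F-interval starts at $w_x$ sits at $p_x=\reppos+1$ and the fast update falls into case~B, that node is absorbed into the merged node $u_{y'}$; the position $w_x+\shfn(w_x)$ is then no longer the start of any F-interval in $D_{\delta+1}^{2\alpha+1}$, so the boundary condition fails. The paper repairs this with explicit transport maps $\bmap_L$ and $\bmap_F$ that carry a special clause for $p_x=\reppos+1$ (Lemmas~\ref{lem:bmap_L_special}--\ref{lem:bmap_F_range}); the new boundaries are $\bmap_F(w_x)$, not $w_x+\shfn(w_x)$.

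Second, your sentence ``each $\alpha$-big interval loses at most a bounded number of inner-covered starting positions, leaving at least $\lceil\alpha/2\rceil-5$ intact'' cannot work as stated: an $\alpha$-big interval may already sit exactly at the threshold $\lceil\alpha/2\rceil-5$, so there is no spare slack to absorb any loss in part~(ii). The paper instead shows that \emph{no} position is lost: Lemmas~\ref{lem:include_bmap_1} and~\ref{lem:include_bmap_2} prove that every inner-covered type-B starting position $p_y$ in $[w_x,w_{x+1}-1]$ maps via $\bmap_L$ to an inner-covered type-B starting position in $[\bmap_F(w_x),\bmap_F(w_{x+1})-1]$, and the range bounds in Lemma~\ref{lem:bmap_L_range} make this map injective. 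The constant~$-5$ is consumed entirely in part~(i); part~(ii) must be lossless.
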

Lemma~\ref{lem:prt_f_sub} will be proved later. 
By this lemma, 
there exists an $\alpha$-partition of size $d + k_{F, \delta}$ for $D_{\delta}^{\alpha+1}$ 
if there exists an $\alpha$-partition of size $d$ for $D_{\delta}^{\alpha}$. 
Because DBWT $D_{1}^{\alpha}$ has an $\alpha$-partition of size $0$, 
there exists an $\alpha$-partition of size $\sum_{\delta = 1}^{n-1} k_{F, \delta}$ for $D_{n}^{\alpha}$. 
Because $\sum_{\delta = 1}^{n-1} k_{F, \delta} = k_{F}$, 
there exists an $\alpha$-partition of size $k_{F}$ for $D_{n}^{\alpha}$, i.e., Lemma~\ref{lem:prt_f} holds. 
In the next paragraphs, we prove Lemma~\ref{lem:prt_f_sub}.

\subparagraph{Proof of Lemma~\ref{lem:prt_f_sub}-(i).}
\begin{figure}[t]
     \begin{center}
		\includegraphics[scale=0.65]{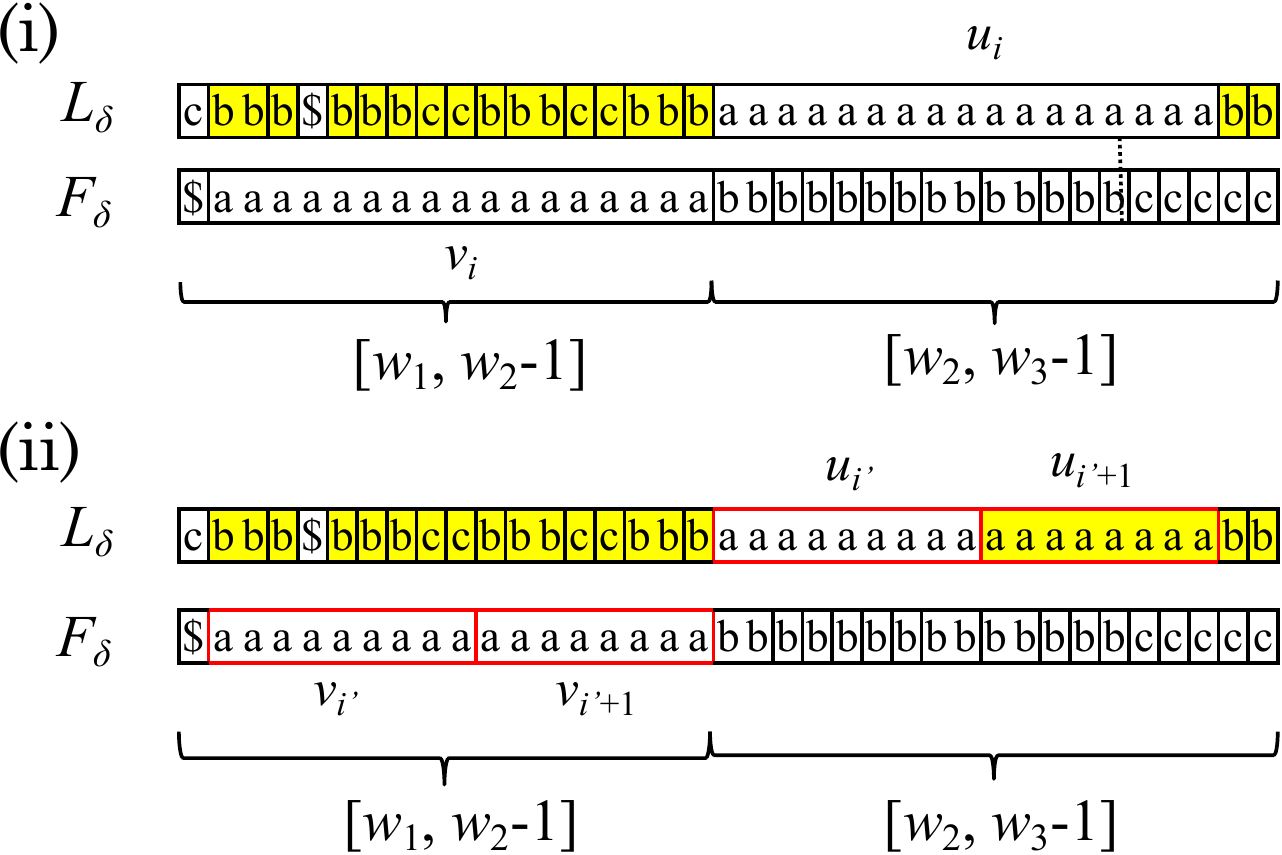}
		\includegraphics[scale=0.65]{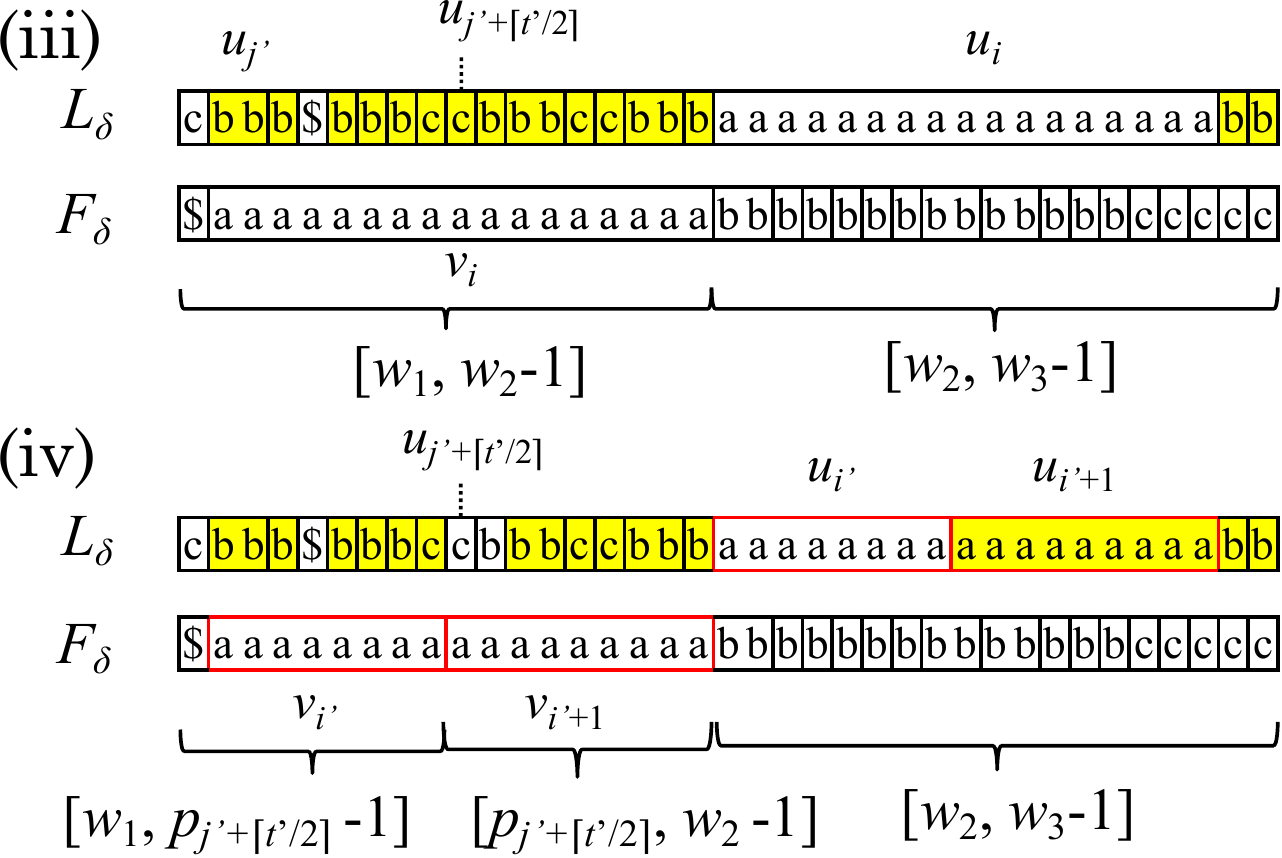}
     \end{center}
	  \caption{
	  $\alpha$-partitions 
	  for  DBWTs (i) $D_{\delta+1}^{O(\alpha)}$ and (ii) $\hat{D}^{O(\alpha)}_{\delta+1}$ in the proof of Lemma~\ref{lem:alpha_partition_for_case_1}.  
      $\alpha$-partitions 
	  for DBWTs (iii) $D_{\delta+1}^{O(\alpha)}$ and (iv) $\hat{D}^{O(\alpha)}_{\delta+1}$ in the proof of Lemma~\ref{lem:alpha_partition_for_case_2}. 
	  Yellow rectangles represent DBWT-repetitions of type B such that each DBWT-repetition is inner covered by an $\alpha$-big interval on $F_{\delta}$.
	  }
 \label{fig:partition_split_case}
\end{figure}

Each iteration of balancing operation $\balance(\graphds(D_{\delta+1}^{2\alpha+1}))$ takes an LF-interval graph $\graphds(D_{\delta+1}^{O(\alpha)})$ as input and 
outputs an LF-interval graph $\graphds(\hat{D}^{O(\alpha)}_{\delta+1})$. 
The following lemma ensures that 
any $\alpha$-big interval $[w, w^{\prime}]$ for $D_{\delta+1}^{O(\alpha)}$ is 
an $\alpha$-big interval for $\hat{D}^{O(\alpha)}_{\delta+1}$.

\begin{lemma}\label{lem:partition_lemma}
The following three statements hold: 
(i) If DBWT $D_{\delta+1}^{O(\alpha)}$ has a DBWT-repetition of type B starting at a position $x$ in $L_{\delta+1}$, 
then DBWT $\hat{D}^{O(\alpha)}_{\delta+1}$ has a DBWT-repetition of type B starting at position $x$ in $L_{\delta+1}$. 
(ii) If DBWT $D_{\delta+1}^{O(\alpha)}$ has a DBWT-repetition of type B such that its F-interval starts at a position $x$ in $F_{\delta+1}$, 
then DBWT $\hat{D}^{O(\alpha)}_{\delta+1}$ has a DBWT-repetition of type B such that its F-interval 
starts at position $x$ in $F_{\delta+1}$. 
(iii) Any $\alpha$-big interval $[w, w^{\prime}]$ for $D_{\delta+1}^{O(\alpha)}$ is 
an $\alpha$-big interval for $\hat{D}^{O(\alpha)}_{\delta+1}$.
\end{lemma}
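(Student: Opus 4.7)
The plan is to reduce the entire lemma to two structural observations about a single balancing iteration and about the type-A/B classification. First, inspecting the first step of the iteration described in Section~\ref{sec:balancing}, the only change to $U$ is that $u_i$ (with starting position $p_i$) is replaced by $u_{i'}$ and $u_{i'+1}$ with starting positions $p_i$ and $p_i + \ell$, so the set of starting positions of DBWT-repetitions gains exactly one element $p_i + \ell$ and loses none. Symmetrically, $v_i$ is replaced by $v_{i'}$ and $v_{i'+1}$, so the set of F-interval starting positions on $F_{\delta+1}$ gains only $\LF_{\delta+1}(p_i) + \ell$. Second, the type-A test for a starting position $p$ is entirely determined by (a) whether $p = \reppos$ or $p = \reppos + 1$ for the position of $\$$ in $L_{\delta+1}$ and (b) whether the $\delta$-th update operation executed a split in its split-node step; both quantities are features of $L_{\delta+1}$ and of the construction history up to step $\delta$, so they are identical for $D^{O(\alpha)}_{\delta+1}$ and $\hat{D}^{O(\alpha)}_{\delta+1}$. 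Hence the type-B predicate on a starting position is invariant across the two DBWTs.

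For statement (i), if a type-B DBWT-repetition of $D^{O(\alpha)}_{\delta+1}$ starts at position $x$, then either $x \neq p_i$, in which case the same node is present in $\hat{D}^{O(\alpha)}_{\delta+1}$ and its type is preserved by the invariance above, or $x = p_i$, in which case $u_{i'}$ starts at $p_i = x$ and inherits the type-B label. Statement (ii) follows in the same way after noting that the undirected edge $(u_{i'}, v_{i'}) \in E_{LF}$ created by the balancing iteration means $v_{i'}$'s F-interval starts at $\LF_{\delta+1}(p_{i'}) = \LF_{\delta+1}(p_i)$: if $v_j$ (with $u_j$ of type B) has F-interval starting at $x$ in $D^{O(\alpha)}_{\delta+1}$, then either $j \neq i$ and the same pair survives in $\hat{D}^{O(\alpha)}_{\delta+1}$, or $j = i$ and $v_{i'}$ paired with $u_{i'}$ witnesses the conclusion, with $u_{i'}$ of type B because it shares its starting position $p_i$ with $u_i$.

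Statement (iii) is then an immediate corollary. Both defining conditions of an $\alpha$-big interval $[w, w']$ for a DBWT are phrased purely in terms of (1) the set of type-B DBWT-repetition starting positions inner-covered by $[w, w']$ and (2) whether $w = 1$ or some type-B DBWT-repetition has its F-interval starting at $w$; the ``inner covers'' relation itself depends only on $\reppos$, which is shared between the two DBWTs. Statement (i) shows that every type-B starting position inner-covered in $D^{O(\alpha)}_{\delta+1}$ remains a type-B starting position (inner-covered, since positions and $\reppos$ are unchanged) in $\hat{D}^{O(\alpha)}_{\delta+1}$, so the count can only grow, and statement (ii) shows that the second condition is preserved as well.

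The only slightly delicate part will be the case $x = p_i$ in statements (i) and (ii), where I would explicitly invoke the analogue of Lemma~\ref{lem:shiftV} for the balancing step to confirm that $u_{i'}$ (respectively $v_{i'}$) really starts at the same position as $u_i$ (respectively $v_i$) and is connected to its counterpart by the new edge in $E_{LF}$; once this is spelled out, the rest of the argument is straightforward bookkeeping and requires no further use of the intricate edge-update machinery of Section~\ref{sec:balancing}.
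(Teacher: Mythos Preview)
Your proposal is correct and follows essentially the same approach as the paper: the paper's proof simply notes that $D^{O(\alpha)}_{\delta+1}$ is obtained from $\hat{D}^{O(\alpha)}_{\delta+1}$ by concatenating two adjacent DBWT-repetitions (equivalently, the balancing iteration only adds one starting position), from which (i) and then (ii) and (iii) follow immediately. Your version is more explicit---spelling out the case split $x = p_i$ versus $x \neq p_i$ and the reason the type-A/B predicate depends only on the starting position and the construction history---but the underlying argument is the same; the reference to ``the analogue of Lemma~\ref{lem:shiftV}'' is unnecessary, since the fact that $u_{i'}$ (resp.\ $v_{i'}$) starts at $p_i$ (resp.\ $\LF_{\delta+1}(p_i)$) is stated directly in the description of the balancing step in Section~\ref{sec:balancing}.
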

\begin{proof}
(i) We can obtain DBWT $D_{\delta+1}^{O(\alpha)}$ by concatenating two DBWT-repetitions in DBWT $\hat{D}^{O(\alpha)}_{\delta+1}$. This fact indicates that Lemma~\ref{lem:partition_lemma}-(i) holds. 
(ii) Lemma~\ref{lem:partition_lemma}-(ii) follows from Lemma~\ref{lem:partition_lemma}-(i). 
(iii) 
The DBWT $\hat{D}^{O(\alpha)}_{\delta+1}$ has an DBWT-repetition of type B 
such that its F-interval starts at position $w$ on $F_{\delta+1}$ by Lemma~\ref{lem:partition_lemma}-(ii). 
Let $X$ be the set of starting positions of DBWT-repetitions of type B for $D_{\delta+1}^{O(\alpha)}$ such that 
the starting position of each DBWT-repetition is inner covered by the interval $[w, w^{\prime}]$. 
Then, the interval $[w, w^{\prime}]$ inner covers at least $|X|$ starting positions of DBWT-repetitions of type B 
for $\hat{D}^{O(\alpha)}_{\delta+1}$ by Lemma~\ref{lem:partition_lemma}-(i). 
Hence, the interval $[w, w^{\prime}]$ is an $\alpha$-big interval for $\hat{D}^{O(\alpha)}_{\delta+1}$ by $|X| \geq (\lceil \alpha / 2 \rceil - 5)$. 
\end{proof}

Let $d$ be the size of an $\alpha$-partition for $D_{\delta+1}^{O(\alpha)}$. 
The following lemma ensures that 
there exists an $\alpha$-partition of size $d$ for $\hat{D}^{O(\alpha)}_{\delta+1}$ 
if LF-interval graph $\graphds(\hat{D}^{O(\alpha)}_{\delta+1})$ is created by applying the procedure for case 1.

\begin{lemma}\label{lem:alpha_partition_for_case_1}
If LF-interval graph $\graphds(\hat{D}^{O(\alpha)}_{\delta+1})$ is created by 
applying the procedure for case 1 to LF-interval graph $\graphds(D_{\delta+1}^{O(\alpha)})$, 
then there exists an $\alpha$-partition of size $d$ for $\hat{D}^{O(\alpha)}_{\delta+1}$. 
\end{lemma}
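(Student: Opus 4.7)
The plan is to argue that exactly the same intervals that witness the $\alpha$-partition of size $d$ for $D_{\delta+1}^{O(\alpha)}$ also witness an $\alpha$-partition of size $d$ for $\hat{D}^{O(\alpha)}_{\delta+1}$. Since an $\alpha$-partition is defined as a decomposition of the integer set $\{1, 2, \ldots, \delta+1\}$ into $\alpha$-big intervals, and since the balancing step preserves the underlying BWT $L_{\delta+1}$~(only the way it is divided into DBWT-repetitions changes), the intervals themselves are still a valid partition of $\{1, 2, \ldots, \delta+1\}$. It therefore suffices to verify that each of the $d$ intervals remains $\alpha$-big after the case-1 split is applied.

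First I would fix an $\alpha$-partition $[w_1, w_2 - 1], [w_2, w_3-1], \ldots, [w_d, w_{d+1}-1]$ for $D_{\delta+1}^{O(\alpha)}$, where by definition each $[w_x, w_{x+1}-1]$ inner covers at least $\lceil \alpha/2\rceil - 5$ starting positions of DBWT-repetitions of type B, and either $w_x = 1$ or there is a type-B DBWT-repetition in $D_{\delta+1}^{O(\alpha)}$ whose F-interval starts at $w_x$ on $F_{\delta+1}$. Then I would invoke Lemma~\ref{lem:partition_lemma}-(iii) on each interval $[w_x, w_{x+1}-1]$ individually; this lemma states exactly that every $\alpha$-big interval for $D_{\delta+1}^{O(\alpha)}$ is also an $\alpha$-big interval for $\hat{D}^{O(\alpha)}_{\delta+1}$. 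Hence the same collection of $d$ intervals forms an $\alpha$-partition of size $d$ for $\hat{D}^{O(\alpha)}_{\delta+1}$, which gives the claim.

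The only subtlety, and hence the main obstacle, is whether the case-1 split can disturb the type A/B classification in a way that Lemma~\ref{lem:partition_lemma} does not already cover. The case-1 iteration replaces $u_i$ with two new nodes $u_{i'}$ and $u_{i'+1}$, and analogously replaces $v_i$; it does not touch the position or the neighbors of the special character $\$$ in $L_{\delta+1}$, nor does it alter which iteration of r-comp produced $L_{\delta+1}$~(so the flag that governs whether the DBWT-repetition next to $\$$ is of type A is unchanged). Thus the new DBWT-repetition starts created by the split sit away from $\reppos$ and $\reppos + 1$, and hence are of type B. Combined with the monotonicity statements in parts (i) and (ii) of Lemma~\ref{lem:partition_lemma}, this ensures that every starting position of a type-B DBWT-repetition in $D_{\delta+1}^{O(\alpha)}$ is still a starting position of a type-B DBWT-repetition in $\hat{D}^{O(\alpha)}_{\delta+1}$, which is the content used in the proof of Lemma~\ref{lem:partition_lemma}-(iii). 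Once this sanity check is recorded, the proof reduces to a one-line invocation of Lemma~\ref{lem:partition_lemma}-(iii) applied to each interval of the given partition.
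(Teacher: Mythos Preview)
Your proposal is correct and follows the same approach as the paper: fix an $\alpha$-partition $[w_1,w_2-1],\ldots,[w_d,w_{d+1}-1]$ for $D_{\delta+1}^{O(\alpha)}$ and apply Lemma~\ref{lem:partition_lemma}-(iii) to each interval to conclude it remains $\alpha$-big for $\hat{D}^{O(\alpha)}_{\delta+1}$. Your added sanity check about the type A/B classification is already subsumed by Lemma~\ref{lem:partition_lemma}-(i) and~(ii) (which are stated for an arbitrary balancing iteration, not just case~1), so the paper's proof dispenses with it and is a one-line invocation of part~(iii).
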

\begin{proof}
Let $[w_{1}, w_{2}-1], [w_{2}, w_{3}-1], \ldots, [w_{d}, w_{d+1}-1]$ be an $\alpha$-partition of size $d$ for $D_{\delta+1}^{O(\alpha)}$. 
Then, each interval is an $\alpha$-big interval for $\hat{D}^{O(\alpha)}_{\delta+1}$ by Lemma~\ref{lem:partition_lemma}-(iii). 
Therefore, sequence $[w_{1}, w_{2}-1], [w_{2}, w_{3}-1], \ldots, [w_{d}, w_{d+1}-1]$ is an $\alpha$-partition for $\hat{D}^{O(\alpha)}_{\delta+1}$. 
Figures~\ref{fig:partition_split_case}-(i) and -(ii) illustrate the $\alpha$-partitions for $D_{\delta+1}^{O(\alpha)}$ and $\hat{D}^{O(\alpha)}_{\delta+1}$, respectively. 
\end{proof}

Similarly, the following lemma ensures that 
we can create an $\alpha$-partition of size $d+1$ for $\hat{D}^{O(\alpha)}_{\delta+1}$ 
if LF-interval graph $\graphds(\hat{D}^{O(\alpha)}_{\delta+1})$ is created by applying the procedure for case 2.

\begin{lemma}\label{lem:alpha_partition_for_case_2}
If LF-interval graph $\graphds(\hat{D}^{O(\alpha)}_{\delta+1})$ is created by 
applying the procedure for case 2 to LF-interval graph $\graphds(D_{\delta+1}^{O(\alpha)})$, 
then there exists an $\alpha$-partition of size $(d+1)$ for $\hat{D}^{O(\alpha)}_{\delta+1}$.
\end{lemma}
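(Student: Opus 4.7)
The plan is to prove Lemma~\ref{lem:alpha_partition_for_case_2} by refining the given $\alpha$-partition of $D^{O(\alpha)}_{\delta+1}$ by exactly one new boundary, where the boundary arises naturally from the split of the $\alpha$-heavy F-interval $v_i$. Concretely, I will take an arbitrary $\alpha$-partition $[w_1,w_2-1], [w_2,w_3-1], \ldots, [w_d,w_{d+1}-1]$ for $D^{O(\alpha)}_{\delta+1}$; by Lemma~\ref{lem:partition_lemma}-(iii) all of these intervals are already $\alpha$-big for $\hat{D}^{O(\alpha)}_{\delta+1}$, so the task reduces to inserting one valid boundary without destroying the $\alpha$-big property of the two resulting sub-intervals.

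The new boundary is located as follows. In case~2 the algorithm splits $v_i$ at the median of the $t' \geq \alpha$ DBWT-repetition starting positions $p_{j^\prime}, p_{j^\prime+1}, \ldots, p_{j^\prime+t^\prime-1}$ it covers, so the new F-interval $v_{i^\prime+1}$ starts on $F_{\delta+1}$ at $w^\ast := p_{j^\prime+\lceil t^\prime/2 \rceil}$. Since $w^\ast$ lies strictly inside the F-interval of $v_i$ in $D^{O(\alpha)}_{\delta+1}$ and is therefore not an F-interval starting position of any DBWT-repetition in $D^{O(\alpha)}_{\delta+1}$, it coincides with no existing boundary $w_y$; hence it lies strictly inside a unique $[w_x,w_{x+1}-1]$. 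I then refine the partition to $[w_1,w_2-1], \ldots, [w_x,w^\ast-1], [w^\ast,w_{x+1}-1], \ldots, [w_d,w_{d+1}-1]$, giving size $d+1$. The starting-position condition~(ii) at $w^\ast$ is immediate once one observes that $u_{i^\prime+1}$ is of type~B: it neither represents $\$$ nor is the DBWT-repetition immediately after a $\$$ produced by a split-node step, because the balancing operation splits a non-$\$$ node and cannot introduce any new occurrence of $\$$.

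What remains is the inner-covering condition. The natural source of inner-covered type~B positions for the two new sub-intervals is the split itself: $v_{i^\prime}$ covers $p_{j^\prime}, \ldots, p_{j^\prime+\lceil t^\prime/2 \rceil-1}$ and $v_{i^\prime+1}$ covers $p_{j^\prime+\lceil t^\prime/2 \rceil}, \ldots, p_{j^\prime+t^\prime-1}$, each a block of at least $\lceil \alpha/2 \rceil$ DBWT-repetition starting positions. Since the F-intervals of $v_{i^\prime}$ and $v_{i^\prime+1}$ lie respectively in $[w_x,w^\ast-1]$ and $[w^\ast,w_{x+1}-1]$, and a DBWT-repetition starting position $p_y$ contained in an F-interval is contained in any integer range containing that F-interval, each sub-interval inherits at least $\lceil \alpha/2 \rceil$ inner-covered positions from this block. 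Subtracting at most two type~A DBWT-repetitions and accounting for the $+1$/$+2$ offsets in the inner-covering definition (which discard at most three more positions in total) leaves at least $\lceil \alpha/2 \rceil - 5$ type~B starting positions inner covered, as required.

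The hard part of the proof will be the careful case analysis needed when $v_i$'s F-interval is not contained in a single $[w_x,w_{x+1}-1]$ but instead straddles some boundary $w_{x+1}$; in that scenario, $v_{i^\prime+1}$'s F-interval may extend past $w_{x+1}-1$ and only part of its covered positions contribute to $[w^\ast,w_{x+1}-1]$. Handling this requires splitting a different big interval in the partition (namely the one that contains $\LF_{\delta+1}(p_{i+1}-1)$ or the one that contains $w^\ast$, whichever gives the larger residual contribution), and then combining this contribution with the inner-covered positions that $[w_x,w_{x+1}-1]$ already inherited from the original partition. Figure~\ref{fig:partition_split_case}-(iii) and -(iv) depict the two representative configurations that arise, and a finite case enumeration on the type of $u_i$ and on whether $\LF(p_i)$ or $\LF(p_{i+1}-1)$ coincides with a partition boundary completes the argument for every $\alpha \geq 16$.
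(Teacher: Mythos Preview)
Your approach matches the paper's: take the existing $\alpha$-partition, find the big interval containing the split point $w^\ast = p_{j'+\lceil t'/2\rceil}$, refine there, and count inner-covered type-B positions on each side. The paper's proof is exactly this.

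Your proposed ``hard part'', however, is a phantom, and the key observation that dissolves it is already implicit in your own reasoning about $w^\ast$. You correctly argue that $w^\ast$ is not an F-interval starting position and therefore coincides with no boundary $w_y$, because each $w_y$ with $y\ge 2$ is, by condition~(ii) of the $\alpha$-big-interval definition, the starting position of some F-interval in $D^{O(\alpha)}_{\delta+1}$. But the very same argument applies to \emph{every} point strictly inside the F-interval $[\LF_{\delta+1}(p_i),\LF_{\delta+1}(p_{i+1}-1)]$ of $v_i$: F-intervals partition $\{1,\ldots,\delta+1\}$, so no F-interval starting position other than $\LF_{\delta+1}(p_i)$ itself lies in that range, hence no boundary $w_y$ does. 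Consequently the entire F-interval of $v_i$ is contained in a single $[w_q,w_{q+1}-1]$, and the straddling scenario you defer to a ``finite case enumeration'' simply never occurs. The paper asserts this containment in one line and proceeds; once you see why it holds, your easy case is the whole proof and the deferred analysis evaporates.

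One minor arithmetic slip: the block covered by $v_{i'+1}$ has $t'-\lceil t'/2\rceil=\lfloor t'/2\rfloor$ positions, which in the worst case is only $\lceil\alpha/2\rceil-1$, not $\lceil\alpha/2\rceil$. After discarding at most four positions (up to two of type~A and up to two lost to the $+1$/$+2$ offset at the left end of the interval) you still reach $\lceil\alpha/2\rceil-5$, so the conclusion survives.
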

\begin{proof}
The balancing operation splits a node $v_{i} \in V$ into new nodes $v_{i^{\prime}}$ and $v_{i^{\prime}+1}$ in the LF-interval graph $\graphds(D_{\delta+1}^{O(\alpha)})$ 
for DBWT $D_{\delta+1}^{O(\alpha)} = L_{\delta}[p_{1}..(p_{2}-1)]$, $L_{\delta}[p_{2}..(p_{3}-1)]$, $\ldots$, $L_{\delta}[p_{k}..(p_{k+1}-1)]$. 
The node $v_{i}$ is connected to node $u_{i} \in U$ by an edge in $E_{LF}$, 
and node $u_{i}$ is split into new nodes $u_{i^{\prime}}$ and $u_{i^{\prime}+1}$. 
For case 2, node $u_{j^{\prime}} \in U$ is the most forward node in the doubly linked list of $U$ of the nodes connected to $v_{i}$ by directed edges in set $E_{L}$, 
and $t^{\prime} \geq \alpha$ is the number of directed edges pointing to $v_{i}$. 
Let $[w_{1}, w_{2}-1], [w_{2}, w_{3}-1], \ldots, [w_{d}, w_{d+1}-1]$ be an $\alpha$-partition of size $d$ for $D_{\delta+1}^{O(\alpha)}$. 
Then, there exists an $\alpha$-big interval $[w_{q}, w_{q+1}-1]$ containing the F-interval $[\LF_{\delta+1}(p_{i}), \LF_{\delta+1}(p_{i+1}-1)]$ of $v_{i}$~(i.e., $[\LF_{\delta+1}(p_{i}), \LF_{\delta+1}(p_{i+1}-1)] \subseteq [w_{q}, w_{q+1}-1]$). 
The $\alpha$-big interval $[w_{q}, w_{q+1}-1]$ covers the starting position $p_{j^{\prime} + \lceil t^{\prime} / 2 \rceil }$ of the DBWT-repetition represented as node $u_{j^{\prime} + \lceil t^{\prime} / 2 \rceil} \in U$.

We show that $(d+1)$ intervals $[w_{1}, w_{2}-1], [w_{2}, w_{3}-1], \ldots, [w_{q-1}, w_{q}-1], [w_{q}, p_{j^{\prime} + \lceil t^{\prime} / 2 \rceil } - 1], [p_{j^{\prime} + \lceil t^{\prime} / 2 \rceil }, w_{q + 1}-1], [w_{q+1}, w_{q+2}-1], \ldots, [w_{d}, w_{d+1}-1]$ are an $\alpha$-partition for $\hat{D}^{O(\alpha)}_{\delta+1}$. 
Here, intervals $[w_{q}, p_{j^{\prime} + \lceil t^{\prime} / 2 \rceil } - 1]$ and 
$[p_{j^{\prime} + \lceil t^{\prime} / 2 \rceil }, w_{q + 1}-1]$ can be obtained 
by splitting $\alpha$-big interval $[w_{q}, w_{q+1}-1]$ at position $p_{j^{\prime} + \lceil t^{\prime} / 2 \rceil }$. 
Figures~\ref{fig:partition_split_case}-(iii) and -(iv) illustrate two $\alpha$-partitions for $D_{\delta+1}^{O(\alpha)}$ and $\hat{D}^{O(\alpha)}_{\delta+1}$, respectively.

For all $x \in \{ 1, 2, \ldots, d \} \setminus \{ q \}$, 
interval $[w_{x}, w_{x}-1]$ is an $\alpha$-big interval 
for $\hat{D}^{O(\alpha)}_{\delta+1}$ by Lemma~\ref{lem:partition_lemma}-(iii). 
The interval $[w_{q}, p_{j^{\prime} + \lceil t^{\prime} / 2 \rceil } - 1]$ covers 
at least $\lceil t^{\prime} / 2 \rceil$ starting positions of DBWT-repetitions for $D_{\delta+1}^{O(\alpha)}$. 
In the DBWT-repetitions, at least $(\lceil t^{\prime} / 2 \rceil -4)$ DBWT-repetitions are 
type B and the starting positions of the DBWT-repetitions are inner covered 
by the interval $[w_{q}, p_{j^{\prime} + \lceil t^{\prime} / 2 \rceil } - 1]$. 
$\lceil t^{\prime} / 2 \rceil -4 \geq \lceil \alpha / 2 \rceil - 5$ 
because $\lceil t^{\prime} / 2 \rceil \geq \lceil \alpha / 2 \rceil$. 
Hence, $[w_{q}, p_{j^{\prime} + \lceil t^{\prime} / 2 \rceil } - 1]$ is an $\alpha$-big interval 
for $D_{\delta+1}^{O(\alpha)}$, 
and the interval is an $\alpha$-big interval for $\hat{D}^{O(\alpha)}_{\delta+1}$ by Lemma~\ref{lem:partition_lemma}-(iii).

Similarly, interval $[p_{j^{\prime} + \lceil t^{\prime} / 2 \rceil }, w_{q + 1}-1]$ inner covers 
at least $(t^{\prime} - \lceil t^{\prime} / 2 \rceil - 4)$ starting positions of DBWT-repetitions of type B for $D_{\delta+1}^{O(\alpha)}$. 
$t^{\prime} - \lceil t^{\prime} / 2 \rceil - 4 \geq \lceil \alpha / 2 \rceil - 5$ 
because $t^{\prime} - \lceil t^{\prime} / 2 \rceil \geq \lceil \alpha / 2 \rceil - 1$. 
Hence, the interval inner covers at least $(\lceil \alpha / 2 \rceil - 5)$ starting positions of DBWT-repetitions 
for $\hat{D}^{O(\alpha)}_{\delta+1}$ by Lemma~\ref{lem:partition_lemma}-(ii). 

Next, node $u_{i^{\prime}+1}$ represents a DBWT-repetition of type B, 
and its F-interval starts at position $p_{j^{\prime} + \lceil t^{\prime} / 2 \rceil }$. 
The position is the starting position of interval $[p_{j^{\prime} + \lceil t^{\prime} / 2 \rceil }, w_{q + 1}-1]$ on $F_{\delta+1}$, 
and hence the interval is an $\alpha$-big interval for $\hat{D}^{O(\alpha)}_{\delta+1}$. 
Therefore, intervals $[w_{1}, w_{2}-1], [w_{2}, w_{3}-1], \ldots, [w_{q-1}, w_{q}-1], [w_{q}, p_{j^{\prime} + \lceil t^{\prime} / 2 \rceil } - 1], [p_{j^{\prime} + \lceil t^{\prime} / 2 \rceil }, w_{q + 1}-1], [w_{q+1}, w_{q+2}-1], \ldots, [w_{d}, w_{d+1}-1]$ form an $\alpha$-partition for $\hat{D}^{O(\alpha)}_{\delta+1}$. 
\end{proof}

Finally, we obtain Lemma~\ref{lem:prt_f_sub}-(i) by Lemmas~\ref{lem:alpha_partition_for_case_1} and \ref{lem:alpha_partition_for_case_2}.  

\subparagraph{Proof of Lemma~\ref{lem:prt_f_sub}-(ii).}

The basic idea behind the proof of Lemma~\ref{lem:prt_f_sub}-(ii) is similar to the proof of Lemma~\ref{lem:prt_f_sub}-(i). 
We can map DBWT-repetitions of type B in DBWT $D_{\delta}^{\alpha}$ into 
distinct DBWT-repetitions of type B in DBWT $D_{\delta+1}^{2\alpha + 1}$. 
For an $\alpha$-partition for $D_{\delta+1}^{2\alpha + 1}$, 
we map a given $\alpha$-big interval for $D_{\delta}^{\alpha}$ into an $\alpha$-big interval for $D_{\delta+1}^{2\alpha + 1}$ 
using the mapping for DBWT-repetitions of type B. 

Function $\bmap_{L}$ maps the given starting position $p_{x}$ of a DBWT-repetition $L_{\delta}[p_{x}..p_{x+1}-1]$ of type B in DBWT $D_{\delta}^{\alpha} = L_{\delta}[p_{1}..(p_{2}-1)]$, $L_{\delta}[p_{2}..(p_{3}-1)]$, $\ldots$, $L_{\delta}[p_{k}..(p_{k+1}-1)]$ 
into the starting position of a DBWT-repetition of type B in DBWT $D_{\delta+1}^{2\alpha + 1}$. 
Formally, 
we consider two cases: $p_{x} = \reppos + 1$ or not. 
For $p_{x} = \reppos + 1$, 
if the $\delta$-th update operation is the fast update operation with case B, 
then $\bmap_{L}(\reppos + 1) = \reppos + \shfn(\reppos)$; otherwise, $\bmap_{L}(\reppos + 1) = (\reppos + 1) + \shfn(\reppos + 1)$. 
Here, $\shfn$ is the function introduced in Appendix~\ref{app:computingedge}~(i.e., $\shfn(x) = 0$ for $x \in \{ 1, 2, \ldots, \delta \}$ and for the position $\inspos$ of special character $\$$ in $L_{\delta+1}$ if $x < \inspos$; otherwise $\shfn(x) = 1$).
Similarly, for $p_{x} \neq \reppos + 1$, $\bmap_{L}(p_{x}) = p_{x} + \shfn(p_{x})$. 
The following two lemmas ensure that this function returns the starting position of a DBWT-repetition of type B in DBWT $D_{\delta+1}^{2\alpha + 1}$. 

\begin{lemma}\label{lem:bmap_L_special}
For a DBWT-repetition $L_{\delta}[p_{x}..p_{x+1}-1]$ of type B with $p_{x} = \reppos + 1$, 
$\bmap_{L}(p_{x})$ is the starting position of a DBWT-repetition of type B in DBWT $D_{\delta+1}^{2\alpha + 1}$. 
\end{lemma}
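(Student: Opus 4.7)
The proposed plan is to split on whether the $\delta$-th update is the fast update operation in case B, compute $\bmap_{L}(p_{x})$ explicitly in each sub-case as the starting position of a concrete DBWT-repetition in $D_{\delta+1}^{2\alpha+1}$, and finally verify that this DBWT-repetition is of type B by ruling out both type A conditions. Note that the assumption that $L_{\delta}[p_{x}..p_{x+1}-1]$ is of type B with $p_{x} = \reppos + 1$ tells us only that the $(\delta-1)$-th update did not split nodes; the work is entirely on how the $\delta$-th update transports this repetition.

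The starting-position identification proceeds as follows. Recall that $L_{\delta+1}$ is obtained from $L_{\delta}$ by replacing $L_{\delta}[\reppos] = \$$ with $c$ and inserting $\$$ at position $\inspos$, so every position $y \neq \reppos$ in $L_{\delta}$ appears unchanged at position $y + \shfn(y)$ in $L_{\delta+1}$. In the non-case-B branch the node $u_{i+1}$ at position $\reppos + 1$ is neither split nor merged, so its DBWT-repetition survives intact and now starts at $(\reppos+1) + \shfn(\reppos+1) = \bmap_{L}(p_{x})$. In the case-B branch the case-B hypothesis, together with the equality $u_{i^{\prime}+1} = u_{i+1}$ (since $u_{x^{\prime}}$, having label $\$$, cannot coincide with $u_{i^{\prime}+1}$), forces $L_{\delta}[\reppos + 1] = c$; after the replacement, positions $\reppos$ and $\reppos+1$ therefore hold equal characters, and the fast-update merge fuses $u_{i^{\prime}}$ with $u_{i+1}$ into a single DBWT-repetition starting at $\reppos + \shfn(\reppos) = \bmap_{L}(p_{x})$.

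The type B verification then reduces to excluding both type A conditions for the identified DBWT-repetition in $D_{\delta+1}^{2\alpha+1}$. Type A condition (i) is immediate because its character is either $L_{\delta}[p_{x}]$ or $c$, neither of which is $\$$. The main obstacle is condition (ii), which requires the identified repetition to start at $\inspos + 1$ while the $\delta$-th update splits nodes. A straightforward case analysis on $\inspos$ relative to $\reppos$ using the definitions of $\shfn$ and $\bmap_{L}$ shows that $\bmap_{L}(p_{x}) = \inspos + 1$ forces either $\inspos = \reppos$ (in the case-B branch) or $\inspos = \reppos + 1 = p_{x}$ (in the non-case-B branch). In both situations $\inspos$ coincides with the starting position of some DBWT-repetition in $D_{\delta}^{\alpha}$, so there is no node $u_{j}$ with $p_{j} < \inspos < p_{j+1}$, and by the split criterion in Section~\ref{sec:update_graph} the split-node step of the $\delta$-th update performs no split. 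Hence type A condition (ii) also fails, completing the argument.
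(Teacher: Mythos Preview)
Your overall strategy is sound, and your type-B verification via the position equation $\bmap_{L}(p_{x}) = \inspos+1$ is a legitimate alternative to the paper's approach. The paper instead first observes that in $D_{\delta+1}^{2\alpha+1}$ only the nodes $u_{x'}$ and $u_{j'+1}$ can be of type~A, then tracks which concrete node ends up at position $\bmap_{L}(p_{x})$ and checks that it is neither of those two. Your position-arithmetic argument (showing that $\bmap_{L}(p_{x}) = \inspos+1$ forces $\inspos \in \{p_{i}, p_{x}\}$, hence no split) disposes of type~A condition~(ii) without ever naming the node, which is arguably cleaner.

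However, there is a genuine gap in your starting-position identification for the non-case-B branch. You assert that $u_{i+1}$ is ``neither split nor merged,'' but nothing prevents $u_{i+1} = u_{j}$, i.e., $\reppos+1 = p_{j} < \inspos < p_{j+1}$. In that event the split-node step of the $\delta$-th update replaces $u_{i+1}$ by $u_{j'}$ and $u_{j'+1}$; the paper's proof explicitly lists this as its condition~(3). The fix is easy---$u_{j'}$ still starts at $(\reppos+1)+\shfn(\reppos+1)$, and your type-B check then goes through unchanged since $u_{j'} \neq u_{j'+1}$---but it must be said, because as written your claim that $u_{i+1}$ ``survives intact'' is false. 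The same oversight recurs in the case-B branch: your equality $u_{i'+1} = u_{i+1}$ can fail for the same reason (one may have $u_{i'+1} = u_{j'}$), though here the starting position of the merged node $u_{y'}$ is unaffected.
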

\begin{proof}

Before the insert-node step of the $\delta$-th update operation is executed, 
the given LF-interval graph has node $u_{i^{\prime}} \in U$, which represents the input character $c$ of the update operation, 
and node $u_{i^{\prime}+1} \in U$, which is next to $u_{i^{\prime}}$ in the doubly linked list of $U$. 
If this update operation splits a node $u_{j} \in U$ into two nodes $u_{j^{\prime}}$ and $u_{j^{\prime}+1}$ in the split-node step, 
then the new node $u_{x^{\prime}} \in U$ representing special character $\$$ is inserted into the doubly linked list of $U$ at the position previous to $u_{j^{\prime}+1}$. 
Otherwise, $u_{x^{\prime}} \in U$ is inserted into in the doubly linked list of $U$ at the position previous to a node $u \neq u_{j^{\prime}+1}$. 
Hence, for LF-interval graph $\graphds(D_{\delta+1}^{2\alpha + 1})$, 
a node in $U$ represents the DBWT-repetition of type A only if 
the node is $u_{x^{\prime}}$ or $u_{j^{\prime}+1}$. 

We consider two cases: (i) the $\delta$-th update operation is the fast update operation with case B or (ii) not. 
For case (i), 
two nodes $u_{i^{\prime}}$ and $u_{i^{\prime}+1}$ are merged into a new node $u_{y^{\prime}}$, 
which represents a DBWT-repetition starting at position $\reppos + \shfn(\reppos)$ in $L_{\delta+1}$. 
Because $u_{y^{\prime}} \not \in \{ u_{x^{\prime}}, u_{j^{\prime}+1} \}$, 
DBWT $D_{\delta+1}^{2\alpha + 1}$ has a DBWT-repetition of type B starting at position $\bmap_{L}(\reppos + 1) = \reppos + \shfn(\reppos)$. 

For case (ii), 
the $\delta$-th update operation is $\update(\graphds(D_{\delta}^{\alpha}), c)$ or the fast update operation with case A'. 
LF-interval graph $\graphds(D_{\delta}^{\alpha})$ has the node $u_{i+1} \in U$ representing the DBWT-repetition starting at position $\reppos + 1$ in $L_{\delta}$. 
Clearly, the DBWT-repetition represented by $u_{i+1}$ is not special character $\$$. 
In this case, $u_{i+1}$ satisfies one of the following three conditions: 
(1) $u_{i+1}$ is not removed from $U$ by the split-node step, 
and $u_{x^{\prime}}$ is not previous to $u_{i+1}$ in the doubly linked list of $U$; 
(2) $u_{i+1}$ is not removed from $U$ by the split-node step, 
and $u_{x^{\prime}}$ is previous to $u_{i+1}$ in the doubly linked list of $U$; 
(3) $u_{i+1}$ is split into two new nodes $u_{j^{\prime}}$ and $u_{j^{\prime}+1}$. 
The three conditions ensures that 
LF-interval graph $\graphds(D_{\delta+1}^{2\alpha + 1})$ has $u_{i+1}$ or $u_{j^{\prime}}$, 
and the two nodes represent the DBWT-repetition starting at position $(\reppos + 1) + \shfn(\reppos + 1)$ in $L_{\delta+1}$. 
Because $u_{i+1}, u_{j^{\prime}} \not \in \{ u_{x^{\prime}}, u_{j^{\prime}+1} \}$, 
$D_{\delta+1}^{2\alpha + 1}$ has a DBWT-repetition of type B starting at position $\bmap_{L}(\reppos + 1) = (\reppos + 1) + \shfn(\reppos + 1)$. 

\end{proof}

\begin{lemma}\label{lem:bmap_L}
For a DBWT-repetition $L_{\delta}[p_{x}..p_{x+1}-1]$ of type B with $p_{x} \neq \reppos + 1$, 
$\bmap_{L}(p_{x})$ is the starting position of a DBWT-repetition of type B in DBWT $D_{\delta+1}^{2\alpha + 1}$. 
\end{lemma}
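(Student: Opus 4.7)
The plan is to mirror the structure of Lemma~\ref{lem:bmap_L_special}, establishing two separate claims: (a) the position $\bmap_L(p_x)=p_x+\shfn(p_x)$ is the starting position of some DBWT-repetition in $D_{\delta+1}^{2\alpha+1}$, and (b) that DBWT-repetition is of type B. Throughout, I will use that the hypothesis $p_x \neq \reppos+1$ combined with $p_x$ being type B forces $p_x \notin \{\reppos,\reppos+1\}$, so $x \notin \{i,i+1\}$. The key structural fact driving the proof is that the map $y \mapsto y+\shfn(y)$ from $\{1,\dots,\delta\}$ into $\{1,\dots,\delta+1\}$ is injective and skips the value $\inspos$.

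For claim (a), I would first apply Lemma~\ref{lem:shiftU} immediately after the insert-node step. Part~(i) of that lemma gives, for every $x' \notin \{i,j\}$, a DBWT-repetition represented by $u_{x'}$ that starts at $p_{x'}+\shfn(p_{x'})$ in $L_{\delta+1}$; part~(iv) gives a DBWT-repetition $u_{j'}$ starting at $p_j=p_j+\shfn(p_j)$ (since $p_j<\inspos$). Thus for our type B position $p_x$, which is never equal to $\reppos$, the position $\bmap_L(p_x)$ is already a starting position of a DBWT-repetition in the intermediate DBWT produced immediately after the insert-node step. It remains to verify that if the $\delta$-th update is a fast update, the subsequent merging of at most three consecutive nodes $u_{i'-1},u_{i'},u_{i'+1}$ (cases A', B, C') does not eliminate $\bmap_L(p_x)$ as a starting position. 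The only starting positions that merging can erase are those of $u_{i'}$ and $u_{i'+1}$, which equal $\reppos+\shfn(\reppos)$ and (after tracking which of insert-node cases (i)--(iii) applies) either $(\reppos+1)+\shfn(\reppos+1)$ or the image under the shift of another position next to $\reppos$ or $\inspos$. By injectivity of $\bmap_L$ and by $p_x \notin \{\reppos,\reppos+1\}$, none of these collides with $\bmap_L(p_x)$.

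For claim (b), I will rule out each type A case for the position $\bmap_L(p_x)$ in $L_{\delta+1}$. Type A condition (i) requires $\bmap_L(p_x)=\inspos$, which is impossible because $y+\shfn(y)$ skips $\inspos$. Type A condition (ii) requires $\bmap_L(p_x)=\inspos+1$ \emph{and} that the $\delta$-th update operation split a node in the split-node step. The equation $\bmap_L(p_x)=\inspos+1$ forces $p_x=\inspos$, which means $\inspos$ is itself a starting position of a DBWT-repetition in $D_\delta^\alpha$; but the split-node step splits $u_j$ only when $p_j<\inspos<p_{j+1}$, so no $u_j$ exists and no split occurs. Hence condition (ii) also fails, and the DBWT-repetition at $\bmap_L(p_x)$ is of type B.

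The main obstacle is the fast-update bookkeeping in claim (a): identifying exactly which nodes participate in the merge and which starting positions they carry in $L_{\delta+1}$, since this depends jointly on whether the split-node step triggered (affecting $u_j$ and determining via Lemma~\ref{lem:not_case_D} which of A', B, C' applies) and on which of the three insert-node placements was used for $u_{x'}$. All of this complication collapses once one observes that the merged starting positions all lie in the image of $\{\reppos,\reppos+1\}$ under $\bmap_L$, so the injectivity of $\bmap_L$ finishes the argument without a laborious case split.
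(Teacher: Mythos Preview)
Your proposal is correct and follows the same two-part structure as the paper's proof: first establish that $\bmap_L(p_x)=p_x+\shfn(p_x)$ is a starting position of a DBWT-repetition in $D_{\delta+1}^{2\alpha+1}$, then rule out the two type~A conditions. The paper's proof is considerably terser: it simply cites Lemma~\ref{lem:shiftU} for part~(a), asserts that the DBWT-repetition is not $\$$, and states without further argument that the node is not $u_{j'+1}$. Your version supplies the justifications the paper omits---in particular, you handle the fast-update merging explicitly (observing that only the starting positions of $u_{i'}$ and $u_{i'+1}$, which lie in the image of $\{\reppos,\reppos+1\}$ under $y\mapsto y+\shfn(y)$, can be erased), and you make explicit the $p_x=\inspos\Rightarrow\text{no split}$ reasoning behind the paper's assertion that the node is not $u_{j'+1}$. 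The hedging sentence about ``another position next to $\reppos$ or $\inspos$'' is unnecessary: a short case check shows that whenever $u_{i'+1}$ participates in a merge it necessarily starts at $(\reppos+1)+\shfn(\reppos+1)$, so your final summary is in fact exact.
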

\begin{proof}
For $p_{x} \neq \reppos + 1$, 
whether the $\delta$-th update operation is the fast update operation or not, 
DBWT $D_{\delta+1}^{2\alpha + 1}$ has the DBWT-repetition starting at position $\bmap_{L}(p_{x}) = p_{x} + \shfn(p_{x})$ in $L_{\delta+1}$ by Lemma~\ref{lem:shiftU}, 
and the DBWT-repetition is not special character $\$$ in $L_{\delta+1}$. 
The node representing the DBWT-repetition is not node $u_{j^{\prime}+1}$, which is created in the split-node step. 
Hence, the DBWT-repetition is type B. 

\end{proof}

The following lemma gives the upper and lower bounds on $\bmap_{L}(p_{x})$. 
\begin{lemma}\label{lem:bmap_L_range}
$(p_{x} - 1) + \shfn(p_{x} - 1) < \bmap_{L}(p_{x}) \leq p_{x} + \shfn(p_{x})$ if $p_{x} \neq \reppos + 1$; 
otherwise, $(p_{x} - 2) + \shfn(p_{x} - 2) < \bmap_{L}(p_{x}) \leq p_{x} + \shfn(p_{x})$. 
Here, $\shfn(- 1)$ is defined as $0$. 
\end{lemma}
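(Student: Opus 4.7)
\textbf{Plan for the proof of Lemma~\ref{lem:bmap_L_range}.} The plan is a direct case analysis on the three-way definition of $\bmap_{L}$. The key auxiliary fact I will use is that $\shfn$ is non-decreasing on its domain: since $\shfn(x) = 0$ precisely when $x < \inspos$ and $\shfn(x) = 1$ when $x \geq \inspos$, we have $\shfn(a) \leq \shfn(b)$ whenever $a \leq b$. Together with the convention that $\shfn(-1) = 0$ (and the analogous natural extension $\shfn(0) = 0$), this handles positions near the beginning of the string.

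First I would dispose of the case $p_{x} \neq \reppos + 1$, in which $\bmap_{L}(p_{x}) = p_{x} + \shfn(p_{x})$ by definition. The upper bound $\bmap_{L}(p_{x}) \leq p_{x} + \shfn(p_{x})$ then holds with equality. For the lower bound, monotonicity of $\shfn$ gives $\shfn(p_{x} - 1) \leq \shfn(p_{x})$, so
\[
(p_{x} - 1) + \shfn(p_{x} - 1) \;\leq\; (p_{x} - 1) + \shfn(p_{x}) \;<\; p_{x} + \shfn(p_{x}) \;=\; \bmap_{L}(p_{x}).
\]

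Next I would handle $p_{x} = \reppos + 1$ by splitting further on whether the $\delta$-th update operation is the fast update operation with case B. In that sub-case, $\bmap_{L}(p_{x}) = \reppos + \shfn(\reppos) = (p_{x} - 1) + \shfn(p_{x} - 1)$, so the upper bound reduces to $(p_{x} - 1) + \shfn(p_{x} - 1) \leq p_{x} + \shfn(p_{x})$, which is immediate from monotonicity, and the lower bound follows from the chain $(p_{x} - 2) + \shfn(p_{x} - 2) \leq (p_{x} - 2) + \shfn(p_{x} - 1) < (p_{x} - 1) + \shfn(p_{x} - 1)$. Otherwise, $\bmap_{L}(p_{x}) = p_{x} + \shfn(p_{x})$ just as in the earlier case, giving the upper bound by equality, while the lower bound $(p_{x} - 2) + \shfn(p_{x} - 2) < p_{x} + \shfn(p_{x})$ follows by applying monotonicity twice: $(p_{x} - 2) + \shfn(p_{x} - 2) \leq (p_{x} - 2) + \shfn(p_{x}) < p_{x} + \shfn(p_{x})$.

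There is no genuine obstacle in this proof; it is a short unpacking of definitions plus one monotonicity fact. The only subtlety worth flagging is the boundary behaviour: when $p_{x}$ is very small (in particular when $p_{x} = \reppos + 1 = 2$), the expression $\shfn(p_{x} - 2)$ may be evaluated at a non-positive argument, which is exactly why the lemma adopts the convention $\shfn(-1) = 0$. With this convention in place every inequality above goes through uniformly, and the lemma follows.
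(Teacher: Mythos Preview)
Your proposal is correct and follows essentially the same approach as the paper's own proof: a direct case analysis on the definition of $\bmap_{L}$ together with the monotonicity of $\shfn$. The only cosmetic difference is that the paper treats both sub-cases of $p_{x} = \reppos + 1$ at once by writing $\bmap_{L}(p_{x}) \in \{\reppos + \shfn(\reppos),\, (\reppos+1) + \shfn(\reppos+1)\}$ and bounding this set, whereas you split explicitly on whether the fast update operation uses case~B; the inequalities established are identical.
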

\begin{proof}
We have $(p_{x} - 1) + \shfn(p_{x} - 1) < \bmap_{L}(p_{x}) \leq p_{x} + \shfn(p_{x})$ for $p_{x} \neq \reppos + 1$ 
because $\bmap_{L}(p_{x}) = p_{x} + \shfn(p_{x})$ and $0 \leq \shfn(p_{x} - 1) \leq \shfn(p_{x}) \leq 1$. 
For $p_{x} = \reppos + 1$, 
$\bmap_{L}(p_{x}) \in \{ \reppos + \shfn(\reppos), (\reppos + 1) + \shfn(\reppos + 1) \}$. 
We have $(p_{x}-1) + \shfn(p_{x}-1) \leq \bmap_{L}(p_{x}) \leq p_{x} + \shfn(p_{x})$ 
because $0 \leq \shfn(\reppos) \leq \shfn(\reppos + 1) \leq 1$. 
$(p_{x}-1) + \shfn(p_{x}-1) > (p_{x}-2) + \shfn(p_{x}-2)$ 
because $0 \leq \shfn(p_{x}-2) \leq \shfn(p_{x}-1) \leq 1$. 
Therefore, we obtain Lemma~\ref{lem:bmap_L_range}.
\end{proof}

Similarly, function $\bmap_{F}$ maps the given starting position $\LF_{\delta}(p_{x})$ of an F-interval $[\LF_{\delta}(p_{x}), \LF_{\delta}(p_{x+1}-1)]$ 
such that its DBWT-repetition is type B in DBWT $D_{\delta}^{\alpha}$, into 
the starting position of an F-interval such that its DBWT-repetition is type B in DBWT $D_{\delta+1}^{2\alpha + 1}$. 
Formally, $\bmap_{F}(1) = 1$, 
and for a DBWT-repetition $L_{\delta}[p_{x}..p_{x+1}-1]$ of type B in $D_{\delta}^{\alpha}$,  
$\bmap_{F}(\LF_{\delta}(p_{x}))$ is defined as $\LF_{\delta+1}(\bmap_{L}(p_{x}))$. 
Function $\bmap_{L}(p_{x})$ returns the starting position of a DBWT-repetition of type B in $D_{\delta+1}^{2\alpha + 1}$, 
and LF function $\LF_{\delta+1}$ returns the starting position of the F-interval 
such that its DBWT-repetition starts at a given position in $L_{\delta+1}$. 
Hence, $\bmap_{F}$ always returns $1$ or the starting position of the F-interval on $F_{\delta+1}$ such that its DBWT-repetition is type B in $D_{\delta+1}^{2\alpha + 1}$. 

The following lemma gives the upper and lower bounds on $\bmap_{F}(\LF_{\delta}(p_{x}))$. 
\begin{lemma}\label{lem:bmap_F_range}
Let $p^{\prime}$ be the starting position of an F-interval $[\LF_{\delta}(p_{x}), \LF_{\delta}(p_{x+1}-1)]$ 
such that its DBWT-repetition is type B in DBWT $D_{\delta}^{\alpha}$~(i.e., $p^{\prime} = \LF_{\delta}(p_{x})$). 
Then, $(p^{\prime} - 1) + \shfn(p^{\prime} - 1) < \bmap_{F}(p^{\prime}) \leq p^{\prime} + \shfn(p^{\prime})$. 
\end{lemma}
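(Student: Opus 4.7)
The plan is to unfold the definition $\bmap_{F}(p') = \LF_{\delta+1}(\bmap_{L}(p_{x}))$ and then split on the two possibilities for $\bmap_{L}(p_{x})$ afforded by Lemma~\ref{lem:bmap_L_range}. Either $\bmap_{L}(p_{x}) = p_{x} + \shfn(p_{x})$ (the generic situation), or $p_{x} = \reppos + 1$ together with the $\delta$-th update being the fast update with case~B, in which case $\bmap_{L}(p_{x}) = \reppos + \shfn(\reppos)$.

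In the generic sub-case I would adapt the LF-formula case analysis already used inside the proof of Lemma~\ref{lem:new_DBWT_repetition_and_F-interval}-(1): splitting on whether $c < L_{\delta}[p_{x}]$, $c > L_{\delta}[p_{x}]$, or $c = L_{\delta}[p_{x}]$ (with $\reppos$ lying on one side or the other of $p_{x}$), the LF formula applied to $L_{\delta}$ and $L_{\delta+1}$ jointly yields
$$\LF_{\delta+1}\bigl(p_{x} + \shfn(p_{x})\bigr) \;=\; \LF_{\delta}(p_{x}) + \shfn\bigl(\LF_{\delta}(p_{x})\bigr) \;=\; p' + \shfn(p').$$
So $\bmap_{F}(p') = p' + \shfn(p')$; the upper bound is then immediate, and the lower bound reduces to $1 + \shfn(p') > \shfn(p'-1)$, which holds because $\shfn$ is monotone non-decreasing and $\{0,1\}$-valued.

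In the special sub-case, case~B of the fast update forces $u_{i+1}$ to carry a label including $c$, hence $L_{\delta}[\reppos + 1] = c$, and after the replacement followed by the $\$$-insertion one has $L_{\delta+1}[\reppos + \shfn(\reppos)] = c$. I would then compute $\LF_{\delta+1}(\reppos + \shfn(\reppos))$ directly from the LF formula, showing two identities: first $\occ_{<}(L_{\delta+1}, c) = \occ_{<}(L_{\delta}, c)$ (the removal of $\$$ at $\reppos$ and the re-insertion of $\$$ at $\inspos$ cancel in the count of characters smaller than $c$), and second $\rank(L_{\delta+1}, \reppos + \shfn(\reppos), c) = \rank(L_{\delta}, \reppos + 1, c)$ (the new replacement $c$ at the shifted $\reppos$-position exactly compensates for no longer counting the original $c$ at $\reppos + 1$). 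Together these give $\bmap_{F}(p') = p'$. Combining the explicit formula $\inspos = \occ_{<}(L_{\delta}, c) + \rank(L_{\delta}, \reppos, c) + 1$ with $L_{\delta}[\reppos + 1] = c$ then pins down $\inspos = p'$, which forces $\shfn(p'-1) = 0$ and $\shfn(p') = 1$; hence $(p'-1) + \shfn(p'-1) = p' - 1 < p' \leq p' + 1 = p' + \shfn(p')$, as required.

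The main obstacle I expect is the careful bookkeeping in the special sub-case: tracking characters, positions, and ranks through the two-step modification $L_{\delta} \to L_{\delta}' \to L_{\delta+1}$ of the extension-of-BWT, and in particular deriving the decisive identity $\inspos = p'$ in the fast-update case~B configuration. The generic sub-case, by contrast, is essentially a reuse of a computation already present in the paper.
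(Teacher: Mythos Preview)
Your proposal is correct and follows essentially the same case split as the paper's proof: a generic sub-case where $\bmap_{L}(p_{x}) = p_{x} + \shfn(p_{x})$ yielding $\bmap_{F}(p') = p' + \shfn(p')$ via the same LF-shift identity used in Lemma~\ref{lem:new_DBWT_repetition_and_F-interval}, and the special case-B sub-case yielding $\bmap_{F}(p') = p' = \inspos$. The only stylistic difference is that the paper argues the special sub-case more conceptually (the replacement $c$ lands at position $\inspos$ in $F_{\delta+1}$, and in case~B the new $c$ is inserted immediately before $F_{\delta}[\LF_{\delta}(\reppos+1)]$, giving $\inspos = p'$ directly), whereas you propose the equivalent explicit rank/occ bookkeeping; both routes reach the same two identities $\bmap_{F}(p') = p'$ and $\inspos = p'$.
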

\begin{proof}
For $p_{x} \neq \reppos + 1$, 
we have $\bmap_{L}(p_{x}) = p_{x} + \shfn(p_{x})$. 
The $p_{x}$-th character of BWT $L_{\delta}$ is move to the $(p_{x} + \shfn(p_{x}))$-th character of BWT $L_{\delta+1}$ by the extension of BWT. 
Similarly, the $\LF_{\delta}(p_{x})$-th character of $F_{\delta}$ is move to the $(\LF_{\delta}(p_{x}) + \shfn(\LF_{\delta}(p_{x})))$-th character of $F_{\delta+1}$, 
and thus $\LF_{\delta+1}(p_{x} + \shfn(p_{x})) = \LF_{\delta}(p_{x}) + \shfn(\LF_{\delta}(p_{x}))$~(i.e., 
$\LF_{\delta+1}(\bmap_{L}(p_{x})) = p^{\prime} + \shfn(p^{\prime})$). 
We obtain $\bmap_{F}(p^{\prime}) = p^{\prime} + \shfn(p^{\prime})$ by $\LF_{\delta+1}(\bmap_{L}(p_{x})) = \bmap_{F}(p^{\prime})$.  
$0 \leq \shfn(p^{\prime}-1) \leq \shfn(p^{\prime}) \leq 1$ always holds, 
and hence $(p^{\prime} - 1) + \shfn(p^{\prime} - 1) < \bmap_{F}(p^{\prime}) \leq p^{\prime} + \shfn(p^{\prime})$. 

For $p_{x} = \reppos + 1$, 
the $\delta$-th update operation is the fast update operation with case B or not. 
If the $\delta$-th update operation is not the fast update operation with case B, 
we have $\bmap_{L}(p_{x}) = (\reppos + 1) + \shfn(\reppos + 1)$. 
In this case, the proof of Lemma~\ref{lem:bmap_F_range} is the same as the proof for $p_{x} \neq \reppos + 1$. 
That is, we have $\bmap_{F}(p^{\prime}) = p^{\prime} + \shfn(p^{\prime})$ by the extension of BWT, 
and we obtain $(p^{\prime} - 1) + \shfn(p^{\prime} - 1) < \bmap_{F}(p^{\prime}) \leq p^{\prime} + \shfn(p^{\prime})$ 
using $0 \leq \shfn(p^{\prime}-1) \leq \shfn(p^{\prime}) \leq 1$. 

Otherwise~(i.e., the $\delta$-th update operation is the fast update operation with case B), 
we have $\bmap_{L}(p_{x}) = \reppos + \shfn(\reppos)$, which represents the starting position of the input character $c$ of the update operation in BWT $L_{\delta+1}$. 
$\LF_{\delta+1}(\reppos + \shfn(\reppos)) = \inspos$~(i.e., $\LF_{\delta+1}(\bmap_{L}(p_{x})) = \inspos$) 
because the input character $c$ is inserted into $F_{\delta}$ at position $\inspos$ by the extension of BWT.
We obtain $\bmap_{F}(p^{\prime}) = \inspos$ by $\LF_{\delta+1}(\bmap_{L}(p_{x})) = \inspos$. 
Similarly, the input character $c$ is inserted into $F_{\delta}$ at the position previous to the $\LF_{\delta}(\reppos+1)$-th character, 
and thus $\inspos = \LF_{\delta}(\reppos+1)$~(i.e., $\inspos = p^{\prime}$). 
Because $\shfn(p^{\prime} - 1) = 0$, $\shfn(p^{\prime}) = 1$, and $\bmap_{F}(p^{\prime}) = p^{\prime}$, 
we obtain $(p^{\prime} - 1) + \shfn(p^{\prime} - 1) < \bmap_{F}(p^{\prime}) \leq p^{\prime} + \shfn(p^{\prime})$.


\end{proof}

Next, we show that DBWT $D_{\delta+1}^{2\alpha + 1}$ has an $\alpha$-partition of size $d$ for 
an $\alpha$-partition $[w_{1}$, $w_{2}-1]$, $[w_{2}$, $w_{3}-1]$, $\ldots$, $[w_{d}$, $w_{d+1}-1]$ for $D_{\delta}^{\alpha}$, 
where $w_{d+1} = \delta + 1$. 
We create $d$ intervals $[\bmap_{F}(w_{1}),  \bmap_{F}(w_{2})-1]$, $[\bmap_{F}(w_{2}), \bmap_{F}(w_{3})-1]$, $\ldots$, $[\bmap_{F}(w_{d}),  \bmap_{F}(w_{d+1})-1]$ 
using function $\bmap_{F}$, 
where $\bmap_{F}(\delta + 1)$ is defined as $\delta + 2$. 
Lemma~\ref{lem:bmap_F_range} ensures that 
the $d$ intervals are a partition of $\{1, 2, \ldots, \delta +1 \}$~(i.e., $\bmap_{F}(w_{1}) = 1 < \bmap_{F}(w_{2}) < \cdots < \bmap_{F}(w_{d+1}) = \delta + 1$). 
Hence, the $d$ intervals are an $\alpha$-partition for $D_{\delta+1}^{2\alpha + 1}$ if 
each interval $[\bmap_{F}(w_{x}), \bmap_{F}(w_{x+1})-1]$ is an $\alpha$-big interval for $D_{\delta+1}^{2\alpha + 1}$.

We show that each interval $[\bmap_{F}(w_{x}), \bmap_{F}(w_{x+1})-1]$ is an $\alpha$-big interval for $D_{\delta+1}^{2\alpha + 1}$. 
From the definition of function $\bmap_{F}$, 
interval $[\bmap_{F}(w_{x}), \bmap_{F}(w_{x+1})-1]$ starts at position $1$ in $F_{\delta+1}$ or 
$D_{\delta+1}^{2\alpha + 1}$ has a DBWT-repetition of type B such that 
its F-interval starts at position $\bmap_{F}(w_{x})$ on $F_{\delta+1}$. 
Let $\mathcal{L}_{B}$ be the set of the starting positions of DBWT-repetitions of type B in $D_{\delta}^{\alpha}$ 
such that each starting position is inner covered by interval $[w_{x}, w_{x+1}-1]$. 
Then, $|\mathcal{L}_{B}| \geq \lceil \alpha / 2 \rceil - 5$ because 
$[w_{x}, w_{x+1}-1]$ is an $\alpha$-big interval for $D_{\delta}^{\alpha}$. 
For $p_{y} \in \mathcal{L}_{B}$, 
$D_{\delta+1}^{2\alpha + 1}$ has the DBWT-repetition of type B starting at position $\bmap_{L}(p_{y})$ in $L_{\delta+1}$ by 
Lemmas~\ref{lem:bmap_L_special} and \ref{lem:bmap_L}. 
Hence, interval $[\bmap_{F}(w_{x}), \bmap_{F}(w_{x+1})-1]$ is an $\alpha$-big interval for $D_{\delta+1}^{2\alpha + 1}$ 
if the interval inner covers position $\bmap_{L}(p_{y})$. 

Interval $[\bmap_{F}(w_{x}), \bmap_{F}(w_{x+1})-1]$ inner covers position $\bmap_{L}(p_{y})$ if 
either of the following two conditions holds: 
(i) $\bmap_{L}(p_{y}) \neq \inspos + 1$ and $\bmap_{L}(p_{y}) \in [\bmap_{F}(w_{x}) + 1, \bmap_{F}(w_{x+1})-1]$; 
(ii) $\bmap_{L}(p_{y}) = \inspos + 1$ and $\bmap_{L}(p_{y}) \in [\bmap_{F}(w_{x}) + 2, \bmap_{F}(w_{x+1})-1]$. 
For $\bmap_{L}(p_{y}) \neq \inspos + 1$, 
the following lemma ensures that $\bmap_{L}(p_{y}) \in [\bmap_{F}(w_{x}) + 1, \bmap_{F}(w_{x+1})-1]$. 
\begin{lemma}\label{lem:include_bmap_1}
The following two statements hold for $\bmap_{L}(p_{y}) \neq \inspos + 1$: 
(i) $\bmap_{F}(w_{x}) + 1 \leq \bmap_{L}(p_{y})$, 
and (ii) $\bmap_{L}(p_{y}) \leq \bmap_{F}(w_{x+1})-1$. 
\end{lemma}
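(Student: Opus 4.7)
The plan is to chain the tight two-sided bounds from Lemma~\ref{lem:bmap_L_range} on $\bmap_{L}(p_{y})$ with those from Lemma~\ref{lem:bmap_F_range} on $\bmap_{F}(w_{x})$ and $\bmap_{F}(w_{x+1})$, using monotonicity of the shift function $\shfn$ together with the inner-covering hypothesis $p_{y} \in \mathcal{L}_{B}$. Recall that $\shfn$ is non-decreasing on $\{1,\ldots,\delta\}$ (it equals $0$ strictly below $\inspos$ and $1$ at or above it), so for any $a \leq b$ in that range, $a + \shfn(a) \leq b + \shfn(b)$; this monotonicity is the one workhorse I will use repeatedly.

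For part (i), I would split on whether $p_{y} = \reppos + 1$. In the generic case $p_{y} \neq \reppos + 1$, the definition of inner-covering forces $p_{y} \geq w_{x} + 1$, so $p_{y} - 1 \geq w_{x}$. Lemma~\ref{lem:bmap_L_range} yields $\bmap_{L}(p_{y}) \geq (p_{y}-1) + \shfn(p_{y}-1) + 1$, which by monotonicity dominates $w_{x} + \shfn(w_{x}) + 1$, and this in turn dominates $\bmap_{F}(w_{x}) + 1$ by the upper half of Lemma~\ref{lem:bmap_F_range}. In the special case $p_{y} = \reppos + 1$, inner-covering forces $p_{y} \geq w_{x} + 2$, so $p_{y} - 2 \geq w_{x}$, and the weaker bound $\bmap_{L}(p_{y}) \geq (p_{y}-2) + \shfn(p_{y}-2) + 1$ from Lemma~\ref{lem:bmap_L_range} still suffices to dominate $w_{x} + \shfn(w_{x}) + 1 \geq \bmap_{F}(w_{x}) + 1$. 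The extra ``$+1$'' shift in the inner-cover definition for the $\reppos+1$ case is precisely what compensates for the weaker lower bound in Lemma~\ref{lem:bmap_L_range}, which is the only place where the case split matters.

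For part (ii), the argument is uniform across the two subcases of $p_{y}$, since both use $p_{y} \leq w_{x+1} - 1$ from inner-covering. I would apply the upper bound $\bmap_{L}(p_{y}) \leq p_{y} + \shfn(p_{y})$ from Lemma~\ref{lem:bmap_L_range}, then monotonicity of $\shfn$ to obtain $\bmap_{L}(p_{y}) \leq (w_{x+1}-1) + \shfn(w_{x+1}-1)$, and finally the lower half of Lemma~\ref{lem:bmap_F_range} to conclude $\bmap_{L}(p_{y}) \leq \bmap_{F}(w_{x+1}) - 1$.

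The main obstacle will be the boundary case $w_{x+1} = \delta + 1$, where Lemma~\ref{lem:bmap_F_range} is not directly applicable because $w_{x+1}$ is not the starting position of an F-interval of a type-B DBWT-repetition; here I would fall back on the out-of-band convention $\bmap_{F}(\delta + 1) = \delta + 2$ together with the trivial bound $\bmap_{L}(p_{y}) \leq \delta + 1$ (since $p_{y} \leq \delta$ and $\shfn \leq 1$). Finally, the hypothesis $\bmap_{L}(p_{y}) \neq \inspos + 1$ does not enter the inequalities themselves; it is there to match the first clause of the inner-cover definition applied to intervals on $L_{\delta+1}$, so that the two inclusions proved here actually certify inner-covering. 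The complementary case $\bmap_{L}(p_{y}) = \inspos + 1$ would tighten the required lower bound by one and is left for a companion lemma.
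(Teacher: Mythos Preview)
Your proposal is correct and follows essentially the same approach as the paper. The paper phrases the chain of inequalities as ``there exists an integer $m$ with $\bmap_{F}(w_{x}) \leq m + \shfn(m) < \bmap_{L}(p_{y})$'' (and symmetrically for part (ii)), whereas you spell out the same chain explicitly via monotonicity of $p \mapsto p + \shfn(p)$ and the two-sided bounds of Lemmas~\ref{lem:bmap_L_range} and~\ref{lem:bmap_F_range}; your treatment of the boundary case $w_{x+1} = \delta+1$ and your remark on the role of the hypothesis $\bmap_{L}(p_{y}) \neq \inspos+1$ are also accurate.
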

\begin{proof}
(i) 
For $p_{y} \neq \reppos + 1$, 
we have $w_{x} < p_{y}$ because $p_{y}$ is inner covered by $[w_{x}, w_{x+1}-1]$. 
Lemmas~\ref{lem:bmap_L_range} and \ref{lem:bmap_F_range} ensures that 
there exists an integer $m$ such that $\bmap_{F}(w_{x}) \leq m + \shfn(m) < \bmap_{L}(p_{y})$. 
Hence, $\bmap_{F}(w_{x}) < \bmap_{L}(p_{y})$, i.e., $\bmap_{F}(w_{x}) + 1 \leq \bmap_{L}(p_{y})$. 

Similarly, for $p_{y} = \reppos + 1$, we have $w_{x} + 1 < p_{y}$, 
and there exists an integer $m$ such that $\bmap_{F}(w_{x}) \leq m + \shfn(m) < \bmap_{L}(p_{y})$. 
Hence, $\bmap_{F}(w_{x}) < \bmap_{L}(p_{y})$, i.e., $\bmap_{F}(w_{x}) + 1 \leq \bmap_{L}(p_{y})$. 

(ii) 
$p_{y} < w_{x+1}$ because $p_{y}$ is inner covered by $[w_{x}, w_{x+1}-1]$. 
This fact indicates that 
there exists an integer $m$ such that $\bmap_{L}(p_{y}) \leq m + \shfn(m) < \bmap_{F}(w_{x+1})$. 
Hence, $\bmap_{L}(p_{y}) < \bmap_{F}(w_{x+1})$~(i.e., $\bmap_{L}(p_{y}) \leq \bmap_{F}(w_{x+1}) - 1$). 
\end{proof}
Similarly, for $\bmap_{L}(p_{y}) = \inspos + 1$, 
the following lemma ensures that $\bmap_{L}(p_{y}) \in [\bmap_{F}(w_{x}) + 1, \bmap_{F}(w_{x+1})-1]$. 
\begin{lemma}\label{lem:include_bmap_2}
The following two statements hold for $\bmap_{L}(p_{y}) = \inspos + 1$: 
(i) $\bmap_{F}(w_{x}) + 2 \leq \bmap_{L}(p_{y})$, 
and (ii) $\bmap_{L}(p_{y}) \leq \bmap_{F}(w_{x+1})-1$. 
\end{lemma}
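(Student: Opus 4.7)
The plan is to mirror the proof of Lemma~\ref{lem:include_bmap_1}, then upgrade the bound in (i) by one using a structural fact about $\bmap_{F}$. Throughout, I will exploit that $\bmap_{L}(p_{y}) = \inspos + 1$ is a very specific situation: the type B DBWT-repetition of $p_{y}$ in $D_{\delta+1}^{2\alpha+1}$ starts immediately after the freshly inserted $\$$ in $L_{\delta+1}$.

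For statement (ii), I would simply rerun the argument of Lemma~\ref{lem:include_bmap_1}-(ii). Since $p_{y}$ is inner covered by $[w_{x}, w_{x+1}-1]$ we have $p_{y} < w_{x+1}$, and combining Lemma~\ref{lem:bmap_L_range} with Lemma~\ref{lem:bmap_F_range} gives an integer $m$ such that $\bmap_{L}(p_{y}) \leq m + \shfn(m) < \bmap_{F}(w_{x+1})$, hence $\bmap_{L}(p_{y}) \leq \bmap_{F}(w_{x+1}) - 1$. This part is independent of whether $\bmap_{L}(p_{y})$ equals $\inspos + 1$ or not.

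For statement (i), I would first apply the very same reasoning as in Lemma~\ref{lem:include_bmap_1}-(i), using $w_{x} < p_{y}$ (or $w_{x} + 1 < p_{y}$ when $p_{y} = \reppos + 1$) together with Lemmas~\ref{lem:bmap_L_range} and \ref{lem:bmap_F_range}, to obtain the weak inequality $\bmap_{F}(w_{x}) + 1 \leq \bmap_{L}(p_{y}) = \inspos + 1$, that is, $\bmap_{F}(w_{x}) \leq \inspos$. The whole game is then to rule out $\bmap_{F}(w_{x}) = \inspos$. By the definition of $\bmap_{F}$, its value is either $1$ or the starting position of an F-interval in $F_{\delta+1}$ whose DBWT-repetition is of type B in $D_{\delta+1}^{2\alpha+1}$. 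On the other hand, position $\inspos$ in $F_{\delta+1}$ is occupied by the freshly inserted character $c$, so the F-interval containing $\inspos$ on $F_{\delta+1}$ is the one associated with the node $v_{i^{\prime}}$ (or the node $v_{y^{\prime}}$ that absorbs $v_{i^{\prime}}$ in Cases A$^{\prime}$, B, C$^{\prime}$ of the fast update), and by the construction of the update operation these repetitions are of type A, not type B. Therefore $\bmap_{F}(w_{x}) \neq \inspos$, which strengthens the bound to $\bmap_{F}(w_{x}) \leq \inspos - 1$, i.e., $\bmap_{F}(w_{x}) + 2 \leq \inspos + 1 = \bmap_{L}(p_{y})$.

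The main obstacle I expect is the case distinction needed to justify that $\inspos$ in $F_{\delta+1}$ always belongs to a type A repetition after the update. The three relevant scenarios are the plain update operation $\update$, the fast update with case A$^{\prime}$, and the fast update with case B (case C$^{\prime}$ is handled as in A$^{\prime}$ after the merge). For each of these I need to trace which node of $V$ carries position $\inspos$ after the insert-node step and possible merging, and verify against the definition of type A used in Appendix~\ref{app:update_count_bound} (the node represents $\$$ or is the newly split sibling $u_{j^{\prime}+1}$). Once this bookkeeping is carried out, the preceding argument closes the proof.
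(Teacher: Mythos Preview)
Your treatment of part~(ii) is fine and matches the paper. The issue is with your argument for part~(i).

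You claim that the DBWT-repetition associated with the F-interval starting at $\inspos$ in $F_{\delta+1}$, namely $u_{i^{\prime}}$ (or its merged version $u_{y^{\prime}}$), is of type~A. This is false. Recall that type~A in $D_{\delta+1}^{2\alpha+1}$ means the repetition starts at the position of $\$$ in $L_{\delta+1}$ (that is, at $\inspos$), or at $\inspos+1$ when the $\delta$-th update split a node. The node $u_{i^{\prime}}$ sits at position $\reppos + \shfn(\reppos)$ in $L_{\delta+1}$, where $\reppos$ is the \emph{old} position of $\$$ in $L_{\delta}$; this position is in general unrelated to $\inspos$ or $\inspos+1$. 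Indeed, the paper's proof of Lemma~\ref{lem:bmap_L_special} explicitly uses that $u_{y^{\prime}} \notin \{u_{x^{\prime}}, u_{j^{\prime}+1}\}$ to conclude the repetition is type~B. In particular, in Case~B of the fast update with $w_{x} = \LF_{\delta}(\reppos+1)$ one actually has $\bmap_{F}(w_{x}) = \inspos$ while the corresponding repetition is type~B, so your exclusion principle breaks exactly where it is needed. You seem to have conflated the node $u_{i^{\prime}}$ (carrying the replacement character $c$) with the node $u_{x^{\prime}}$ (carrying the new $\$$).

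The paper's route avoids all of this. After producing the integer $m$ with $\bmap_{F}(w_{x}) \leq m + \shfn(m) < \bmap_{L}(p_{y}) = \inspos + 1$, it simply observes that the map $x \mapsto x + \shfn(x)$ never takes the value $\inspos$ (since $\shfn(x)=0$ for $x<\inspos$ and $\shfn(x)=1$ for $x\geq\inspos$). Hence $m+\shfn(m) \leq \inspos-1$, giving $\bmap_{F}(w_{x}) \leq \inspos-1$ directly, with no type bookkeeping required.
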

\begin{proof}
(i) 
Similar to Lemma~\ref{lem:include_bmap_2}-(i), 
there exists an integer $m$ such that $\bmap_{F}(w_{x}) \leq m + \shfn(m) < \bmap_{L}(p_{y})$. 
We have $m + \shfn(m) \leq \inspos < \bmap_{L}(p_{y})$ for $\bmap_{L}(p_{y}) = \inspos + 1$. 
$m + \shfn(m) \neq \inspos$ always holds from the definition of $\shfn$, 
and hence we obtain $\bmap_{F}(w_{x}) \leq m + \shfn(m) < \inspos < \bmap_{L}(p_{y})$, 
which indicates that $\bmap_{F}(w_{x}) + 2 \leq \bmap_{L}(p_{y})$. 

(ii) We can use the proof of Lemma~\ref{lem:include_bmap_1} for proving this statement, 
and hence, we obtain $\bmap_{L}(p_{y}) \leq \bmap_{F}(w_{x+1})-1$. 

\end{proof}
Hence, interval $[\bmap_{F}(w_{x}), \bmap_{F}(w_{x+1})-1]$ is an $\alpha$-big interval for $D_{\delta+1}^{2\alpha + 1}$, 
which indicates that the $d$ intervals $[\bmap_{F}(w_{1}),  \bmap_{F}(w_{2})-1]$, $[\bmap_{F}(w_{2}), \bmap_{F}(w_{3})-1]$, $\ldots$, 
$[\bmap_{F}(w_{d}),  \bmap_{F}(w_{d+1})-1]$ are an $\alpha$-partition for $D_{\delta+1}^{2\alpha + 1}$. 
Finally, we obtain Lemma~\ref{lem:prt_f_sub}-(ii).

\paragraph{Proof of Lemma~\ref{lem:kL_and_kF}-(ii).}
The number $k_{L}$ is symmetric to the number $k_{F}$, 
and hence we can prove Lemma~\ref{lem:kL_and_kF}-(ii) by modifying the proof of Lemma~\ref{lem:kL_and_kF}-(i).

\section{Details for Section~\ref{sec:exp}}
\subsection{Space reduction technique}\label{app:space_reduction_technique}

\newcommand{\Ceil}[1]{\lceil{#1}\rceil}
\newcommand{\Floor}[1]{\lfloor{#1}\rfloor}

\begin{figure}[tb]
\begin{center}
\includegraphics[scale=0.7]{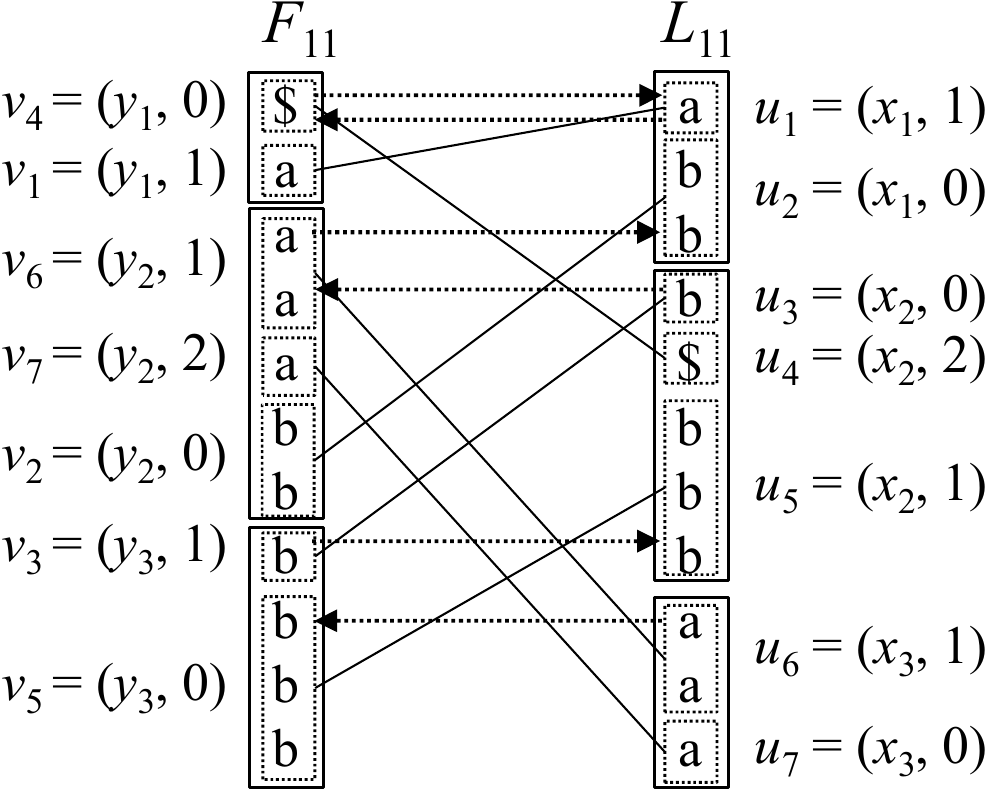}
\caption{
Grouped LF-interval graph for DBWT $D_{11}$ in Figure~\ref{fig:dbwt}. 
The solid-line rectangles depict groups for $U = \{ u_{1}, u_{2}, \ldots, u_{11} \}$ and $V = \{ v_{1}, v_{2}, \ldots, v_{11} \}$. 
The dashed-line rectangles depict nodes in $U$ and $V$. 
Pointer $x_{i}$~(respectively, $y_{i}$) is a pointer to the $i$-th group in the doubly linked list of groups for $U$~(respectively, $V$). 
Here, $u_{i} = (x_{j}, q)$ means that the identifier of the node $u_{i} \in U$ is $(x_{j}, q)$. 
Similarly, $v_{i} = (y_{j}, q)$ means that the identifier of the node $v_{i} \in V$ is $(y_{j}, q)$.
}
\label{fig:grouping}
\end{center}
\end{figure}

The main component of LF-interval graph $\graphds(D^{\alpha}_{\delta}) = (U \cup V$, $E_{LF} \cup E_{L} \cup {E}_{F}$, $B_{U} \cup B_{V}$,  $B_{L} \cup B_{F})$ is two doubly linked lists of two sets $U$ and $V$ of nodes.
In a straightforward manner, each node $u_i \in U$ stores the following four pointers, two integers, and one character:
(i) two pointers to previous node $u_{i-1} \in U$ and next node $u_{i+1} \in U$ in the doubly linked list of $U$~($2 \lceil \log n \rceil$ bits); 
(ii) a pointer to the node $v_{i}$ connected to $u_{i}$ by undirected edge $(u_i, v_i) \in E_{LF}$~($\lceil \log n \rceil$ bits); 
(iii) a pointer to the node $v_j$ connected to $u_{i}$ by directed edge $(u_i, v_j) \in E_{L}$~($\lceil \log n \rceil$ bits); 
(iv) the integer label $B_L(u_i, v_j)$ of directed edge $(u_i, v_j)$~($\lceil \log n \rceil$ bits); and 
(v) the label $B_U(u_i)$ of node $u_{i}$~($\lceil \log n \rceil + \lceil \log \sigma \rceil$ bits). 
Similarly, each node $v_i \in V$ has the equivalent members.
Thus, each node in the two doubly linked lists consumes $6 \Ceil{\log n} + \Ceil{ \log \sigma }$  bits of space, and the constant factor can be a space bottleneck in practice.
In this section, we present a space reduction technique called the \emph{grouping technique}.

The idea behind the grouping technique is to partition the $k$ nodes $u_1, u_2, \dots, u_k$ of $U$ 
into $m$ sequences $(u_{t_{1}}, u_{t_{1}+1}, \ldots, u_{t_{2}-1}), (u_{t_{2}}, u_{t_{2}+1}, \ldots, u_{t_{3}-1})$, 
$\ldots$, $(u_{t_{m}}, u_{t_{m}+1}, \ldots, u_{k})$~($t_{1} = 1 < t_{2} < \cdots < t_{m} \leq k$), 
and we reduce the number of pointers using the sequences of nodes. 
Similarly, we partition the $k$ nodes $v_{\pi_{1}}, v_{\pi_{2}}, \dots, v_{\pi_{k}}$ of $V$ 
into $m^{\prime}$ sequences $(v^{\prime}_{t^{\prime}_{1}}, v^{\prime}_{t^{\prime}_{1}+1}, \ldots, v^{\prime}_{t^{\prime}_{2}-1})$, $(v^{\prime}_{t^{\prime}_{2}}, v^{\prime}_{t^{\prime}_{2}+1}, \ldots, v^{\prime}_{t^{\prime}_{3}-1})$, $\ldots$, $(v^{\prime}_{t^{\prime}_{m^{\prime}}}, v^{\prime}_{t^{\prime}_{m^{\prime}}+1}, \ldots, v^{\prime}_{t^{\prime}_{k}})$~($t^{\prime}_{1} = 1 < t^{\prime}_{2} < \cdots < t^{\prime}_{m^{\prime}} \leq k$), 
where $v^{\prime}_{i}$ is defined as $v_{\pi_{i}}$ for all $i \in \{ 1, 2, \ldots, k \}$ 
and $\pi$ is the permutation introduced in Section~\ref{sec:dynamic}. 
Each sequence of nodes is called a \emph{group},  
and a group consists of at most $g$ nodes for a new constant parameter $g$.

We define the \emph{identifier} of a node $u_{i} \in U$ for finding the node in groups for $U$. 
The identifier of $u_{i} \in U$ is a pair of (i) a pointer to the group containing the node $u_{i}$
and (ii) an integer in $\{ 1, 2, \ldots, g \}$ such that the identifiers of all the nodes in the group are different. 
Similarly, the identifier of a node $v_{i} \in V$ is defined in a similar way.

Next, we present a new data structure called the \emph{grouped LF-interval graph} for storing LF-interval graph $\graphds(D^{\alpha}_\delta)$ in a small space. 
The grouped LF-interval graph consists of two doubly linked lists of groups for $U$ and $V$. 
The two doubly linked lists of groups for $U$ and $V$ store the $m$ groups for $U$ and the $m^{\prime}$ groups for $V$, respectively. 
Each group $(u_{t_{h}}, u_{t_{h}+1}, \ldots, u_{t_{h+1}-1})$ of nodes in $U$ is represented by a quadruple of the following four elements: 
(i) the identifier of the node $v_{i}$ connected to the first node $u_{h}$ of the group by 
directed edge $(u_{h}, v_{i}) \in E_{L}$; 
(ii) the label $B_{L}(u_{h}, v_{i})$ of directed edge $(u_{h}, v_{i})$; 
(iii) two pointers to the previous group $(u_{t_{h-1}}, u_{t_{h-1}+1}, \ldots, u_{t_{h}-1})$ and 
the next group $(u_{t_{h+1}}, u_{t_{h+1}+1}, \ldots, u_{t_{h+2}-1})$ in the doubly linked of groups for $U$; and
(iv) an array storing all the nodes in the group in the order of $u_{t_{h}}, u_{t_{h}+1}, \ldots, u_{t_{h+1}-1}$.  
Each node $u_{i}$ of the array is represented by a triplet of (1) the label $B_{U}(u_{i})$ of $u_{i}$, 
(2) the identifier of the node $v_{i}$ connected to $u_{i}$ by edge $(u_i, v_i) \in E_{LF}$, 
and (3) the integer included in the identifier of $u_{i}$.

Similarly, each group $(v^{\prime}_{t^{\prime}_{h}}, v^{\prime}_{t^{\prime}_{h}+1}, \ldots, v^{\prime}_{t^{\prime}_{h+1}-1})$ of nodes in $V$ is represented by a triplet of the following three elements: 
(i) the identifier of the node $u_{i}$ connected to the first node $v^{\prime}_{t^{\prime}_{h}}$ of the group by 
directed edge $(v^{\prime}_{t^{\prime}_{h}}, u_{i}) \in E_{V}$; 
(ii) two pointers to the previous group $(v^{\prime}_{t^{\prime}_{h-1}}, v^{\prime}_{t^{\prime}_{h-1}+1}, \ldots, v^{\prime}_{t^{\prime}_{h}-1})$ and 
the next group $(v^{\prime}_{t^{\prime}_{h+1}}, v^{\prime}_{t^{\prime}_{h+1}+1}, \ldots, v^{\prime}_{t^{\prime}_{h+2}-1})$ in the doubly linked of groups for $V$; and
(iii) an array storing all the nodes in the group in the order $v^{\prime}_{t^{\prime}_{h}}, v^{\prime}_{t^{\prime}_{h}+1}, \ldots, v^{\prime}_{t^{\prime}_{h+1}-1}$. 
Each node $v_{i}$ of the array is represented by a pair consisting of (i) the identifier of the node $u_{i}$ connected to $v_{i}$ by edge $(u_i, v_i) \in E_{LF}$, and (ii) the integer included in the identifier of $u_{i}$.

Each group without its array consumes at most $5 \Ceil{\log n} + \Ceil{\log g}$ bits, 
and each node consumes at most $2 \Ceil{\log n} + 2 \Ceil{\log g} + \Ceil{\log \sigma}$ bits. 
Hence, the grouped LF-interval graph consumes at most $(m + m^{\prime})(5 \Ceil{\log n} + \Ceil{\log g}) + 2k(2 \Ceil{\log n} + 2 \Ceil{\log g} + \Ceil{\log \sigma})$ bits.

This grouping technique improves the constant factors of the space usages of nodes in $U$ and $V$. 
It is not necessary for each node $u_{i} \in U$~(respectively, $v_{\pi_{i}} \in V$) 
to store the three pointers and one integer for computing 
the previous node $u_{i-1}$~(respectively, $v_{\pi_{i-1}}$), the next node $u_{i+1}$~(respectively, $v_{\pi_{i+1}}$), 
the directed edge starting at $u_{i}$~(respectively, $v_{\pi_{i}}$), and the label of the directed edge. 
Instead, each node consumes $2 \Ceil{ \log g }$ bits for the integers included in the two additional identifiers, 
but the overhead is slight when $g \ll n$. 

We represent LF-interval graph $\graphds(D^{\alpha}_{\delta})$ using the grouped LF-interval graph for $D^{\alpha}_{\delta}$. 
This modification reduces the working space of r-comp for appropriately chosen $g$. 

Figure~\ref{fig:grouping} shows an example of the grouped LF-interval graph with $g = 3$ for DBWT $D_{11}$ in Figure~\ref{fig:dbwt} 
(see also the LF-interval graph $\graphds(D_{11})$ illustrated in Figure~\ref{fig:graph}). 
The 14 nodes in $U$ and $V$ are partitioned into six groups, 
and the number of previous and next pointers for the doubly linked lists for $U$ and $V$ is reduced from 28 to 12. 
Similarly, the number of pointers for directed edges in $E_{L}$ and $E_{F}$ is reduced from 14 to six.

\subparagraph{Acceleration.}
The grouping technique increases the time to get nodes and edges in the LF-interval graph 
because we need to recover nodes and edges from the grouped LF-interval graph. 
Especially, searching for a node in a group takes $O(g)$ time for the given identifier of the node. 
The order of the computation time is constant, but 
the search can be a bottleneck in practice for a large constant $g$. 
For this reason, we introduce a \emph{lookup table}. 
The lookup table of a group is an array of size $g$, 
and the $i$-th element of the array stores a pointer to the node whose identifier is a pair consisting of a pointer to the group and $i$. 
The lookup table uses $g \Ceil{\log g} $ bits of space for each group, and the space overhead is not a bottleneck when $g \ll n$. 
The computation time of the search for a node in a group can be reduced to $O(1)$ time using the lookup table, 
and hence r-comp is expected to run in a practical time.

\end{document}